  \DeclareFontShape{T1}{cmr}{m}{scit}{<->ssub*cmr/m/sc}{}%
\declaretheorem[style=plain,numberwithin=section]{theorem}
\declaretheorem[style=plain,numberlike=theorem]{lemma,corollary}
\declaretheorem[style=remark,numberlike=theorem]{remark}
\declaretheorem[style=plain,numberlike=theorem]{definition}
\declaretheorem[style=definition,numberlike=theorem]{problem,conjecture}
\DeclarePairedDelimiter\rbra{\lparen}{\rparen}
\DeclarePairedDelimiter\sbra{\lbrack}{\rbrack}
\DeclarePairedDelimiter\cbra{\{}{\}}
\DeclarePairedDelimiter\abs{\lvert}{\rvert}
\DeclarePairedDelimiter\Abs{\lVert}{\rVert}
\DeclarePairedDelimiter\ceil{\lceil}{\rceil}
\DeclarePairedDelimiter\floor{\lfloor}{\rfloor}
\DeclarePairedDelimiter\ket{\lvert}{\rangle}
\DeclarePairedDelimiter\bra{\langle}{\rvert}
\DeclarePairedDelimiter\ave{\langle}{\rangle}
\newcommand{\ketbra}[2]{\ensuremath{\ket{#1}\!\bra{#2}}}
\newcommand{\tr} {\operatorname{tr}}
\newcommand{\poly} {\operatorname{poly}}
\newcommand{\polylog} {\operatorname{polylog}}
\newcommand{\rank} {\operatorname{rank}}
\newcommand{\yes}{{\rm yes}}
\newcommand{\no}{{\rm no}}
\newcommand{\BQP}{\textnormal{\textsf{BQP}}\xspace}
\newcommand{\PP}{\textnormal{\textsf{PP}}\xspace}
\newcommand{\QSZK}{\textnormal{\textsf{QSZK}}\xspace}
\newcommand{\NIQSZK}{\textnormal{\textsf{NIQSZK}}\xspace}
\newcommand{\NISZK}{\textnormal{\textsf{NISZK}}\xspace}
\newcommand{\QIPtwo}{\textnormal{\textsf{QIP(2)}}\xspace}
\newcommand{\PSPACE}{\textnormal{\textsf{PSPACE}}\xspace}
\newcommand{\qqQAM}{\textnormal{\textsf{qq}-\textsf{QAM}}\xspace}
\newcommand{\Real} {\operatorname{Re}}
\newcommand{\Imag} {\operatorname{Im}}
\renewcommand{\H}{\mathrm{H}}
\newcommand{\Hq}{\mathrm{H}_q}
\renewcommand{\S}{\mathrm{S}}
\newcommand{\Sq}{\mathrm{S}_q}
\newcommand{\D}{\mathrm{D}}
\newcommand{\Dq}{\mathrm{D}_q}
\newcommand{\QJS}{\textnormal{\textrm{QJS}}\xspace}
\newcommand{\measQJS}{{\mathrm{QJS}^{\mathrm{meas}}}}
\newcommand{\QJTq}{\texorpdfstring{\textnormal{QJT}\textsubscript{\textit{q}}}\xspace}
\newcommand{\QJT}[1]{{\mathrm{QJT}_{#1}}}
\newcommand{\measQJTq}{{\mathrm{QJT}_q^{\mathrm{meas}}}}
\newcommand{\JS}{\mathrm{JS}}
\newcommand{\JTq}{{\mathrm{JT}_{q}}}
\newcommand{\JT}[1]{{\mathrm{JT}_{#1}}}
\newcommand{\td}{\mathrm{T}}
\newcommand{\TV}{\mathrm{TV}}
\newcommand{\QSD}{\textnormal{\textsc{QSD}}\xspace}
\newcommand{\PureQSD}{\textnormal{\textsc{PureQSD}}\xspace}
\newcommand{\QSCMM}{\textnormal{\textsc{QSCMM}}\xspace}
\newcommand{\QED}{\textnormal{\textsc{QED}}\xspace}
\newcommand{\QEA}{\textnormal{\textsc{QEA}}\xspace}
\newcommand{\TsallisQED}{\texorpdfstring{\textnormal{\textsc{TsallisQED}\textsubscript{\textit{q}}}}\xspace}
\newcommand{\TsallisQEA}{\texorpdfstring{\textnormal{\textsc{TsallisQEA}\textsubscript{\textit{q}}}}\xspace}
\newcommand{\TsallisQEAnoq}{\texorpdfstring{\textnormal{\textsc{TsallisQEA}}}\xspace}
\newcommand{\ConstRankTsallisQED}{\texorpdfstring{\textnormal{\textsc{ConstRankTsallisQED}\textsubscript{\textit{q}}}}\xspace}
\newcommand{\ConstRankTsallisQEA}{\texorpdfstring{\textnormal{\textsc{ConstRankTsallisQEA}\textsubscript{\textit{q}}}}\xspace}
\newcommand{\binset}{\{0,1\}}
\newcommand{\innerprod}[2]{\left\langle #1 | #2 \right\rangle}
\newcommand{\innerprodF}[2]{\langle #1 , #2 \rangle}
\newcommand{\dx}{\mathrm{d}x}
\newcommand{\dn}{\mathrm{d}n}
\newcommand{\dd}{\mathrm{d}}
\newcommand{\bbC}{\mathbb{C}}
\newcommand{\bbN}{\mathbb{N}}
\newcommand{\bbR}{\mathbb{R}}
\newcommand{\bbZ}{\mathbb{Z}}
\newcommand{\rmL}{\mathrm{L}}
\newcommand{\sfA}{\mathsf{A}}
\newcommand{\sfB}{\mathsf{B}}
\newcommand{\sfE}{\mathsf{E}}
\newcommand{\sfF}{\mathsf{F}}
\newcommand{\sfO}{\mathsf{O}}
\newcommand{\ttU}{\mathtt{U}}
\newcommand{\calA}{\mathcal{A}}
\newcommand{\calB}{\mathcal{B}}
\newcommand{\calE}{\mathcal{E}}
\newcommand{\calF}{\mathcal{F}}
\newcommand{\calH}{\mathcal{H}}
\newcommand{\calI}{\mathcal{I}}
\newcommand{\calM}{\mathcal{M}}
\newcommand{\calO}{\mathcal{O}}
\newcommand{\calP}{\mathcal{P}}
\newcommand{\CNOT}{\textnormal{\textsc{CNOT}}\xspace}
\newcommand{\Iin}{\calI_\mathrm{inner}}
\newcommand{\Iout}{\calI_\mathrm{outer}}
\newcommand{\BiSearch}{{\texttt{BiSearch}}}
\newcommand{\TsallisQEAalgo}{{\texttt{TsallisQEA}_q}}
\renewcommand{\algorithmicrequire}{\textbf{Input:}} 
\renewcommand{\algorithmicensure}{\textbf{Output:}} 
\begin{document}
\setlength{\abovedisplayskip}{6pt}
\setlength{\belowdisplayskip}{6pt}

\title{On estimating the trace of quantum state powers}

\author[1,2]{Yupan Liu\thanks{Email: \url{yupan.liu@epfl.ch}}}
\author[3,4,2]{Qisheng Wang\thanks{Email: \url{QishengWang1994@gmail.com}}}

\affil[1]{School of Computer and Communication Sciences, \'Ecole Polytechnique F\'ed\'erale de Lausanne}
\affil[2]{Graduate School of Mathematics, Nagoya University}
\affil[3]{School of Computer Science, Shanghai Jiao Tong University}
\affil[4]{School of Informatics, University of Edinburgh}
\date{}

\maketitle
\pagenumbering{roman}
\thispagestyle{empty}

\begin{abstract}
    We investigate the computational complexity of estimating the trace of quantum state powers $\mathrm{tr}(\rho^q)$ for an $n$-qubit mixed quantum state $\rho$, given its state-preparation circuit of size $\operatorname{poly}(n)$. 
    This quantity is closely related to and often interchangeable with the Tsallis entropy $\mathrm{S}_q(\rho) = \frac{1 - \mathrm{tr}(\rho^q)}{q-1}$, where $q = 1$ corresponds to the von Neumann entropy. 
    For any non-integer $q \geq 1 + \Omega(1)$, we provide a quantum estimator for $\mathrm{S}_q(\rho)$ with time complexity $\operatorname{poly}(n)$, \textit{exponentially} improving the prior best results of $\exp(n)$ due to \hyperlink{cite.AISW20}{Acharya, Issa, Shende, and Wagner (ISIT 2019)}, \hyperlink{cite.WGL+22}{Wang, Guan, Liu, Zhang, and Ying (TIT 2024)}, \hyperlink{cite.WZL24}{Wang, Zhang, and Li (TIT 2024)}, and \hyperlink{cite.WZ24b}{Wang and Zhang (TIT 2025)}. 
    Our speedup is achieved by introducing efficiently computable \textit{uniform approximations} of positive power functions into quantum singular value transformation.

    Our quantum algorithm reveals a sharp phase transition between the case of $q=1$ and constant $q>1$ in the computational complexity of the \textsc{Quantum $q$-Tsallis Entropy Difference Problem} ($\textsc{TsallisQED}_q$), particularly deciding whether the difference $\mathrm{S}_q(\rho_0) - \mathrm{S}_q(\rho_1)$ is at least $0.001$ or at most $-0.001$:
    \begin{itemize}
        \item For any $1+\Omega(1) \leq q \leq 2$, $\textsc{TsallisQED}_q$ is $\mathsf{BQP}$-complete, which implies that \textsc{Purity Estimation} is also $\mathsf{BQP}$-complete. 
        \item For any $1 \leq q \leq 1 + \frac{1}{n-1}$, $\textsc{TsallisQED}_q$ is $\mathsf{QSZK}$-hard, leading to hardness of approximating the von Neumann entropy because $\mathrm{S}_q(\rho) \leq \mathrm{S}(\rho)$, as long as $\mathsf{BQP} \subsetneq \mathsf{QSZK}$.
    \end{itemize}
    The hardness results are derived from reductions based on new inequalities for the quantum $q$-Jensen--(Shannon--)Tsallis divergence with $1\leq q \leq 2$, which are of independent interest.
\end{abstract}

\newpage
\tableofcontents
\thispagestyle{empty}

\newpage
\pagenumbering{arabic}
\section{Introduction}

In recent years, the development of quantum devices has posed an intriguing challenge of verifying their intended functionality. Typically, a quantum device is designed to prepare an $n$-qubit (mixed) state $\rho$. The problem of (tolerant) quantum state testing aims to design algorithms that can efficiently test whether a quantum state approximately has a certain property, assuming the state either nearly has the property or is somehow ``far'' from having it.  
This problem is a quantum (non-commutative) generalization of classical (tolerant) distribution testing (see~\cite{Canonne20}) and classical property testing in general (see~\cite{Goldreich17}). Furthermore, this problem is an instance of the emerging field of quantum property testing (see~\cite{MdW16}), which focuses on devising (efficient) quantum testers for properties of quantum objects. 

The general upper bound for (tolerant) quantum state testing depends (at least) linearly on the dimension (e.g.,~\cite[Section 4.2]{MdW16}), whereas some properties of quantum states can be tested significantly more efficiently than the general case. 
A simple and interesting example is the property \textsc{Purity}, where $\rho$ satisfies the property if and only if it is a pure state. This example is essentially an instance of estimating the trace of quantum state powers, specifically $\tr(\rho^2)$. A natural approach to test \textsc{Purity} is to apply the SWAP test~\cite{BCWdW01} to two copies of $\rho$, and this algorithm accepts with probability $(1+\tr(\rho^2))/2$, which is equal to $1$ if and only if $\rho$ is pure. Further analysis deduces that \textsc{Purity} can be tolerantly tested with $O(1/\epsilon^2)$ copies of $\rho$.\footnote{The sample (or query) complexity for \textsc{Purity} differs between one-sided or two-sided error scenarios. Our upper bound applies to the latter, while the sample complexity for the former is $O(1/\epsilon)$~\cite[Section 4.2]{MdW16}.} 
Meanwhile, Ekert et al.~\cite{EAO+02} presented an efficient quantum algorithm for estimating $\tr(\rho^q)$ where $q>1$ is an integer. These two fundamental works raise two interesting questions:  

\begin{enumerate}[label={\upshape(\alph*)}, topsep=0.33em, itemsep=0.33em, parsep=0.33em]
    \item Is there an efficient quantum algorithm for estimating the trace of quantum state powers $\tr(\rho^q)$ for any non-integer $q > 1$? \label{Qitem:efficient-algorithm}
    \item Can estimating the trace of quantum state powers, e.g., $\tr(\rho^2)$, fully capture the computational power of quantum computing, namely \BQP{}-complete?  \label{Qitem:purity-BQPhard}
\end{enumerate}

Notably, the trace of quantum state powers $\tr(\rho^q)$ is closely related to the \textit{power} quantum entropy of order $q$. Particularly, the quantum $q$-Tsallis entropy $\Sq(\rho)$, which is a non-additive (but still concave) generalization of the von Neumann entropy $\S(\rho)$, with the von Neumann entropy being the limiting case of the quantum $q$-Tsallis entropy as $q$ approaches $1$:
\[ \Sq(\rho) = \frac{1 - \tr\rbra*{\rho^q}}{q-1} \quad \text{and} \quad \lim_{q \rightarrow 1} \Sq(\rho) = \S(\rho) = -\tr\rbra*{\rho \log(\rho)}. \]
As a consequence, $\Sq(\rho)$ can naturally provide a lower bound for $\S(\rho)$ when considering $\Sq(\rho)$ with $q=1+\epsilon$, where $\epsilon$ can be a small constant, such as $q=1.0001$.
This observation serves as the first reason motivating Question \ref{Qitem:efficient-algorithm}.

The study of power entropy dates back to Havrda and Charv{\'a}t~\cite{HC67}. Since then, it has been rediscovered independently by Dar{\'o}czy~\cite{Daroczy70}, and finally popularized by Tsallis~\cite{Tsallis88}. Raggio~\cite{Raggio95} expanded on this study by introducing the quantum Tsallis entropy. 
Tsallis entropy has been particularly useful in physics for describing systems with non-extensive properties, such as long-range interactions, in statistical mechanics (see~\cite{Tsallis01}).

A notable example is the Tsallis entropy $\Hq(p)$ with $q=3/2$, which is useful for modeling systems where both frequent and rare events matter.\footnote{In contrast, the Tsallis entropy with $q=2$ (Gini impurity) is very sensitive to rare events.} For instance, in fluid dynamics, the distribution that maximizes $\H_{3/2}$ helps model velocity changes in turbulent flows~\cite{Beck02}.
This example provides the second reason motivating Question \ref{Qitem:efficient-algorithm}, as existing efficient quantum algorithms~\cite{BCWdW01,EAO+02} are designed only for integer $q \geq 2$. Estimating $\Sq(\rho)$ for non-integer $q$ between $1$ and $2$, therefore, appears to be computationally challenging. 

\vspace{1em}
In this paper, we focus on estimating the trace of quantum state powers, or equivalently, the \textsc{Quantum $q$-Tsallis Entropy Difference Problem} (\TsallisQED{}) and the \textsc{Quantum $q$-Tsallis Entropy Approximation Problem} (\TsallisQEA{}). These two problems constitute the (white-box) quantum state testing problem with respect to the quantum $q$-Tsallis entropy. For \TsallisQED{}, we consider two polynomial-size quantum circuits (devices), denoted as $Q_0$ and $Q_1$, which prepare $n$-qubit quantum states $\rho_0$ and $\rho_1$, respectively, with access to the descriptions of these circuits. 
Our goal is to decide whether the difference $\Sq(\rho_0)-\Sq(\rho_1)$ is at least $0.001$ or at most $-0.001$.\footnote{It is noteworthy that $0.001$ is just an arbitrary constant for the precision parameter, which can be replaced by any inverse polynomial function in general. See \Cref{def:TsallisQED,def:TsallisQEA} for formal definitions.} The setting of \TsallisQEA{} is similar to \TsallisQED{}, except that we only consider a single $n$-qubit quantum state $\rho$, and the task is to decide whether the difference $\Sq(\rho)-t(n)$ is at least $0.001$ or at most $-0.001$, where $t(n)$ is a known threshold.

Next, we will state our main results and then provide justifications for their significance.

\subsection{Main results}

We begin by presenting our first main result, which provides a positive answer to Question \ref{Qitem:efficient-algorithm} for the regime $q \geq 1+\Omega(1)$:\footnote{We implicitly assume that $q$ satisfies $1 + \Omega\rbra{1} \leq q \leq O\rbra{1}$. 
Since $\Sq\rbra{\rho} \leq o\rbra{1}$ when $q=\omega(1)$, it is reasonable to consider constantly large $q$.}

\begin{theorem} [Quantum estimator for $q$-Tsallis entropy]
    \label{thm:TsallisQE-containment-informal} 
    Given quantum query access to the state-preparation circuit of an $n$-qubit quantum state $\rho$, for any $q \geq 1 + \Omega\rbra{1}$, there is a quantum algorithm for estimating $\Sq\rbra{\rho}$ to additive error $0.001$ with query complexity $O\rbra{1}$.
    Moreover, if the description of the state-preparation circuit is of size $\poly\rbra{n}$, then the time complexity of the quantum algorithm is $\poly\rbra{n}$.
    Consequently, for any $q \geq 1+\Omega(1)$, \TsallisQED{} and \TsallisQEA{} are in \BQP{}.
\end{theorem}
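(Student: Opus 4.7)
The plan is to estimate $\tr(\rho^q)$ to constant additive precision---which suffices because $\Sq(\rho) = (1-\tr(\rho^q))/(q-1)$ and $q-1 = \Omega(1)$---and to do so via quantum singular value transformation (QSVT) organized around the identity
\[
\tr(\rho^q) = \tr\bigl(\rho\cdot\rho^{q-1}\bigr) = \bra{\psi_\rho}\bigl(\rho^{q-1}\otimes I_E\bigr)\ket{\psi_\rho},
\]
where $\ket{\psi_\rho}=U\ket{0}$ is the purification produced by the state-preparation circuit $U$ and $E$ is the purifying register. With this decomposition I only need a uniform approximation of $x^{q-1}$ on $[0,1]$, and the estimation error then propagates as $\abs{\tr(\rho(p(\rho)-\rho^{q-1}))}\le\epsilon\cdot\tr(\rho)=\epsilon$, rather than the $2^n\epsilon$ error incurred by a direct approximation of $x^q$ followed by $\tr(p(\rho))$. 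This weighting by $\rho$ is what buys the dimension-free error.

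Step one is approximation-theoretic. Since $q-1\ge\Omega(1)$, the function $x^{q-1}$ is $(q-1)$-H\"older on $[0,1]$ (and Lipschitz whenever $q\ge 2$), so Jackson/Bernstein-type results supply a polynomial $p$ of degree $d=O(1)$ (depending on $1/(q-1)$ and $\epsilon$ but \emph{not} on $n$) with $\sup_{x\in[0,1]}\abs{p(x)-x^{q-1}}\le\epsilon$ and $\abs{p(x)}\le 1$ on $[-1,1]$ after suitable rescaling. I would then handle the parity constraint of QSVT either by writing $p$ as a linear combination of unitaries of its even and odd parts, or by working with a definite-parity approximant of $x^{q-1}$ on $[0,1]$ extended evenly to $[-1,1]$.

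Step two is the quantum algorithm. From $U$ the standard purification-to-block-encoding construction produces, in $O(1)$ queries, a block encoding of $\rho$; QSVT with $p$ then yields, with $O(d)=O(1)$ further queries, a block encoding $V$ of $p(\rho)$. A Hadamard test on $\ket{0}_{\textnormal{anc}}\ket{\psi_\rho}$ with controlled-$V$ has expectation value $\bra{\psi_\rho}(p(\rho)\otimes I_E)\ket{\psi_\rho}=\tr(\rho\cdot p(\rho))$, which lies within $\epsilon$ of $\tr(\rho^q)$ by the bound above. Amplitude estimation (or Chernoff--Hoeffding repetition) boosts this to constant precision with $O(1)$ further queries, and rescaling by $1/(q-1)$ delivers the required $\Sq(\rho)$-estimator. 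Each primitive is implementable in $\poly(n)$ time on top of a $\poly(n)$-size state-preparation circuit, giving the claimed time bound and hence \BQP{} membership of \TsallisQED{} and \TsallisQEA{}.

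The main obstacle I anticipate is step one: exhibiting an explicit polynomial approximation of $x^{q-1}$ whose degree is constant in $n$ while simultaneously satisfying the QSVT-compatible norm bound $\abs{p(x)}\le 1$ on $[-1,1]$ and controlling the behavior near $x=0$, where $x^{q-1}$ becomes progressively less smooth as $q\to 1^+$. This is precisely the ``uniform approximation of positive power functions'' highlighted in the abstract, and it is what converts the naive $\exp(n)$-query polynomial-interpolation approach of prior work into the $O(1)$-query, $\poly(n)$-time algorithm stated above.
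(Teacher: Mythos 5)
Your overall architecture is exactly the paper's: write $\tr(\rho^q)=\tr\rbra{\rho\cdot\rho^{q-1}}$, block-encode $\rho$ from its purification, apply QSVT with a uniform approximation of $x^{q-1}$ on $[0,1]$, run the Hadamard test, and amplify with amplitude estimation; the error analysis $\abs{\tr\rbra{\rho\rbra{P(\rho)-\rho^{q-1}}}}\le\epsilon$ weighted by the eigenvalues of $\rho$ is also the paper's, and it is indeed what removes the rank/dimension dependence. The degree bound you quote is right: for the $(q-1)$-H\"older function $x^{q-1}$ the best uniform approximation has degree $O\rbra{1/\epsilon^{1/(q-1)}}$, which is $O(1)$ for constant $\epsilon$ and constant $q-1$, matching \cref{thm:tr-power-constant-queries}.

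The genuine gap is the one you flag yourself at the end: you never actually produce the polynomial. Citing ``Jackson/Bernstein-type results'' establishes existence of a degree-$O(1/\epsilon^{1/(q-1)})$ approximant, but the best-uniform-approximation results (e.g.\ Timan, \cref{lemma:fullrange-bestPolyApprox-positivePower}) are non-constructive, and the $\poly(n)$ time bound --- and hence the \BQP{} containment of \TsallisQED{} and \TsallisQEA{} at precision $1/\poly(n)$, where the degree is no longer a constant --- requires that the coefficients of $P$ be computable by a deterministic classical algorithm in time $\widetilde O(d)$. The paper closes this gap in \cref{lemma:computable-fullrange-bestPolyApprox-positivePower} by taking the \emph{averaged Chebyshev truncation} (de La Vall\'ee Poussin partial sum) of $\frac12 x^{r-1}\abs{x}^{1+\alpha}$: \cref{lemma:averaged-Chebyshev-truncation} shows this loses only a factor of $4$ against the (non-constructive) best approximation, so the degree bound survives, while the Chebyshev coefficients satisfy an explicit two-term recurrence with closed-form initial values, making the whole polynomial computable in $\widetilde O(d)$ time; the construction also automatically has definite parity, which disposes of your parity concern without an LCU step. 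A Jackson-kernel construction might also be made explicit, but you would then need to re-verify both the exponent $1/(q-1)$ in the degree and the norm bound $\abs{P(x)}\le 1$ on $[-1,1]$; as written, your proposal asserts the conclusion of the paper's key lemma rather than proving it.
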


More specifically, when the desired additive error is set to $\epsilon$, the explicit query complexity of \cref{thm:TsallisQE-containment-informal} becomes $O\rbra{1/\epsilon^{1+\frac{1}{q-1}}}$, or expressed as $\poly\rbra{1/\epsilon}$ (see \cref{thm:tr-power-constant-queries}). 
Moreover, if the state-preparation circuit of $\rho$ is of size $L(n)=\poly\rbra{n}$, \cref{thm:TsallisQE-containment-informal} provides a quantum algorithm with time complexity $O\rbra{L/\epsilon^{1+\frac{1}{q-1}}}$, or equivalently, $\poly\rbra{n, 1/\epsilon}$. 
Using the same idea, we can also derive an upper bound $\widetilde O\rbra{1/\epsilon^{3+\frac{2}{q-1}}}$, or expressed as $\poly\rbra{1/\epsilon}$, for the sample complexity needed to estimate $\Sq\rbra{\rho}$ (see \cref{thm:tr-power-constant-samples}).
This is achieved by applying the \textit{samplizer} from \cite{WZ24b}, which allows a quantum query-to-sample simulation.

\vspace{1em}
There are several quantum algorithms for estimating the $q$-Tsallis entropy of an $n$-qubit mixed quantum state $\rho$ for non-integer constant $q > 1$ proposed in \cite{AISW20,WGL+22,WZL24,WZ24b}, all of which turn out to have time complexity $\exp\rbra{n}$ in the setting that $\rho$ is given by its state-preparation circuit of size $\poly\rbra{n}$.
\begin{itemize}
    \item In \cite[Theorem 3]{AISW20} and \cite[Theorem 1.2]{WZ24b}, for non-integer constant $q > 1$, they proposed quantum algorithms for estimating the $q$-R{\'e}nyi entropy of an $n$-qubit quantum state $\rho$ by using $\mathsf{S} = \poly\rbra{1/\epsilon} \cdot \exp\rbra{n}$ samples of $\rho$ and $\mathsf{T} = \poly\rbra{1/\epsilon} \cdot \exp\rbra{n}$ quantum gates.\footnote{The explicit sample complexities of the approaches of \cite[Theorem 3]{AISW20} and \cite[Theorem 2]{WZ24b} are $O\rbra{2^{2n}/\epsilon^2}$ and $O\rbra{2^{\rbra{\frac{4}{q}-2}n}/\epsilon^{1+\frac{4}{q}} \cdot \poly\rbra{n, \log\rbra{1/\epsilon}}}$, respectively, both of which are $\poly\rbra{1/\epsilon} \cdot \exp\rbra{n}$.
    The number of quantum states in the approach of \cite[Theorem 3]{AISW20} was mentioned in \cite{WZ24b} to be $O\rbra{ \rbra{2^{2n}/\epsilon^2}^3 \cdot \polylog\rbra{2^{n}, 1/\epsilon}} = \poly\rbra{1/\epsilon} \cdot \exp\rbra{n}$ by using the weak Schur sampling in \cite[Section 4.2.2]{MdW16} and the quantum Fourier transform over symmetric groups \cite{KS16}.
    Another possible implementation noted in \cite{Hayashi24} is to use the Schur transform in \cite{Nguyen23}, resulting in $O\rbra{2^{2n}/\epsilon^2 \cdot 2^{4n} \cdot \polylog\rbra{2^n,1/\epsilon}} = \poly\rbra{1/\epsilon} \cdot \exp\rbra{n}$.}
    Their result implies an estimator for $\Sq\rbra{\rho}$ with the same complexity,
    because any estimator for $q$-R{\'e}nyi entropy implies an estimator for $q$-Tsallis entropy with the same parameter for $q > 1$ (as noted in \cite[Appendix A]{AOST17}).
    By preparing each sample of $\rho$ using its state-preparation circuit of size $\poly\rbra{n}$, one can estimate $\Sq\rbra{\rho}$ by using their estimators with overall time complexity $\mathsf{S} \cdot \poly\rbra{n} + \mathsf{T} = \poly\rbra{1/\epsilon} \cdot \exp\rbra{n}$.
    \item In \cite[Theorem III.9]{WGL+22}, for non-integer constant $q > 1$, they proposed a quantum algorithm for estimating $\Sq\rbra{\rho}$ with query complexity $\widetilde O\rbra{r^{1/\cbra{\frac{q-1}{2}}}/\epsilon^{1+1/\cbra{\frac{q-1}{2}}}} = \poly\rbra{r, 1/\epsilon}$, where $r$ is (an upper bound on) the rank of $\rho$ and $\cbra{x} \coloneqq x - \floor{x}$ denotes the fractional part of $x$.
    In \cite[Corollary 5]{WZL24}, for non-integer constant $q > 1$, they proposed a quantum algorithm for estimating the $q$-R{\'e}nyi entropy of a quantum state with query complexity $\widetilde O\rbra{r/\epsilon^{1+\frac{1}{q}}} = \poly\rbra{r, 1/\epsilon}$, which also implies a quantum algorithm for estimating $\Sq\rbra{\rho}$ with query complexity $\poly\rbra{r, 1/\epsilon}$ (the reason has been discussed in the last item).
    For $n$-qubit quantum state $\rho$ without prior knowledge, by taking $r = 2^n$, their query complexity is then $\poly\rbra{2^n, 1/\epsilon} = \poly\rbra{1/\epsilon} \cdot \exp\rbra{n}$, which is exponentially larger than our $\poly\rbra{n, 1/\epsilon}$.
\end{itemize}

\vspace{1em}
Our efficient quantum estimator for $\Sq(\rho)$ where $q \geq 1+\Omega(1)$ (\Cref{thm:TsallisQE-containment-informal}), combined with our hardness results for $\TsallisQED{}$ and $\TsallisQEA{}$ (\Cref{thm:TsallisQE-hardness-informal}), indicates a sharp phase transition between the case of $q=1$ and constant $q>1$ and answers to Question \ref{Qitem:efficient-algorithm} and \ref{Qitem:purity-BQPhard}. For clarity, we summarize our main results in \Cref{table:computational-hardness-TsallisQED-TsallisQEA}. 

\begin{table}[H]
\centering
\adjustbox{max width=\textwidth}{
\begin{tabular}{ccccc}
    \toprule
    & $q=1$ & $1 < q \leq 1\!+\!\frac{1}{n-1}$ & $1 \!+\! \Omega(1) \leq q \leq 2$ & $q > 2$\\
    \midrule
    \TsallisQED{} 
    & \makecell{\QSZK{}-complete\\ \footnotesize{\cite{BASTS10}}}
    & \makecell{\QSZK{}-hard\\ \footnotesize{\Cref{thm:TsallisQE-hardness-informal}\ref{thmitem:TsallisQE-QSZKhard}}} 
    & \makecell{\BQP{}-complete\\ \footnotesize{\Cref{thm:TsallisQE-containment-informal,thm:TsallisQE-hardness-informal}\ref{thmitem:TsallisQE-BQPhard}}} 
    & \makecell{in \BQP{}\\ \footnotesize{\Cref{thm:TsallisQE-containment-informal}}}\\
    \midrule
    \TsallisQEA{} 
    & \makecell{\NIQSZK{}-complete\\ \footnotesize{\cite{BASTS10,CCKV08}}}
    & \makecell{\NIQSZK{}-hard*\\ \footnotesize{\Cref{thm:TsallisQE-hardness-informal}\ref{thmitem:TsallisQE-QSZKhard}}} 
    & \makecell{\BQP{}-complete\\ \footnotesize{\Cref{thm:TsallisQE-containment-informal,thm:TsallisQE-hardness-informal}\ref{thmitem:TsallisQE-BQPhard}}} 
    & \makecell{in \BQP{}\\ \footnotesize{\Cref{thm:TsallisQE-containment-informal}}}\\
    \bottomrule
\end{tabular}
}
\caption{Computational hardness of \TsallisQED{} and \TsallisQEA{}.}
\label{table:computational-hardness-TsallisQED-TsallisQEA}
\end{table}

Here, \QSZK{} and \NIQSZK{} are the classes of promise problems possessing quantum statistical zero-knowledge and non-interactive quantum statistical zero-knowledge, respectively, as introduced by~\cite{Wat02,Wat09} and~\cite{Kobayashi03}. The asterisk in \Cref{table:computational-hardness-TsallisQED-TsallisQEA} indicates that \TsallisQEA{} is \NIQSZK{}-hard for a specific $q(n)=1+\frac{1}{n-1}$, as detailed in \Cref{thm:TsallisQE-hardness-informal}\ref{thmitem:TsallisQE-QSZKhard}.

For the case of $q=1$, \TsallisQED{} and \TsallisQEA{} coincide with the \textsc{Quantum Entropy Difference Problem} (\QED{}) and the \textsc{Quantum Entropy Approximation Problem} (\QEA{}) introduced in~\cite{BASTS10}, respectively. Moreover, \QED{} is complete for the class \QSZK{}~\cite{BASTS10}, whereas \QEA{} is complete for the class \NIQSZK{}~\cite{BASTS10,CCKV08}. These two classes contain \BQP{} and are seemingly much harder than \BQP{}.\footnote{Following the oracle separation between \NISZK{} and \PP{}~\cite{BCHTV19}, it holds that $\NIQSZK^{\calO} \not\subseteq \PP^{\calO}$ and likewise $\QSZK^{\calO} \not\subseteq \PP^{\calO}$ for some classical oracle $\calO$.} Meanwhile, the best known upper bound for \QSZK{} is \QIPtwo{} with a quantum linear-space honest prover~\cite{LGLW23}, and the best known upper bound for \NIQSZK{} is \qqQAM{}~\cite{KLGN19}, both of which are contained in $\QIPtwo{} \subseteq \PSPACE$~\cite{JUW09}. 

In terms of \textit{quantitative} bounds on quantum query and sample complexities, \QSZK{}-hard or \NIQSZK{}-hard in the white-box setting correspond to rank-dependent complexities in black-box settings. Specifically, we establish lower bounds for both the easy regime $q \geq 1+\Omega(1)$ and the hard regime $1 < q \leq 1+\frac{1}{n-1}$, with the upper bounds for the hard regime derived from those for estimating quantum R\'enyi entropy, as detailed in \Cref{table:query-and-sample-complexity}. 

\begin{table}[!htp]
\centering
\adjustbox{max width=\textwidth}{
\begin{tabular}{ccccc}
\toprule
\multirow{2}{*}{Regime of $q$}       & \multicolumn{2}{c}{Query Complexity} & \multicolumn{2}{c}{Sample Complexity} \\ \cmidrule{2-5}
 & Upper Bound & Lower Bound & Upper Bound & Lower Bound \\ \midrule
\multirow{2}{*}{$q \geq 1 + \Omega\rbra{1}$} &
 $O\rbra{1/\epsilon^{1+\frac{1}{q-1}}}$  &  $\Omega\rbra{1/\sqrt{\epsilon}}$   & $\widetilde O\rbra{1/\epsilon^{3+\frac{2}{q-1}}}$  & $\Omega\rbra{1/\epsilon}$ \\ 
& \footnotesize{\cref{thm:tr-power-constant-queries}} & \footnotesize{\cref{thm:Tsallis-query-complexity-lower-bound-largeQ}} & \footnotesize{\cref{thm:tr-power-constant-samples}} & \footnotesize{\cref{thm:Tsallis-sample-complexity-lower-bound-largeQ}} \\ \midrule 
\multirow{2}{*}{$1 < q \leq 1 + \frac{1}{n-1}$}     & $\widetilde O\rbra{r/\epsilon^{2}}$ & $\Omega\rbra{r^{0.17-c}}$\footnote{In these bounds, $c > 0$ is a constant that can be made arbitrarily small, and we set $c'=3c$. \label{footnote:constant-in-lower-bounds}} & $\widetilde O\rbra{r^{2}/\epsilon^{5}}$\footnote{In the regime $1 \leq q \leq 1+\frac{1}{n-1}$, as the rank $r$ approaches $2^n$, a sample complexity upper bound of $O\rbra{4^n/\epsilon^2}$ with better dependence on $\epsilon$ was given in~\cite{AISW20}. \label{footnote:fullrank-sample-upper-bounds}} & $\Omega\rbra{r^{0.51-c'}}$\footref{footnote:constant-in-lower-bounds} \\ 
& \footnotesize{\cite{WZL24}} & \footnotesize{\cref{thm:Tsallis-query-complexity-lower-bound-smallQ}} & \footnotesize{\cite{WZ24b}} & \footnotesize{\cref{thm:Tsallis-sample-complexity-lower-bound-smallQ}} \\ \midrule
\multirow{2}{*}{$q = 1$} & $\widetilde O\rbra{r/\epsilon^{2}}$\footnote{As the rank $r$ approaches $2^n$, a better query complexity upper bound of $\widetilde O\rbra{2^n/\epsilon^{1.5}}$ was shown in~\cite{GL20}.} & $\widetilde \Omega\rbra{\sqrt{r}}$ & $\widetilde O\rbra{r^2/\epsilon^5}$\footref{footnote:fullrank-sample-upper-bounds} & $\Omega\rbra{r/\epsilon}$ \\
& \footnotesize{\cite{WGL+22}} & \footnotesize{\cite{BKT20}} & \footnotesize{\cite{WZ24b}} & \footnotesize{\cite{WZ24b}} \\
\bottomrule
\end{tabular}
}
\caption{(Rank-dependent) bounds on query and sample complexities for estimating $\Sq\rbra{\rho}$.}
\label{table:query-and-sample-complexity}
\end{table}

\vspace{1em}
On the other hand, understanding why the regime $q \geq 1+\Omega(1)$ is computationally easy can be illustrated by the case of $q=2$ (\textsc{Purity Estimation}), particularly deciding whether $\tr(\rho^2)$ is at least $2/3$ or at most $1/3$. Let $\{\lambda_k\}_{k \in [2^n]}$ be the eigenvalues of an $n$-qubit quantum state $\rho$. For any quantum state $\hat{\rho}$ having eigenvalues at most $1/n$, it follows that $\tr(\hat{\rho}^2) = \sum_{k \in [2^n]} \lambda_k^2 \leq n \cdot n^{-2} = 1/n$, hence $0$ provides a good estimate of $\tr(\hat{\rho}^2)$ to within additive error $1/3$. This intuition implies that only sufficiently large eigenvalues contribute to estimating the value of $\tr(\rho^2)$. Consequently, the computational complexity of \textsc{Purity Estimation} is supposed to be independent of the rank $r$. 

However, this argument is just the first step towards establishing an efficient quantum estimator for $\Sq(\rho)$.\footnote{A similar argument also applies to the classical Tsallis entropy, see~\cite[Section III.C]{AOST17}. Nevertheless, this type of argument does not extend to von Neumann entropy ($q=1$), see~\cite[Section 7]{QKW22}.} We also need to estimate $\sum_{k\in\calI_{\rm large}} \lambda_k^q$, where $\calI_{\rm large}$ is the index set for sufficiently large eigenvalue $\lambda_k$. 
For the case of integer $q>1$, the approach of~\cite{BCWdW01,EAO+02} equipped with quantum amplitude estimation~\cite{BHMT02} provides a solution, whereas the case of non-integer $q\geq 1+\Omega(1)$ is more challenging and requires more sophisticated techniques. 
Notably, the task is finally resolved by our first main result (\Cref{thm:TsallisQE-containment-informal}). 

\vspace{1em}
Lastly, we provide our second main result, namely the computational hardness for \TsallisQED{} and \TsallisQEA{}, as stated in \Cref{thm:TsallisQE-hardness-informal}. 
Let \ConstRankTsallisQED{} and \ConstRankTsallisQEA{} denote restricted variants of \TsallisQED{} and \TsallisQEA{}, respectively, such that the ranks of the states of interest are at most $O(1)$. 

\begin{theorem}[Computational hardness for \TsallisQED{} and \TsallisQEA{}, informal]
    \label{thm:TsallisQE-hardness-informal}
    The promise problems \TsallisQED{} and \TsallisQEA{} capture the computational power of their respective complexity classes in the corresponding regimes of $q$\emph{:}\footnote{For detailed definitions of Karp reduction and Turing reduction, please refer to \Cref{subsubsec:input-models-and-reductions}.}
    \begin{enumerate}[label={\upshape(\arabic*)}, topsep=0.33em, itemsep=0.33em, parsep=0.33em]
        \item \emph{\textbf{Easy regimes:}} For any $q \in [1,2]$, \ConstRankTsallisQED{} is \BQP{}-hard under Karp reduction, and consequently, \ConstRankTsallisQEA{} is \BQP{}-hard under Turing reduction. As a corollary, \TsallisQED{} and \TsallisQEA{} are \BQP{}-complete{} for $1+\Omega(1) \leq q \leq 2$. \label{thmitem:TsallisQE-BQPhard}
        \item \emph{\textbf{Hard regimes:}} For any $q \in \left(1,1+\frac{1}{n-1}\right]$, \TsallisQED{} is \QSZK{}-hard under Karp reduction, and consequently, \TsallisQEA{} is \QSZK{}-hard under Turing reduction. Furthermore, for $q=1+\frac{1}{n-1}$, \TsallisQEA{} is \NIQSZK{}-hard under Karp reduction. \label{thmitem:TsallisQE-QSZKhard}   
    \end{enumerate}
\end{theorem}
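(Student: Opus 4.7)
The plan is to prove each hardness claim by a Karp (and, via binary search on the \TsallisQEA{} threshold, a Turing) reduction from a problem complete for the target complexity class, with the new quantum $q$-Jensen-Tsallis divergence inequalities promised in the abstract playing the role of the key analytic tool that converts the source-problem gap into an $\Sq$-gap on the \TsallisQED{} (or \TsallisQEA{}) side.

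For part~(1), I would reduce from a \BQP{}-complete pure-state distinguishability problem, e.g.\ \PureQSD{} (given $\poly(n)$-size circuits preparing $\ket{\phi_0},\ket{\phi_1}$, decide whether $|\langle\phi_0|\phi_1\rangle|^2 \geq 1-\epsilon$ or $\leq \epsilon$; membership in \BQP{} is immediate from the SWAP test). The reduction prepares $\rho_0 \coloneqq \tfrac{1}{2}\bigl(\ketbra{\phi_0}{\phi_0}+\ketbra{\phi_1}{\phi_1}\bigr)$ by flipping a fair coin on an ancilla, running the corresponding preparation circuit conditioned on the ancilla, and then tracing out the ancilla; this yields a rank-$\leq 2$ state with eigenvalues $\bigl(1\pm|\langle\phi_0|\phi_1\rangle|\bigr)/2$. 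Fixing $\rho_1$ to be an efficiently preparable rank-$2$ state whose $\Sq$-value lies strictly between $0$ and $\frac{1-2^{1-q}}{q-1}$, the strict monotonicity of $\Sq(\rho_0)$ in the fidelity immediately transfers the \PureQSD{} promise to the $\pm 0.001$ entropy-difference gap, giving a Karp reduction to \ConstRankTsallisQED{}. A binary-search Turing reduction on the \TsallisQEA{} threshold then gives \BQP{}-hardness of \ConstRankTsallisQEA{}. Combining with the containment in \Cref{thm:TsallisQE-containment-informal} yields \BQP{}-completeness in the regime $1+\Omega(1)\le q\le 2$.

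For part~(2), I would reduce \QSD{} (which is \QSZK{}-complete) to \TsallisQED{}. Given circuits preparing $\rho_0,\rho_1$ with $\|\rho_0-\rho_1\|_1 \in [0,\epsilon]\cup[1-\epsilon,1]$, form the averaged state $\tilde\rho_0 \coloneqq \tfrac{1}{2}(\rho_0+\rho_1)$ and the classically labeled state $\tilde\rho_1 \coloneqq \tfrac{1}{2}\ketbra{0}{0}\otimes\rho_0+\tfrac{1}{2}\ketbra{1}{1}\otimes\rho_1$, both efficiently preparable from the given circuits. A direct eigenvalue calculation gives
\[
\Sq(\tilde\rho_0)-\Sq(\tilde\rho_1) = \QJT{q}(\rho_0,\rho_1) - \frac{1-2^{1-q}}{q-1} + \left(\frac{1}{2}-2^{-q}\right)\bigl(\Sq(\rho_0)+\Sq(\rho_1)\bigr),
\]
so the \TsallisQED{} instance measures $\QJT{q}(\rho_0,\rho_1)$ up to a deterministic additive shift and an input-dependent correction that becomes negligible as $q\to 1$. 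The new $\QJT{q}$-versus-$\|\cdot\|_1$ inequalities (in the regime $1\le q\le 1+1/(n-1)$) then promote the \QSD{} trace-distance gap to the required $\pm 0.001$ Tsallis gap; Turing reduction on $t$ yields \QSZK{}-hardness of \TsallisQEA{}. For the \NIQSZK{}-hardness of \TsallisQEA{} at the distinguished value $q=1+1/(n-1)$, I would reduce from \QEA{} by a one-shot version of the same construction, choosing the threshold $t$ as a fixed affine function of the \QEA{} parameter $h$ so that, under the sandwich bounds implied by the $\QJT{q}$ inequalities at this special $q$, the unit-sized \QEA{} gap is preserved.

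The principal technical obstacle will be establishing the new $\QJT{q}$-versus-trace-distance inequalities with constants \emph{independent} of both the Hilbert-space dimension $2^n$ and of $q$ as $q\to 1$. I expect the proof to combine (i) a Taylor expansion of $\Sq$ around $q=1$, which locally reduces the problem to the known quantum Jensen-Shannon lower bound $\mathrm{QJS}(\rho,\sigma) \geq \Omega(\|\rho-\sigma\|_1^2)$, with (ii) operator-convexity tools for the matrix-monotone power function $x \mapsto x^q$, used to control the remainder term uniformly across the spectra of $\rho_0$ and $\rho_1$.
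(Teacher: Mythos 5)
Your part-(1) reduction is sound and in fact takes a slightly more elementary route than the paper: since the average of two pure states has explicit eigenvalues $\rbra{1\pm\abs{\innerprod{\phi_0}{\phi_1}}}/2$, you can compare $\Sq$ of that average against a fixed reference state with $\Sq$-value $\Hq(1/2)/2$ and bypass the quantum Jensen--Tsallis machinery entirely for this case. The paper instead keeps $\rho'_1$ at entropy exactly $\Hq(1/2)$ and compensates by tensoring $\rho'_0$ with a biased coin $(p_0,p_1)$ chosen so that the $\QJTq$-versus-$\td$ bounds center the difference (\Cref{lemma:reduction-PureQSD-ConstRankTsallisQED}); your version is simpler here, but the paper's template is the one that survives the passage to mixed states.

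The genuine gap is in part (2). Your identity $\Sq(\tilde\rho_0)-\Sq(\tilde\rho_1)=\QJTq(\rho_0,\rho_1)-\Hq(1/2)+\rbra{\tfrac12-2^{-q}}\rbra{\Sq(\rho_0)+\Sq(\rho_1)}$ is correct, but the resulting quantity is not sign-separated between yes and no instances. Take $\rho_0,\rho_1$ orthogonal pure states (a yes instance with $\td=1$): then $\QJTq=\Hq(1/2)$ and the correction term vanishes, so the difference is exactly $0$, not $\geq 0.001$; meanwhile in the no case the difference is at most about $-\Hq(1/2)/2$, so both cases land at or below zero and no valid \TsallisQED{} instance results. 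The paper avoids this by replacing $\tilde\rho_0$ with $\rbra{\vartheta\ketbra{0}{0}+(1-\vartheta)\ketbra{1}{1}}\otimes\frac{\rho_0+\rho_1}{2}$ where $2\Hq(\vartheta)=\Hq(1/2)$, which via pseudo-additivity injects the offset $\tfrac12\Hq(1/2)$ needed to center the difference around zero (\Cref{lemma:reduction-QSD-TsallisQED}). Relatedly, your claim that the correction is ``negligible as $q\to1$'' fails precisely in the regime you need: at $q=1+\frac{1}{n-1}$ one has $\Sq(\rho_0)+\Sq(\rho_1)=\Theta(1/(q-1))$ in the worst case, so $\rbra{\tfrac12-2^{-q}}\rbra{\Sq(\rho_0)+\Sq(\rho_1)}=\Theta(1)$; the paper must bound this term explicitly (it is exactly what forces the restriction $q\leq 1+\frac{1}{n-1}$), and in the no case it additionally invokes Fannes' inequality for $\Sq$ to control $\Sq\rbra{\frac{\rho_0+\rho_1}{2}}$. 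Finally, for the \NIQSZK{} claim the paper does not reduce from \QEA{} but from \QSCMM{}, using dedicated two-sided bounds relating $\Sq(\rho)$ to $\td\rbra{\rho,(I/2)^{\otimes n}}$ (\Cref{lemma:inequality-uniformTV-TsallisEA}); a reduction from \QEA{} would require a tight two-sided relation between $\S(\rho)$ and $\S_{1+1/(n-1)}(\rho)$ that does not hold in general, since $\Sq(\rho)\leq\S(\rho)$ can be loose by a constant factor on high-entropy states.
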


It is noteworthy that \BQP{}-hardness under Turing reduction is as strong as \BQP{}-hardness under Karp reduction, due to the \BQP{} subroutine theorem~\cite{BBBV97}.\footnote{Once we have an efficient quantum algorithm $\calA$ for \TsallisQEA{}, any problem in \BQP{} can be solved using $\calA$ as a subroutine. The \BQP{} subroutine theorem, as stated in~\cite[Section 4]{BBBV97}, implies that $\BQP^{\calA} \subseteq \BQP$. } 
Moreover, \Cref{thm:TsallisQE-hardness-informal} implies a direct corollary, offering a positive answer to Question \ref{Qitem:purity-BQPhard}:
\begin{corollary}
    \label{corr:purity-BQPhard-informal}
    \textsc{Purity Estimation} is \BQP{}-hard. 
\end{corollary}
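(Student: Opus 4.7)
The plan is to derive the corollary directly from \Cref{thm:TsallisQE-hardness-informal}\ref{thmitem:TsallisQE-BQPhard} specialized to $q=2$. The key observation is the elementary identity $\mathrm{S}_2(\rho) = 1 - \tr(\rho^2)$, which makes estimating the purity $\tr(\rho^2)$ to additive error $\epsilon$ equivalent, up to an affine transformation, to estimating $\mathrm{S}_2(\rho)$ to the same additive error. Hence any algorithm solving \textsc{Purity Estimation} can be used in a black-box manner to approximate the quantum $2$-Tsallis entropy of a state prepared by a polynomial-size quantum circuit.

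With this in hand, I would build a Turing reduction from \ConstRankTsallisQED{} with $q=2$ to \textsc{Purity Estimation}. Given circuits preparing $\rho_0$ and $\rho_1$, call the purity estimator on each of them with additive error strictly less than $0.0005$, obtaining estimates $p_b \approx \tr(\rho_b^2)$ for $b \in \{0,1\}$. Then $p_1 - p_0$ approximates $\mathrm{S}_2(\rho_0) - \mathrm{S}_2(\rho_1) = \tr(\rho_1^2) - \tr(\rho_0^2)$ to additive error strictly less than $0.001$, which is precisely enough to distinguish the yes and no instances in the promise of \TsallisQED{}. Outputting the sign of $p_1 - p_0$ then solves the instance of \ConstRankTsallisQED{}.

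By \Cref{thm:TsallisQE-hardness-informal}\ref{thmitem:TsallisQE-BQPhard}, \ConstRankTsallisQED{} with $q=2$ is \BQP{}-hard under Karp reduction, so composing with the two-call Turing reduction above shows that \textsc{Purity Estimation} is \BQP{}-hard under Turing reduction. The \BQP{} subroutine theorem of \cite{BBBV97}, already invoked in the excerpt to equate Turing- and Karp-hardness for \BQP{}, then yields \BQP{}-hardness of \textsc{Purity Estimation}, completing the argument. Since all of the real content is carried by \Cref{thm:TsallisQE-hardness-informal}\ref{thmitem:TsallisQE-BQPhard}, there is no genuine technical obstacle; the only point requiring mild care is to propagate the precision budget so that the $\epsilon$-errors from the two calls combine to strictly less than the $0.001$ gap in the promise, which is immediate by choosing $\epsilon < 0.0005$.
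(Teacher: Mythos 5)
Your proposal is correct and takes essentially the same route as the paper: the corollary is read off from \Cref{thm:TsallisQE-hardness-informal}\ref{thmitem:TsallisQE-BQPhard} at $q=2$ via the identity $\S_2(\rho)=1-\tr(\rho^2)$, combined with a Turing reduction that estimates the purity of each state and compares (the paper packages this as the reduction from \ConstRankTsallisQED{} to \ConstRankTsallisQEA{} via \Cref{algo:Tsallis-entropy-search-to-decision}, plus the \BQP{} subroutine theorem). Your version is marginally more streamlined only because you treat \textsc{Purity Estimation} as an estimation oracle, so two calls replace the paper's $O(1)$-step binary search over a threshold oracle; the content is identical.
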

Interestingly, the \BQP{}-hardness for a similar problem, specifically deciding whether $\tr(\rho_0\rho_1)$ is at least $2/3$ or at most $1/3$, turns out to be not difficult to show.\footnote{For any \BQP{} circuit $C_x$, the acceptance probability $\| \ketbra{1}{1}_{\rm out} C_x \ket{\bar{0}}\|_2^2 = \tr\rbra*{\ketbra{1}{1}_{\rm out} C_x \ketbra{\bar{0}}{\bar{0}} C_x^{\dagger}} = \tr(\rho_0 \rho_1)$, where $\rho_0 \coloneqq \ketbra{1}{1}_{\rm out}$ and $\rho_1 \coloneqq \tr_{\overline{\rm out}}\rbra*{C_x \ketbra{\bar{0}}{\bar{0}} C_x^{\dagger}}$. Similar observations appeared in~\cite[Theorem 9]{Kobayashi03}.} However, this result does not imply \Cref{corr:purity-BQPhard-informal}.  

\subsection{Proof techniques: \BQP{} containment for \texorpdfstring{$q$}{q} constantly larger than \texorpdfstring{$1$}{1}}

The proof of \Cref{thm:TsallisQE-containment-informal} consists of an efficient quantum (query) algorithm for estimating the value of $\tr\rbra{\rho^q}$ for $q > 1$, given quantum query access to the state-preparation circuit $Q$ of the mixed quantum state $\rho$.
Our approach to estimating $\tr\rbra{\rho^q}$ is via one-bit precision phase estimation~\cite{Kitaev95}, also known as the Hadamard test \cite{AJL09}, equipped with the quantum singular value transformation (QSVT) \cite{GSLW19}. 
Our algorithm is sketched in the following four steps (see \cref{sec:algos} for more details):
\begin{enumerate}[topsep=0.33em, itemsep=0.33em, parsep=0.33em] 
    \item Find a good polynomial approximation of $x^{q-1}$.
    \item Implement a unitary block-encoding $U$ of $\rho^{q-1}$ using QSVT, with the state-preparation circuit $Q$.
    \item Perform the Hadamard test on $U$ and $\rho$ with outcome $b \in \cbra{0, 1}$.
    \item One can learn the value of $\tr\rbra{\rho^q}$ from a good estimate of $b$ via quantum amplitude estimation. 
\end{enumerate}

The idea is simple. 
Similar ideas were ever used to estimate the fidelity \cite{GP22}, trace distance \cite{WZ24,LGLW23}, and von Neumann entropy \cite{LGLW23,WZ24b}. 
However, all of the aforementioned quantum algorithms have query or time complexity polynomials in the rank $r$ of quantum states. 
Additionally, all these prior works rely on the quantum singular value transformation~\cite{GSLW19}, which is a technique for designing quantum algorithms by approximating the target functions.\footnote{For example, estimating the fidelity and trace distance requires to approximate the sign function; and estimating the von Neumann entropy requires to approximate the logarithmic function.}
The main technical reason is that the functions to be approximated in their key steps are not smooth in the whole range of $\sbra{0, 1}$, so they have to use the polynomial approximations of piece-wise smooth functions in \cite[Corollary 23]{GSLW19} to avoid the bad part (which is actually the regime of tiny eigenvalues);\footnote{These eigenvalues correspond to the inputs of the target function.}
this results in an estimation error dependent on $r$ because, technically, the error for each bad eigenvalue has to be bounded individually (there are at most $r$ bad eigenvalues), thereby introducing an (at least) linear $r$-dependence.
Specifically, in their approaches, a target function $f\rbra{x}$ is specified and the goal is to estimate the value of $\tr\rbra{\rho f\rbra{\rho}}$.
For example, $f\rbra{x} = -\log\rbra{x}$ for estimating the von Neumann entropy. 
The target function $f\rbra{x}$ is usually only approximated well in the range $x \in \sbra{\delta, 1}$ for some parameter $\delta$, while leaving the rest range of $x$ unspecified;
more precisely, $f\rbra{x}$ is approximated by a polynomial $P\rbra{x}$ by, e.g., \cite[Corollary 23]{GSLW19}, such that 
\begin{equation} \label{eq:bad-poly-condition}
    \max_{x\in\sbra{\delta, 1}} \abs{P\rbra{x} - f\rbra{x}} \leq \epsilon,~\max_{x\in\sbra{-1, 1}} \abs{P\rbra{x}} \leq 1, \text{ and } \deg\rbra{P} = O\rbra*{\frac{1}{\delta}\log\frac{1}{\epsilon}}.
\end{equation}
Then, they instead estimate the value of $\tr\rbra{\rho P\rbra{\rho}}$. 
The intrinsic error turns out to be 
\[ \abs{\tr\rbra{\rho f\rbra{\rho}} - \tr\rbra{\rho P\rbra{\rho}}} \leq \sum_{\lambda_j < \delta} \abs*{\lambda_j f(\lambda_j) \!-\! \lambda_j P(\lambda_j)} + \sum_{\lambda_j \geq \delta} \abs*{\lambda_j f(\lambda_j) \!-\! \lambda_j P(\lambda_j)} 
\leq r \cdot \poly\rbra{\delta} + O\rbra{\epsilon}. \]
Here, $\{\lambda_j\}_{1 \leq j \leq 2^n}$ are the eigenvalues of the state $\rho$, with each $\lambda_j$ satisfying $0 \leq \lambda_j \leq 1$. To make the intrinsic error bounded, $\delta$ must be sufficiently small, e.g., $\delta = 1/\poly\rbra{r}$.

The above standard method has drawbacks: the intrinsic error is $r \cdot \poly\rbra{\delta}$ for the small-eigenvalue part and $O\rbra{\epsilon}$ for the large-eigenvalue part. While the $\epsilon$-dependence in the approximation degree is logarithmic (and thus not the dominating term), the $\delta$-dependence is significant. 
This suggests the need for the following trade-off: Can we reduce the error caused by the small-eigenvalue part, at the cost of a possibly worse error caused by the large-eigenvalue part?

To make this trade-off possible for our purpose, we turn to find polynomials that uniformly approximate the positive power functions.
This is inspired by the Stone-Weierstrass theorem, stating that any continuous function (e.g., $x^q$) on a closed interval (e.g., $\sbra{0, 1}$) can be uniformly approximated by polynomials.
The study of the \textit{best uniform approximation} (by polynomials)\footnote{The best uniform approximation polynomial of a continuous function $f(x)$ on $[-1,1]$ is a degree-$d$ polynomial that minimizes $\max_{x\in[-1,1]} |f(x)-P_d(x)|$ over all degree-$d$ polynomials $P_d$. For a formal definition, see \Cref{subsubsec:best-uniform-poly-approx}.} of positive power functions was initiated by Bernstein~\cite{Bernstein14, Bernstein38} almost a century ago in an abstract manner.\footnote{Actually, the function $\abs{x}^q$ for $x \in \sbra{-1, 1}$ is commonly considered in the literature. Nevertheless, we are only interested in the non-negative part, i.e., the range $\sbra{0, 1}$.} 
The best uniform approximation polynomial of $x^q$ was shown with a non-constructive proof in \cite[Section 7.1.41]{Timan63}, stating that there is a family of polynomials $P_d\rbra{x}$ of degree $d$ such that 
\begin{equation} \label{eq:good-poly-condition}
    \max_{x\in\sbra{0, 1}} \abs{P_d \rbra{x} - x^q} \to \frac{1}{d^{q}}, \text{ as } d \to \infty,
\end{equation}
whose approximation range is in sharp contrast to that in \cref{eq:bad-poly-condition}.
However, the coefficients of the leading error terms and the explicit construction of these polynomial approximations seem still not fully understood (e.g.,~\cite{Ganzburg02}). Consequently, it is somewhat challenging to directly use such polynomial approximations (e.g.,~\cite[Section 7.1.41]{Timan63}) in a time-efficient manner. 

Inspired by the result of the best uniform approximation of positive power functions in \cite{Timan63}, we, instead, aim to find a good enough uniform approximation that is also efficiently computable.
This is achieved by employing the construction of asymptotically best uniform approximation via combining Chebyshev truncations and the de La Vall\'ee Poussin partial sum (cf.\ \cite[Chapter 3]{Rivlin90}).
Finally, we obtain a family of efficiently computable uniform approximation polynomials of (scaled) $x^q$ that are suitable for QSVT:
\begin{equation} \label{eq:good-enough-poly}
    \max_{x\in\sbra{0, 1}} \abs*{P\rbra{x} - \frac{1}{2}x^q} \leq \epsilon,~\max_{x \in \sbra{-1, 1}}\abs{P\rbra{x}} \leq 1, \text{ and } \deg\rbra{P} = O\rbra*{\frac{1}{\epsilon^{1/q}}}.
\end{equation}

Using these efficiently computable uniform approximation polynomials, we are able to give a quantum algorithm for estimating $\tr\rbra{\rho^q}$. 
First, we approximate the function $x^{q-1}$ in the range $\sbra{0, 1}$ to error $\epsilon$ by a polynomial of degree $O\rbra{1/\epsilon^{\frac{1}{q-1}}}$.
Then, we can apply the algorithm sketched at the very beginning of this subsection. 
With further analysis, we can estimate the value of $\tr\rbra{\rho^q}$ to additive error $\epsilon$ with quantum query complexity $O\rbra{1/\epsilon^{1+\frac{1}{q-1}}}$ (see \cref{thm:tr-power-constant-queries}).
Using the same idea, we can also estimate $\tr\rbra{\rho^q}$ to additive error $\epsilon$ by using $\widetilde O\rbra{1/\epsilon^{3+\frac{2}{q-1}}}$ copies of $\rho$ through the samplizer \cite{WZ24b} (see \cref{thm:tr-power-constant-samples}).

To conclude this subsection, it can be seen that our quantum algorithm for estimating $\tr\rbra{\rho^q}$ is naturally applicable to solving \TsallisQED{} and \TsallisQEA{}. 
Particularly for the precision in the regime $1/\poly\rbra{n} \leq \epsilon \leq 1$, the efficiently-computability of the uniform approximation polynomials in \cref{eq:good-enough-poly} ensures that the description of the quantum circuit of our algorithm can be computed by a classical deterministic Turing machine in $\poly\rbra{n}$ time, which is a significant step to show the $\BQP$-completeness of \TsallisQED{} and \TsallisQEA{} for $1 + \Omega\rbra{1} \leq q \leq 2$ and precision $1/\poly\rbra{n} \leq \epsilon \leq 1$.

\subsection{Proof techniques: Hardness via \QJTq{}-based reductions}

Before we proceed with the proof of \Cref{thm:TsallisQE-hardness-informal}, we start by defining the (white-box) quantum state testing problem with respect to the trace distance, which was first introduced in~\cite{Wat02}. 
Let $\rho_0$ and $\rho_1$ be $n$-qubit quantum states such that their purifications can be prepared by polynomial-size quantum circuits $Q_0$ and $Q_1$, respectively.
The \textsc{Quantum State Distinguishability Problem} (\QSD{}) is to decide whether the trace distance $\td(\rho_0,\rho_1)$ is at least $1-\epsilon(n)$ or at most $\epsilon(n)$. 
Furthermore, we need the other two restricted versions of \QSD{}, see \Cref{subsec:state-closeness-testing} for formal definitions:
\begin{itemize}[topsep=0.33em, itemsep=0.33em, parsep=0.33em] 
    \item \PureQSD{}: Both $\rho_0$ and $\rho_1$ are pure states. 
    \item \QSCMM{}: $\rho_1$ is fixed to be the $n$-qubit maximally mixed state.\footnote{Precisely speaking, the problem called \textsc{Quantum State Closeness to Maximally Mixed State} (\QSCMM{}) is to decide whether $\td\rbra{\rho,(I/2)^{\otimes n}}$ is at most $1/n$ or at least $1-1/n$, which is the complement of \QSD{} concerning the same states of interest.}
\end{itemize}

The proof of \Cref{thm:TsallisQE-hardness-informal}, particularly the hardness results under Karp reduction, utilizes reductions from the aforementioned variants of \QSD{} to \TsallisQED{} or \TsallisQEA{} for the respective ranges of $q$. Next, we will specify two main technical challenges related to the corresponding inequalities necessary for establishing \Cref{thm:TsallisQE-hardness-informal}: 
\begin{enumerate}[label={\upshape(\arabic*)}]
    \item For \ConstRankTsallisQED{} and \TsallisQED{}, the key ingredient of these reductions is the quantum $q$-Jensen--(Shannon--)Tsallis divergence (\QJTq{}, see \Cref{def:quantum-Jensen-Tsallis-divergence}), first introduced in~\cite{BH09}. 
    When $q=1$, the quantum Jensen--Shannon divergence (\QJS{}), defined in~\cite{MLP05} (see also \Cref{eq:QJS-decomposition-D} for its explicit form), can be expressed as the difference between the von Neumann entropy of two quantum states, due to the additivity of the von Neumann entropy:\footnote{Notably, combining \Cref{eq:QJS-to-QED} with the \QSZK{}-hardness of the quantum state testing problem with respect to \QJS{}, as established in~\cite[Lemma 4.15]{Liu23} (Lemma 4.13 in the last arXiv version), yields a simpler proof that \QED{} is \QSZK{}-hard, compared to the approaches in~\cite[Section 5.4]{BASTS10} and \cite[Corollary 4.3]{Liu23}.}
    \begin{equation}
        \label{eq:QJS-to-QED}
        2 \cdot \QJS(\rho_0,\rho_1) = \S\rbra*{ \rbra[\Big]{\frac{\rho_0+\rho_1}{2}} \otimes \rbra[\Big]{\frac{\rho_0+\rho_1}{2}} } - \S\rbra*{\rho_0\otimes \rho_1}.
    \end{equation}
    For $1 < q \leq 2$, the quantum $q$-Tsallis entropy satisfies only \textit{pseudo-additivity}~\cite{Raggio95}. Nevertheless, \QJTq{} can still be interpreted as a \textit{distance version} of the quantum $q$-Tsallis entropy difference for these values of $q$. Consequently, the reductions underlying the proof heavily rely on the inequalities between \QJTq{} and the trace distance. 
    However, such inequalities are currently known only in the case $q=1$~\cite{FvdG99,Holevo73,BH09}, presenting the first technical challenge. 
    \item For \TsallisQEA{}, the reduction essentially relies on the lower and upper bounds on the quantum $q$-Tsallis entropy of a quantum state $\rho$ in terms of the trace distance between the state and the maximally mixed state, when the trace distance is promised to be a fixed value. These bounds are also only known for the case of $q=1$~\cite{Vajda70,CCKV08,KLGN19}, leading to the second technical challenge. 
\end{enumerate}

For clarity, we summarize the correspondence between our reductions for establishing \Cref{thm:TsallisQE-hardness-informal} and the new inequalities in \Cref{table:hardness-proof-TsallisQED-TsallisQEA}, where the $q$-logarithm $\ln_q(x) \coloneqq \frac{1-x^{1-q}}{q-1}$.\footnote{As $q$ approaches $1$, the $q$-logarithm becomes the natural logarithm. For further details and references on $q$-logarithm, please refer to the beginning of \Cref{subsec:closeness-measures-states-and-distributions}.} 

\begin{table}[!ht]
\centering
\adjustbox{max width=\textwidth}{
\begin{tabular}{ccccc}
    \specialrule{0.1em}{2pt}{2pt}
    Problem & Regime of $q$ & Reduction from & New inequalities\\
    \specialrule{0.1em}{2pt}{2pt}
    \makecell{\textsc{ConstRank}\\ \TsallisQED{}\\ \footnotesize{\Cref{thm:TsallisQE-hardness-informal}\ref{thmitem:TsallisQE-BQPhard}}}
    & $1 \!\leq\! q \!\leq\! 2$
    & \makecell{\PureQSD{} is \BQP{}-hard\\ \footnotesize{adapted from~\cite{RASW23}}} 
    & \makecell{\small{$\Hq\rbra*{\frac{1}{2}} \!-\! \Hq\rbra*{\frac{1-\td}{2}} \leq \QJTq \leq \Hq\rbra*{\frac{1}{2}} \td^q$}\\ \footnotesize{\Cref{thm:QJTq-vs-td}}}\\
    \specialrule{0.05em}{2pt}{2pt}
    \makecell{\TsallisQED{}\\ \footnotesize{\Cref{thm:TsallisQE-hardness-informal}\ref{thmitem:TsallisQE-QSZKhard}}}
    & $1 \!\leq\! q \!\leq\! 1\!+\!\frac{1}{n-1}$
    & \makecell{\QSD{} is \QSZK{}-hard\\ \footnotesize{\cite{Wat02,Wat09}}} 
    & \makecell{\small{$\Hq\rbra*{\frac{1}{2}} \!-\! \Hq\rbra*{\frac{1-\td}{2}} \leq \QJTq$}\\ \footnotesize{\Cref{thm:QJTq-vs-td}}}\\
    \specialrule{0.05em}{2pt}{2pt}
    \makecell{\TsallisQEA{}\\ \footnotesize{\Cref{thm:TsallisQE-hardness-informal}\ref{thmitem:TsallisQE-QSZKhard}}}
    & $q=1\!+\!\frac{1}{n-1}$
    & \makecell{\QSCMM{} is \NIQSZK{}-hard \\ \footnotesize{\cite{Kobayashi03,BASTS10,CCKV08}}} 
    & \makecell{\small{$\rbra*{1\!-\!\td\!-\!\frac{1}{2^n}} \ln_q(2^n) \leq \Sq \leq \ln_q(2^n(1\!-\!\td))$}\\ \footnotesize{\Cref{lemma:inequality-uniformTV-TsallisEA}}}\\
    \specialrule{0.1em}{2pt}{2pt}
\end{tabular}
}
\caption{Reductions for \TsallisQED{} and \TsallisQEA{}, and the corresponding inequalities.}
\label{table:hardness-proof-TsallisQED-TsallisQEA}
\end{table}

Once we have established these new inequalities, together with our new bounds for the Tsallis binary entropy $\Hq(x) \leq \Hq\rbra*{\frac{1}{2}} \sqrt{4x(1-x)}$ (see \Cref{thm:Tsallis-binary-entropy-bounds}, where previously only the case of $q=1$ was known~\cite{Lin91,Topsoe01}), we can establish our three hardness results under Karp reduction in \Cref{thm:TsallisQE-hardness-informal} through relatively complicated and detailed analyses. The additional two hardness results for \ConstRankTsallisQEA{} and \TsallisQEA{} under Turing reduction in \Cref{thm:TsallisQE-hardness-informal} follow straightforwardly from a binary search for promise problems. 

\vspace{1em}
In the remainder of this subsection, we provide insights into proving the new inequalities in \Cref{table:hardness-proof-TsallisQED-TsallisQEA}. 
The first technical challenge involves establishing the inequalities between \QJTq{} and the trace distance. The main barrier is to provide the data-processing inequality $\QJTq(\Phi(\rho_0),\Phi(\rho_1)) \leq \QJTq(\rho_0,\rho_1)$ for $1 < q \leq 2$.\footnote{We generalize the approach in~\cite{BH09} for $q=1$. 
Using the data-processing inequality with a measurement channel, we can establish the lower bound via the measured version of \QJTq{} (see \Cref{eq:measured-f-divergences}) and the classical counterpart inequality for $\JTq$ in~\cite{BH09}. 
For the upper bound, we construct new states $\hat{\rho}_0$ and $\hat{\rho}_1$ with an ancillary qubit, making $\QJTq(\hat{\rho}_0,\hat{\rho}_1)$ related to the trace distance for $1 < q \leq 2$ (and coincide with the trace distance for $q=1$). 
Applying the data-processing inequality with the partial trace, we obtain $\QJTq(\rho_0,\rho_1) \leq \QJTq(\hat{\rho}_0,\hat{\rho}_1)$. 
} This implies that applying any quantum channel $\Phi$ on states $\rho_0$ and $\rho_1$ does not increase the divergence between them. For $q=1$, the quantum Jensen--Shannon divergence ($\QJS$) can be decomposed into a sum of quantum relative entropy $\D(\rho_0\|\rho_1)$: 
\begin{equation}
    \label{eq:QJS-decomposition-D}
    \QJS(\rho_0,\rho_1) \coloneqq \S\rbra*{\frac{\rho_0+\rho_1}{2}} - \frac{\S(\rho_0) + \S(\rho_1)}{2} = \frac{1}{2} \rbra*{\D\rbra*{\rho_0 \Big\| \frac{\rho_0+\rho_1}{2}} + \D\rbra*{\rho_1 \Big\| \frac{\rho_0+\rho_1}{2}} }. 
\end{equation}
Since the data-processing inequality (essentially, the joint convexity) for the quantum relative entropy was established decades ago~\cite{Lieb73,Uhlmann77}, and given the equality in \Cref{eq:QJS-decomposition-D}, it directly follows that the data-processing inequality also holds for \QJS{}. 
However, a similar decomposition does not apply to the quantum $q$-Tsallis entropy when $q \neq 1$. Fortunately, the joint convexity of $\QJTq$ for $1 \leq q \leq 2$, specifically,
\[ \QJTq\rbra*{(1-\lambda) \rho_0 + \lambda \rho'_0, (1-\lambda) \rho_1 + \lambda \rho'_1} \leq (1-\lambda)\QJTq\rbra*{\rho_0,\rho_1) + \lambda \QJTq(\rho'_0,\rho'_1}, \]
was established few years ago~\cite{CT14,Virosztek19}, where $0 < \lambda < 1$. As a consequence, once we establish the data-processing inequality for \QJTq{}, we can then generalize the inequalities between \QJS{} and the trace distance to \QJTq{} for $1 \leq q \leq 2$, using the same approach applied to \QJS{}. 

For the second technical challenge, specifically the bounds for $\Sq(\rho)$ when $\td\rbra*{\rho,(I/2)^{\otimes n}}=\gamma$ is fixed, it suffices to focus on the classical counterpart,\footnote{More specifically, let $p$ denote the distribution of the eigenvalues of $\rho$, and let $\nu$ be the uniform distribution over $2^n$ items. This task is exactly equivalent to proving the bounds for $\Hq(p)$ when $\TV\rbra*{p,\nu}=\gamma$ is fixed.} as the maximally mixed state commutes with any state $\rho$. 
The lower bound can be established by following the approach in~\cite{KLGN19} for $q=1$. 
On the other hand, the upper bound for $q=1$ can be derived using Vajda's inequality~\cite{Vajda70}, but similar results for $q \neq 1$ are unknown. However, by assuming an appropriate condition between $q$ and the fixed distance $\gamma$, we can deduce an upper bound analogous to the $q=1$ case. 

\subsection{Discussion and open problems}

    Our first main theorem (\Cref{thm:TsallisQE-containment-informal}) provides an efficiently computable lower bound for the von Neumann entropy $\S(\rho)$. This naturally raises the question: 
    \begin{enumerate}[label={\upshape(\roman*)},itemsep=0.3em,topsep=0.3em,parsep=0.3em]
        \item Is there an efficiently computable upper bound for $\S(\rho)$, perhaps based on some relaxed notion of the von Neumann entropy? 
    \end{enumerate}

    The quantum Tsallis entropy $\Sq(\rho)$ in the regime $1<q<2$ exhibits distinct behavior compared to both $\S(\rho)$ and $\S_2(\rho)=1-\tr(\rho^2)$, leading to another open problem: 
    \begin{enumerate}[label={\upshape(\roman*)},itemsep=0.3em,topsep=0.3em,parsep=0.3em]
        \setcounter{enumi}{1}
        \item Can we find further applications of estimating $\Sq(\rho)$ in the regime $1<q<2$? 
    \end{enumerate}
    Moreover, two open problems arise regarding quantitative bounds and (\textsf{NI})\QSZK{} containments: 
    \begin{enumerate}[label={\upshape(\roman*)},itemsep=0.3em,topsep=0.3em,parsep=0.3em]
        \setcounter{enumi}{2}
        \item \label{probitem:improving-bounds} Can the query and sample bounds in \Cref{table:query-and-sample-complexity} be improved, especially for $q \geq 1 + \Omega(1)$? 
        \item Can we establish that \TsallisQED{} (or \TsallisQEA{}) in the regime $1 < q < 1+\frac{1}{n-1}$, as specified in \Cref{thm:TsallisQE-hardness-informal}\ref{thmitem:TsallisQE-QSZKhard}, is also contained in \QSZK{} (or \NIQSZK{})? 
    \end{enumerate}

    Lastly, it is natural to consider generalizations of the von Neumann entropy tighter than $\Sq(\rho)$ for $q > 1$, particularly $\Sq(\rho)$ for $0 < q < 1$ and the quantum R\'enyi entropy $\S_{\alpha}^R(\rho) \coloneqq \frac{\ln\tr(\rho^{\alpha})}{1-\alpha}$: 
    \begin{enumerate}[label={\upshape(\roman*)},itemsep=0.3em,topsep=0.3em,parsep=0.3em]
        \setcounter{enumi}{4}
        \item What are the containment and hardness of estimating $\Sq(\rho)$ in the regime $0 <q<1$? 
        \item Since \textsc{R\'enyiQEA}$_{\alpha}$ for $1 < \alpha \leq \frac{1}{n-1}$ is \textit{intuitively} \QSZK{}-hard, as per \Cref{thm:TsallisQE-hardness-informal}\ref{thmitem:TsallisQE-QSZKhard}, can we obtain (rigorous) computational hardness results for estimating $\S^R_{\alpha}(\rho)$ with $\alpha > 0$? 
    \end{enumerate}
    
\subsection{Related works}

\paragraph{(Quantum) property testing for probability distributions.}
(Near-)optimal classical estimators are known for Shannon, R\'enyi, and Tsallis entropies \cite{JVHW15,JVHW17,WY16,AOST17}. 
Quantum testers for classical probability distributions were initiated in \cite{BHH11}.
Quantum algorithms for $\ell_1$ distance of probability distributions were investigated in \cite{BHH11,CFMdW10,Mon15,GL20,LWL24}.
Quantum estimators for the Shannon and R{\'e}nyi entropies were proposed in \cite{LW19,GL20,WZL24}.
Notably, matching query lower bounds for estimating the Shannon entropy and $\ell_1$ distance from the uniform distribution were shown in \cite{BKT20}.

\paragraph{Quantum property testing for quantum states.}

Quantum sample complexities for a series of problems have been studied in the literature.
For von Neumann entropy and R\'enyi entropy estimations, the dimension-dependence was studied in \cite{AISW20}, the dependence on the reciprocal of the minimum non-zero eigenvalue of the quantum state was studied in \cite{WZW23}, and the rank-dependence and time-efficiency were studied in \cite{WZ24b}.
Other problems include tomography \cite{HHJ+17,OW16}, spectrum testing \cite{OW21}, closeness testing or estimation with respect to fidelity and trace distance \cite{BOW19,GP22,WZ24,WZ24c,LWWZ25}.
Quantum inner product estimation is a basic task and is well-known to be solved by the SWAP test \cite{BCWdW01}. Recently, a distributed quantum algorithm for quantum inner product estimation was given in \cite{ALL22}, where they also provided a matching lower bound; this was later generalized to fidelity estimation between pure states with limited quantum communications \cite{AS24}.
As a special case of quantum inner product estimation, tight bounds for purity estimation with and without restricted quantum measurements were shown in \cite{CWLY23,GHYZ24,LGDC24}.

The quantum query complexities are also extensively studied.
For von Neumann entropy estimation, the dimension-dependence was studied in \cite{GL20}, the dependence on the reciprocal of the minimum non-zero eigenvalue of the quantum state was studied in \cite{CLW20}, the multiplicative error-dependence was studied in \cite{GHS21}, and the rank-dependence was studied in \cite{WGL+22}. 
In \cite{SLLJ24}, they presented a rank-dependent estimator for the $q$-Tsallis entropy with integer $q$ larger than the rank of quantum states.
Additionally, a dimension-dependent estimator for the quantities $\cbra{\tr(\rho^k)}_{k=1}^N$ was given in~\cite{WSP+24}.
For R\'enyi entropy estimation, the query complexity was first studied in \cite{SH21}, the rank-dependence was studied in \cite{WGL+22}, and was later improved in \cite{WZL24}.
Other problems include tomography \cite{vACGN23}, and the estimations of fidelity and trace distance \cite{WZC+23,WGL+22,GP22,WZ24,Wang24pureQSD,LWWZ25}.

In \cite{GH20}, the \textsc{Quantum Entropy Difference Problem} (with respect to von Neumann entropy) with shallow circuits was shown to have (conditional) hardness.
The computational complexity of the space-bounded versions of the \textsc{Quantum Entropy Difference Problem} and \textsc{Quantum State Distinguishability Problem} were studied in \cite{LGLW23}.

\paragraph{Quantum algorithms for estimating the Schatten $p$-norm.} Estimating the Schatten $p$-norm, $\tr(|A|^p)$, of an $O(\log n)$-local Hermitian matrix $A$ on $n$ qubits to within an additive error of $2^{n-p} \epsilon \|A\|^p$, where $\epsilon(n) \leq 1/\poly(n)$ and real $p(n) \leq \poly(n)$, was proven to be $\mathsf{DQC1}$-complete in \cite{CM18}. 
In \cite{LS20}, with a unitary block-encoding of a matrix $A$, a quantum algorithm was proposed for estimating the Schatten $p$-norm, $\rbra{\tr\rbra{\abs{A}^p}}^{1/p}$, to relative error $\epsilon$ for integer $p$. This algorithm requires a condition number $\kappa$ such that $A \geq I/\kappa$, particularly when $p$ is odd.

\paragraph{Recent developments.} More than six months after the release of our work, a related dichotomy -- analogous to our results in \Cref{thm:TsallisQE-containment-informal,thm:TsallisQE-hardness-informal} -- was established in~\cite{LW25Lalpha}, concerning the computational complexity of estimating the quantum $\ell_a$ distance defined via the Schatten norm. While our query complexity upper bound for estimating $\Sq(\rho)$ in the regime $q \geq 1+\Omega(1)$ (\Cref{thm:tr-power-constant-queries}) remains the state of the art, the sample complexity upper bound (\Cref{thm:tr-power-constant-samples}) was subsequently improved in~\cite{CW25}, which appeared in the same month. This improvement, based on a more direct approach using weak Schur sampling,  made progress on our open problem \ref{probitem:improving-bounds}. Specifically, \cite{CW25} obtained a bound of $\widetilde{\Theta}(1/\epsilon^2)$ for $q>2$, and bounds of $\widetilde{O}\rbra[\big]{1/\epsilon^{\frac{2}{q-1}}}$ and $\Omega\rbra{1/\epsilon^{\max\cbra[\big]{\frac{1}{q-1},2}}}$ for $1+\Omega(1) \leq q < 2$. Another follow-up work~\cite{CWYZ25}, also released that month, established a sample complexity lower bound of $\Omega(q/\epsilon^2)$ for estimating $\tr(\rho^q)$ when $q$ is an integer. 
One year after the release of our work, it was shown in~\cite{Wan25} that the quantum query complexity of estimating $\Sq(\rho)$ is $\widetilde{\Theta}\rbra{1/\sqrt{q}\varepsilon}$ for integer $q \geq 2$.
Later, it was established in~\cite{Liu26} that \ConstRankTsallisQEA{} is \BQP{}-hard under Karp reduction for all $q>0$. This result extends the \BQP{}-hard regime $1 \leq q \leq 2$ identified in our work (\Cref{thm:ConstRankTsallisQEA-BQPhard}), and further implies that both \TsallisQEA{} and \TsallisQED{} are \BQP{}-complete for all $q\geq 1+\Omega(1)$. 
More recently, one and a half years after the release of our work, rank-dependent sample complexity upper and lower bounds for estimating $\Sq(\rho)$ in the real-valued regime $0 < q < 1$, together with the corresponding \NIQSZK{}-hardness results, were established in~\cite{CLW26}.

In addition to the aforementioned progress in quantum state testing, our efficient uniform approximation of the positive constant power function (\Cref{lemma:computable-fullrange-bestPolyApprox-positivePower}) has been recently used to learn the $q$-th singular value moments of unknown quantum channels for real $q>2$ in~\cite{NRTM25}.

\section{Preliminaries}

We assume that the reader is familiar with quantum computation and the theory of quantum information. For an introduction, the textbook \cite{NC10} provides a good starting point.

We adopt the following conventions throughout the paper: (1) we denote $\sbra{n} \coloneqq \cbra{1, 2, \dots, n}$; (2) we use both notations, $\log(x)$ and $\ln(x)$, to represent the natural logarithm for any $x \in \bbR^{+}$; (3) the notation $\widetilde{O}(f)$ is defined as $O(f \operatorname{polylog}(f))$; (4) we utilize the notation $\ket{\bar{0}}$ to represent $\ket{0}^{\otimes a}$ with $a>1$; and (5) we use $\Abs{A}$ to denote the operator norm (equivalently, the Schatten $\infty$-norm) of a matrix $A$. 

\paragraph{Notions on linear maps and quantum channels.}
We recommend~\cite[Section 2.3]{AS17} as an introduction on superoperators and quantum channels. 
Let $\calH_1$ and $\calH_2$ be finite-dimensional Hilbert spaces with $\dim(\calH_i) = N_i = 2^{n_i}$ for $i\in\{1,2\}$. Let $\rmL(\calH_1, \calH_2)$ denote linear maps from $\calH_1$ to $\calH_2$, and specifically, let $\rmL(\calH)$ denote linear maps from $\calH$ to $\calH$. 
A map $\Phi \colon \rmL(\calH_1) \rightarrow \rmL(\calH_2)$ is called \textit{self-adjointness-preserving} if $\Phi(X^{\dagger}) = \rbra*{\Phi(X)}^{\dagger}$ for any $X \in  \rmL(\calH_1)$.
We further say that a self-adjointness-preserving map $\Phi \colon \rmL(\calH_1) \rightarrow \rmL(\calH_2)$ is a \textit{quantum channel} if $\Phi$ is a completely positive trace-preserving map. 
Here, a map $\Phi$ is \textit{trace-preserving} if $\tr(\Phi(X)) = \tr(X)$ for any $X \in  \rmL(\calH_1)$.
Let $\{\ket{v_i}\}_{i\in[N_1]}$ be an orthonormal basis of $\calH_1$, and a map $\Phi$ is \textit{completely positive} if $\Phi \otimes I_n$ is positive for any $n \in \bbN$, where $I_n$ is the identity map of the dimension $n$. 

Let $\D(\calH)$ be the set of all density operators, which are semi-positive and trace-one matrices on $\calH$. 
Let the trace norm of a linear map $X$ be $\Abs{X}_1 \coloneqq \tr\big(\sqrt{X^{\dagger} X}\big)$. 
For any quantum channels $\calE$ and $\calF$ that act on $\D(\calH)$, the \textit{diamond norm distance} between them is defined as
\[ \Abs{\calE-\calF}_{\diamond} \coloneqq \sup_{\rho \in \D(\calH\otimes \calH')} \Abs*{\rbra*{\calE\otimes \calI_{\calH'}}(\rho) - \rbra*{\calF \otimes \calI_{\calH'}}(\rho)}_1. \]

\subsection{Closeness measures for distributions and quantum states}
\label{subsec:closeness-measures-states-and-distributions}

Since both classical and quantum Tsallis entropy are central to this paper, we introduce the \textit{$q$-logarithm} function $\ln_q \colon \bbR^+ \rightarrow \bbR$ for any real $q\neq 1$:
\[ \forall x\in \bbR^+, \quad \ln_q(x) \coloneqq \frac{1-x^{1-q}}{q-1}. \]

The $q$-logarithm is a generalization of the natural logarithm, as it is straightforward to verify that $\lim_{q \rightarrow 1} \ln_q(x) = \ln(x)$ for any $x\in \bbR^+$. However, the $q$-logarithm exhibits different behavior when $q \neq 1$, for instance, $\ln_q(xy) = \ln_q(x) + \ln_q(y) + (1-q) \ln_q(x)\ln_q(y).$ For additional properties of the $q$-logarithm, see the references~\cite[Appendix]{Tsallis01} and~\cite{Yamano02}.

\vspace{1em}
In the rest of this subsection, we provide useful closeness measures for probability distributions in \Cref{subsubsec:closeness-measures-distributions} and for quantum states in \Cref{subsubsec:closeness-measures-states}, which are later used to prove new inequalities and bounds in \Cref{sec:properties-QJTq} and to establish the reductions and the corresponding computational hardness results in \Cref{sec:hardness-via-QJTq-reductions}.
For convenience, we use a general convention $D_1 \leq D_2$ to denote an inequality between two distances or divergences, whether classical or quantum. In particular, this notation, as seen in \textit{the titles of technical lemmas} (e.g., \Cref{lemma:TV-leq-JSq}), indicates that $D_1$ is bounded above by a function $f$ of $D_2$, i.e., $D_1 \leq f(D_2)$; or that $D_2$ is bounded below by a function $g$ of $D_1$, i.e., $g(D_1) \leq D_2$. 

\subsubsection{Closeness measures for classical probability distributions}
\label{subsubsec:closeness-measures-distributions}
We begin by defining the total variation distance:
\begin{definition}[Total variation distance]
	\label{def:statDist}
    Let $p_0$ and $p_1$ be two probability distributions over $[N]$. The total variation distance between two $p_0$ and $p_1$ is defined by
	\[\TV(p_0,p_1)  \coloneqq  \frac{1}{2}\|p_0-p_1\|_1 = \frac{1}{2}\sum_{x\in[N]} |p_0(x)-p_1(x)|.\]
\end{definition}

Then we define the Tsallis entropy and provide useful properties of the Tsallis entropy.  
\begin{definition}[$q$-Tsallis entropy and Shannon entropy]
    Let $p$ be a probability distribution over $[N]$. The $q$-Tsallis entropy of $p$ is defined by \[\Hq(p) \coloneqq \frac{1-\sum_{x\in[N]} p(x)^q}{q-1} = -\sum_{x\in[N]} p(x)^q \ln_q\rbra*{p(x)}.\]
    The Shannon entropy is the limiting case of the $q$-Tsallis entropy as $q \rightarrow 1$\emph{:}
    \[ \H_1(p) \coloneqq \lim_{q \rightarrow 1} \Hq(p) \quad \text{and} \quad \lim_{q \rightarrow 1} \Hq(p) = \H(p) \coloneqq - \sum_{x \in \sbra{N}} p\rbra{x} \ln \rbra{p\rbra{x}}. \]
    For $N=2$, we slightly abuse the notation by writing the $q$-Tsallis binary entropy and the \emph{(}Shannon\emph{)} binary entropy as $\Hq(p_0) = \Hq(1-p_0) = \Hq(p)$ and $\H(p_0) = \H(1-p_0) = \H(p)$, respectively. 
\end{definition}

It is noteworthy that the properties in \Cref{lemma:Tsallis-entropy-properties} were also provided in~\cite{Tsallis88} without proofs. In addition, by considering the eigenvalues of any quantum state, \Cref{lemma:Tsallis-entropy-properties} straightforwardly extends to quantum $q$-Tsallis entropy (see \Cref{def:quantum-Tsallis-entropy}). 
\begin{lemma}[Basic properties of Tsallis entropy, partially adapted from~\cite{Daroczy70}]
    \label{lemma:Tsallis-entropy-properties}
    Let $p$ and $p'$ be two probability distributions over $[N]$ with $N\geq 2$, and let $\nu$ be the uniform distribution over $[N]$. We have the following properties of the Tsallis entropy $\Hq(p)$ with $q > 0$: 
    \begin{itemize}[topsep=0.33em, itemsep=0.33em, parsep=0.33em] 
        \item \emph{\textbf{Concavity}:} For any $\lambda \in [0,1]$, $\Hq((1-\lambda) p + \lambda p') \geq (1-\lambda) \Hq(p) + \lambda \Hq(p')$. Equivalently, $F(q;x) \coloneq \frac{x-x^q}{q-1}$ is concave in $x\in[0,1]$ for any fixed $q>0$, and $\Hq(p) = \sum_{i\in[N]} F(q;p(i))$.
        \item \emph{\textbf{Extremes}:} $0 \leq \Hq(p) \leq \Hq(\nu) = \frac{1-n^{1-q}}{q-1}$. Specifically, $\Hq(p) = \Hq(\nu)$ occurs when $p=\nu$, and $\Hq(p) = 0$ occurs when $p(i)=\begin{cases} 1,~ &i=k\\ 0,~ &i\neq k \end{cases}$ for any $k \in [N]$. 
        \item \emph{\textbf{Monotonicity}:} For any $q$ and $q'$ satisfying $0 < q \leq q'$, $\H_q(p) \geq \H_{q'}(p)$. 
    \end{itemize}
\end{lemma}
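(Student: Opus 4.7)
The plan is to prove the three properties in sequence, using concavity as the main tool for the extremes.

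For concavity, I would directly compute the second derivative of $F(q;x) = \frac{x - x^q}{q-1}$ in $x$, obtaining $\partial_{xx} F(q;x) = -q\, x^{q-2}$, which is nonpositive on $(0,1]$ for every $q > 0$; the case $q=1$ is handled by the limiting form $F(1;x) = -x \ln x$ with second derivative $-1/x$. Since $F(q;\cdot)$ is continuous at $x=0$ and concave on $(0,1]$, it is concave on $[0,1]$. Because $\Hq(p) = \sum_{i \in [N]} F(q;p(i))$ is a sum of concave functions evaluated at the (linear) coordinates of $p$, it is itself concave in $p$, giving the first bullet.

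For the extremes, the lower bound $\Hq(p) \geq 0$ follows once I observe that $F(q;x) \geq 0$ on $[0,1]$: the numerator $x - x^q$ and the denominator $q-1$ have matching signs (both nonnegative if $q > 1$, both nonpositive if $0 < q < 1$), with equality iff $x \in \{0,1\}$; summing over $i$ then forces $p$ to be a point mass. For the upper bound, Jensen's inequality for the concave map $F(q;\cdot)$ yields $\frac{1}{N}\sum_i F(q;p(i)) \leq F\bigl(q; \frac{1}{N}\sum_i p(i)\bigr) = F(q; 1/N)$, so $\Hq(p) \leq N \cdot F(q;1/N) = \frac{1 - N^{1-q}}{q-1} = \Hq(\nu)$, with equality precisely at the uniform distribution by strict concavity on $(0,1]$.

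The main obstacle will be monotonicity in $q$. I would rewrite the entropy as $\Hq(p) = \sum_i p(i)\,\phi_q(p(i))$ with $\phi_q(x) \coloneqq \frac{1 - x^{q-1}}{q-1}$ and aim to show that $\phi_q(x)$ is nonincreasing in $q$ for each fixed $x \in (0,1]$. Substituting $u = q-1$ and $t = -u \ln x$ reduces the claim $\partial_q \phi_q(x) \leq 0$ to the elementary inequality $(1+t)e^{-t} \leq 1$ for all $t \in \mathbb{R}$, which follows by examining $h(t) \coloneqq (1+t)e^{-t} - 1$: since $h(0) = 0$ and $h'(t) = -t e^{-t}$ (positive on $t<0$, negative on $t>0$), $h$ attains its global maximum $0$ at $t = 0$. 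Since the weights $p(i)$ are nonnegative, termwise monotonicity lifts to $\Hq(p)$, and continuity of $\phi_q$ at $q=1$ (with limit $-\ln x$) handles the boundary value without difficulty.
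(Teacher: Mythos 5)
Your proposal is correct, and all three bullets check out: $\partial_{xx}F(q;x) = -qx^{q-2} \leq 0$ is right, the sign analysis of $x-x^q$ against $q-1$ gives $F \geq 0$ with equality only at $x\in\{0,1\}$, the Jensen computation $N\cdot F(q;1/N) = \frac{1-N^{1-q}}{q-1}$ is exact, and the reduction of $\partial_q \phi_q(x) \leq 0$ to $(1+t)e^{-t}\leq 1$ is verified by direct differentiation. Your route is genuinely different from the paper's, and more self-contained: the paper obtains both concavity and the extremes by citing Dar\'oczy's Theorem~6 (after observing that $F(q;x)$ is a positive rescaling of the function treated there), whereas you prove them from scratch with a second derivative and Jensen's inequality — your argument also yields the equality cases more transparently. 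For monotonicity, the paper differentiates $F(q;x)$ in $q$ and reduces to $G(q;x) = x^q - x - (q-1)x^q\log x \leq 0$, which it proves by a second differentiation in $q$ and evaluation at $q=1$; your decomposition $\Hq(p) = \sum_i p(i)\,\phi_q(p(i))$ leads, after the substitution $t = -(q-1)\ln x$, to an inequality that is literally equivalent to the paper's $G \leq 0$ (divide $G$ by $x$ and substitute), so the two monotonicity proofs differ only in how that elementary inequality is dispatched. Two cosmetic points worth tightening: at $x=0$ and $0<q<1$ your $\phi_q(0)$ is infinite, so you should argue on the terms $p(i)\phi_q(p(i)) = F(q;p(i))$, which vanish at $p(i)=0$ for every $q>0$ (as you implicitly do); and the Jensen equality case needs one extra line to rule out mass at $x=0$, since $F$ is only strictly concave on $(0,1]$ — the supporting line of $F$ at $1/N$ takes the value $N^{-q} > 0 = F(0)$ at the origin, which closes that gap.
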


\begin{proof}
    For the first item, by inspecting the proof of~\cite[Theorem 6]{Daroczy70}, we know that $\frac{x-x^q}{1-2^{1-q}} \cdot \frac{1-2^{1-q}}{q-1} = F(q;x)$ is concave in $x\in[0,1]$ for any fixed $q\neq 1$. It is easy to verify that $\Hq(p) = \sum_{i\in[N]}F(q;p(i))$, we have that $\Hq(p)$ is concave. 
        
    For the second item, note that $\frac{q-1}{1-2^{1-q}} \geq 0$ for $q\neq 1$ and $\lim_{q\rightarrow 1}  \frac{q-1}{1-2^{1-q}} = \frac{1}{\ln{2}}$. Hence, by~\cite[Theorem 6]{Daroczy70}, we deduce $0 \leq \Hq(p) \leq \Hq(\nu)$.
    Moreover, because $F(q;x)$ is non-negative and $F(q;x)=0$ occurs when $x=1$, we conclude that $\Hq(p)=0$ occurs when $p$ satisfies the desired condition.

    For the third item, since $\lim_{q\rightarrow 1}\Hq(x) = \H(x)$, it is enough to show that $\frac{\partial}{\partial q} \Hq(x) \leq 0$ for any $q \neq 1$ and $x \in [0,1]$. Given that $\Hq(p) = \sum_{i\in[n]}F(q;p(i))$, it remains to prove that $\frac{\partial}{\partial q} F(q;x) \leq 0$, specifically:
    \begin{equation}
        \label{eq:Tsallis-binary-entropy-partialQ}
        \frac{\partial}{\partial q} F(q;x) = -\frac{x-x^q}{(q-1)^2}-\frac{x^q \log (x)}{q-1} \leq 0 \Leftrightarrow G(q;x) \coloneqq x^q -x - (q-1) x^q \log(x) \leq 0. 
    \end{equation}
    
    A direct calculation implies that $\frac{\partial}{\partial q} G(q;x) = -(q-1)x^q\ln^2(x)$ for any $x \in [0,1]$. This inequality shows that for any fixed $x\in[0,1]$, $G(q;x)$ is monotonically increasing for $0<q\leq 1$ and monotonically decreasing for $q>1$. Hence, by noticing $\max_{q\geq 0} G(q;x) \leq G(1;x) = 0$ for any $x \in [0,1]$, we establish \Cref{eq:Tsallis-binary-entropy-partialQ} and the monotonicity. 
\end{proof}

Next, we define a variant of the Jensen--Shannon divergence based on the $q$-Tsallis entropy:
\begin{definition}[$q$-Jensen--(Shannon--)Tsallis divergence, adapted from~\cite{BR82}]
    Let $p_0$ and $p_1$ be two probability distributions over $[N]$. The $q$-Jensen--(Shannon--)Tsallis divergence between $p_0$ and $p_1$ is defined as 
    \[ \JTq(p_0, p_1) \coloneqq \begin{cases}
    \Hq \left( \frac{p_0+p_1}{2} \right) - \frac{1}{2} \left( \Hq(p_0) + \Hq(p_1) \right), &q\neq 1\\
    \H \left( \frac{p_0+p_1}{2} \right) - \frac{1}{2} \left( \H(p_0) + \H(p_1) \right), &q=1
    \end{cases}. \]
    Specifically, the Jensen--Shannon divergence $\JS(p_0,p_1) = \JT{1}(p_0,p_1)$.
\end{definition}

Lastly, we provide a useful bound on the divergence $\JTq$, which generalizes the bound $\H\big(\frac{1}{2}\big) - \H\big(\frac{1}{2} -\frac{\TV(p_0,p_1)}{2}\big) \leq \JS(p_0,p_1)$ in~\cite[Theorem 5]{Top00} for the case of $q=1$: 
\begin{lemma}[$\TV \leq \JTq$, adapted from~{\cite[Theorem 9]{BH09}}]
    \label{lemma:TV-leq-JSq}
    Let $p_0$ and $p_1$ be two probability distributions over $[N]$. For any $1 \leq q \leq 2$, we have the following inequality\emph{:}\footnote{It is evident that $\Hq\big( \frac{1-x}{2} \big) = \Hq\big( \frac{1+x}{2} \big)$ for any $x \in [0,1]$. Moreover, the proof of the lower bound in~\cite[Theorem 9]{BH09} uses the notation $V(p_0,p_1) \coloneqq \sum_{i=1}^n |p_0(i)-p_1(i)| = \|p_0-p_1\|_1 = 2\TV(p_0,p_1)$ defined in~\cite{Top00}, where $p_0$ and $p_1$ are probability distributions over $[N]$. }
    \[ \Hq\left(\frac{1}{2}\right) - \Hq\left( \frac{1}{2} - \frac{\TV(p_0,p_1)}{2} \right) \leq \JTq(p_0,p_1). \]
\end{lemma}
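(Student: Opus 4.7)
The plan is to adapt the strategy of \cite{BH09} for the Jensen–Shannon case ($q=1$), itself an extension of the Fano-type argument in \cite[Theorem 5]{Top00}, to the Tsallis-parametrized range $1 \leq q \leq 2$. The two-step structure is (i) coarsen $p_0, p_1$ to binary distributions via the Scheff\'e channel and reduce the statement by data processing, and (ii) handle the binary case by applying joint convexity of $\JTq$ to a carefully chosen symmetrization that produces a pair whose midpoint is the uniform distribution.

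Concretely, set $d := \TV(p_0, p_1)$, $A := \{x \in [N] : p_0(x) \geq p_1(x)\}$, and let $\tilde p_i := (p_i(A), p_i(A^c))$, so that $\tilde p_0 = (a, 1-a)$ and $\tilde p_1 = (b, 1-b)$ with $a - b = p_0(A) - p_1(A) = d$. Invoking the data-processing inequality for $\JTq$ in this regime (established in \cite{BH09}), we have $\JTq(p_0, p_1) \geq \JTq(\tilde p_0, \tilde p_1)$, so it suffices to treat the binary case. Introduce the symmetrized pair $\tilde p_0' := (1-b, b)$, $\tilde p_1' := (1-a, a)$. The symmetry $\Hq(x) = \Hq(1-x)$ gives $\JTq(\tilde p_0', \tilde p_1') = \JTq(\tilde p_0, \tilde p_1)$, and a direct computation yields $(\tilde p_0 + \tilde p_0')/2 = ((1+d)/2, (1-d)/2)$ and $(\tilde p_1 + \tilde p_1')/2 = ((1-d)/2, (1+d)/2)$. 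Applying joint convexity of $\JTq$ for $1 \leq q \leq 2$ — the classical specialization of the quantum joint convexity of \QJTq{} from \cite{CT14, Virosztek19} — we obtain
\[
\JTq\!\left(\tfrac{\tilde p_0 + \tilde p_0'}{2}, \tfrac{\tilde p_1 + \tilde p_1'}{2}\right) \leq \tfrac{1}{2}\!\left[\JTq(\tilde p_0, \tilde p_1) + \JTq(\tilde p_0', \tilde p_1')\right] = \JTq(\tilde p_0, \tilde p_1),
\]
and the left-hand side evaluates to $\Hq(1/2) - \Hq((1-d)/2) = \Hq(1/2) - \Hq(1/2 - d/2)$ since the midpoint of the two input binary distributions is the uniform $(1/2, 1/2)$. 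Chaining the inequalities gives the claim.

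The main technical obstacle is the data-processing step: because $\JTq$ is not a standard $f$-divergence when $q \neq 1$, data processing does not follow from the usual Csisz\'ar framework. The plan is to invoke this inequality directly from \cite{BH09}, where it is derived from joint convexity together with a careful analysis of deterministic coarsening; once data processing is in hand, the symmetrization argument plus joint convexity make the binary reduction and the final algebraic evaluation routine.
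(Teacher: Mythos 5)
The paper gives no proof of this lemma at all: it is imported from \cite[Theorem 9]{BH09}, with only the remark that joint convexity of $\JTq$ is the key ingredient, so there is nothing in the paper to compare your argument against step by step. Your handling of the binary case is correct and, at $q=2$, is in fact an equality: for $\tilde p_0=(a,1-a)$, $\tilde p_1=(b,1-b)$ with $a-b=d$, the symmetrization-plus-joint-convexity step yields exactly $\JT{2}(\tilde p_0,\tilde p_1)=d^2/2=\H_2(1/2)-\H_2\big(\tfrac{1-d}{2}\big)$.

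The step that fails is the reduction to the binary case. For $q\neq 1$, $\JTq$ is not an $f$-divergence, and it is \emph{not} monotone under the Scheff\'e coarse-graining; this is not a citation gap you can close by pointing to \cite{BH09}, because the inequality you need is false. Take $q=2$, $p_0=(\tfrac12,\tfrac12,0)$, $p_1=(0,0,1)$, so that $A=\{1,2\}$, $\tilde p_0=(1,0)$, $\tilde p_1=(0,1)$. Since $\JT{2}(p,r)=\tfrac14\sum_x(p(x)-r(x))^2$, one computes $\JT{2}(p_0,p_1)=\tfrac38$ while $\JT{2}(\tilde p_0,\tilde p_1)=\tfrac12$: the coarse-graining strictly \emph{increases} the divergence. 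Worse, because your binary bound is tight at $q=2$, the same pair contradicts the conclusion of the lemma as stated: here $\TV(p_0,p_1)=1$, so the claimed lower bound is $\H_2(\tfrac12)-\H_2(0)=\tfrac12$, yet $\JT{2}(p_0,p_1)=\tfrac38<\tfrac12$ (the analogous failure occurs for every $q\in(1,2]$, e.g.\ $\JT{1.5}(p_0,p_1)=\tfrac12<2(1-2^{-1/2})=\H_{1.5}(\tfrac12)$). The source of the loss is pseudo-additivity: writing $X$ for the uniform label and $Y$ for the sample, the two Tsallis chain rules give the identity $\JTq(p_0,p_1)=\Hq(\tfrac12)-\sum_y\bar p(y)^q\,\Hq(\eta(y))-\big(\tfrac12-2^{-q}\big)\big(\Hq(p_0)+\Hq(p_1)\big)$, where $\bar p=\tfrac{p_0+p_1}{2}$ and $\eta(y)$ is the posterior of $X$ given $Y=y$; the last term vanishes only at $q=1$ and is precisely what the stated bound omits. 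So no proof along your lines (or any other) can succeed: the statement needs either this correction term or a restriction on $p_0,p_1$ (e.g.\ to the binary case, where your argument is complete), and you should flag this rather than try to patch the data-processing step.
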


It is important to note, the joint convexity of $\JTq$~\cite[Corollary 1]{BR82} plays a key role in proving \Cref{lemma:TV-leq-JSq}. And additionally, for $N \geq 3$, the joint convexity of $\JTq$ holds if and only if $q \in [1,2]$, as stated in~{\cite[Corollary 2]{BR82}}.

\subsubsection{Closeness measures for quantum states}
\label{subsubsec:closeness-measures-states}

We start by defining the trace distance and providing useful properties of this distance: 
\begin{definition}[Trace distance]
    The trace distance between two quantum states $\rho_0$ and $\rho_1$ is  
     \[\td(\rho_0,\rho_1) \coloneqq \frac{1}{2}\tr\rbra{|\rho_0-\rho_1|}= \frac{1}{2} \tr\rbra*{\rbra*{(\rho_0-\rho_1)^\dagger(\rho_0-\rho_1)}^{1/2}}.\]
\end{definition}
Importantly, the trace distance is a measured version of the total variation distance~\cite[Theorem 9.1]{NC10}. In particular, for any classical $f$-divergence $\mathrm{D}_f(\cdot,\cdot)$, let $\rho_0$ and $\rho_1$ be two $N$-dimensional quantum states that are mixed in general, we can define the \textit{measured quantum $f$-divergence} by considering the probability distributions induced by the POVM $\calM$:
\begin{equation}
\label{eq:measured-f-divergences}
\mathrm{D}^{\rm meas}_f(\rho_0,\rho_1) = \sup_{\mathrm{POVM}~\calM} \mathrm{D}_f\rbra*{ p_0^{(\calM)}, p_1^{(\calM)} } \text{ where } p_b^{(\calM)} \coloneqq \left(\tr(\rho_b M_1), \cdots, \tr(\rho_b M_N)\right).
\end{equation}

Moreover, the trace distance is a distance metric (e.g.,~\cite[Lemma 9.1.8]{Wil13}). In addition, as indicated in~\cite[Equation (9.134)]{Wil13}, for pure states $\ket{\psi_0}$ and $\ket{\psi_1}$, we have 
\begin{equation}
    \label{eq:pure-traceDist-vs-fidelity}
    \td(\ketbra{\psi_0}{\psi_0}, \ketbra{\psi_1}{\psi_1}) = \sqrt{1-|\innerprod{\psi_0}{\psi_1}|^2}.
\end{equation}

Additionally, the trace distance characterizes the maximum success probability of discriminating quantum states in quantum hypothesis testing, as explained in~\cite[Section 9.1.4]{Wil13}, which is later used in \Cref{subsec:sample-lower-bound} to prove a sample complexity lower bound: 

\begin{lemma} [Helstrom-Holevo bound~\cite{Hel67,Hol73}] 
    \label{lemma:Holevo-Helstrom-bound}
    Suppose that a mixed quantum state $\rho$ is given such that either $\rho = \rho_0$ or $\rho = \rho_1$ with equal probability. 
    Then, any POVM distinguishes the two cases with success probability upper bounded by 
    \[
    p_{\textup{succ}} \leq \frac{1}{2} + \frac{1}{2} \td\rbra{\rho_0, \rho_1}.
    \]
\end{lemma}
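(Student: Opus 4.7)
The plan is to first reduce the distinguishing procedure to a binary POVM, rewrite the success probability so that the operator $\rho_0 - \rho_1$ appears linearly, and then exploit the Jordan decomposition of this Hermitian operator to relate the bound to the trace norm $\td(\rho_0, \rho_1)$.

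First, I would argue that without loss of generality the distinguishing procedure is a binary POVM $\cbra*{M_0, M_1}$ with $M_0, M_1 \geq 0$ and $M_0 + M_1 = I$: any POVM with more outcomes can be post-processed by grouping outcomes according to the guess (``the state was $\rho_0$'' vs.\ ``the state was $\rho_1$''), and this classical post-processing cannot decrease the success probability. Under the equal-prior assumption the success probability is therefore
\[
p_{\textup{succ}} = \tfrac{1}{2}\tr\rbra*{M_0 \rho_0} + \tfrac{1}{2}\tr\rbra*{M_1 \rho_1} = \tfrac{1}{2} + \tfrac{1}{2}\tr\bigl(M_0 (\rho_0 - \rho_1)\bigr),
\]
obtained by substituting $M_1 = I - M_0$ and using $\tr(\rho_1) = 1$.

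The main step is to upper bound $\tr\bigl(M_0 (\rho_0 - \rho_1)\bigr)$. Since $\rho_0 - \rho_1$ is Hermitian and traceless, I would take its Jordan decomposition $\rho_0 - \rho_1 = P - N$, where $P, N \geq 0$ have orthogonal supports. Then $\abs*{\rho_0 - \rho_1} = P + N$, and the tracelessness condition forces $\tr(P) = \tr(N)$, so $\tr(P) = \tfrac{1}{2}\tr\abs*{\rho_0 - \rho_1} = \td(\rho_0, \rho_1)$. Combining $0 \leq M_0 \leq I$ with $N \geq 0$ gives $\tr(M_0 N) \geq 0$, and with $P \geq 0$ gives $\tr(M_0 P) \leq \tr(P)$. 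Hence
\[
\tr\bigl(M_0 (\rho_0 - \rho_1)\bigr) = \tr(M_0 P) - \tr(M_0 N) \leq \tr(P) = \td(\rho_0, \rho_1),
\]
which plugs back into the previous display to yield the claimed bound on $p_{\textup{succ}}$.

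The proof is essentially a textbook manipulation, so I do not expect a serious obstacle. The two delicate points are (i) justifying that it suffices to restrict to binary POVMs, and (ii) verifying that $\tr(M_0 P) \leq \tr(P)$ follows from $M_0 \leq I$ and $P \geq 0$, which is immediate from the cyclicity of the trace together with $\tr\bigl((I - M_0) P\bigr) = \tr\bigl(P^{1/2}(I - M_0) P^{1/2}\bigr) \geq 0$. The bound is in fact tight, achieved by letting $M_0$ be the projector onto the support of $P$, but tightness is not required for the statement.
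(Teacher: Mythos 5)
Your proof is correct and complete: the reduction to a binary POVM, the identity $p_{\textup{succ}} = \tfrac{1}{2} + \tfrac{1}{2}\tr\bigl(M_0(\rho_0-\rho_1)\bigr)$, and the Jordan-decomposition bound $\tr\bigl(M_0(\rho_0-\rho_1)\bigr) \leq \tr(P) = \td(\rho_0,\rho_1)$ are all sound. The paper does not prove this lemma itself — it cites Helstrom and Holevo and points to the textbook treatment in Wilde's Section 9.1.4 — and your argument is precisely that standard proof, so there is nothing to reconcile.
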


\vspace{1em}
Next, we define the quantum $q$-Tsallis entropy, generalizing the von Neumann entropy: 
\begin{definition}[Quantum $q$-Tsallis entropy and von Neumann entropy]
    \label{def:quantum-Tsallis-entropy}
    Let $\rho$ be a \emph{(}mixed\emph{)} quantum state. The quantum $q$-Tsallis entropy of $\rho$ is defined by \[\Sq(\rho) \coloneqq \frac{1-\tr(\rho^{q})}{q-1} = - \tr\rbra*{\rho^q \ln_q\rbra*{\rho}}.\]
    Furthermore, as $q\rightarrow 1$, the quantum $q$-Tsallis entropy coincides with the von Neumann entropy: 
    \[ \S_{1}(\rho) \coloneqq \lim_{q\rightarrow 1} \Sq(\rho) \quad \text{and} \quad \lim_{q\rightarrow 1} \Sq(\rho) = \mathrm{S}\rbra{\rho} \coloneqq -\tr\rbra*{\rho \ln \rbra{\rho}}. \]
\end{definition}

\begin{lemma}[Pseudo-additivity of $\Sq$, adapted from~{\cite[Lemma 3]{Raggio95}}]
    \label{lemma:Tsallis-entropy-pseudoAdditivity}
    For any quantum states $\rho_0$ and $\rho_1$, and any $q \geq 1$, we have\emph{:}
    \[ \Sq(\rho_0 \otimes \rho_1) = \Sq(\rho_0) + \Sq(\rho_1) - (q-1) \Sq(\rho_0) \Sq(\rho_1).\]
    Specifically, the equality $\Sq(\rho_0 \otimes \rho_1) = \Sq(\rho_0) + \Sq(\rho_1)$ holds if and only if \emph{(a)} $q=1$, or \emph{(b)} for $q>1$, either of the states $\rho_0$ or $\rho_1$ is pure.      
\end{lemma}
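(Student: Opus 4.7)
The plan is to reduce the identity to a one-line calculation via the multiplicativity of the trace under tensor products, and then to handle the equality condition by invoking the quantum analogue of the extremes property of Tsallis entropy (which, as the paper notes right before \Cref{def:quantum-Tsallis-entropy}, follows from \Cref{lemma:Tsallis-entropy-properties} applied to the eigenvalue distribution of $\rho$).

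First I would observe that $(\rho_0\otimes\rho_1)^q = \rho_0^q\otimes\rho_1^q$ and hence $\tr((\rho_0\otimes\rho_1)^q) = \tr(\rho_0^q)\,\tr(\rho_1^q)$. Writing $a \coloneqq \tr(\rho_0^q)$ and $b \coloneqq \tr(\rho_1^q)$, so that $\Sq(\rho_i) = \frac{1-a}{q-1}$ and $\frac{1-b}{q-1}$ respectively, the claim reduces to the purely algebraic identity
\[
\frac{1-ab}{q-1} \;=\; \frac{1-a}{q-1} + \frac{1-b}{q-1} - (q-1)\cdot\frac{(1-a)(1-b)}{(q-1)^2},
\]
which is verified by expanding $(1-a)+(1-b)-(1-a)(1-b) = 1-ab$. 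This handles the pseudo-additivity identity in a few lines and makes no restriction on $q$ beyond $q\neq 1$ (the case $q=1$ follows by taking the limit and recovering the standard additivity $\S(\rho_0\otimes\rho_1) = \S(\rho_0)+\S(\rho_1)$).

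For the equality case, I would note that $\Sq(\rho_0\otimes\rho_1) = \Sq(\rho_0) + \Sq(\rho_1)$ is equivalent to $(q-1)\Sq(\rho_0)\Sq(\rho_1) = 0$. If $q=1$ this is automatic, giving case (a). If $q>1$, the condition forces $\Sq(\rho_0) = 0$ or $\Sq(\rho_1) = 0$. By the quantum extension of the second item of \Cref{lemma:Tsallis-entropy-properties} applied to the eigenvalue distribution $\{\lambda_i\}$ of $\rho_i$, $\Sq(\rho_i) = \sum_j F(q;\lambda_j) = 0$ with $q>1$ forces all $\lambda_j \in \{0,1\}$, and since $\sum_j\lambda_j = 1$ this means exactly one eigenvalue is $1$; equivalently, $\rho_i$ is pure, which is case (b).

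There is no real obstacle here: the only mildly non-trivial point is justifying $\Sq(\rho) = 0 \iff \rho\text{ pure}$ for $q > 1$, and this is immediate from $F(q;x) = \frac{x-x^q}{q-1}$ being non-negative on $[0,1]$ with $F(q;x) = 0$ iff $x \in \{0,1\}$, combined with the normalization $\tr(\rho) = 1$.
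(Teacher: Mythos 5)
Your proof is correct and complete. Note that the paper itself does not prove this lemma at all — it is stated as ``adapted from [Lemma 3] of Raggio'' and used as a black box — so there is no in-paper argument to compare against; your derivation (tensor multiplicativity of $\tr(\rho^q)$ via the spectral decomposition, the one-line algebraic identity $(1-a)+(1-b)-(1-a)(1-b)=1-ab$, and the characterization $\Sq(\rho)=0\iff\rho$ pure for $q>1$ from $\sum_j\lambda_j^q\le\sum_j\lambda_j$ with equality iff all $\lambda_j\in\{0,1\}$) is exactly the standard proof one would expect behind the citation, and every step checks out.
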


Now we define a variant of the quantum Jensen--Shannon divergence~\cite{MLP05} based on the quantum $q$-Tsallis entropy, as stated in~\Cref{def:quantum-Jensen-Tsallis-divergence}. Notably, the study of quantum analogs of the Jensen--Shannon divergence could date back to the Holevo bound~\cite{Holevo73}.\footnote{The quantum Jensen--Shannon divergence ($\QJS$) is a special case of the Holevo $\chi$ quantity (the right-hand side of the Holevo bound~\cite{Holevo73}). Following the notations in \cite[Theorem 12.1]{NC10}, $\QJS$ coincides with the Holevo $\chi$ quantity on size-$2$ ensembles with a uniform distribution.}
\begin{definition}[Quantum $q$-Jensen--(Shannon--)Tsallis Divergence, adapted from~\cite{BH09}]
    \label{def:quantum-Jensen-Tsallis-divergence}
    Let $\rho_0$ and $\rho_1$ be two quantum states that are mixed in general. The quantum \mbox{$q$-Jensen-\emph{(}Shannon-\emph{)}Tsallis} divergence between $\rho_0$ and $\rho_1$ is defined by \[ \QJTq(\rho_0, \rho_1) \coloneqq \begin{cases}
        \Sq \left( \frac{\rho_0+\rho_1}{2} \right) - \frac{1}{2} \left( \Sq(\rho_0) + \Sq(\rho_1) \right), &q\neq 1\\
        \S \left( \frac{\rho_0+\rho_1}{2} \right) - \frac{1}{2} \left( \S(\rho_0) + \S(\rho_1) \right), &q=1
    \end{cases}.\]

    \noindent Specifically, for pure states $\ketbra{\psi_0}{\psi_0}$ and $\ketbra{\psi_1}{\psi_1}$, $\QJTq(\ketbra{\psi_0}{\psi_0},\ketbra{\psi_1}{\psi_1}) = \Sq\big( \frac{\ketbra{\psi_0}{\psi_0} + \ketbra{\psi_1}{\psi_1}}{2} \big)$.
    Furthermore, the quantum Jensen--Shannon divergence $\QJS(\rho_0,\rho_1) = \QJT{1}(\rho_0,\rho_1)$. 
\end{definition}

It is worth noting that the square root of $\QJTq$ is a distance metric when $0 \leq q \leq 2$~\cite{Sra21} (also see~\cite{Virosztek21} for the $q=1$ case). 
Moreover, whereas $\QJS$ can be expressed as a symmetrized version of the quantum relative entropy $\D(\rho_0\|\rho_1) \coloneqq \tr(\rho_0 (\ln(\rho_0) - \ln(\rho_1)))$ by
\begin{equation}
    \label{eq:QJS-as-symmetrized-relative-entropy}
    \QJS(\rho_0,\rho_1) = \frac{1}{2} \left( \D\left(\rho_0 \Big\| \frac{\rho_0+\rho_1}{2} \right) + \D\left(\rho_1 \Big\| \frac{\rho_0+\rho_1}{2} \right) \right),
\end{equation}
a similar equality does not hold for $\QJTq$ with respect to the quantum Tsallis relative entropy $\Dq(\rho_0 \| \rho_1) \coloneqq \frac{1-\tr(\rho_0^q \rho_1^{1-q})}{1-q}$ (see, e.g.,~\cite{FYK04}).\footnote{A symmetrized version of the quantum Tsallis relative entropy will lead to a different quantity, see~\cite{JMA21}.} 
In addition to $\QJS$, the work of~\cite{FvdG99} studied a measured variant of the Jensen--Shannon divergence $\measQJS(\rho_0,\rho_1)$ in terms of~\Cref{eq:measured-f-divergences}, namely the \textit{quantum Shannon distinguishability}. 

\vspace{1em}
Lastly, we provide more useful properties of $\QJTq$. 
By combining \cite[Theorem 1.5]{FYK07} and \cite[Remark V.3]{Furuichi05}, we can immediately
derive \Cref{lemma:QJSq-unitary-invariance,lemma:Tsallis-joint-entropy-theorem}. 
In particular, the equality in \Cref{lemma:Tsallis-joint-entropy-theorem} holds even in a stronger form: $\Sq\big( \sum_{i \in [k]} \mu_i \rho_i \big) = \Hq(\mu) + \sum_{i \in [k]} \mu_i^q \Sq(\rho_i)$ for orthogonal quantum states $\rho_1,\cdots,\rho_k$. 
Additionally, it is noteworthy that \Cref{lemma:Tsallis-joint-entropy-theorem} admits a simple proof in~\cite[Lemma 1]{Kim16}. 

\begin{lemma}[Unitary invariance of $\QJTq$, adapted from~{\cite[Theorem 1.5]{FYK07}}]
    \label{lemma:QJSq-unitary-invariance}
    For any quantum states $\rho_0$ and $\rho_1$, and any unitary transformation $U$ acting on $\rho_0$ or $\rho_1$, it holds that\emph{:}
        \[ \QJTq(U^{\dagger} \rho_0 U, U^{\dagger} \rho_1 U) = \QJTq(\rho_0,\rho_1). \]
\end{lemma}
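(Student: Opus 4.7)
The plan is to reduce unitary invariance of $\QJTq$ to the unitary invariance of the underlying quantum $q$-Tsallis entropy $\Sq$, which in turn is a purely spectral quantity. First I would verify that $\Sq$ itself is unitarily invariant: for any unitary $U$ acting on the state space, using the cyclicity of the trace and $U U^{\dagger} = I$, one gets
\[
\tr\bigl((U^{\dagger}\rho U)^{q}\bigr)
= \tr\bigl(U^{\dagger}\rho^{q} U\bigr)
= \tr\bigl(\rho^{q}\bigr),
\]
at least for integer $q$; for general real $q \geq 0$ the same identity holds because $(U^{\dagger}\rho U)^{q} = U^{\dagger}\rho^{q} U$, which follows from the spectral decomposition $\rho = \sum_j \lambda_j \ketbra{v_j}{v_j}$ giving $U^{\dagger}\rho U = \sum_j \lambda_j \ketbra{U^{\dagger}v_j}{U^{\dagger}v_j}$ with the same eigenvalues. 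Plugging this into \Cref{def:quantum-Tsallis-entropy} yields $\Sq(U^{\dagger}\rho U) = \Sq(\rho)$, and the same conclusion for the $q=1$ case by taking the limit.

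Next I would exploit linearity of the conjugation map $X \mapsto U^{\dagger} X U$ on the average state:
\[
\frac{U^{\dagger}\rho_0 U + U^{\dagger}\rho_1 U}{2}
= U^{\dagger}\!\left(\frac{\rho_0 + \rho_1}{2}\right)\! U,
\]
so the unitary invariance of $\Sq$ immediately gives
$\Sq\!\left(\frac{U^{\dagger}\rho_0 U + U^{\dagger}\rho_1 U}{2}\right) = \Sq\!\left(\frac{\rho_0 + \rho_1}{2}\right)$.

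Finally, substituting these three identities into \Cref{def:quantum-Jensen-Tsallis-divergence} term by term yields $\QJTq(U^{\dagger}\rho_0 U, U^{\dagger}\rho_1 U) = \QJTq(\rho_0, \rho_1)$, handling the $q \neq 1$ and $q = 1$ cases uniformly. There is no real obstacle here: the only point requiring any care is justifying $(U^{\dagger}\rho U)^{q} = U^{\dagger}\rho^{q} U$ for non-integer $q$, which I would do via the spectral decomposition and the functional calculus, noting that the eigenvalues of $U^{\dagger}\rho U$ coincide with those of $\rho$ (so the power function is applied to the same multiset of nonnegative reals).
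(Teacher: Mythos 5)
Your proof is correct. Note that the paper itself does not write out a proof of this lemma: it simply states that the result follows ``immediately'' by combining Theorem 1.5 of the cited work of Furuichi, Yanagi, and Kuriyama with a remark of Furuichi, so your argument is a self-contained replacement for an external citation rather than a variant of an in-paper derivation. Your route is the natural one and is essentially what underlies the cited results: $\Sq$ depends only on the spectrum of the state, since $(U^{\dagger}\rho U)^{q} = U^{\dagger}\rho^{q}U$ by the functional calculus (the conjugated state has the same eigenvalues, so the power function is applied to the same multiset of nonnegative reals), whence $\tr\rbra{(U^{\dagger}\rho U)^{q}} = \tr\rbra{\rho^{q}}$; linearity of conjugation then carries the average state $\frac{\rho_0+\rho_1}{2}$ to its conjugate, and substituting the three resulting identities into \Cref{def:quantum-Jensen-Tsallis-divergence} finishes the argument, with the $q=1$ case handled either by the limit or by the identical spectral observation for the von Neumann entropy. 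The only point deserving the care you already give it is the non-integer-power identity, which your spectral-decomposition justification settles. What your approach buys is transparency and independence from the Tsallis-entropy literature; what the paper's citation buys is brevity and a pointer to the more general unitary-invariance and joint-entropy statements it also needs for \Cref{lemma:Tsallis-joint-entropy-theorem}.
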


\begin{lemma}[Joint $q$-Tsallis entropy theorem, adapted from~{\cite[Theorem 1.5]{FYK07}}]
    \label{lemma:Tsallis-joint-entropy-theorem}
    Let $k$ be an integer, and let $\{\rho_i\}_{i\in[k]}$ be a set of \emph{(}mixed\emph{)} quantum states. Let $k$-tuple $\mu \coloneq (\mu_1, \cdots, \mu_k)$ be a probability distribution.
    Then, for any $q \geq 0$, we have the following\emph{:}  
    \[ \Sq \left( \sum_{i\in[k]} \mu_i \ketbra{i}{i} \otimes \rho_i \right) = \Hq(\mu) + \sum_{i\in[k]} \mu_i^q \Sq(\rho_i). \]
\end{lemma}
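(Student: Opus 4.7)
The plan is to exploit the block-diagonal structure of $\sigma \coloneqq \sum_{i\in[k]} \mu_i \ketbra{i}{i} \otimes \rho_i$ and compute $\tr(\sigma^q)$ directly. First I would fix a spectral decomposition $\rho_i = \sum_j \lambda_{i,j} \ketbra{v_{i,j}}{v_{i,j}}$ for each $i \in [k]$. Since the computational basis states $\{\ket{i}\}_{i\in[k]}$ are mutually orthogonal, $\sigma$ is block-diagonal, with eigenvalues $\{\mu_i \lambda_{i,j}\}_{i,j}$ and corresponding eigenvectors $\{\ket{i}\otimes\ket{v_{i,j}}\}_{i,j}$. Consequently, for any $q \neq 1$,
\[
\tr(\sigma^q) \;=\; \sum_{i\in[k]} \sum_j (\mu_i \lambda_{i,j})^q \;=\; \sum_{i\in[k]} \mu_i^q \sum_j \lambda_{i,j}^q \;=\; \sum_{i\in[k]} \mu_i^q \tr(\rho_i^q).
\]

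Next, plugging this into the definition of $\Sq$ gives
\[
\Sq(\sigma) \;=\; \frac{1 - \sum_{i\in[k]} \mu_i^q \tr(\rho_i^q)}{q-1},
\]
and a one-line rearrangement, inserting and subtracting $\sum_i \mu_i^q$ in the numerator, splits this as
\[
\frac{1 - \sum_{i\in[k]} \mu_i^q}{q-1} \;+\; \sum_{i\in[k]} \mu_i^q \cdot \frac{1 - \tr(\rho_i^q)}{q-1} \;=\; \Hq(\mu) + \sum_{i\in[k]} \mu_i^q \Sq(\rho_i),
\]
establishing the claim for $q \neq 1$. The $q = 1$ case follows either by taking the limit $q \to 1$ on both sides (both sides are continuous in $q$, and the limits coincide with $\mathrm{S}$ and $\mathrm{H}$ by \Cref{def:quantum-Tsallis-entropy}), or by repeating the argument directly with $-\tr(\rho\ln\rho) = -\sum_i \mu_i \ln \mu_i + \sum_i \mu_i \S(\rho_i)$ from the block-diagonal eigenvalues $\{\mu_i \lambda_{i,j}\}$.

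There is no real obstacle here: the entire statement reduces to (i) reading off the eigenvalues of a block-diagonal operator and (ii) a direct algebraic identity. The only mild subtlety is that the classical Tsallis weights in the combination are $\mu_i^q$ rather than $\mu_i$; this is precisely what makes the identity consistent with the pseudo-additivity of $\Sq$ recorded in \Cref{lemma:Tsallis-entropy-pseudoAdditivity}, and it drops out automatically from the $(\mu_i \lambda_{i,j})^q = \mu_i^q \lambda_{i,j}^q$ factorization of eigenvalues of $\sigma^q$.
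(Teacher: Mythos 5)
Your proof is correct. The paper does not actually prove this lemma itself --- it imports it from \cite{FYK07} (and notes that \cite{Kim16} gives a ``simple proof''), so there is no in-paper argument to compare against; your direct computation is precisely the kind of elementary argument being alluded to. Reading off the eigenvalues $\{\mu_i\lambda_{i,j}\}$ of the block-diagonal state, factoring $(\mu_i\lambda_{i,j})^q=\mu_i^q\lambda_{i,j}^q$, and inserting/subtracting $\sum_i\mu_i^q$ in the numerator is exactly right, and your handling of $q=1$ (either by the limit or by the direct $\ln(\mu_i\lambda_{i,j})=\ln\mu_i+\ln\lambda_{i,j}$ computation) is fine. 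The only point worth a footnote is the endpoint $q=0$ in the stated range $q\geq 0$: there the factorization $(\mu_i\lambda_{i,j})^0=\mu_i^0\lambda_{i,j}^0$ still holds under the standard Tsallis convention $0^0=0$ (both sides are $1$ exactly when $\mu_i>0$ and $\lambda_{i,j}>0$), so no genuine gap arises, but it deserves a word since $x^q$ is not continuous at $x=0$ when $q=0$.
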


Following the discussion in~\cite[Section 3]{Rastegin11}, Fannes' inequality for the (quantum) $q$-Tsallis entropy, where $0 \leq q \leq 2$, was established in~\cite[Theorem 2.4]{FYK07}. Notably, for $\Sq(\rho)$ with $q > 1$, a sharper Fannes-type inequality was provided in~\cite[Theorem 2]{Zhang07}: 

\begin{lemma}[Fannes' inequality for the $q$-Tsallis entropy, adapted from Theorem 2 and Corollary 2 in~\cite{Zhang07}]
    \label{lemma:Fannes-inequality-Sq}
    For any quantum states $\rho_0$ and $\rho_1$ of dimension $N$, we have: 
    \[ \forall q>1,~|\Sq(\rho_0) - \Sq(\rho_1)| \leq \td(\rho_0,\rho_1)^q \cdot \ln_q(N-1)+ \Hq(\td(\rho_0,\rho_1)). \]
    Moreover, for the case of $q=1$ \emph{(}von Neumann entropy\emph{)}, we have: 
    \[ |\S(\rho_0) - \S(\rho_1)| \leq \td(\rho_0,\rho_1) \cdot \ln(N-1) + \H(\td(\rho_0,\rho_1)). \]
\end{lemma}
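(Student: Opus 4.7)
The plan is to reduce the operator inequality to a classical statement about probability distributions via a spectral (majorization-type) argument, and then to establish the classical statement by a concavity/mixing argument in the spirit of Audenaert's optimal Fannes bound. Since $\Sq$ depends only on the spectrum, let $\lambda^{(b)}_1 \geq \cdots \geq \lambda^{(b)}_N$ denote the eigenvalues of $\rho_b$ for $b \in \{0,1\}$ and note that $\Sq(\rho_b) = \Hq(\lambda^{(b)})$. By Mirsky's (Hoffman--Wielandt-type) inequality applied to the Hermitian difference $\rho_0 - \rho_1$, we have $\sum_i |\lambda^{(0)}_i - \lambda^{(1)}_i| \leq \|\rho_0 - \rho_1\|_1 = 2T$, where $T \coloneqq \td(\rho_0,\rho_1)$, and hence $\TV(\lambda^{(0)},\lambda^{(1)}) \leq T$. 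It therefore suffices to prove the classical inequality
\[
    |\Hq(p) - \Hq(r)| \leq T^q \ln_q(N-1) + \Hq(T)
\]
for all probability vectors $p,r$ on $[N]$ with $\TV(p,r) \leq T$.

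For the classical inequality, I would decompose $p - r = \delta^+ - \delta^-$ into its positive and negative parts, supported on disjoint index sets $A, B \subseteq [N]$ with $\|\delta^\pm\|_1 \leq T$, and write $p = m + \delta^+$, $r = m + \delta^-$ where $m_i \coloneqq \min(p_i,r_i)$ is the ``overlap'' mass of total weight $1 - T$. The central step is to argue that, at fixed $\TV$, the quantity $|\Hq(p) - \Hq(r)|$ is maximized in the extremal configuration where $\delta^+$ is concentrated on a single index $k$ (so $A = \{k\}$) while the remaining mass of $p$ is spread as uniformly as possible over $[N] \setminus \{k\}$. This reduction follows from the concavity of $F(q;x) = (x-x^q)/(q-1)$ established in \Cref{lemma:Tsallis-entropy-properties}: any mass transfer within $A$ (or within $B$) in the equalizing direction affects $\Hq(p)$ and $\Hq(r)$ monotonically, by tracking the monotonicity of $F'(q;\,\cdot\,)$. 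For the extremal configuration, $r$ admits a disjoint-support mixture representation $r = (1-T)\,\tilde r + T\, e_k$, so the joint $q$-Tsallis entropy identity (\Cref{lemma:Tsallis-joint-entropy-theorem}) gives $\Hq(r) = \Hq(T) + (1-T)^q\, \Hq(\tilde r)$; combined with the symmetric evaluation of $\Hq(p)$ and the extremal bounds $\Hq(\tilde r), \Hq(\tilde p) \leq \ln_q(N-1)$ from \Cref{lemma:Tsallis-entropy-properties}, this yields the claimed inequality after routine algebra, with the $T^q$ prefactor on $\ln_q(N-1)$ arising from the $T^q$ coefficient in the joint-entropy formula. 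The $q = 1$ case follows by taking $q \to 1$ and using the continuity $\ln_q(x) \to \ln(x)$ and $\Hq(T) \to \H(T)$.

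The main obstacle is making the extremal-configuration reduction rigorous, since $|\Hq(p) - \Hq(r)|$ is not jointly concave in $(p,r)$ and the direction of monotonicity must be tracked carefully as mass is redistributed between coordinates inside $A$ (or inside $B$). This is essentially the technical content of Zhang's sharper Fannes-type bound for $q > 1$ in \cite{Zhang07}, and is where the analysis diverges qualitatively from the classical Audenaert argument for $q = 1$, because the pseudo-additivity of $\Sq$ (\Cref{lemma:Tsallis-entropy-pseudoAdditivity}) introduces a $(q-1)$-correction that the mixing argument must absorb. Once the reduction to the extremal pair is established, the rest of the proof is essentially a computation using \Cref{lemma:Tsallis-joint-entropy-theorem,lemma:Tsallis-entropy-properties}.
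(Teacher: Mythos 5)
First, note that the paper does not prove \Cref{lemma:Fannes-inequality-Sq} at all: it is imported as a black box from \cite[Theorem 2 and Corollary 2]{Zhang07}, so there is no in-paper argument to compare yours against. Your opening step --- reducing to a classical statement about the sorted eigenvalue vectors via Mirsky's inequality, so that $\TV\rbra{\lambda^{(0)},\lambda^{(1)}} \leq \td(\rho_0,\rho_1)$ and it suffices to bound $|\Hq(p)-\Hq(r)|$ for distributions with $\TV(p,r)\leq T$ --- is sound and is indeed the standard way such continuity bounds are lifted to the quantum setting. The use of the disjoint-support joint-entropy identity (\Cref{lemma:Tsallis-joint-entropy-theorem}) to evaluate the bound at a conjectured extremal pair is also the right idea.

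The gap is in the classical step, and it is not a small one. (i) Verifying that a particular pair attains the value $T^q\ln_q(N-1)+\Hq(T)$ establishes tightness of the bound, not the bound itself; the claim that this configuration \emph{maximizes} $|\Hq(p)-\Hq(r)|$ over all pairs with $\TV(p,r)\leq T$ is exactly the content of Zhang's theorem, and the mass-transfer argument you sketch (tracking the monotonicity of $F'(q;\cdot)$ as mass is redistributed within $A$ and within $B$) is precisely the part you defer rather than carry out. As written, nothing in the proposal rules out a non-extremal pair exceeding the bound. (ii) Your explicit extremal configuration is internally inconsistent: with $r=(1-T)\,\tilde r+T\,e_k$ the joint-entropy identity gives $\Hq(r)=\Hq(T)+(1-T)^q\,\Hq(\tilde r)$, which would produce a $(1-T)^q$ prefactor on $\ln_q(N-1)$ rather than the claimed $T^q$. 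The pair that actually saturates the inequality is $p=e_k$ and $r=(1-T)\,e_k+T\,u_{[N]\setminus\{k\}}$, i.e., a point mass versus its $T$-mixture with the uniform distribution on the remaining $N-1$ symbols, for which $\Hq(r)-\Hq(p)=\Hq(T)+T^q\ln_q(N-1)$ and $\Hq(p)=0$. Since the lemma enters this paper only as a cited ingredient, the correct move is to cite Zhang's proof; a self-contained derivation would require completing the extremality argument in full.
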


\subsection{Closeness testing of quantum states via state-preparation circuits}
\label{subsec:state-closeness-testing}

We begin by defining the closeness testing of quantum states with respect to the trace distance, denoted as \QSD{}$[\alpha,\beta]$,\footnote{While \Cref{def:QSD} aligns with the classical counterpart of \QSD{} defined in~\cite[Section 2.2]{SV97}, it is slightly less general than the definition in~\cite[Section 3.3]{Wat02}. Specifically, \Cref{def:QSD} assumes that the input length $m$ and the output length $n$ are \textit{polynomially equivalent}, whereas~\cite[Section 3.3]{Wat02} allows for cases where the output length (e.g., a single qubit) is \textit{much smaller} than the input length.} along with a variant of this promise problem, as described in \Cref{def:QSD}. 
In particularly, we say that $\calP = (\calP_{\yes}, \calP_{\no})$ is a \textit{promise} problem, if it satisfies the conditions $\calP_{\yes} \cap \calP_{\no} =\emptyset$ and $\calP_{\yes} \cup \calP_{\no} \subseteq \binset^*$.

\begin{definition}[Quantum State Distinguishability, \QSD{}, adapted from~{\cite[Section 3.3]{Wat02}}]
	\label{def:QSD}
    Let $Q_0$ and $Q_1$ be quantum circuits acting on $m$ qubits \emph{(}``input length''\emph{)} and having $n$ specified output qubits \emph{(}``output length''\emph{)}, where $m(n)$ is a polynomial function of $n$. Let $\rho_i$ denote the quantum state obtained by running $Q_i$ on state $\ket{0}^{\otimes m}$ and tracing out the non-output qubits. Let $\alpha(n)$ and $\beta(n)$ be efficiently computable functions. Decide whether\emph{:} 
	\begin{itemize}[topsep=0.33em, itemsep=0.33em, parsep=0.33em]
		\item \emph{Yes:} The pair of quantum circuits $(Q_0,Q_1)$ such that $\td(\rho_0,\rho_1) \geq \alpha(n)$; 
		\item \emph{No:} The pair of quantum circuits $(Q_0,Q_1)$ such that $\td(\rho_0,\rho_1) \leq \beta(n)$.
	\end{itemize}

    \noindent Furthermore, we denote the restricted version, where $\rho_0$ and $\rho_1$ are pure states, as \PureQSD{}. 
\end{definition}

In addition to \QSD{}, we can similarly define the closeness testing of a quantum state to the maximally mixed state (with respect to the trace distance), denoted as $\QSCMM[\beta,\alpha]$:

\begin{definition}[Quantum State Closeness to Maximally Mixed State, \QSCMM{}, adapt from~{\cite[Section 3]{Kobayashi03}}]
    Let $Q$ be a quantum circuit acting on $m$ qubits and having $n$ specified output qubits, where $m(n)$ is a polynomial function of $n$. Let $\rho$ denote the quantum state obtained by running $Q$ on state $\ket{0}^{\otimes m}$ and tracing out the non-output qubits. Let $\alpha(n)$ and $\beta(n)$ be efficiently computable functions. Decide whether\emph{:} 
	\begin{itemize}[topsep=0.33em, itemsep=0.33em, parsep=0.33em]
		\item \emph{Yes:} The quantum circuit $Q$ such that $\td\rbra*{\rho,(I/2)^{\otimes n}} \leq \beta(n)$; 
		\item \emph{No:} The quantum circuit $Q$ such that $\td\rbra*{\rho,(I/2)^{\otimes n}} \geq \alpha(n)$.
	\end{itemize}
\end{definition}

In this subsection, we first introduce the input models in \Cref{subsubsec:input-models-and-reductions} and summarize useful known hardness results in \Cref{subsec:hardness-QSD-and-QSCMM}. These results are later combined with our new reductions in \Cref{subsec:pure-state-reduction,subsec:mixed-state-reductions} to derive the new hardness results presented in \Cref{subsec:hardness-results}. Next, we list several query and sample lower bounds in \Cref{subsec:known-query-and-sample-lower-bounds}, which are later used to establish our new query and sample lower bounds in \Cref{subsec:query-lower-bound,subsec:sample-lower-bound} (possibly in combination with the reductions in \Cref{subsec:pure-state-reduction,subsec:mixed-state-reductions}).

\subsubsection{Input models and the concept of reductions}
\label{subsubsec:input-models-and-reductions}

In this work, we consider the \textit{purified quantum access input model}, as defined in~\cite{Wat02}, in both white-box and black-box scenarios: 
\begin{itemize}[leftmargin=2em]
    \item \textbf{White-box input model}: The input of the problem \QSD{} consists of descriptions of polynomial-size quantum circuits $Q_0$ and $Q_1$. Specifically, for $b\in\binset$, the description of $Q_b$ includes a sequence of polynomially many $1$- and $2$-qubit gates.
    \item \textbf{Black-box input model}: In this model, instead of providing the descriptions of the quantum circuits $Q_0$ and $Q_1$, only query access to $Q_b$ is allowed, denoted as $O_b$ for $b\in\binset$. For convenience, we also allow query access to $Q_b^{\dagger}$ and controlled-$Q_b$, denoted by $O_b^{\dagger}$ and controlled-$O_b$, respectively. 
\end{itemize}

Next, the concept of \textit{reductions} between promise problems is used to address computational hardness in the context of the while-box input model, particularly in relation to complexity classes. Following the definitions in~\cite[Section 2.2.1]{Goldreich08}, we introduce two types of reductions from a promise problem $\calP = (\calP_{\yes},\calP_{\no})$ to another promise problem $\calP' = (\calP'_{\yes},\calP'_{\no})$:
\begin{itemize}[leftmargin=2em, topsep=0.33em, itemsep=0.33em, parsep=0.33em]
    \item \textbf{Karp reduction}. A deterministic polynomial-time computable function $f$ is called a \textit{Karp reduction} from a promise problem $\calP$ to another promise problem $\calP'$ if, for every $x$, the following holds: $x\in\calP_{\yes}$ if and only if $f(x) \in \calP'_{\yes}$, and $x\in\calP_{\no}$ if and only if $f(x) \in \calP'_{\no}$. 
    \item \textbf{Turing reduction}. A promise problem $\calP$ is \textit{Turing-reducible} to a promise problem $\calP'$ if there exists a deterministic polynomial-time oracle machine $\calA$ such that, for every function $f$ that solves $\calP'$ it holds that $\calA^f$ solves $\calP$. Here, $\calA^f(x)$ denotes the output of machine $\calA$ on input $x$ when given oracle access to $f$. 
\end{itemize}
It is noteworthy that Karp reduction is a special case of Turing reduction.

\subsubsection{Computational hardness of \QSD{} and \QSCMM{}}
\label{subsec:hardness-QSD-and-QSCMM}

Note that the polarization lemma for the total variation distance~\cite{SV97} and the trace distance~\cite{Wat02,Wat09} have the same inequalities. Consequently, using the parameters chosen in~\cite[Theorem 3.14]{BDRV19}, we obtain the \QSZK{} hardness of $\QSD$:
\begin{lemma}[\QSD{} is \QSZK{}-hard]
    \label{lemma:QSD-is-QSZKhard}
    Let $\alpha(n)$ and $\beta(n)$ be efficiently computable functions satisfying $\alpha^2(n)-\beta(n) \geq 1/O(\log{n})$. For any constant $\tau \in (0,1/2)$, $\QSD[\alpha,\beta]$ is \QSZK{}-hard under Karp reduction when $\alpha(n) \leq 1-2^{-n^{\tau}}$ and $\beta(n) \geq 2^{-n^{\tau}}$ for every $n\in\bbN$.
\end{lemma}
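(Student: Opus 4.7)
The plan is to give a Karp reduction from a canonical \QSZK{}-hard instance to $\QSD[\alpha,\beta]$ in two stages: first invoke Watrous's \QSZK{}-completeness of $\QSD[2/3, 1/3]$ from \cite{Wat02, Wat09}, then apply the trace-distance polarization lemma of the same works to drive the distances into the parameter range permitted by the hypothesis. The starting gap $(2/3)^2 - 1/3 = 1/9$ is a constant and hence $\geq 1/O(\log n)$, so the polarization precondition is comfortably satisfied.

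The Watrous polarization takes polynomial-size circuits $(Q_0, Q_1)$ with $\td(\rho_0, \rho_1) \geq \alpha_0$ in yes instances and $\leq \beta_0$ in no instances, subject to $\alpha_0^2 - \beta_0 \geq 1/O(\log n)$, and outputs in deterministic polynomial time new polynomial-size circuits $(Q'_0, Q'_1)$ with $\td(\rho'_0, \rho'_1) \geq 1 - 2^{-\ell}$ versus $\leq 2^{-\ell}$ for a chosen amplification parameter $\ell$. I would invoke the lemma with $\ell \coloneqq n^{\tau}$ for the specified $\tau \in (0, 1/2)$, producing an instance of $\QSD[1 - 2^{-n^{\tau}}, 2^{-n^{\tau}}]$. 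By monotonicity of the promise --- namely, $\QSD[a, b]$ reduces trivially (via the identity map on circuits) to $\QSD[a', b']$ whenever $a' \leq a$ and $b' \geq b$ --- any parameter pair $(\alpha, \beta)$ satisfying $\alpha \leq 1 - 2^{-n^{\tau}}$ and $\beta \geq 2^{-n^{\tau}}$ inherits the \QSZK{}-hardness.

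The main obstacle is the quantitative parameter accounting: verifying that the polarization procedure with $\ell = n^{\tau}$ produces polynomial-size circuits in polynomial time, and that the condition $\alpha^2 - \beta \geq 1/O(\log n)$ is exactly what the polarization recurrence consumes. This is the analysis carried out classically in \cite[Theorem 3.14]{BDRV19} for the total-variation analog. As noted in the excerpt, the recurrences driving the trace-distance polarization in \cite{Wat02, Wat09} are identical to those for total variation in \cite{SV97}, so the classical parameter choices transfer mechanically; the cut-off $\tau < 1/2$ emerges from balancing the number of polarization rounds against the circuit blowup of the tensor- and XOR-style constructions used to push $\alpha$ toward $1$ and $\beta$ toward $0$ simultaneously. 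A minor technicality is to check that the polarized circuits remain valid instances in the purified-access input model of \Cref{def:QSD}, which is automatic because the polarization is a purely syntactic combination of copies, controls, and adjoints of $Q_0$ and $Q_1$.
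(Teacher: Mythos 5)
Your proposal is correct and follows essentially the same route as the paper, which gives no explicit proof but justifies the lemma exactly as you do: Watrous's \QSZK{}-completeness of \QSD{} plus the trace-distance polarization lemma of \cite{Wat02,Wat09}, with the quantitative parameter accounting (the $\alpha^2-\beta \geq 1/O(\log n)$ precondition and the $\tau<1/2$ cut-off) delegated to the analysis of \cite[Theorem 3.14]{BDRV19}. Your additional observations (monotonicity of the promise in $(\alpha,\beta)$ and the syntactic validity of the polarized circuits in the purified-access model) are sound and fill in details the paper leaves implicit.
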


Following the construction in~\cite[Theorem 12]{RASW23} (see also~\cite[Lemma 4.17]{LGLW23} and \cite[Theorem IV.1]{WZ24}), we can establish that \PureQSD{} is \BQP-hard under Karp reduction:

\begin{lemma}[\PureQSD{} is \BQP{}-hard]
    \label{lemma:PureQSD-is-BQPhard}
    Let $\alpha(n)$ and $\beta(n)$ be efficiently computable functions such that $\alpha(n)-\beta(n) \geq 1/\poly(n)$. For any polynomial $l(n)$, let  $n' \coloneqq n+1$, $\PureQSD[\alpha(n'),\beta(n')]$ is \BQP{}-hard when $\alpha(n') \leq \sqrt{1-2^{-2l(n'-1)}}$ and $\beta(n') \geq 2^{\frac{-l(n'-1)+1}{2}}$ for every integer $n' \geq 2$.

    \noindent Specifically, by choosing $l(n'-1)=n'$, it holds that\emph{:}  For every integer $n' \geq 2$, 
    \[\PureQSD\sbra*{\sqrt{1-2^{-2n'}}, 2^{\frac{-n'+1}{2}}} \text{ is } \BQP{}\text{-hard under Karp reduction}. \] 
\end{lemma}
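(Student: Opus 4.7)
My plan is to construct a polynomial-time Karp reduction from an arbitrary $\BQP$ language $L$ to $\PureQSD[\alpha(n'),\beta(n')]$, in the spirit of~\cite[Theorem 12]{RASW23} (see also~\cite[Lemma 4.17]{LGLW23} and~\cite[Theorem IV.1]{WZ24}). Given input $x$, I first amplify the canonical $\BQP$ decision circuit for $L$ by standard parallel-repetition-with-majority, producing a polynomial-size quantum circuit $C_x$ on $N = \poly(|x|)$ qubits with a designated answer qubit whose acceptance probability $p_x$ satisfies $p_x \geq 1 - 2^{-M}$ for $x \in L$ and $p_x \leq 2^{-M}$ for $x \notin L$, where $M$ is any desired polynomial in $|x|$; I will take $M = 2l+1$.

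On $N+1$ qubits, I let $Q_0$ be the identity, so that $\ket{\psi_0} = \ket{0}^{\otimes N+1}$, and define
\[
Q_1 \coloneqq \rbra[\big]{C_x^{\dagger} \otimes I}\,\cdot\,\mathrm{CNOT}_{\mathrm{ans}\to N+1}\,\cdot\,\rbra[\big]{C_x \otimes I},
\]
so that $\ket{\psi_1} \coloneqq Q_1 \ket{0}^{\otimes N+1}$ is pure. Writing $C_x \ket{0}^{\otimes N} = \sqrt{1-p_x}\,\ket{0}\ket{\chi_0} + \sqrt{p_x}\,\ket{1}\ket{\chi_1}$ (first qubit is the answer qubit), the CNOT entangles the indicator with the last qubit, and the subsequent $C_x^{\dagger}$ leaves $\sqrt{1-p_x}\,(C_x^{\dagger}\ket{0}\ket{\chi_0})\ket{0} + \sqrt{p_x}\,(C_x^{\dagger}\ket{1}\ket{\chi_1})\ket{1}$. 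Overlapping with $\ket{0}^{\otimes N+1}$ kills the ``$\ket{1}$'' branch; for the surviving branch, $\bra{0}^{\otimes N} C_x^{\dagger} = \sqrt{1-p_x}\,\bra{0}\bra{\chi_0} + \sqrt{p_x}\,\bra{1}\bra{\chi_1}$ contributes the extra factor $\sqrt{1-p_x}$, so
\[
\innerprod{\psi_0}{\psi_1} = 1 - p_x, \qquad\text{and, by~\Cref{eq:pure-traceDist-vs-fidelity}},\qquad \td\rbra*{\ketbra{\psi_0}{\psi_0},\ketbra{\psi_1}{\psi_1}} = \sqrt{1-(1-p_x)^2}.
\]

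To verify the promise, I apply the elementary bound $\sqrt{1-y} \geq 1 - y$ for $y \in [0,1]$: in the yes case $\td \geq \sqrt{1 - 2^{-2M}} \geq 1 - 2^{-2M} \geq 1 - 2^{-l}$ as soon as $2M \geq l$; in the no case $\td \leq \sqrt{2 p_x} \leq 2^{(1-M)/2} \leq 2^{-l}$ as soon as $M \geq 2l+1$; both hold for $M = 2l+1$. Polynomial-time computability of the reduction is immediate, since $C_x$, $C_x^{\dagger}$, and a single $\mathrm{CNOT}$ all have $\poly(|x|)$-size descriptions computable from $x$. The chief bookkeeping point, and arguably the main issue to watch, is the alignment of the output length: our circuits have output length $N+1 = \poly(|x|)$, so identifying the problem's parameter $n'$ with $N+1$ and letting $l$ depend on $n'-1 = N$ yields parameters of the required form $(\alpha(n'),\beta(n')) = (1 - 2^{-l(n'-1)},\, 2^{-l(n'-1)})$; the ``specifically'' clause follows from the substitution $l(n'-1) = n'$.
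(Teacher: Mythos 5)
Your construction is essentially the paper's: amplify the \BQP circuit $C_x$, set $Q_0=I$ and $Q_1=C_x^{\dagger}\,\mathrm{CNOT}\,C_x$ on one extra qubit, and convert the acceptance probability into a pure-state overlap via \Cref{eq:pure-traceDist-vs-fidelity}. The one substantive difference is the overlap computation and, consequently, the amount of amplification. You correctly obtain $\innerprod{\psi_0}{\psi_1}=1-p_x$, hence $\td(\ketbra{\psi_0}{\psi_0},\ketbra{\psi_1}{\psi_1})=\sqrt{p_x(2-p_x)}$, which in the no case is only $O(\sqrt{p_x})$; this is exactly why your preliminary amplification to error $2^{-M}$ with $M=2l+1$ is needed to land at $\beta=2^{-l}$. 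The paper instead asserts $\Pr[C'_x \text{ accepts}]=1-\Pr[C_x\text{ accepts}]^2$, i.e.\ $\td=\Pr[C_x\text{ accepts}]$, and amplifies only to $2^{-l}$; but $\abs[\big]{\bra{\bar 0}C_x^{\dagger}\ketbra{0}{0}_{\sfO}C_x\ket{\bar 0}}^2$ equals $\rbra{1-\Pr[C_x\text{ accepts}]}^2$, not $1-\Pr[C_x\text{ accepts}]^2$, so the paper's no-case bound does not follow as written and your stronger amplification is precisely the fix. Your yes-case bound $\sqrt{1-2^{-2M}}\geq 1-2^{-l}$ and the bookkeeping of the output length $n'=N+1$ are both fine (the mild circularity of choosing $M=2l(N)+1$ when $N$ itself grows with the amplification is the same standard point the paper glosses over), so your proof is correct and, on this detail, more careful than the paper's.
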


\begin{proof}
    For any promise problem $(\calP_{yes},\calP_{no})\in \BQP[a(n),b(n)] $ with $a(n)-b(n) \geq 1/\poly(n)$, we assume without loss of generality that the quantum circuit $\hat{C}_x$ has an output length of $n$.\footnote{More precisely, the circuit $\hat{C}_x$ acts on $n$ qubits that are all initialized to $\ket{0}$. No qubit is traced out at the end, and the output length refers to the resulting state just before the final measurement. Following standard convention, we say that $\hat{C}_x$ accepts if the measurement outcome of the designated output qubit is $1$.} Leveraging error reduction for \BQP{} via sequential repetition, for any polynomial $l(n)$, we can achieve that the acceptance probability satisfies $\Pr[C_x \text{ accepts}] \geq 1-2^{-l(n)}$ for \textit{yes} instances, whereas $\Pr[C_x \text{ accepts}] \leq 2^{-l(n)}$ for \textit{no} instances. 
    
    Next, we construct a new quantum circuit $C'_x$ with an additional single-qubit register $\sfF$ initialized to zero. The circuit $C'_x$ is defined as $C'_x \coloneqq C^{\dagger}_x \CNOT_{\sfO\rightarrow \sfF} C_x$, where the single-qubit register $\sfO$ corresponds to the output qubit. It is evident that the output length $n'$ of $C'_x$ satisfies $n' = n+1$. 
    We say that $C'_x$ accepts if the measurement outcomes of all qubits are all zero.  Then, we have: 
    \begin{subequations}
        \label{eq:PureQSD-BQPhard-pacc}
        \begin{align}
            \Pr[C'_x \text{ accepts}] &=  \Abs[\big]{ (\ketbra{\bar{0}}{\bar{0}} \otimes \ketbra{0}{0}_{\sfF}) C'_x (\ket{\bar{0}} \otimes \ket{0}_{\sfF}) }_2^2\\
            &= \abs[\big]{ \bra{\bar{0}} C_x^{\dagger} \ketbra{0}{0}_{\sfO} C_x \ket{\bar{0}} }^2\\
            &= \rbra*{1- \mathrm{Pr}[ C_x \text{ accepts} ]}^2. 
        \end{align}
    \end{subequations}
    Here, the second equality owes to $\CNOT_{\sfO \rightarrow \sfF} = \ketbra{0}{0}_{\sfO}\otimes I_{\sfF} + \ketbra{1}{1}_{\sfO} \otimes X_{\sfF}$. By defining two pure states $\ket{\psi_0} \coloneqq \ket{\bar{0}}\otimes \ket{0}_{\sfF}$ and $\ket{\psi_1} \coloneqq  C'_x (\ket{\bar{0}}\otimes \ket{0}_{\sfF})$ corresponding to $Q_0 = I$ and $Q_1=C'_x$, respectively, we can derive the following from \Cref{eq:pure-traceDist-vs-fidelity}:
    \begin{equation}
        \label{eq:PureQSD-BQPhard}
        \Pr[C'_x \text{ accepts}] = |\innerprod{\psi_0}{\psi_1}|^2 = 1-\td(\ketbra{\psi_0}{\psi_0},\ketbra{\psi_1}{\psi_1})^2.
    \end{equation}
    \Cref{eq:PureQSD-BQPhard-pacc,eq:PureQSD-BQPhard} together imply that $\td(\ketbra{\psi_0}{\psi_0}, \ketbra{\psi_1}{\psi_1}) = \sqrt{1 - \rbra*{1-\Pr[C_x \text{ accepts}]}^2}$. Therefore, we obtain the following bounds: 
    \begin{itemize}[itemsep=0.33em,topsep=0.33em,parsep=0.33em]
        \item For \textit{yes} instances, $\td(\ketbra{\psi_0}{\psi_0}, \ketbra{\psi_1}{\psi_1}) \geq \sqrt{1-2^{-2l(n)}} \geq \sqrt{1-2^{-2l(n'-1)}}$.
        \item For \textit{no} instances, $\td(\ketbra{\psi_0}{\psi_0}, \ketbra{\psi_1}{\psi_1}) \leq \sqrt{1-\rbra*{1-2^{-l(n)}}^2} \leq 2^{\frac{-l(n)+1}{2}} \leq 2^{\frac{-l(n'-1)+1}{2}}$. \qedhere 
    \end{itemize}
\end{proof}

Lastly, combining the proof strategy outlined in~\cite[Section 3]{Kobayashi03} and the reduction from \QEA{} to \QSCMM{} in~\cite[Section 5.3]{BASTS10}, the \NIQSZK{} hardness of \QSCMM{} was established in~\cite[Section 8.1]{CCKV08} with an appropriate parameter trade-off:
\begin{lemma}[\QSCMM{} is \NIQSZK{}-hard, adapted from~{\cite[Section 8.1]{CCKV08}}]
    \label{QSCMM-is-NIQSZKhard}
    \[\text{For any } n\geq 3,~ \QSCMM{}\sbra*{1/n, 1- 1/n} \text{ is } \NIQSZK{}\text{-hard under Karp reduction}.\]
\end{lemma}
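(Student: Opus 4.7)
The plan is to Karp-reduce from a canonical NIQSZK-complete problem to $\QSCMM[1/n, 1-1/n]$, combining the NIQSZK-completeness of $\QEA$~\cite{BASTS10,CCKV08}, the reduction from $\QEA$ to $\QSCMM$ in [BASTS10, Section 5.3], and a non-interactive amplification step in the spirit of [Kobayashi03, Section 3] to boost the closeness gap to $(1/n, 1-1/n)$.

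First, I would take an instance of $\QEA$: a polynomial-size circuit $Q$ preparing a state $\rho$, together with a known threshold $t(n)$, under the promise that either $\S(\rho) \geq t(n) + 1$ (yes) or $\S(\rho) \leq t(n) - 1$ (no). Applying the BASTS10 reduction, I would construct a polynomial-size circuit $Q'$ on roughly $t(n)$ output qubits that implements $Q$ together with an entropy-flattening procedure (e.g., quantum leftover hashing via a random $2$-universal family, or a Clifford-based hash) on appropriate ancillary registers. The reduction is engineered so that high-entropy $\rho$ yields an output state $\sigma$ close to the maximally mixed state $(I/2)^{\otimes t}$ in trace distance, while low-entropy $\rho$ yields a $\sigma$ far from it; this produces an instance of $\QSCMM[\alpha_0, \beta_0]$ with some constant initial gap $\alpha_0 - \beta_0 = \Omega(1)$.

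To reach the target $(1/n, 1-1/n)$, the final step is a non-interactive amplification via $k$-fold tensor products. Since $(I/2)^{\otimes t}$ tensored with itself remains the maximally mixed state of dimension $2^{tk}$, it suffices to analyze $\td(\sigma^{\otimes k}, (I/2)^{\otimes tk})$: on the no side this amplifies towards $1$, while on the yes side it can be bounded using subadditivity of the trace distance under tensor products, i.e., $\td(\sigma^{\otimes k}, \tau^{\otimes k}) \leq k \cdot \td(\sigma, \tau)$. Choosing $k = \poly(n)$ to balance both sides yields the desired parameters. Since $Q'$ and its tensor powers admit polynomial-size descriptions computable from the description of $Q$, the overall map is a Karp reduction.

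The main obstacle will be the polarization/amplification step in the non-interactive setting. Unlike full $\QSD$ polarization~\cite{Wat02,Wat09}, which combines direct-product with XOR-type moves (and therefore breaks the single-message structure required by $\NIQSZK$), $\QSCMM$ admits only direct-product-style amplification while preserving non-interactivity. Consequently the amplification is polynomial rather than exponential, so the intermediate reduction must deliver a sufficiently wide constant initial gap $\alpha_0 - \beta_0 \geq \Omega(1)$, and the parameter trade-off between $k$, $\alpha_0$, and $\beta_0$ must be tracked carefully to hit exactly $(1/n, 1-1/n)$. Verifying both that the BASTS10 entropy-to-trace-distance reduction produces a robust constant gap, and that $k$-fold tensoring preserves the non-interactive zero-knowledge structure (the Kobayashi ingredient), is the technical crux; once these budgets are balanced, the lemma follows.
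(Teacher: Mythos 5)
Your overall route is exactly the one the paper itself relies on: the paper gives no self-contained proof of this lemma, but cites the combination of the \NIQSZK{}-hardness of \QEA{}, the entropy-flattening reduction from \QEA{} to \QSCMM{} in [BASTS10, Section 5.3], and the parameter trade-off of [CCKV08, Section 8.1]. So the skeleton of your argument matches the intended one.

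There is, however, a genuine problem in your amplification step as written. You assert that it suffices for the intermediate reduction to deliver a \emph{constant} gap $\alpha_0 - \beta_0 = \Omega(1)$ and then take tensor powers. But the trace distance to the maximally mixed state is monotone non-decreasing under tensoring (tracing out $k-1$ copies recovers $\td\rbra{\sigma, (I/2)^{\otimes t}}$), so on the yes side $\td\rbra{\sigma^{\otimes k}, (I/2)^{\otimes tk}} \geq \td\rbra{\sigma, (I/2)^{\otimes t}} \geq \beta_0$; if $\beta_0$ is a constant, no choice of $k$ can bring the yes-side distance down to $1/n$, and the subadditivity bound $\td\rbra{\sigma^{\otimes k},\tau^{\otimes k}} \leq k\,\td\rbra{\sigma,\tau}$ you invoke only degrades the yes side further, never improves it. What actually makes the chain work is that the flattening/hashing reduction already yields a yes-side distance $\beta_0 \leq 1/\poly(n)$ (indeed negligible), so that $k\beta_0 \leq 1/n$ for the $k = O(\log n)$ needed (via the direct-product lemma, using $\alpha_0 = \Omega(1)$) to push the no side up to $1-1/n$ --- or one skips the tensoring and reads the $(1/n,\,1-1/n)$ parameters off the [CCKV08] trade-off directly. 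Your proposal should therefore demand that $\beta_0$ itself be inverse-polynomially small, not merely that the gap $\alpha_0-\beta_0$ be constant; as stated, the "constant gap plus direct-product amplification" plan would fail.
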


\subsubsection{Query and sample complexity lower bounds for states and distributions}
\label{subsec:known-query-and-sample-lower-bounds}

We begin by stating a query complexity lower bound for \QSD{}.
Note that an $n$-qubit maximally mixed state $(I/2)^{\otimes n}$ is commutative with any $n$-qubit quantum states $\rho$. Consider the spectral decomposition $\rho = \sum_{i \in [2^n]} \mu_i \ket{v_i}\bra{v_i}$, where $\{\ket{v_i}\}_{i\in [2^n]}$ is an orthonormal basis, we have $\td\rbra*{\rho,(I/2)^{\otimes n}} = \TV(\mathbf{\mu},U_{2^n})$, where $U_{2^n}$ is a uniform distribution over $[2^n]$. Leveraging a similar argument for $\rho_{\ttU}$, as in \Cref{lemma:query-complexity-lower-bound-QSD,lemma:sample-complexity-lower-bound-QSD}, where the eigenvalues of $\rho_{\ttU}$ form a uniform distribution on the support of $\rho$, we can obtain:
\begin{lemma}[Query complexity lower bound for \QSD{}, adapted from~{\cite[Theorem 2]{CFMdW10}}]
    \label{lemma:query-complexity-lower-bound-QSD}
    For any $\epsilon \in (0,1/2]$, there exists an $n$-qubit quantum state $\rho$ of rank $r$ and the corresponding $n$-qubit state $\rho_{\ttU}$ such that the quantum query complexity to decide whether $\td\rbra*{\rho,\rho_{\ttU}}$ is at least $\epsilon$ or exactly $0$, in the purified quantum query access model, is $\Omega(r^{1/3})$.
\end{lemma}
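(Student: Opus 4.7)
The plan is to reduce from the quantum query lower bound for classical uniformity testing established in~\cite[Theorem 2]{CFMdW10}. In their distribution-query model, one has access to an oracle $O_\mu\colon\ket{0}\mapsto\sum_{i\in\sbra{r}}\sqrt{\mu(i)}\,\ket{i}$, and deciding whether $\mu=U_r$ or $\TV(\mu,U_r)\geq\epsilon$ requires $\Omega(r^{1/3})$ quantum queries. This lower bound in turn comes from a reduction to the collision problem on $\sbra{r}$, whose $\Omega(r^{1/3})$ quantum query complexity is well known.

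Given any distribution $\mu$ on $\sbra{r}$, define the diagonal state $\rho_\mu \coloneqq \sum_{i\in\sbra{r}}\mu(i)\,\ketbra{i}{i}$. Its canonical purification $\ket{\psi_\mu} \coloneqq \sum_{i\in\sbra{r}}\sqrt{\mu(i)}\,\ket{i}\ket{i}$ is prepared by one application of $O_\mu$ followed by a bank of CNOTs that copies the first register into an ancilla register, so each query to the resulting state-preparation circuit $Q_\mu$ (and its inverse or controlled versions) costs only a single query to $O_\mu$. Since $\rho_\mu$ and the corresponding state $\rho_{\ttU}$ (uniform on $\supp(\rho_\mu)$) share the computational basis as a common eigenbasis, the paragraph preceding the lemma gives $\td(\rho_\mu,\rho_{\ttU}) = \TV(\mu,U_{\supp(\mu)})$. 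Thus the yes-case $\mu=U_r$ yields $\td=0$, while a no-case $\mu$ with $\TV(\mu,U_r)\geq\epsilon$ and full support $\sbra{r}$ yields $\td\geq\epsilon$ together with $\rank(\rho_\mu)=r$. Any quantum algorithm solving the quantum state promise with $q$ queries to $Q_\mu$ therefore solves the classical promise with $O(q)$ queries to $O_\mu$, forcing $q=\Omega(r^{1/3})$.

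The main technical obstacle is realizing the no-instance with rank exactly $r$. The standard collision-based hard distribution is 2-to-1 and supported on only $r/2$ elements, which would give $\rank(\rho_\mu)=r/2$. I would patch this either by taking $r'=2r$ in the underlying collision instance (so that the 2-to-1 no-case achieves rank $r$, and the bound $\Omega((r')^{1/3})=\Omega(r^{1/3})$ is preserved up to a constant), or by mixing the no-instance with a $1/\poly(r)$ uniform perturbation to fill the support at the cost of only a constant factor in $\epsilon$ and in the $\Omega(r^{1/3})$ lower bound.
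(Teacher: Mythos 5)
Your reduction is essentially the paper's own argument: the paper justifies this lemma only by the remark surrounding it (embed the classical distribution as a diagonal state whose purification costs one oracle query, use commutativity to get $\td = \TV$, and transfer the collision-based $\Omega(r^{1/3})$ bound of \cite{CFMdW10} to the purified access model via the discussion in \cite{GL20}), so your proposal matches it and adds welcome detail on the support-size subtlety. One caveat on your patches: under the lemma's convention that $\rho_{\ttU}$ is uniform on $\supp(\rho)$, enlarging the domain to $2r$ does not by itself help, since the $2$-to-$1$ no-instance is still uniform on its own support, giving $\rho = \rho_{\ttU}$ and $\td = 0$; only your second fix (mixing in a uniform component so that the no-instance has full support while remaining $\Omega(1)$-far in total variation) genuinely separates the two cases, and it is the right one.
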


It is noteworthy that the quantum query model used in~\cite{CFMdW10} differs from the purified quantum query access model. Nevertheless, this lower bound also applies to our query model, as the discussion after Definition 3 in~\cite{GL20}.

\vspace{1em}
Next, we introduce a query complexity lower bound of distinguishing probability distributions provided in \cite{Belovs19}, which will be used to prove the quantum query complexity lower bound for estimating the quantum Tsallis entropy. 

\begin{lemma}[Query complexity for distinguishing probability distributions, {\cite[Theorem 4]{Belovs19}}] \label{lemma:q-distinguish-prob-distri}
    Suppose that $U_{p_0}$ and $U_{p_1}$ are two unitary operators such that     
    \[ U_{p_0} \ket{0} = \sum_{j \in [N]} \sqrt{p_0(j)} \ket{j} \ket{\varphi_j} \text{ and } U_{p_1} \ket{0} = \sum_{j \in [N]} \sqrt{p_1(j)} \ket{j} \ket{\psi_j}. \]
    Here, $p_0$ and $p_1$ are probability distributions on $\sbra{N}$, and $\{\ket{\varphi_j}\}$ and $\{\ket{\psi_j}\}$ are orthonormal bases. 
    Then, any quantum query algorithm that distinguishes $U_{p_0}$ and $U_{p_1}$ requires query complexity 
    \[\Omega\rbra{1/d_{\mathrm{H}}\rbra{p_0, p_1}}.\]
    Here, the Hellinger distance is defined as 
    \[
    d_{\mathrm{H}}\rbra{p_0, p_1} = \sqrt{\frac{1}{2}\sum_{j\in\sbra{N}}\rbra*{\sqrt{p_0(j)} - \sqrt{p_1(j)}}^2}.  
    \]
\end{lemma}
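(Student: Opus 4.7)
The plan is to establish the query-complexity lower bound via a hybrid (telescoping) argument, after reducing to a worst-case choice of the internal states $\ket{\varphi_j},\ket{\psi_j}$ and of the off-$\ket{0}$ action of $U_{p_0},U_{p_1}$. Since the claimed $\Omega(1/d_{\mathrm{H}})$ bound only needs to hold for some pair of unitaries of the stated form, it suffices to construct one adversarial pair for which the argument is tight.

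First, I would set $\ket{\varphi_j}=\ket{\psi_j}$ for every $j\in[N]$. Under this identification, the prepared states $\ket{A_{p_b}}\coloneqq U_{p_b}\ket{0}$ satisfy
\[
\langle A_{p_0}|A_{p_1}\rangle \;=\; \sum_{j\in[N]}\sqrt{p_0(j)\,p_1(j)} \;=\; 1-d_{\mathrm{H}}^2(p_0,p_1),
\]
which is the maximum fidelity compatible with the given marginals, so this choice makes single-query distinguishing as hard as possible. Next, fix any unitary $U_{p_0}$ with $U_{p_0}\ket{0}=\ket{A_{p_0}}$, and let $R$ be the unitary that acts as the identity on the orthogonal complement of $\spanspace\{\ket{A_{p_0}},\ket{A_{p_1}}\}$ and as the two-dimensional rotation sending $\ket{A_{p_0}}\mapsto\ket{A_{p_1}}$ inside that subspace. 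Setting $U_{p_1}\coloneqq R\,U_{p_0}$ produces a valid state-preparation unitary for $p_1$; since $R$ rotates by the angle $\theta$ with $\cos\theta=1-d_{\mathrm{H}}^2$, one obtains
\[
\Abs{U_{p_0}-U_{p_1}} \;=\; \Abs{I-R} \;=\; 2\sin(\theta/2) \;=\; \sqrt{2}\,d_{\mathrm{H}}(p_0,p_1).
\]

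Finally, the standard hybrid argument applied to any $t$-query algorithm (allowing controlled and inverse queries) yields $\Abs{\ket{\phi_{p_0}}-\ket{\phi_{p_1}}}\leq t\cdot\Abs{U_{p_0}-U_{p_1}}=O\rbra{t\cdot d_{\mathrm{H}}(p_0,p_1)}$ for the final states, so $\td\rbra{\ketbra{\phi_{p_0}}{\phi_{p_0}},\ketbra{\phi_{p_1}}{\phi_{p_1}}}=O\rbra{t\cdot d_{\mathrm{H}}}$. By the Helstrom--Holevo bound (\Cref{lemma:Holevo-Helstrom-bound}), any distinguishing measurement succeeds with probability at most $\tfrac{1}{2}+O\rbra{t\cdot d_{\mathrm{H}}}$, and requiring this to exceed a fixed constant above $1/2$ forces $t=\Omega\rbra{1/d_{\mathrm{H}}(p_0,p_1)}$. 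The main obstacle I expect is the operator-norm bound used in the second paragraph: a careless extension of $U_{p_0},U_{p_1}$ off $\ket{0}$ could have $\Abs{U_{p_0}-U_{p_1}}$ as large as $2$, yielding only a trivial lower bound, and the two-dimensional-rotation construction is what keeps this perturbation matched to the state-level smallness $\Abs{(U_{p_0}-U_{p_1})\ket{0}}=\sqrt{2}\,d_{\mathrm{H}}$ at the operator-norm level.
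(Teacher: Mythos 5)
Your proposal is correct, but it is worth noting that the paper does not actually prove this lemma: it is imported wholesale from \cite[Theorem 4]{Belovs19}, whose original derivation goes through the adversary bound for state conversion (a framework that simultaneously yields the matching $O(1/d_{\mathrm{H}})$ upper bound). Your argument is a genuinely different and more elementary route for the lower-bound direction: choosing $\ket{\varphi_j}=\ket{\psi_j}$ so that $\braket{A_{p_0}}{A_{p_1}}=1-d_{\mathrm{H}}^2$, coupling the two oracles by a two-dimensional rotation so that $\Abs{U_{p_0}-U_{p_1}}=2\sin(\theta/2)=\sqrt{2}\,d_{\mathrm{H}}$, and then running the standard BBBV hybrid argument plus Helstrom--Holevo. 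All three steps check out (the rotation indeed has $\Abs{I-R}=\sqrt{2-2\cos\theta}$, controlled and inverse queries do not increase the operator-norm gap, and $\td\leq\Abs{\ket{\phi_{p_0}}-\ket{\phi_{p_1}}}$ for pure states), and your closing remark correctly identifies the one point where care is required: the bound cannot hold for an \emph{arbitrary} pair of unitaries of the stated form (two oracles agreeing on $\ket{0}$ but differing elsewhere, or with mismatched internal bases, can be distinguished in one query even when $d_{\mathrm{H}}=0$), so the lemma must be read as a lower bound for the task of distinguishing the \emph{distributions} against a worst-case admissible oracle, for which exhibiting one hard coupled pair suffices. This is exactly how the lemma is invoked in \cref{thm:Tsallis-query-complexity-lower-bound-largeQ}, so your proof covers the use made of it; what the adversary-bound route buys in exchange for its extra machinery is the tightness of the bound and robustness to the precise input model.
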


It is noteworthy that \cref{lemma:q-distinguish-prob-distri} was ever used as a tool to prove the quantum query complexity lower bounds for the
closeness testing of probability distributions \cite{LWL24} and the estimations of trace distance and fidelity \cite{Wang24pureQSD}. 

\vspace{1em}
Furthermore, we also need a sample complexity lower bound for \QSD{}, which follows from \cite[Theorem 4.2]{OW21} and is specified in \Cref{lemma:sample-complexity-lower-bound-QSD}. Here, \textit{sample complexity} denotes the number of copies of $\rho$ required to accomplish a specific closeness testing task. 

\begin{lemma}[Sample complexity lower bound for \QSD{}, adapted from~{\cite[Corollary 4.3]{OW21}}]
    \label{lemma:sample-complexity-lower-bound-QSD}
    For any $\epsilon \in (0,1/2]$, there exists an $n$-qubit quantum state $\rho$ of rank $r$ and the corresponding $n$-qubit state $\rho_{\ttU}$ such that the quantum sample complexity to decide whether $\td(\rho,\rho_{\ttU})$ is at least $\epsilon$ or exactly $0$ is $\Omega(r/\epsilon^2)$. 
\end{lemma}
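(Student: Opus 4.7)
The plan is to reduce the quantum problem to the uniformity-testing lower bound established in~\cite[Theorem 4.2]{OW21} by embedding the hard instance into a fixed $r$-dimensional subspace of the $n$-qubit Hilbert space. Concretely, I would fix an orthonormal set $\cbra*{\ket{v_i}:i\in[r]}\subseteq \mathbb{C}^{2^n}$, declare
\[ \rho_{\ttU} \coloneqq \frac{1}{r}\sum_{i\in[r]} \ket{v_i}\bra{v_i}, \]
and restrict the adversarial state $\rho$ to be supported on the same $r$-dimensional subspace. Under this restriction, any algorithm with copy access to $\rho$ is equivalent to one acting on copies of an $r$-dimensional quantum system (via the isometric embedding $\mathbb{C}^r \hookrightarrow \mathbb{C}^{2^n}$), so the quantum sample complexity cannot be smaller than that of the $r$-dimensional uniformity-testing problem.

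Next, I would invoke the OW hard instance: the Paninski-style ensemble in which $\rho$ is diagonal in the same basis as $\rho_{\ttU}$, with eigenvalues $\tfrac{1}{r}(1 \pm 2\epsilon)$ in uniformly random matched pairs. Because $\rho$ and $\rho_{\ttU}$ then commute, $\td(\rho,\rho_{\ttU}) = \TV(\mu, U_r) = \epsilon$, where $\mu$ is the induced eigenvalue distribution and $U_r$ is the uniform distribution on $[r]$; the requirement $2\epsilon \leq 1$ matches the assumed range $\epsilon \in (0,1/2]$. The OW lower bound, proved via Schur-Weyl duality and a representation-theoretic analysis of the symmetric-group action on the tensor power, then asserts that distinguishing $\rho_{\ttU}$ from a uniformly random member of this ensemble requires $\Omega(r/\epsilon^2)$ copies, which in turn lower bounds the worst-case sample complexity of the problem in the lemma.

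The core technical content is already contained in~\cite{OW21}; the present lemma is essentially an $n$-qubit restatement of that corollary. The remaining items are bookkeeping: verifying that the constructed $\rho$ has rank exactly $r$ (which holds as long as $1-2\epsilon > 0$, with the boundary $\epsilon = 1/2$ handled by a slight slack in the perturbation, or by reading the statement as rank at most $r$), and confirming that the subspace embedding is lossless (immediate, since one can always project inputs onto the known $r$-dim subspace before running the tester). I do not anticipate a genuine obstacle beyond correctly tracking these parameters and faithfully transferring the OW lower bound through the embedding.
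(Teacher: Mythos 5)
Your proposal is correct and follows essentially the same route as the paper, which gives no explicit proof but simply cites \cite[Theorem 4.2 / Corollary 4.3]{OW21} together with the remark preceding \cref{lemma:query-complexity-lower-bound-QSD} that $\rho_{\ttU}$ is the maximally mixed state on the support of $\rho$, so that $\td(\rho,\rho_{\ttU})$ reduces to the total variation distance between the eigenvalue distribution and the uniform distribution on $[r]$. Your Paninski-style instance, the commuting-bases identity $\td=\TV=\epsilon$, and the subspace embedding are exactly the intended bookkeeping, and your caveat about the rank at the boundary $\epsilon=1/2$ is a reasonable (and correctly resolved) observation.
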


\subsection{Polynomial approximations}

We provide several useful results and tools for polynomial approximations in this subsection, which are later used in \Cref{subsec:positive-power-func-efficient-poly} to make uniform polynomial approximation \textit{efficient}.

\subsubsection{Best uniform polynomial approximations}
\label{subsubsec:best-uniform-poly-approx}

Let $f(x)$ be a continuous function defined on the interval $[-1,1]$ that we aim to approximate using a polynomial of degree at most $d$. We define $P^*_d$ as a \textit{best uniform approximation} on $[-1,1]$ to $f$ of degree $d$ if, for any degree-$d$ polynomial approximation $P_d$ of $f$, the following holds:  
\[ \max_{x\in[-1,1]} \abs*{f(x) - P^*_d(x)} \leq \max_{x\in[-1,1]} \abs*{f(x) - P_d(x)}.\]
Let $\bbR_d[x]$ be the set of all polynomials (with real coefficients) of degree at most $d$. Equivalently, the best uniform approximation $P^*_d$ to $f$ is the polynomial that solves the minimax problem
\[\min_{P_d \in \bbR_d[x]} \max_{x\in[-1,1]} \abs*{f(x) - P_d(x)}.\] 
Especially, we need the best uniform polynomial approximation of positive constant powers:
\begin{lemma}[Best uniform approximation of positive constant powers, adapted from~{\cite[Section 7.1.41]{Timan63}}]
    \label{lemma:fullrange-bestPolyApprox-positivePower}
    For any positive \emph{(}constant\emph{)} integer $r$ and order $\alpha \in (-1,1)$, let $P^*_d\in\bbR[x]$ be the best uniform polynomial approximation for $f(x) \!=\! x^{r-1} |x|^{1+\alpha}$ of degree $d \!=\! \ceil*{\rbra*{\beta_{\alpha,r}/\epsilon}^{\frac{1}{r+\alpha}}}$, where $\beta_{\alpha,r}$ is a constant depending on $r+\alpha$. Then, for sufficiently small $\epsilon$, it holds that 
    \[\max_{x\in[-1,1]} | P^*_d(x) - f(x) | \leq \epsilon.\]  
\end{lemma}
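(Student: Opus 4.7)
The plan is to obtain the lemma as a quantitative repackaging of Bernstein's classical asymptotic theorem on the best polynomial approximation of functions with $|x|^s$-type singularities, as presented in Section 7.1.41 of Timan. A useful preliminary observation is that $f(x) = x^{r-1}|x|^{1+\alpha} = \sgn(x)^{r-1}\,|x|^{r+\alpha}$, so that $|f(x)|=|x|^{r+\alpha}$ and the singular behaviour of $f$ at the origin is governed by the single real exponent $r+\alpha \in (r-1,r+1) \subset (0,\infty)$, with the factor $\sgn(x)^{r-1}$ merely determining the parity of $f$.

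The single nontrivial input I would invoke is Bernstein's theorem: writing $E_d(f) \coloneqq \min_{P \in \bbR_d[x]} \max_{x \in [-1,1]} |P(x)-f(x)|$, the limit $\lim_{d\to\infty} d^{r+\alpha} E_d(f) = C_{r,\alpha}$ exists and equals a finite positive constant depending only on $r+\alpha$. The main obstacle is really the proof of this asymptotic itself, which combines a Jackson / de La Vall\'ee Poussin-type upper bound with a matching Bernstein-type lower bound obtained via Chebyshev-alternation arguments; for our purposes it can be cited as a black box from Timan, and the remaining work reduces to elementary bookkeeping.

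Given Bernstein's theorem, the derivation proceeds in two short steps. First, fix any constant $\beta_{\alpha,r} > C_{r,\alpha}$, for instance $\beta_{\alpha,r} \coloneqq 2C_{r,\alpha}$; by the definition of the limit there exists a threshold $d_0 = d_0(r,\alpha)$ such that $E_d(f) \leq \beta_{\alpha,r}/d^{r+\alpha}$ whenever $d \geq d_0$. Second, for any $\epsilon$ small enough that $d \coloneqq \lceil (\beta_{\alpha,r}/\epsilon)^{1/(r+\alpha)} \rceil \geq d_0$, the ceiling forces $d^{r+\alpha} \geq \beta_{\alpha,r}/\epsilon$, and therefore $\max_{x\in[-1,1]} |P^*_d(x) - f(x)| = E_d(f) \leq \beta_{\alpha,r}/d^{r+\alpha} \leq \epsilon$, which is exactly the claim. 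The hypothesis ``$\epsilon$ sufficiently small'' in the lemma is precisely what guarantees $d \geq d_0$, so that Bernstein's asymptotic is available as a usable inequality rather than merely as a statement about the limit.
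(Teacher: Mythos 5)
Your proposal is correct and matches the paper's treatment: the lemma is stated as a direct adaptation of Bernstein's asymptotic $\lim_{d\to\infty} d^{r+\alpha}E_d(f)=C_{r,\alpha}$ from Timan, Section 7.1.41, with no further proof given, and your reduction of $f$ to the single singularity exponent $r+\alpha$ together with the ceiling/threshold bookkeeping is exactly the elementary step needed to turn that asymptotic into the quantitative form of the lemma.
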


\subsubsection{Chebyshev expansion and truncations}

We introduce Chebyshev polynomial and an averaged variant of the Chebyshev truncation. 
We recommend \cite[Chapter 3]{Rivlin90} for a comprehensive review of Chebyshev expansion.

\begin{definition}[Chebyshev polynomials]
The Chebyshev polynomials \emph{(}of the first kind\emph{)} $T_k(x)$ are defined via the following recurrence relation: $T_0(x)\coloneqq1$, $T_1(x)\coloneqq x$, and $T_{k+1}(x)\coloneqq2x T_k(x)-T_{k-1}(x)$. For $x \in [-1,1]$, an equivalent definition is $T_k(\cos \theta) = \cos(k \theta)$.
\end{definition}

To use Chebyshev polynomials (of the first kind) for Chebyshev expansion, we first need to define an inner product between two functions, $f$ and $g$, as long as the following integral exists:

\begin{equation}
    \label{eq:polynomial-inner-product}
    \innerprodF{f}{g} \coloneqq \frac{2}{\pi} \int_{-1}^1 \frac{f(x)g(x)}{\sqrt{1-x^2}} \dx.
\end{equation}

The Chebyshev polynomials form an orthonormal basis in the inner product space induced by $\innerprodF{\cdot}{\cdot}$ defined in \Cref{eq:polynomial-inner-product}.
As a result, any continuous and integrable function $f: [-1,1] \rightarrow \bbR$ whose Chebyshev coefficients satisfy $\lim_{k \rightarrow \infty} c_k=0$, where $c_k$ is defined in \Cref{eq:Chebyshev-expansion}, has a Chebyshev expansion given by:
\begin{equation}
    \label{eq:Chebyshev-expansion}
    f(x)=\frac{1}{2} c_0 T_0(x) + \sum_{k=1}^{\infty} c_k T_k(x), \text{ where }  c_k\coloneqq\innerprodF{T_k}{f}.
\end{equation} 

Instead of approximating functions directly by the Chebyshev truncation $\tilde{P}_d = c_0/2 + \sum_{k=1}^d c_k T_k$, we use an average of Chebyshev truncations, known as the \textit{de La Vall\'ee Poussin partial sum}, we obtain the degree-$d$ \textit{averaged Chebyshev truncation} $\hat{P}_{d'}$, which is a polynomial of degree $d'=2d-1$:
\begin{equation}
    \label{eq:averaged-Chebyshev-truncation}
    \hat{P}_{d'}(x) \coloneqq \frac{1}{d} \sum_{l=d}^{d'} \tilde{P}_l(x) 
    = \frac{\hat{c}_0}{2} + \sum_{k=1}^{d'} \hat{c}_k T_k(x), \text{ where } \hat{c}_k = \begin{cases}
        c_k ,& 0 \leq k \leq d\\
        \frac{2d-k}{d} c_k,& k > d
    \end{cases},
\end{equation}
we can achieve the truncation error $4 \epsilon$ for any function that admits Chebyshev expansion. 
\begin{lemma}[Asymptotically best  approximation by averaged Chebyshev truncation, adapted from Exercises 3.4.6 and 3.4.7 in~\cite{Rivlin90}]
    \label{lemma:averaged-Chebyshev-truncation}
    For any function $f$ that has a Chebyshev expansion, consider the degree-$d$ averaged Chebyshev truncation $\hat{P}_{d'}$ defined in \Cref{eq:averaged-Chebyshev-truncation}. 
    Let $\varepsilon_d(f)$ be the truncation error corresponds to the degree-$d$ best uniform approximation on $[-1,1]$ to $f$. If there is a degree-$d$ polynomial $P^*_d\in\bbR[x]$ such that $\max_{x\in[-1,1]} |f(x)-P^*_d(x)| \leq \epsilon$, then
    \[ \max_{x\in[-1,1]} \big| f(x) - \hat{P}_{d'}(x) \big| \leq 4 \varepsilon_d(f) \leq 4 \max_{x\in[-1,1]} |f(x)-P^*_d(x)| \leq 4\epsilon. \]
\end{lemma}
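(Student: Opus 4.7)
The plan is to recognize $\hat{P}_{d'}$ as the image of $f$ under the de La Vall\'ee Poussin operator $V_d$ --- the arithmetic mean of the Chebyshev partial sum operators $S_d, S_{d+1}, \ldots, S_{2d-1}$ --- and to exploit two structural properties of $V_d$: \textbf{(P1)} it fixes polynomials of degree at most $d$, and \textbf{(P2)} its operator norm with respect to the sup-norm on $C[-1,1]$ is bounded by $3$ independently of $d$. Granted these, the lemma follows in a single line: writing $f - V_d(f) = (I - V_d)(f - P^*_d)$, which uses (P1) to cancel the $P^*_d$ contribution, and taking sup-norms yields
\[
\max_{x\in[-1,1]}\bigl|f(x)-\hat{P}_{d'}(x)\bigr| \leq (1 + \|V_d\|)\,\max_{x\in[-1,1]}|f(x)-P^*_d(x)| \leq 4\,\max_{x\in[-1,1]}|f(x)-P^*_d(x)|,
\]
and then specializing $P^*_d$ to the degree-$d$ best uniform approximation to $f$ produces the entire chain $\max_x|f(x) - \hat{P}_{d'}(x)| \leq 4\varepsilon_d(f) \leq 4\max_x|f(x)-P^*_d(x)| \leq 4\epsilon$ claimed in the statement.

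Property (P1) is routine: each truncation $S_l\colon f \mapsto \tilde P_l$ is the orthogonal projection, in the Chebyshev inner product $\innerprodF{\cdot}{\cdot}$, onto $\mathrm{span}\{T_0,\ldots,T_l\}$, so $S_l(P)=P$ whenever $l \geq \deg(P)$; averaging over $l=d,\ldots,2d-1$ immediately gives $V_d(P)=P$ for every $P\in\bbR_d[x]$. Property (P2) is the real content. My plan is to prove it via a telescoping identity relating $V_d$ to Ces\`aro means. Using the standard identity $\sum_{l=0}^{m}S_l = (m+1)\sigma_m$, where $\sigma_m$ denotes the $m$-th Ces\`aro mean, one obtains the clean decomposition
\[
V_d = \frac{1}{d}\!\left(\sum_{l=0}^{2d-1} S_l - \sum_{l=0}^{d-1} S_l\right) = 2\sigma_{2d-1} - \sigma_{d-1}.
\]
Passing through the substitution $x=\cos\theta$, each Ces\`aro mean $\sigma_m$ becomes convolution against a non-negative Fej\'er kernel of $L^1$-norm exactly $1$, hence has sup-norm operator norm equal to $1$; the triangle inequality then yields $\|V_d\|_{\infty\to\infty} \leq 2+1 = 3$.

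The main obstacle, and really the only step beyond bookkeeping, is establishing (P2) through this telescoping identity: the clean, dimension-free constant $3$ (in contrast with the logarithmic blow-up of the Lebesgue constant of the plain Chebyshev truncation $S_d$ itself) relies crucially on the averaging cancellation that expresses $V_d$ as a signed combination of two Fej\'er-type operators. Verification of (P1), the triangle-inequality estimate, and the final specialization to the best-approximation polynomial $P^*_d$ are all routine.
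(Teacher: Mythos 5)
Your proposal is correct and is exactly the argument the paper points to: the lemma is stated without proof as an adaptation of Rivlin's Exercises 3.4.6 and 3.4.7, which are precisely the telescoping identity $V_d = 2\sigma_{2d-1} - \sigma_{d-1}$ with the Fej\'er-kernel bound $\|V_d\|\leq 3$, followed by the reproducing property on $\bbR_d[x]$ and the resulting $4\varepsilon_d(f)$ estimate. No gaps.
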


\subsection{Quantum algorithmic toolkit}

In this subsection, we provide several quantum algorithmic tools that serve as elementary building blocks used in \Cref{subsec:TsallisQEA-in-BQP}: the quantum singular value transformation, three useful quantum algorithmic subroutines, and the quantum samplizer, which enables a quantum query-to-sample simulation. 

\subsubsection{Quantum singular value transformation}
We begin by introducing the notion of block-encoding: 

\begin{definition} [Block-encoding]
    A linear operator $A$ on an $\rbra{n+a}$-qubit Hilbert space is said to be an $\rbra{\alpha, a, \epsilon}$-block-encoding of an $n$-qubit linear operator $B$, if 
    \[
    \Abs{\alpha \rbra{\bra{0}^{\otimes a} \otimes I_n} A \rbra{\ket{0}^{\otimes a} \otimes I_n} - B} \leq \epsilon,
    \]
    where $I_n$ is the $n$-qubit identity operator and $\Abs{\cdot}$ is the operator norm.
\end{definition}

Then, we state the quantum singular value transformation: 
\begin{lemma} [Quantum singular value transformation, {\cite[Theorem 31]{GSLW19}}] \label{lemma:qsvt}
    Suppose that unitary operator $U$ is an $\rbra{\alpha, a, \epsilon}$-block-encoding of Hermitian operator $A$, and $P \in \mathbb{R}\sbra{x}$ is a polynomial of degree $d$ with $\abs{P\rbra{x}} \leq \frac 1 2$ for $x \in \sbra{-1, 1}$.
    Then, we can implement a quantum circuit $\tilde U$ that is a $\rbra{1, a+2, 4d\sqrt{\epsilon/\alpha} + \delta}$-block-encoding of $P\rbra{A/\alpha}$, by using $O\rbra{d}$ queries to $U$ and $O\rbra{\rbra{a+1}d}$ one- and two-qubit quantum gates. 
    Moreover, the classical description of $\tilde U$ can be computed in deterministic time $\poly\rbra{d, \log\rbra{1/\delta}}$. 
\end{lemma}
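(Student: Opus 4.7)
The plan is to adapt the standard quantum singular value transformation (QSVT) framework of Gilyén–Su–Low–Wiebe, combining three ingredients: (i) a quantum signal processing (QSP) realization of the target polynomial in a single-qubit block, (ii) a lifting of QSP to block-encoded operators via alternating projector-controlled phase rotations, and (iii) a robustness analysis together with an efficient classical computation of the phase angles.

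First, I would reduce the task to QSP. Given an exact $\rbra{\alpha, a, 0}$-block-encoding $U$ of $A$, the operator $\tilde{A} \coloneqq A/\alpha$ has singular values in $\sbra{0, 1}$ (since $A$ is Hermitian). The QSP theorem asserts that for any real polynomial $P$ of degree $d$ with $\abs{P\rbra{x}} \leq 1$ on $\sbra{-1,1}$ and matching parity, there exist phase angles $\Phi = \rbra{\phi_0, \dots, \phi_d}$ such that an alternating product of $e^{i\phi_k Z}$ rotations interleaved with the signal unitary realizes $P\rbra{\tilde{A}}$ exactly in the top-left block. For a general mixed-parity $P$ with the stronger bound $\abs{P\rbra{x}} \leq 1/2$, I would decompose $P = \rbra{P_{\mathrm{even}} + P_{\mathrm{odd}}}/2$ and synthesize it as a linear combination of two QSP circuits using one additional control qubit; a second ancilla is used to extract the real part. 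This accounts for the $1/2$ hypothesis and the extra ``$+2$'' in the $a+2$ ancilla count.

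Second, I would lift QSP to QSVT. Let $\Pi = \ketbra{\bar 0}{\bar 0} \otimes I_n$ be the projector onto the encoded block, and define the alternating projector reflections $\Pi_{\phi} \coloneqq e^{i \phi \rbra{2\Pi - I}}$, which play the role of the QSP $Z$-rotations. Interleaving $\Pi_{\phi_0}\, U\, \Pi_{\phi_1}\, U^\dagger\, \Pi_{\phi_2}\, U \cdots$ implements $P\rbra{\tilde{A}}$ in the encoded block. Correctness follows from a singular-value decomposition argument that partitions the Hilbert space into two-dimensional $U$-invariant subspaces, on each of which the long product reduces to a scalar QSP circuit evaluated at the corresponding singular value of $\tilde{A}$. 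This is exactly the content of \cite[Theorem 31]{GSLW19} in the exact case and uses $O\rbra{d}$ queries to $U$ and $O\rbra{\rbra{a+1}d}$ elementary gates for implementing the reflections $\Pi_{\phi_k}$.

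Third, I would address robustness and the classical computation of the angles, which constitute the main obstacle. For robustness, note that if $U$ is only an $\rbra{\alpha, a, \varepsilon}$-block-encoding, then the encoded operator differs from $\tilde A$ by at most $\varepsilon/\alpha$ in operator norm; propagating through a length-$d$ alternating product and invoking the square-root sensitivity bound for the SVD (together with the triangle inequality across the $d$ uses of $U$) yields the claimed $4d\sqrt{\varepsilon/\alpha}$ error term. The additional $\delta$ term accounts for inexact phase angles. For the classical running time, I would invoke the efficient numerical phase-angle algorithms — e.g., Haah's factorization or the capitalization-based approach of Chao–Ding–Gilyén–Huang–Szegedy — which find $\Phi$ in deterministic time $\poly\rbra{d, \log\rbra{1/\delta}}$; the logarithmic dependence on $1/\delta$ comes from Newton-type iteration, and propagating the per-angle precision $\delta/O\rbra{d}$ through the product gives the final $\delta$ contribution to the block-encoding error. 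This last step, establishing numerical stability of the angle-finding subroutine to the required precision, is the most delicate portion and is where the bulk of the technical work lies.
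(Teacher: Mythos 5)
The paper does not prove this lemma at all — it imports it verbatim from \cite[Theorem 31]{GSLW19} (together with the robustness and angle-computation statements of that work), so there is no in-paper argument to compare against. Your sketch is a faithful and essentially correct reconstruction of the GSLW proof — QSP for definite-parity polynomials, the LCU of even/odd parts plus real-part extraction explaining the $\abs{P\rbra{x}}\leq 1/2$ hypothesis and the $a+2$ ancillas, the two-dimensional invariant-subspace lifting, the $4d\sqrt{\epsilon/\alpha}$ robustness bound, and the $\poly\rbra{d,\log\rbra{1/\delta}}$ classical angle-finding — i.e., you have correctly unpacked the black box that the paper simply cites.
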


\subsubsection{Quantum subroutines}

The first subroutine is the quantum amplitude estimation: 
\begin{lemma} [Quantum amplitude estimation, {\cite[Theorem 12]{BHMT02}}] \label{lemma:qaa}
    Suppose that $U$ is a unitary operator such that 
    \[
    U \ket{0} \ket{0} = \sqrt{p} \ket{0} \ket{\phi_0} + \sqrt{1-p} \ket{1} \ket{\phi_1},
    \]
    where $\ket{\phi_0}$ and $\ket{\phi_1}$ are normalized pure quantum states and $p \in \sbra{0, 1}$. 
    Then, there is a quantum query algorithm using $O\rbra{M}$ queries to $U$ that outputs $\tilde p$ such that 
    \[
    \Pr\sbra*{ \abs*{\tilde p - p} \leq \frac{2\pi\sqrt{p\rbra{1-p}}}{M} + \frac{\pi^2}{M^2} } \geq \frac{8}{\pi^2}.
    \]
    Moreover, if $U$ acts on $n$ qubits, then the quantum query algorithm can be implemented by using $O\rbra{Mn}$ one- and two-qubit quantum gates.
\end{lemma}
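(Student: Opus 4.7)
The plan is to follow the standard Brassard–H\o yer–Mosca–Tapp template: reduce amplitude estimation to quantum phase estimation on a Grover-like rotation operator. First I would set $\theta \in [0, \pi/2]$ so that $p = \sin^2(\theta)$, and observe that $U\ket{0}\ket{0}$ lies entirely in the two-dimensional real subspace $\calS$ spanned by the orthonormal vectors $\ket{\mathrm{good}} = \ket{0}\ket{\phi_0}$ and $\ket{\mathrm{bad}} = \ket{1}\ket{\phi_1}$, and equals $\sin(\theta)\ket{\mathrm{good}} + \cos(\theta)\ket{\mathrm{bad}}$.

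Next I would define two reflections: $S_0 = I - 2(\ket{0}\!\bra{0})^{\otimes(n+1)}$ acting on the input register, and $S_{\mathrm{good}} = I - 2\,Z\otimes I$ where $Z$ acts on the first qubit (so $S_{\mathrm{good}}$ flips the sign of the $\ket{\mathrm{good}}$ component). Then the Grover iterate $G = -U S_0 U^\dagger S_{\mathrm{good}}$ preserves $\calS$ and acts there as rotation by angle $2\theta$, with eigenvalues $e^{\pm 2i\theta}$ and eigenvectors a linear combination of $U\ket{0}\ket{0}$ and its $S_{\mathrm{good}}$-reflection; crucially, $U\ket{0}\ket{0}$ lies in the span of those two eigenvectors with equal weights. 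Each use of $G$ costs one query to $U$ and one to $U^\dagger$, plus $O(n)$ gates for the reflections.

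Third, I would apply quantum phase estimation on $G$ using $M$ controlled-$G$ applications with $U\ket{0}\ket{0}$ as the input to the target register. The standard QPE analysis (see \cite{NC10}) gives an outcome $y \in \{0, 1, \dots, M-1\}$ such that the estimate $\tilde\theta = \pi y / M$ satisfies $|\tilde\theta - \theta| \leq \pi/M$ with probability at least $8/\pi^2$, after folding the $\pm 2\theta$ ambiguity by restricting to $\tilde\theta \in [0, \pi/2]$. I would then output $\tilde p = \sin^2(\tilde\theta)$.

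The last step is the error-propagation analysis, which I expect to be the main (but only mild) obstacle. Using the identity $\sin^2(\tilde\theta) - \sin^2(\theta) = \sin(\tilde\theta - \theta)\sin(\tilde\theta+\theta)$, together with $|\sin x| \leq |x|$ and $|\sin(\tilde\theta+\theta)| \leq |\sin(2\theta)| + |\sin(\tilde\theta-\theta)|$, one gets
\begin{equation*}
\abs{\tilde p - p} \leq \abs{\tilde\theta - \theta}\cdot\rbra[\big]{2\sqrt{p(1-p)} + \abs{\tilde\theta - \theta}} \leq \frac{2\pi\sqrt{p(1-p)}}{M} + \frac{\pi^2}{M^2},
\end{equation*}
which is exactly the claimed bound. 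The gate count $O(Mn)$ follows because each of the $M$ controlled-$G$ operations uses $O(n)$ extra gates for the reflections and controls, and the inverse quantum Fourier transform on $\log M$ qubits at the end contributes only $O(\log^2 M)$ additional gates, absorbed into $O(Mn)$.
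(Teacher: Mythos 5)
Your proposal is correct and reproduces the standard Brassard--H{\o}yer--Mosca--Tapp argument; the paper does not prove this lemma itself but imports it verbatim from \cite[Theorem 12]{BHMT02}, which is exactly the phase-estimation-on-the-Grover-iterate proof you outline, including the $\sin^2\tilde\theta-\sin^2\theta=\sin(\tilde\theta-\theta)\sin(\tilde\theta+\theta)$ error-propagation step. The only blemish is the formula $S_{\mathrm{good}}=I-2\,Z\otimes I$, which is not unitary as written and should read $I-2\ketbra{0}{0}\otimes I$ (equivalently $-Z\otimes I$ up to global phase), consistent with your stated intent of flipping the sign of the $\ket{\mathrm{good}}$ component.
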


The second subroutine prepares a purified density matrix, originally stated in \cite{LC19}: 
\begin{lemma} [Block-encoding of density operators, {\cite[Lemma 25]{GSLW19}}] \label{lemma:block-encoding-of-density-operators}
    Suppose that $U$ is an $\rbra{n+a}$-qubit unitary operator that prepares a purification of an $n$-qubit mixed quantum state $\rho$. 
    Then, we can implement a unitary operator $W$ by using $1$ query to each of $U$ and $U^\dag$ such that $W$ is a $\rbra{1, n+a, 0}$-block-encoding of $\rho$. 
\end{lemma}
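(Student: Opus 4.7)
The plan is to give an explicit construction of $W$ via a SWAP-test-style trick. Let $\sfA$ denote the $a$-qubit purifying register and $\sfB$ the $n$-qubit system register on which $U$ acts, and introduce an additional $n$-qubit ``working'' register $\sfC$ on which the block-encoded $\rho$ will act. Write
\[
    \ket{\psi}_{\sfA\sfB} := U \ket{0}^{\otimes (n+a)}, \qquad \rho = \tr_{\sfA}\rbra*{\ketbra{\psi}{\psi}_{\sfA\sfB}},
\]
and define
\[
    W := \rbra*{U^{\dagger}_{\sfA\sfB} \otimes I_{\sfC}} \cdot \operatorname{SWAP}_{\sfB \leftrightarrow \sfC} \cdot \rbra*{U_{\sfA\sfB} \otimes I_{\sfC}}.
\]
This uses exactly one query to $U$ and one to $U^{\dagger}$, and the intervening $\operatorname{SWAP}$ between the two $n$-qubit registers $\sfB$ and $\sfC$ can be implemented by $n$ two-qubit swap gates.

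The only thing to check is the defining identity of a $(1, n+a, 0)$-block-encoding, namely that for every $n$-qubit state $\ket{\phi}_{\sfC}$,
\[
    \rbra*{\bra{0}^{\otimes(n+a)}_{\sfA\sfB} \otimes I_{\sfC}} \, W \, \rbra*{\ket{0}^{\otimes(n+a)}_{\sfA\sfB} \otimes \ket{\phi}_{\sfC}} = \rho \ket{\phi}_{\sfC}.
\]
The quickest way is to use the Schmidt form $\ket{\psi}_{\sfA\sfB} = \sum_i \sqrt{p_i} \ket{i}_{\sfA}\ket{v_i}_{\sfB}$, so that $\rho = \sum_i p_i \ketbra{v_i}{v_i}$. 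Applying $U_{\sfA\sfB} \otimes I_{\sfC}$ produces $\ket{\psi}_{\sfA\sfB}\ket{\phi}_{\sfC}$; the $\operatorname{SWAP}$ sends this to $\sum_i \sqrt{p_i}\ket{i}_{\sfA}\ket{\phi}_{\sfB}\ket{v_i}_{\sfC}$; finally, using $\bra{0}^{\otimes(n+a)}U^{\dagger} = \bra{\psi}$, the left projection yields $\sum_{i,j} \sqrt{p_i p_j} \innerprod{j}{i}\innerprod{v_j}{\phi} \ket{v_i}_{\sfC} = \sum_i p_i \innerprod{v_i}{\phi}\ket{v_i}_{\sfC} = \rho \ket{\phi}_{\sfC}$, as required.

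There is essentially no obstacle here: the construction is the standard one, and the identity is exact (matching the error parameter $\epsilon=0$ in the block-encoding). The ancilla register of the block-encoding consists of the original $a$ purifying qubits of $U$ together with the $n$ qubits of $\sfB$ (which are projected back onto $\ket{0}^{\otimes n}$ by the $U^{\dagger}$ that closes the sandwich), accounting for the $n+a$ ancilla qubits stated in the lemma, while $\sfC$ plays the role of the $n$-qubit system on which $\rho$ acts.
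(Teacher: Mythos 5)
Your construction and verification are correct, and this is precisely the standard argument behind the cited result \cite[Lemma 25]{GSLW19}: the paper itself states the lemma without proof, deferring to that reference, whose proof is the same $W = (U^{\dagger}\otimes I)\cdot\operatorname{SWAP}\cdot(U\otimes I)$ sandwich with the identical Schmidt-decomposition computation. The register bookkeeping (the $a$ purifying qubits plus the $n$ original system qubits serving as the $n+a$ ancillas, with a fresh $n$-qubit register carrying the block-encoded $\rho$) and the query count of one use each of $U$ and $U^{\dagger}$ both match the statement exactly.
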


The third subroutine is a specific version of one-bit precision phase estimation~\cite{Kitaev95}, often referred to as the Hadamard test~\cite{AJL09}, as stated in~\cite{GP22}:
\begin{lemma} [Hadamard test for block-encodings, adapted from {\cite[Lemma 9]{GP22}}] \label{lemma:hadamard}
    Suppose that unitary operator $U$ is a $\rbra{1, a, 0}$-block-encoding of an $n$-qubit operator $A$. 
    Then, we can implement a quantum circuit that, on input an $n$-qubit mixed quantum state $\rho$, outputs $0$ with probability $\frac{1}{2}+\frac{1}{2}\Real\sbra{\tr\rbra{A\rho}}$ (resp., $\frac{1}{2}+\frac{1}{2}\Imag\sbra{\tr\rbra{A\rho}}$), by using $1$ query to controlled-$U$ and $O\rbra{1}$ one- and two-qubit quantum gates. 

    Moreover, if an $\rbra{n+a}$-qubit unitary operator $\mathcal{O}$ prepares a purification of $\rho$, then, by combining \cref{lemma:qaa}, we can estimate $\tr\rbra{A\rho}$ to within additive error $\epsilon$ by using $O\rbra{1/\epsilon}$ queries to each of $U$ and $\mathcal{O}$ and $O\rbra{\rbra{n+a}/\epsilon}$ one- and two-qubit quantum gates. 
\end{lemma}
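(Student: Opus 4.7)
The plan is to combine the first (qualitative) part of the lemma, which already gives a circuit whose outcome has bias $\frac{1}{2}\Real[\tr(A\rho)]$ (respectively $\frac{1}{2}\Imag[\tr(A\rho)]$), with quantum amplitude estimation from \cref{lemma:qaa}. The only subtlety is that the Hadamard test as stated consumes an $n$-qubit mixed state $\rho$ as input, whereas we are handed only the purification oracle $\mathcal{O}$; we need to verify that replacing the input $\rho$ by the purification $\mathcal{O}\ket{0}^{\otimes(n+a)}$ leaves the acceptance probability unchanged, and then package the resulting circuit into a form on which \cref{lemma:qaa} applies.

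First, I would write $V$ for the $(n+1)$-qubit Hadamard-test circuit produced by the first part of \cref{lemma:hadamard}, so that when $V$ is applied to $\ket{0}\otimes \rho$ and the first qubit is measured, outcome $0$ occurs with probability $\frac{1}{2}+\frac{1}{2}\Real[\tr(A\rho)]$. Prepending $\mathcal{O}$ on the input register gives a new unitary $W \coloneqq (V \otimes I_{a})(I_1 \otimes \mathcal{O})$ acting on $1+n+a$ qubits starting from $\ket{0}^{\otimes(1+n+a)}$. Because partial trace commutes with the measurement on the first qubit, the probability that measuring the first qubit yields $0$ is still $\frac{1}{2}+\frac{1}{2}\Real[\tr(A\rho)]$. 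Hence $W$ is exactly of the form required by \cref{lemma:qaa}, with $p = \frac{1}{2}+\frac{1}{2}\Real[\tr(A\rho)]$ and $W$ uses $O(1)$ queries to controlled-$U$, $O(1)$ queries to $\mathcal{O}$, and $O(n+a)$ one- and two-qubit gates.

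Next I would invoke \cref{lemma:qaa} on $W$ with $M = \Theta(1/\epsilon)$ to obtain $\tilde p$ satisfying $|\tilde p - p| \leq 2\pi\sqrt{p(1-p)}/M + \pi^2/M^2 \leq \epsilon/4$ with probability at least $8/\pi^2$; success probability can be boosted to $1-o(1)$ by $O(1)$ repetitions and taking the median, without affecting the asymptotic cost. Setting $\widetilde{\Real[\tr(A\rho)]} \coloneqq 2\tilde p - 1$ yields a real-part estimate to additive error $\epsilon/2$. Repeating the same argument with the imaginary-part variant of the Hadamard test gives an estimate of $\Imag[\tr(A\rho)]$ to additive error $\epsilon/2$, and combining the two yields the complex estimate to error $\sqrt{(\epsilon/2)^2 + (\epsilon/2)^2} \leq \epsilon$. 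Each of the two amplitude-estimation calls uses $O(M) = O(1/\epsilon)$ queries to $W$, hence $O(1/\epsilon)$ queries to each of $U$ (via controlled-$U$) and $\mathcal{O}$, and $O(M(n+a)) = O((n+a)/\epsilon)$ one- and two-qubit gates by the gate-count clause of \cref{lemma:qaa}.

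I do not anticipate a genuine obstacle here — the statement is essentially a black-box composition of \cref{lemma:hadamard} (first part) with \cref{lemma:qaa}. The only points requiring a little care are (i) the routine observation that feeding the purification $\mathcal{O}\ket{0}^{\otimes(n+a)}$ into the Hadamard test reproduces the same acceptance probability as feeding $\rho$, which follows from linearity of trace over the purification ancillas, and (ii) the book-keeping that the additive error on $\tr(A\rho)$ transfers through the factor $\tfrac{1}{2}$ in the Hadamard-test bias and through the two-coordinate (real/imaginary) combination, with constants absorbed into the $O(1/\epsilon)$ bound.
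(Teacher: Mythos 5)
The paper does not prove this lemma itself; it is stated as a preliminary tool cited from \cite{GP22}, so there is no in-paper proof to compare against. Your composition argument --- feed the purification $\mathcal{O}\ket{0}^{\otimes(n+a)}$ into the Hadamard-test circuit (the control-qubit marginal is unchanged since only the system register's reduced state matters), package the result as a unitary of the form required by \cref{lemma:qaa}, run amplitude estimation with $M=\Theta(1/\epsilon)$, and combine the real- and imaginary-part estimates --- is correct and is the standard way this lemma is established, up to a cosmetic bookkeeping slip in which registers $V$ acts on (it must also carry the $a$ block-encoding ancillas, not just $n+1$ qubits).
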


\subsubsection{Quantum samplizer}

We introduce the notion of samplizer in \cite{WZ24b}, which helps us establish the sample complexity upper bound from the query complexity upper bound. 

\begin{definition} [Samplizer]
    A samplizer $\mathsf{Samplize}_*\ave{*}$ is a mapping that converts quantum query algorithms (quantum circuit families with query access to quantum unitary oracles) to quantum sample algorithms (quantum channel families with sample access to quantum states) such that: 
    For any $\delta > 0$, quantum query algorithm $\mathcal{A}^{U}$, and quantum state $\rho$, there exists a unitary operator $U_\rho$ that is a $\rbra{2, a, 0}$-block-encoding of $\rho$ for some $a > 0$, satisfying
    \[
    \Abs*{\mathsf{Samplize}_\delta\ave{\mathcal{A}^U}\sbra{\rho} - \mathcal{A}^{U_{\rho}}}_\diamond \leq \delta,
    \]
    where $\Abs{\cdot}_{\diamond}$ is the diamond norm and $\mathcal{E}\sbra{\rho}\rbra{\cdot}$ is a quantum channel $\mathcal{E}$ with sample access to $\rho$. 
\end{definition}

Then, we include an efficient implementation of the samplizer in \cite{WZ24b}, which is based on quantum principal component analysis \cite{LMR14,KLL+17} and generalizes \cite[Corollary 21]{GP22} and \cite[Theorem 1.1]{WZ23b}. 

\begin{lemma} [Optimal samplizer, {\cite[Theorem 4]{WZ24b}}] \label{lemma:samplizer}
    There is a samplizer $\mathsf{Samplize}_*\ave{*}$ such that for $\delta > 0$ and quantum query algorithm $\mathcal{A}^{U}$ with query complexity $Q$, the implementation of $\mathsf{Samplize}_\delta\ave{\mathcal{A}^U}\sbra{\rho}$ uses $\widetilde O\rbra{Q^2/\delta}$ samples of $\rho$.
\end{lemma}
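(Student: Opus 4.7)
The plan is to construct the samplizer by combining a sample-based approximate block-encoding of $\rho$ via quantum principal component analysis (QPCA) with a hybrid argument over the $Q$ queries. The key ingredient, going back to Lloyd--Mohseni--Rebentrost and refined in the context of block-encodings in \cite{GP22} and \cite{WZ23b}, is that a controlled-SWAP operation between an ancillary sample register holding $\rho$ and the register acted on by a query effectively enacts an approximate $(2,a,0)$-block-encoding of $\rho$. The target unitary $U_\rho$ in the definition of the samplizer will be this canonical controlled-SWAP-based block-encoding, which exists as an exact $(2,a,0)$-block-encoding once we treat the sample register as ideal.

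First, I would make precise a single-query replacement: using density matrix exponentiation (optimally, via the improvements of \cite{KLL+17}), one can simulate the action of the ideal $U_\rho$ to within diamond-norm error $\eta$ using $\widetilde O(1/\eta)$ samples of $\rho$. Concretely, one builds a quantum channel $\widetilde{\mathcal{U}}_\rho$ such that $\Abs{\widetilde{\mathcal{U}}_\rho - \mathcal{U}_{U_\rho}}_\diamond \leq \eta$, where $\mathcal{U}_{U_\rho}(\cdot) = U_\rho (\cdot) U_\rho^\dagger$; this is the step that distinguishes the optimal samplizer from naive approaches and is what prevents a $1/\eta^2$ blow-up in per-query samples.

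Second, I would apply a standard hybrid argument. The ideal query algorithm $\mathcal{A}^{U_\rho}$ consists of at most $Q$ applications of $U_\rho$ interleaved with fixed unitaries and ancilla introductions, all of which are contractive under the diamond norm. Replacing each ideal query with $\widetilde{\mathcal{U}}_\rho$ and telescoping gives
\[
\Abs*{\mathsf{Samplize}_\delta\ave{\mathcal{A}^U}\sbra{\rho} - \mathcal{A}^{U_\rho}}_\diamond \leq Q \cdot \eta.
\]
Setting $\eta = \delta/Q$ and summing the per-query sample cost over all $Q$ queries yields a total of $Q \cdot \widetilde O(Q/\delta) = \widetilde O(Q^2/\delta)$ samples, matching the claim. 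Since fresh samples are consumed in each simulated query, no reuse-induced correlations arise.

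The main obstacle I expect is ensuring that \emph{one and the same} exact $U_\rho$ serves as the ideal target across all $Q$ queries, as demanded by the samplizer definition, rather than a sequence of slightly different approximate block-encodings. This forces the construction to be careful about registers: the approximation must be of the channel $\mathcal{U}_{U_\rho}$ acting on arbitrary input states (hence the use of the diamond norm per query), not merely the amplitude on a specific input. A related subtlety is shaving the $\delta$-dependence down from $1/\delta^2$, natural for Hamiltonian-simulation-style arguments, to $1/\delta$; this is where the optimal QPCA/density-matrix-exponentiation bound is essential, since a non-optimal $\widetilde O(1/\eta^2)$ per-query cost would yield $\widetilde O(Q^3/\delta^2)$ instead of the stated $\widetilde O(Q^2/\delta)$.
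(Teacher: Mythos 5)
The paper does not prove this lemma at all: it imports it verbatim from [WZ24b, Theorem 4], noting only that the construction is based on quantum principal component analysis and generalizes [GP22, Corollary 21] and [WZ23b, Theorem 1.1]. Your reconstruction --- simulating one fixed $(2,a,0)$-block-encoding $U_\rho$ per query via optimal density-matrix exponentiation at diamond-norm error $\eta$ and cost $\widetilde O(1/\eta)$ samples, then a hybrid argument with $\eta = \delta/Q$ to get $Q\cdot\widetilde O(Q/\delta)=\widetilde O(Q^2/\delta)$ --- is exactly the strategy of that cited proof, including the key observation that the optimal $1/\eta$ (rather than $1/\eta^2$) per-query cost is what avoids a $\widetilde O(Q^3/\delta^2)$ bound.
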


\section{Efficient quantum algorithms for estimating quantum \texorpdfstring{$q$}{q}-Tsallis entropy} 
\label{sec:algos}

In this section, we propose efficient quantum algorithms for estimating the quantum Tsallis entropy $\Sq(\rho)$ when $q \geq 1+\Omega(1)$, using either queries to the state-preparation circuit or samples of the state $\rho$. 
The key ingredient underlying our algorithms is an \textit{efficient} uniform approximation of positive constant power functions. 
Specifically, our polynomial approximation (\Cref{lemma:computable-fullrange-bestPolyApprox-positivePower}) is ``full-range'', meaning it maintains a uniform error bound across the entire interval $[-1,1]$. This differs from the polynomial approximations commonly used in QSVT, which typically provide separate error bounds for the intervals $[-\delta,\delta]$ and $[-1,-\delta) \cup (\delta,1]$. 

Utilizing our ``full-range'' polynomial approximation, we construct a query-efficient quantum algorithm for estimating $\tr\rbra*{\rho^q}$, as established in \Cref{thm:tr-power-constant-queries}. Consequently, our quantum query algorithm (\Cref{thm:tr-power-constant-queries}) directly leads to \BQP{} containments for the promise problems \TsallisQEA{} and \TsallisQED{}, defined in \Cref{sec:hardness-via-QJTq-reductions}. Furthermore, by employing the samplizer in~\cite{WZ24b}, we develop a sample-efficient quantum algorithm for estimating $\tr\rbra*{\rho^q}$, as presented in \Cref{thm:tr-power-constant-samples}.

\subsection{Efficient uniform approximations to positive constant power functions}
\label{subsec:positive-power-func-efficient-poly}
We provide an efficiently computable uniform approximation of positive constant powers: 

\begin{lemma}[Efficient uniform polynomial approximation of positive constant powers]
    \label{lemma:computable-fullrange-bestPolyApprox-positivePower}
    Let $r$ be a positive \emph{(}constant\emph{)} integer and let $\alpha$ be a real number in $(-1,1)$. For any $\epsilon \in (0,1/2)$, there is a degree-$d$ polynomial $P_d \in \bbR[x]$, where $d = \ceil*{\rbra*{\beta'_{\alpha,r}/\epsilon}^{\frac{1}{r+\alpha}}}$ and $\beta'_{\alpha,r}$ is a constant depending on $r+\alpha$, that can be deterministically computed in $\widetilde{O}(d)$ time. For sufficiently small $\epsilon$, it holds that\emph{:}
    \[ \max_{x \in [-1,1]} \abs*{ \frac{1}{2} x^{r-1} |x|^{1+\alpha} - P_d(x) }\leq \epsilon \quad \text{and} \quad \max_{x \in [-1,1]} \abs*{P_d(x)} \leq 1.\]
    Furthermore, $P_d$ has the same parity as the integer $r-1$. 
\end{lemma}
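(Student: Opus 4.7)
The plan is to combine Timan's nonconstructive bound on the best uniform approximation error (Lemma~\ref{lemma:fullrange-bestPolyApprox-positivePower}) with the explicit averaged-Chebyshev-truncation construction (Lemma~\ref{lemma:averaged-Chebyshev-truncation}), which converts the existence of a good degree-$d$ approximation into a concrete polynomial of degree $O(d)$ with at most four times the error. Set $g(x) \coloneqq \tfrac{1}{2} x^{r-1}|x|^{1+\alpha}$ on $[-1,1]$, which satisfies $|g(x)| \leq 1/2$ and has parity $r-1 \bmod 2$ because $g(-x) = (-1)^{r-1} g(x)$. Applying \Cref{lemma:fullrange-bestPolyApprox-positivePower} to $f = 2g$ with target error $\epsilon/2$ produces a best approximation of $f$ of degree $d_0 = \lceil (2\beta_{\alpha,r}/\epsilon)^{1/(r+\alpha)} \rceil$ with error $\leq \epsilon/2$, hence a degree-$d_0$ polynomial approximating $g$ within $\epsilon/4$. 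Then \Cref{lemma:averaged-Chebyshev-truncation} guarantees that the degree-$d_0$ averaged Chebyshev truncation $\hat P_{d'}$ of $g$, which has degree $d' = 2d_0-1$, satisfies $\max_{[-1,1]}|g - \hat P_{d'}| \leq 4\cdot (\epsilon/4) = \epsilon$. Setting $P_d \coloneqq \hat P_{d'}$ and absorbing the factor of $2$ into $\beta'_{\alpha,r}$ yields the claimed degree bound $d = \lceil (\beta'_{\alpha,r}/\epsilon)^{1/(r+\alpha)}\rceil$.

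The operator-norm bound is immediate from the triangle inequality: $|P_d(x)| \leq |g(x)| + \epsilon \leq 1/2 + 1/2 = 1$, using $\epsilon < 1/2$. For the parity, recall that $T_k$ has parity $k \bmod 2$, so the Chebyshev coefficients $c_k = \innerprodF{T_k}{g}$ of $g$ vanish for $k$ with parity opposite to $r-1$. The reweighting in \Cref{eq:averaged-Chebyshev-truncation} preserves these vanishings, so $\hat P_{d'}$ inherits the parity of $r-1$. Should finite-precision arithmetic introduce spurious coefficients of the wrong parity, the parity can be restored by the symmetrization $P_d(x) \mapsto \tfrac{1}{2}\bigl(P_d(x) + (-1)^{r-1} P_d(-x)\bigr)$, which preserves both the approximation and the norm bound at negligible cost.

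The principal technical task is producing $\hat P_{d'}$ deterministically in $\widetilde O(d)$ time, i.e.\ computing its Chebyshev coefficients to sufficient precision. Under $x = \cos\theta$ one has
\[
c_k = \frac{2}{\pi}\int_{-1}^{1}\frac{T_k(x)\,g(x)}{\sqrt{1-x^2}}\,dx = \frac{1}{\pi}\int_0^\pi \cos(k\theta)\,\cos^{r-1}\theta\,|\cos\theta|^{1+\alpha}\,d\theta,
\]
which admits a closed-form evaluation via Beta/Gamma functions, computable in polylogarithmic time per coefficient to the $O(\log(1/\epsilon))$ bits of precision we need. Alternatively, all $d'+1$ coefficients can be approximated simultaneously by a single length-$\widetilde O(d)$ Gauss--Chebyshev DCT (i.e.\ one FFT), and the reweighting in \Cref{eq:averaged-Chebyshev-truncation} is an $O(d)$ postprocessing step. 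The low regularity of $g$ at the origin causes only polynomial decay of the Chebyshev tail, which is the main delicacy of the argument, but in the intended regime $\epsilon \geq 1/\poly(n)$ a $\widetilde O(d)$-point grid with $O(\log n)$-bit fixed-precision arithmetic is enough to render this contribution lower-order, keeping the total time $\widetilde O(d)$ as required.
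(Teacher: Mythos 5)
Your proposal follows essentially the same route as the paper: Timan's nonconstructive best-approximation bound (\Cref{lemma:fullrange-bestPolyApprox-positivePower}) combined with the degree-doubling averaged Chebyshev / de la Vall\'ee Poussin truncation (\Cref{lemma:averaged-Chebyshev-truncation}), with the norm bound by the triangle inequality and the parity claim from the vanishing of the wrong-parity Chebyshev coefficients. The only difference is in implementation: where you evaluate the coefficients $c_k$ by closed-form Gamma-function formulas or a single DCT, the paper derives an explicit two-term recurrence for the $c_k$ (equivalent to your Beta/Gamma closed form), which sidesteps the quadrature/aliasing delicacy you flag at the end.
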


\begin{proof}
    Let $f(x) \coloneqq \frac{1}{2} x^{r-1} |x|^{1+\alpha}$. 
    For any $\tilde{\epsilon} \in (0,1/8)$, using \Cref{lemma:fullrange-bestPolyApprox-positivePower}, we obtain the degree-$\tilde{d}$ best polynomial approximation $P_{\tilde{d}}^*(x)$, where $\tilde{d} = \ceil*{\rbra*{\beta_{\alpha,r}/\tilde{\epsilon}}^{\frac{1}{r+\alpha}}}$ and $\beta_{\alpha,r}$ is a constant depending on $r+\alpha$, such that
    \begin{equation}
        \label{eq:fullrange-positivePower-errorBound}
        \max_{x\in [-1,1]} \abs*{\frac{1}{2}x^{r-1}|x|^{1+\alpha} - P_{\tilde{d}}^*(x)} \leq \tilde{\epsilon} \quad \text{and} \quad \max_{x \in [-1,1]} \abs*{P^*_{\tilde{d}}(x)} \leq \frac{1}{2}+\tilde{\epsilon}. 
    \end{equation}

    Next, we consider the degree-$\tilde{d}$ averaged Chebyshev truncation (\Cref{eq:averaged-Chebyshev-truncation}) of $f(x)$. In particular, let $d \coloneqq 2\tilde{d}-1 = \ceil*{\rbra*{\beta'_{\alpha,r}/\epsilon}^{\frac{1}{r+\alpha}}}$, where $\beta'_{\alpha,r}$ is another constant depending on $r+\alpha$ and $\epsilon$ will be specified later. We obtain the following degree-$d$ polynomial: 
    \begin{equation}
        \label{eq:fullrange-positivePower-poly}
        P_d(x) = \frac{\hat{c}_0}{2} + \sum_{k=1}^{d} \hat{c}_k T_k(x), \text{ where } \hat{c}_k \coloneqq \begin{cases}
        c_k,& 0\leq k \leq \tilde{d}\\
        \frac{2\tilde{d}-k}{\tilde{d}} c_k,& k > \tilde{d}
        \end{cases} \text{ and } c_k \coloneqq \innerprodF{T_k}{f}.  
    \end{equation}

    Using the asymptotically best approximation by averaged Chebyshev truncation (\Cref{lemma:averaged-Chebyshev-truncation}) and \Cref{eq:fullrange-positivePower-errorBound}, we can derive that $P_d(x)$ satisfies the following: 
    \[ \max_{x\in [-1,1]} \abs*{\frac{1}{2}x^{r-1}|x|^{1+\alpha} - P_{d}(x)} \leq 4\tilde{\epsilon} \coloneqq \epsilon \text{ and } \max_{x \in [-1,1]} \abs*{P_{d}(x)} \leq \frac{1}{2}+4\tilde{\epsilon} = \frac{1}{2}+\epsilon < 1.  \]

    It remains to show that $P_d(x)$ can be computed in deterministic time $\widetilde{O}(d)$. A direct calculation implies that the Chebyshev coefficient $\{c_k\}_{0 \leq k \leq d}$ in \Cref{eq:fullrange-positivePower-poly} satisfy the following: 
    \begin{align*}
        c_{2l+1} &= c_{2l-1} \cdot \frac{r+\alpha-2l+1}{r+\alpha+2l+1}, \quad
        c_{2l} = c_{2l-2} \cdot \frac{r+\alpha-2l+2}{r+\alpha+2l},\\
        c_{0} &= \frac{2}{\pi} \int_{-1}^{1} \frac{\frac{1}{2} x^{r-1} |x|^{1+\alpha} \cdot T_{0}(x)}{\sqrt{1-x^2}} \dx = - \frac{-1+(-1)^r}{2\sqrt{\pi}} \cdot \frac{\Gamma\rbra*{\frac{1}{2}(r+\alpha+1)}}{\Gamma\rbra*{\frac{1}{2}(r+\alpha+2)}},\\
        c_{1} &= \frac{2}{\pi} \int_{-1}^{1} \frac{\frac{1}{2} x^{r-1} |x|^{1+\alpha} \cdot T_{1}(x)}{\sqrt{1-x^2}} \dx =  \frac{1+(-1)^r}{2\sqrt{\pi}} \cdot \frac{\Gamma\rbra*{\frac{1}{2}(r+\alpha+2)}}{\Gamma\rbra*{\frac{1}{2}(r+\alpha+3)}}.
    \end{align*}
    Here, the Gamma function $\Gamma(x) \coloneqq \int_{0}^{\infty} t^{x-1} e^{-x} \dd t$ for any $x > 0$. 

    Consequently, we can recursively compute the averaged Chebyshev coefficient $\{\hat{c}_k\}_{0 \leq k \leq d}$ in deterministic time $\widetilde{O}(d)$. 
    We complete the proof by noting that the Chebyshev polynomials $\cbra*{T_k(x)}_{0\leq k \leq d}$ also can be recursively computed in deterministic time $\widetilde{O}(d)$.
\end{proof}

\subsection{Quantum \texorpdfstring{$q$}{q}-Tsallis entropy approximation for \texorpdfstring{$q$}{q} constantly larger than \texorpdfstring{$1$}{1}}
\label{subsec:TsallisQEA-in-BQP}

\subsubsection{Query-efficient quantum algorithm for estimating \texorpdfstring{$\tr(\rho^q)$}{tr(ρ\^q)}}

We now present efficient quantum query algorithms for estimating the $q$-Tsallis entropy of a mixed quantum state. 
For readability, their framework is given in \cref{algo:q-Tsallis-estimation}. 

\begin{algorithm}[!htp]
    \caption{A framework for estimating $q$-Tsallis entropy for $q\geq 1+\Omega(1)$ (query access).}
    \label{algo:q-Tsallis-estimation}
    \begin{algorithmic}[1]
        \Require A quantum circuit $Q$ that prepares a purification of an $n$-qubit mixed quantum state $\rho$, and a precision parameter $\epsilon \in \rbra{0, 1}$.
        \Ensure A single bit $b \in \cbra{0, 1}$ such that $\Pr\sbra{b = 0} \approx \frac{1}{2} + \frac{1}{8} \tr\rbra{\rho^q}$.

        \State Implement a unitary operator $U_\rho$ that is a block-encoding of $\rho$ by \cref{lemma:block-encoding-of-density-operators}, using $O\rbra{1}$ queries to $Q$.

        \State Let $P\rbra{x}$ be a polynomial that approximates $\frac 1 4 x^{q-1}$ in the range $\sbra{0, 1}$, where $P\rbra{x}$ is determined according to $\epsilon$, $n$, and $q$. More precisely, for constant $q > 1$, $P\rbra{x}$ is chosen by \cref{lemma:computable-fullrange-bestPolyApprox-positivePower}. 
    
        \State Implement a unitary operator $U_{P\rbra{\rho}}$ that is a block-encoding of $P\rbra{\rho}$ by quantum singular value transformation (\cref{lemma:qsvt}), using $O\rbra{\deg\rbra{P}}$ queries to $U_\rho$. 
    
        \State Perform the Hadamard test on $\rho$ and $U_{P\rbra{\rho}}$ by \cref{lemma:hadamard}, and return the measurement outcome. 
    \end{algorithmic}
\end{algorithm}

\begin{theorem} [Trace estimation of quantum state constant powers via queries] \label{thm:tr-power-constant-queries}
    Suppose that $Q$ is a unitary operator that prepares a purification of mixed quantum state $\rho$. 
    For every $q \geq 1+\Omega(1)$, there is a quantum query algorithm that estimates $\tr\rbra{\rho^q}$ to within additive error $\epsilon$ by using $O\rbra{1/\epsilon^{1+\frac{1}{q-1}}}$ queries to $Q$.
\end{theorem}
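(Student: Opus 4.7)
The plan is to instantiate Algorithm~\ref{algo:q-Tsallis-estimation} and analyze its query complexity. First I would apply \cref{lemma:block-encoding-of-density-operators} to $Q$ to obtain a $(1, n+a, 0)$-block-encoding $U_\rho$ of $\rho$ at the cost of $O(1)$ queries to $Q$. Next I would invoke \cref{lemma:computable-fullrange-bestPolyApprox-positivePower} with parameters $(r, \alpha)$ such that $r \in \mathbb{Z}_{\geq 1}$, $\alpha \in (-1, 1)$, and $r + \alpha = q - 1$ (e.g., $r = 1$, $\alpha = q - 2$ when $q \in (1, 3)$; otherwise $r = \lceil q-1 \rceil$), and rescale by $1/2$ so as to produce a polynomial $P$ of degree $d = O\rbra*{1/(\epsilon')^{1/(q-1)}}$, efficiently computable, satisfying
\[
\max_{x \in [0,1]} \abs*{ P(x) - \tfrac{1}{4} x^{q-1} } \leq \tfrac{\epsilon'}{2}, \qquad \max_{x \in [-1,1]} \abs{P(x)} \leq \tfrac{1}{2}.
\]
Feeding $P$ and $U_\rho$ into the QSVT of \cref{lemma:qsvt} then gives a $(1, n+a+2, 0)$-block-encoding $U_{P(\rho)}$ of $P(\rho)$ using $O(d)$ queries to $U_\rho$, and hence $O(d)$ queries to $Q$. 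Since $\rho \succeq 0$ with eigenvalues in $[0,1]$, the QSVT output coincides with $P(\rho)$ regardless of whether the parity of $P$ (inherited from $r-1$) is even or odd, because $P(\lambda_i) = P(\abs{\lambda_i}) \cdot \sgn(\lambda_i)^{r-1}$ collapses to $P(\lambda_i)$ on non-negative eigenvalues.

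The final step combines the Hadamard test (\cref{lemma:hadamard}) with quantum amplitude estimation (\cref{lemma:qaa}). The Hadamard test on $U_{P(\rho)}$ with input $\rho$ (prepared by $Q$) outputs $0$ with probability $p = \tfrac{1}{2} + \tfrac{1}{2}\tr\rbra{P(\rho)\rho}$, which is real because $P(\rho)$ is Hermitian. Applying QAE with $M$ iterations estimates $p$ to additive error $O(1/M)$. The intrinsic error from the polynomial approximation is controlled by
\[
\abs[\big]{ \tr\rbra{P(\rho)\rho} - \tfrac{1}{4}\tr\rbra{\rho^q} } = \abs[\Big]{ \sum_i \lambda_i \rbra[\big]{ P(\lambda_i) - \tfrac{1}{4} \lambda_i^{q-1} } } \leq \tfrac{\epsilon'}{2}\tr(\rho) = \tfrac{\epsilon'}{2}.
\]
Balancing $\epsilon' = \Theta(\epsilon)$ and $M = \Theta(1/\epsilon)$ then yields an estimator for $\tr\rbra{\rho^q}$ of additive error $\epsilon$ at a total cost of $O(M \cdot d) = O\rbra*{\epsilon^{-1 - 1/(q-1)}}$ queries to $Q$, matching the claim; this is exactly the statement produced by the second half of \cref{lemma:hadamard} once $U_{P(\rho)}$ is in hand.

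\textbf{Main obstacle.} The only nontrivial ingredient is the polynomial approximation, and its essential feature—already provided by \cref{lemma:computable-fullrange-bestPolyApprox-positivePower}—is that $P$ approximates $x^{q-1}$ \emph{uniformly over the full interval} $[0,1]$, not merely on $[\delta, 1]$. This is what makes the intrinsic error telescope against $\sum_i \lambda_i = 1$ and yields a bound independent of the rank of $\rho$. A standard piecewise-smooth approximation of the form \eqref{eq:bad-poly-condition} would leave an $r \cdot \poly(\delta)$ contribution from the small-eigenvalue block, forcing $\delta \leq 1/\poly(r)$ and introducing the rank dependence we need to avoid. Everything else—the $\abs{P} \leq 1/2$ normalization required by \cref{lemma:qsvt}, the compatibility of QSVT with PSD inputs regardless of parity, and the balancing of the two error sources—is routine.
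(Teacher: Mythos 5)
Your proposal is correct and follows essentially the same route as the paper's proof: block-encode $\rho$, apply QSVT with the full-range uniform approximation of $x^{q-1}$ from \cref{lemma:computable-fullrange-bestPolyApprox-positivePower}, and combine the Hadamard test with amplitude estimation, balancing the polynomial error against the estimation error to get $O\rbra{1/\epsilon^{1+\frac{1}{q-1}}}$ queries. The only cosmetic difference is that the paper carries an explicit QSVT implementation error $\delta$ (set to $\epsilon/10$ alongside $\epsilon_p$ and $\epsilon_H$) rather than treating the block-encoding as exact, but this does not change the argument.
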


\begin{proof}
    Let $Q$ be an $\rbra{n+a}$-qubit unitary operator that prepares a purification of the $n$-qubit mixed quantum state $\rho$. 
    Then, by \cref{lemma:block-encoding-of-density-operators}, we can implement a unitary operator $U_{\rho}$ that is a $\rbra{1, n+a, 0}$-block-encoding of $\rho$, by using $O\rbra{1}$ queries to $Q$. 

    Let $\epsilon_p \in (0, 1)$ be a parameter to be determined later. 
    By \cref{lemma:computable-fullrange-bestPolyApprox-positivePower} with $r \coloneqq \max\cbra{\floor{q-1}, 1}$, $\alpha \coloneqq q-1-r$, and $\epsilon \coloneqq \epsilon_p$, there exists a polynomial $P \in \mathbb{R}\sbra{x}$ of degree $d = O\rbra{{1}/{\epsilon_p^{\frac{1}{q-1}}}}$ such that 
    \[
    \max_{x \in \sbra{0, 1}} \abs*{P\rbra{x} - \frac{1}{2}x^{q-1}} \leq \epsilon_p, \quad \textup{and} \max_{x \in \sbra{-1, 1}} \abs{P\rbra{x}} \leq 1.
    \]
    By \cref{lemma:qsvt} with $P \coloneqq \frac{1}{2}P$, $\alpha \coloneqq 1$, $a \coloneqq n+a$, $\epsilon \coloneqq 0$ and $d \coloneqq O\rbra{{1}/{\epsilon_p^{\frac{1}{q-1}}}}$, we can implement a quantum circuit $U_{P\rbra{\rho}}$ that is a $\rbra{1, n+a+2, \delta}$-block-encoding of $\frac{1}{2}P\rbra{\rho}$, by using $O\rbra{{1}/{\epsilon_p^{\frac{1}{q-1}}}}$ queries to $U_{\rho}$. 
    Moreover, the classical description of $U_{P\rbra{\rho}}$ can be computed in deterministic time $\poly\rbra{1/\epsilon_p, \log\rbra{1/\delta}}$.

    Suppose that $U_{P\rbra{\rho}}$ is a $\rbra{1, n+a+2, 0}$-block-encoding of $A$, i.e., $\Abs{A - \frac{1}{2}P\rbra{\rho}} \leq \delta$. 
    Then, by \cref{lemma:hadamard}, we can obtain an estimate $\tilde x$ of $\tr\rbra{A\rho}$ to within additive error $\epsilon_H$ by using $O\rbra{1/\epsilon_H}$ queries to each of $U_{P\rbra{\rho}}$ and $Q$ such that 
    \begin{equation} \label{eq:tr-power-had-constant}
    \Pr\sbra[\big]{ \abs*{ \tilde x - \tr\rbra{A\rho} } \leq \epsilon_H } \geq \frac{2}{3}.
    \end{equation}
    It can be seen that, in the overall quantum circuit to obtain $\tilde x$, the number of queries to $Q$ is
    \[
    O\rbra*{\frac{1}{\epsilon_H}} \cdot O\rbra*{\frac{1}{\epsilon_p^{\frac{1}{q-1}}}} = O\rbra*{\frac{1}{\epsilon_H\epsilon_p^{\frac{1}{q-1}}}},
    \]
    and the number of one- and two-qubit quantum gates is
    \[
    O\rbra*{\frac{n+a}{\epsilon_H\epsilon_p^{\frac{1}{q-1}}}}.
    \]
    Moreover, the classical description of the overall quantum circuit can be computed in deterministic time $\poly\rbra{1/\epsilon_p, 1/\epsilon_H, \log\rbra{1/\delta}}$.
    
    On the other hand, we have
    \begin{equation} \label{eq:tr-power-err-delta-constant}
    \abs*{ \tr\rbra{A\rho} - \tr\rbra*{\frac{1}{2}P\rbra{\rho}\rho} } \leq \Abs*{A - \frac{1}{2}P\rbra{\rho}} \leq \delta,
    \end{equation}
    where we use the inequality $\abs{\tr\rbra{AB}} \leq \Abs{A} \tr\rbra{\abs{B}}$ (which is a special case of the matrix H\"{o}lder inequality, e.g., \cite[Theorem 2]{Bau11}).
    We also have
    \begin{equation} \label{eq:tr-power-err-constant}
    \abs*{ \tr\rbra*{\frac{1}{2}P\rbra{\rho}\rho} - \tr\rbra*{\frac{1}{4}\rho^q} } \leq \frac{1}{2}\epsilon_p.
    \end{equation}
    To see \cref{eq:tr-power-err-constant}, suppose that $\rho = \sum_{j} \lambda_j \ketbra{\psi_j}{\psi_j}$ is the spectrum decomposition of $\rho$ with $\lambda_j \geq 0$ for all $j$ and $\sum_j \lambda_j = 1$.
    Then,
    \begin{align*}
        \abs*{ \tr\rbra*{\frac{1}{2}P\rbra{\rho}\rho} - \tr\rbra*{\frac{1}{4}\rho^q} }
        & = \abs*{ \sum_j \rbra*{  \frac{1}{2} P\rbra{\lambda_j} \lambda_j - \frac{1}{4} \lambda_j^q} } \\
        & \leq \sum_{j} \frac{1}{2} \lambda_j \abs*{P\rbra{\lambda_j} - \frac{1}{2}\lambda_j^{q-1}} \\
        & \leq \frac{1}{2} \sum_{j} \lambda_j \epsilon_p = \frac{1}{2} \epsilon_p.
    \end{align*}
    Finally, by combining \cref{eq:tr-power-had-constant,eq:tr-power-err-delta-constant,eq:tr-power-err-constant}, we obtain
    \[
    \Pr\sbra[\big]{ \abs*{ 4\tilde x - \tr\rbra*{\rho^q} } \leq 2 \epsilon_p + 4\epsilon_H + 4\delta } \geq \frac{2}{3}.
    \]
    To make $4\tilde x$ an $\epsilon$-estimate of $\tr\rbra{\rho^q}$ with high probability, it is sufficient to take  $\epsilon_p = \epsilon_H = \delta = \epsilon/10$, thereby using 
    \[
    O\rbra*{\frac{1}{\epsilon^{1+\frac{1}{q-1}}}}
    \]
    queries to $Q$. 
\end{proof}

\subsubsection{Sample-efficient quantum algorithm for estimating \texorpdfstring{$\tr(\rho^q)$}{tr(ρ\^q)}}

We also study the sample complexity for the trace estimation of quantum state powers, which is obtained by extending the quantum query algorithm in \cref{thm:tr-power-constant-queries} via the samplizer in \cref{lemma:samplizer}.
An illustrative framework is given in \cref{algo:q-Tsallis-estimation-sample}.

\begin{algorithm}[!htp]
    \caption{A framework for estimating $q$-Tsallis entropy for $q \geq 1+\Omega(1)$ (sample access).}
    \label{algo:q-Tsallis-estimation-sample}
    \begin{algorithmic}[1]
        \Require Independent and identical samples of an $n$-qubit mixed quantum state $\rho$, and parameters $q > 1$ and $\delta, \epsilon_p, \delta_p \in \rbra{0, 1}$.
        \Ensure A single bit $b \in \cbra{0, 1}$ such that $\Pr\sbra{b = 0} \approx \frac{1}{2} + \frac{1}{2^{q+3}} \tr\rbra{\rho^q}$.

        \begin{tcolorbox}[colback=gray!12]
        \Function{ApproxPower}{$q, \epsilon_p, \delta_p$}${}^U$
        \renewcommand{\algorithmicrequire}{\qquad \textbf{Input:}}
        \renewcommand{\algorithmicensure}{\qquad \textbf{Output:}} 
        \Require A unitary $\rbra{1, a, 0}$-block-encoding $U$ of $A$, and parameters $q > 1, \epsilon_p, \delta_p \in \rbra{0, 1}$. 
        \Ensure A unitary operator $\widetilde U$.
        \State Let $P\rbra{x}$ be a polynomial of degree $d = O\rbra{{1}/{\epsilon_p^{\frac{1}{q-1}}}}$ such that $\max_{x \in \sbra{0, 1}} \abs{P\rbra{x} - \frac{1}{2}x^{q-1}} \leq \epsilon_p$ and $\max_{x \in \sbra{-1, 1}} \abs{P\rbra{x}} \leq 1$ (by \cref{lemma:computable-fullrange-bestPolyApprox-positivePower}).
        \State Construct a unitary $\rbra{1, a+2, \delta_p}$-block-encoding $\widetilde U$ of $\frac{1}{2}P\rbra{A}$ (by \cref{lemma:qsvt}). 
        \State \Return $\widetilde U$.
        \EndFunction
        \end{tcolorbox}

        \State Let $b'$ be the outcome of the Hadamard test (by \cref{lemma:hadamard}) performing on the quantum state $\rho$ and $\mathsf{Samplize}_{\delta}\ave{\begin{tcolorbox}[nobeforeafter,
  after={\xspace},
  hbox,
  tcbox raise base,
  fontupper=\ttfamily,
  colback=gray!12,
  size=fbox] $\texttt{ApproxPower}\rbra{q, \epsilon_p, \delta_p}^{U}$\end{tcolorbox}}\sbra{\rho}$ (as if it were unitary).

        \State \Return $b'$.
    \end{algorithmic}
\end{algorithm}

\begin{theorem}[Trace estimation of quantum state constant powers via samples] \label{thm:tr-power-constant-samples}
    For every $q \geq 1+\Omega(1)$, there is a quantum sample algorithm that estimates $\tr\rbra{\rho^q}$ to within additive error $\epsilon$ by using $\widetilde{O}\rbra{1/\epsilon^{3+\frac{2}{q-1}}}$ samples of $\rho$.
\end{theorem}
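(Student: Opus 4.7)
The plan is to lift the query-efficient estimator of \Cref{thm:tr-power-constant-queries} into the sample-access model by invoking the samplizer of \Cref{lemma:samplizer}, exactly as templated in \Cref{algo:q-Tsallis-estimation-sample}. The crucial design choice is to samplize only the block-encoding together with \emph{one} Hadamard test, producing a single biased bit $b'$, and to boost that bias into a real-valued estimate by classical repetition rather than by an internal quantum amplitude estimation. The reason is that the samplizer's guarantee is a diamond-norm bound on the simulated channel, which cleanly translates into a total-variation bound on any one-shot output distribution but does not automatically preserve the coherent speedup of amplitude estimation.

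Concretely, first invoke \Cref{lemma:computable-fullrange-bestPolyApprox-positivePower} to obtain an efficiently computable polynomial $P$ of degree $d = O\rbra{1/\epsilon_p^{1/(q-1)}}$ satisfying $\abs{P(x) - \tfrac{1}{2}x^{q-1}} \leq \epsilon_p$ on $\sbra{0,1}$ and $\abs{P(x)} \leq 1$ on $\sbra{-1,1}$. The samplizer furnishes a $(2,a,0)$-block-encoding $U_\rho$ of $\rho$, equivalently a $(1,a,0)$-block-encoding of $\rho/2$. Applying QSVT (\Cref{lemma:qsvt}) with the scaled polynomial $\tfrac{1}{2}P$ yields a $(1,a+2,\delta_p)$-block-encoding of $\tfrac{1}{2}P(\rho/2)$ at a cost of $O(d)$ queries to $U_\rho$; finally run one Hadamard test (\Cref{lemma:hadamard}) of this block-encoding against an additional copy of $\rho$. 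Since $\tfrac{1}{2}P(\rho/2) \approx 2^{-q-1}\rho^{q-1}$ on the spectrum of $\rho/2$, a direct computation gives
\[
\Pr\sbra{b'=0} \;=\; \tfrac{1}{2} \;+\; c_q \tr\rbra{\rho^q} \;+\; O\rbra{\epsilon_p + \delta_p + \delta_s},
\]
where $c_q$ is a fixed positive constant depending only on $q$ and $\delta_s$ is the samplizer's diamond-norm budget (which upper-bounds the total-variation perturbation of the output bit).

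Setting $\epsilon_p = \delta_p = \delta_s = \Theta\rbra{\epsilon}$, one shot of the samplized routine consumes $\widetilde{O}\rbra{d^2/\delta_s} = \widetilde{O}\rbra{1/\epsilon^{1+2/(q-1)}}$ copies of $\rho$ by \Cref{lemma:samplizer}. Repeating the single-bit shot $M = O\rbra{1/\epsilon^2}$ times and forming $(1/c_q)$ times the centered empirical frequency of $b'=0$ estimates $\tr\rbra{\rho^q}$ to additive error $O\rbra{\epsilon}$ with high probability by a Chernoff bound (the $1/c_q$ factor is an absolute constant since $q$ is fixed). The overall sample cost is
\[
O\rbra{1/\epsilon^{2}} \cdot \widetilde{O}\rbra{1/\epsilon^{1+2/(q-1)}} \;=\; \widetilde{O}\rbra{1/\epsilon^{3+2/(q-1)}},
\]
matching the claim after rescaling $\epsilon$ by a constant.

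The main obstacle is bookkeeping across the three independent error sources -- the uniform polynomial error $\epsilon_p$, the QSVT implementation error $\delta_p$, and the samplizer diamond-norm error $\delta_s$ -- and verifying that they combine \emph{additively} in the expected bias of $b'$ rather than through any subtle coherent cross-term. Because the estimator aggregates one-bit outcomes by classical averaging, each error only perturbs the one-shot output distribution in total variation (for $\delta_s$ directly from the samplizer definition, and for $\epsilon_p, \delta_p$ via $\abs{\tr(AB)} \leq \Abs{A}\tr\rbra{\abs{B}}$ as in the proof of \Cref{thm:tr-power-constant-queries}), so they indeed add linearly. A secondary but easy-to-overlook subtlety is the $\rho \mapsto \rho/2$ rescaling forced by the samplizer's $(2,a,0)$ convention: QSVT then processes eigenvalues in $\sbra{0,\tfrac{1}{2}}$, but since $P$ uniformly approximates $\tfrac{1}{2}x^{q-1}$ on the entirety of $\sbra{0,1}$, this introduces only the fixed $q$-dependent constant $c_q$ above without affecting the asymptotic sample complexity.
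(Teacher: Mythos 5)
Your proposal is correct and follows essentially the same route as the paper's proof: samplize the QSVT block-encoding built from the uniform approximation of $\frac{1}{2}x^{q-1}$, account for the $\rho \mapsto \rho/2$ rescaling forced by the samplizer's $(2,a,0)$ convention, run a single Hadamard test per shot, and boost by classical repetition with a concentration bound, with the four error sources ($\epsilon_p$, $\delta_p$, the samplizer's diamond-norm budget, and the statistical error) combining additively. The parameter settings and the resulting $\widetilde{O}\rbra{1/\epsilon^{3+\frac{2}{q-1}}}$ sample count match the paper's analysis.
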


\begin{proof}
    Let unitary operator $U$ be a $\rbra{1, a, 0}$-block-encoding of $A$ for some $a > 0$ and let $\epsilon_p, \delta_p \in \rbra{0, 1}$ be parameters to be determined. 
    By \cref{lemma:computable-fullrange-bestPolyApprox-positivePower} with $r \coloneqq \max\cbra{\floor{q-1}, 1}$, $\alpha \coloneqq q-1-r$, and $\epsilon \coloneqq \epsilon_p$, there is a polynomial $P \in \mathbb{R}\sbra{x}$ of degree $d = O\rbra{{1}/{\epsilon_p^{\frac{1}{q-1}}}}$ such that 
    \[
    \max_{x \in \sbra{0, 1}} \abs*{P\rbra{x} - \frac{1}{2}x^{q-1}} \leq \epsilon_p, \quad \textup{and} \max_{x \in \sbra{-1, 1}} \abs{P\rbra{x}} \leq 1.
    \]
    By \cref{lemma:qsvt} with $P \coloneqq \frac{1}{2}P$, $\alpha \coloneqq 1$, $a \coloneqq n+a$, $\epsilon \coloneqq 0$, $\delta \coloneqq \delta_p$ and $d \coloneqq O\rbra{{1}/{\epsilon_p^{\frac{1}{q-1}}}}$, we can implement a quantum circuit $U_{P\rbra{A}}$ that is a $\rbra{1, n+a+2, \delta_p}$-block-encoding of $\frac{1}{2}P\rbra{A}$, by using $O\rbra{{1}/{\epsilon_p^{\frac{1}{q-1}}}}$ queries to $U$. 
    Moreover, the classical description of $U_{P\rbra{A}}$ can be computed in deterministic time $\poly\rbra{1/\epsilon_p, \log\rbra{1/\delta_p}}$.
    Let $\texttt{ApproxPower}\rbra{q, \epsilon_p, \delta_p}^{U}$ denote the procedure of implementing $U_{P\rbra{A}}$ by using queries to $U$.

    For our purpose, we take $A \coloneqq \rho/2$.\footnote{Compared to the proof of \cref{thm:tr-power-constant-queries}, the choice of $A = \rho/2$ incurs an additional factor of $1/2$, which leads to an exponential coefficient of $2^q$ in the complexity (despite the fact that $q$ is considered to be a constant). This factor, required by \cref{lemma:samplizer}, is due to the technical reason mentioned in~\cite[Footnote 3]{WZ23b} and~\cite[Footnote 8]{WZ24b}.} 
    Suppose that $U_{P\rbra{\frac{\rho}{2}}}$ is a $\rbra{1, n+a+2, 0}$-block-encoding of $B$, then $\Abs{B - \frac{1}{2}P\rbra{\frac{\rho}{2}}} \leq \delta_p$. 
    Let $b \in \cbra{0, 1}$ be the outcome of the Hadamard test (by \cref{lemma:hadamard}) on $\rho$ and $U_{P\rbra{\frac{\rho}{2}}}$, then 
    \begin{equation} \label{eq:Pr-b=0}
    \Pr\sbra{b = 0} = \frac{1}{2} + \frac{1}{2} \Real\sbra{\tr\rbra{B\rho}}.
    \end{equation}
    Let $\delta \in \rbra{0, 1}$ be a parameter to be determined, and let $b' \in \cbra{0, 1}$ be the outcome of the Hadamard test (by \cref{lemma:hadamard}) on $\rho$ and $\mathsf{Samplize}_{\delta}\ave{\texttt{ApproxPower}\rbra{q, \epsilon_p, \delta_p}^{U}}\sbra{\rho}$ (as if it were $U_{P\rbra{\frac{\rho}{2}}}$).
    Then, 
    \begin{equation} \label{eq:pr-diff}
    \abs[\big]{ \Pr\sbra{b = 0} - \Pr\sbra{b' = 0} } \leq \delta.
    \end{equation}
    Now we repeat the Hadamard test $k$ times, obtaining outcomes $b_1', b_2', \dots, b_k' \in \cbra{0, 1}$, where $k$ is an integer to be determined. 
    Let $X = \frac{1}{k}\sum_{j=1}^k b_j'$. 
    Then, by Hoeffding's inequality ({\cite[Theorem 2]{Hoe63}}), we have 
    \begin{equation} \label{eq:hoeffding}
    \Pr\sbra*{ \abs*{X - \mathbb{E}\sbra{b'}} \leq \epsilon_H } \geq 1 - 2\exp\rbra*{-2k\epsilon_H^2}.
    \end{equation}
    On the other hand, similar to the proof of \cref{thm:tr-power-constant-queries}, we have
    \begin{equation} \label{eq:tr-power-err-delta-constant-sample}
    \abs*{ \Real\sbra*{\tr\rbra{B\rho}} - \tr\rbra*{\frac{1}{2}P\rbra*{\frac{\rho}{2}}\rho} } \leq \abs*{ \tr\rbra{B\rho} - \tr\rbra*{\frac{1}{2}P\rbra*{\frac{\rho}{2}}\rho} } \leq \Abs*{B - \frac{1}{2}P\rbra*{\frac{\rho}{2}}} \leq \delta_p.
    \end{equation}
    We also have
    \begin{equation} \label{eq:tr-power-err-constant-sample}
    \abs*{ \tr\rbra*{\frac{1}{2}P\rbra*{\frac{\rho}{2}}\rho} - \tr\rbra*{\frac{1}{2^{q+2}}\rho^q} } \leq \frac{1}{2}\epsilon_p.
    \end{equation}
    To see \Cref{eq:tr-power-err-constant-sample}, suppose that $\rho = \sum_{j} \lambda_j \ketbra{\psi_j}{\psi_j}$ is the spectrum decomposition of $\rho$ with $\lambda_j \geq 0$ for all $j$ and $\sum_j \lambda_j = 1$.
    Then,
    \begin{align*}
        \abs*{ \tr\rbra*{\frac{1}{2}P\rbra*{\frac{\rho}{2}}\rho} - \tr\rbra*{\frac{1}{2^{q+2}}\rho^q} }
        & = \abs*{ \sum_j \rbra*{  \frac{1}{2} P\rbra*{\frac{\lambda_j}{2}} \lambda_j - \frac{1}{2^{q+2}} \lambda_j^q} } \\
        & \leq \sum_{j} \frac{1}{2} \lambda_j \abs*{P\rbra*{\frac{\lambda_j}{2}} - \frac{1}{2}\rbra*{\frac{\lambda_j}{2}}^{q-1}} \\
        & \leq \frac{1}{2} \sum_{j} \lambda_j \epsilon_p = \frac{1}{2} \epsilon_p.
    \end{align*}
    Finally, by combining \cref{eq:Pr-b=0,eq:pr-diff,eq:hoeffding,eq:tr-power-err-delta-constant-sample,eq:tr-power-err-constant-sample}, we obtain
    \[
    \Pr\sbra*{ \abs*{ 2^{q+2}\rbra*{1-2X} - \tr\rbra*{\rho^q} } \leq 2^{q+1} \rbra*{4\delta + 4\epsilon_H + 2\delta_p + \epsilon_p} } \geq 1 - 2\exp\rbra{-2k\epsilon_H^2}.
    \]
    By taking $\delta = \epsilon_H = \delta_p = \epsilon_p \coloneqq 2^{-q-5}\epsilon$ and $k \coloneqq \ceil*{\frac{\ln\rbra{6}}{2\epsilon_H^2}}$, we have
    \[
    \Pr\sbra*{ \abs*{ 2^{q+2}\rbra*{1-2X} - \tr\rbra*{\rho^q} } \leq \epsilon } \geq \frac{2}{3},
    \]
    which means that $2^{q+2}\rbra{1-2X}$ is an $\epsilon$-estimate of $\tr\rbra{\rho^q}$ with high probability. 

    To complete the proof, we analyze the sample complexity of our algorithm. 
    The algorithm consists of $k$ repetitions of the Hadamard test, and each Hadamard test uses one sample of $\rho$ and one call to $\mathsf{Samplize}_{\delta}\ave{\texttt{ApproxPower}\rbra{q, \epsilon_p, \delta_p}^{U}}\sbra{\rho}$.
    Here, $\texttt{ApproxPower}\rbra{q, \epsilon_p, \delta_p}^{U}$ uses $O\rbra{{1}/{\epsilon_p^{\frac{1}{q-1}}}}$ queries to $U$, and thus by \cref{lemma:samplizer} we can implement  $\mathsf{Samplize}_{\delta}\ave{\texttt{ApproxPower}\rbra{q, \epsilon_p, \delta_p}^{U}}\sbra{\rho}$ by using $\widetilde{O}\rbra{1/\rbra{\delta \epsilon_p^{\frac{2}{q-1}}}}$ samples of $\rho$.
    Therefore, the total number of samples of $\rho$ is 
    \[
    k \cdot \widetilde{O}\rbra*{\frac{1}{\delta \epsilon_p^{\frac{2}{q-1}}}} = \widetilde{O}\rbra*{\frac{1}{\epsilon^{3+\frac{2}{q-1}}}}. \qedhere
    \]
\end{proof}

\section{Properties of quantum Jensen-Tsallis divergence and Tsallis entropy}
\label{sec:properties-QJTq}

In this section, we present inequalities between the quantum $q$-Jensen-Tsallis divergence ($1 \leq q \leq 2$) and the trace distance. Our results (\Cref{thm:QJTq-vs-td}) extend the previous results for the quantum Jensen--Shannon divergence ($q=1$), as stated in~\cite[Theorem 14]{BH09}.
\begin{theorem}[$\QJTq$ vs.~$\td$]
    \label{thm:QJTq-vs-td}
    For any quantum states $\rho_0$ and $\rho_1$, and $1 \leq q \leq 2$, we have\emph{:}
    \[\Hq\rbra*{\frac{1}{2}} - \Hq\rbra*{\frac{1-\td(\rho_0,\rho_1)}{2}} \leq \QJTq(\rho_0,\rho_1) 
    \leq \Hq\rbra*{\frac{1}{2}} \cdot \td(\rho_0,\rho_1)^q.\]
\end{theorem}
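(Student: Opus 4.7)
The plan is to prove the two inequalities separately, each by combining the data-processing inequality (DPI) for $\QJTq$ with an explicit reduction. A key input is that $\QJTq$ satisfies DPI, $\QJTq(\Phi(\rho_0), \Phi(\rho_1)) \leq \QJTq(\rho_0, \rho_1)$ for any quantum channel $\Phi$ when $1 \leq q \leq 2$; this will need to be established separately, leveraging the joint convexity of $\QJTq$ in this range (from \cite{CT14, Virosztek19}) together with the unitary invariance in \Cref{lemma:QJSq-unitary-invariance}, via a Stinespring dilation argument.

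For the \textbf{lower bound}, I would choose an optimal two-outcome POVM $\calM^{*}$ distinguishing $\rho_0$ and $\rho_1$ in the Helstrom sense (\Cref{lemma:Holevo-Helstrom-bound}), so that the induced outcome distributions satisfy $\TV(p_0^{(\calM^{*})}, p_1^{(\calM^{*})}) = \td(\rho_0, \rho_1)$. Applying DPI to the corresponding measurement channel, and noting that $\QJTq$ evaluated on diagonal (classical) states coincides with the classical $\JTq$ of their distributions, I would obtain $\QJTq(\rho_0, \rho_1) \geq \JTq(p_0^{(\calM^{*})}, p_1^{(\calM^{*})})$. Invoking the classical bound $\Hq(1/2) - \Hq((1-\TV)/2) \leq \JTq$ from \Cref{lemma:TV-leq-JSq} then yields the desired lower bound.

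For the \textbf{upper bound}, I would first use a Jordan decomposition of $\rho_0 - \rho_1$ to write $\rho_b = (1 - \td)\omega + \td \cdot \sigma_b$ for $b \in \{0, 1\}$, where $\omega$ is a common density operator and $\sigma_0, \sigma_1$ are density operators with orthogonal supports, and $\td = \td(\rho_0, \rho_1)$. Introducing an ancillary qubit, I would then define
\[
\hat{\rho}_b \coloneqq (1 - \td)\, \ketbra{0}{0} \otimes \omega + \td \cdot \ketbra{1}{1} \otimes \sigma_b,
\]
so that tracing out the ancilla recovers $\rho_b$. Applying the joint $q$-Tsallis entropy theorem (\Cref{lemma:Tsallis-joint-entropy-theorem}) to $\Sq(\hat{\rho}_0)$, $\Sq(\hat{\rho}_1)$, and $\Sq((\hat{\rho}_0 + \hat{\rho}_1)/2)$ causes the $\Hq((1 - \td, \td))$ and $(1 - \td)^q \Sq(\omega)$ contributions to cancel, giving $\QJTq(\hat{\rho}_0, \hat{\rho}_1) = \td^q \cdot \QJTq(\sigma_0, \sigma_1)$. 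Since $\sigma_0 \perp \sigma_1$, a direct eigenvalue computation yields
\[
\QJTq(\sigma_0, \sigma_1) = \Hq(1/2) + (2^{-q} - 1/2)\bigl(\Sq(\sigma_0) + \Sq(\sigma_1)\bigr) \leq \Hq(1/2)
\]
for $q \in [1, 2]$, using $2^{-q} \leq 1/2$ and $\Sq \geq 0$. Finally, DPI applied to the partial-trace channel gives $\QJTq(\rho_0, \rho_1) \leq \QJTq(\hat{\rho}_0, \hat{\rho}_1) \leq \Hq(1/2) \cdot \td^q$.

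The \textbf{main obstacle} is establishing DPI for $\QJTq$ across the full range $q \in [1, 2]$, since joint convexity together with unitary invariance does not yield DPI for partial trace via the usual Haar-twirling reduction: tensoring with the maximally mixed state on the environment introduces a factor $|E|^{1-q}$ that points the wrong way when $q > 1$. A tailored argument exploiting joint convexity on a suitable classical-quantum decomposition of the Stinespring extension will therefore be required, after which the two bounds above follow by the steps outlined.
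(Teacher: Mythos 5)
Your two steps are essentially the paper's: the lower bound is obtained there by passing to the measured divergence $\measQJTq$ (data processing for the measurement channel, then the classical bound of \Cref{lemma:TV-leq-JSq}, then optimizing the POVM to the Helstrom measurement), and the upper bound by exactly your Jordan-decomposition extension --- the paper uses a qutrit flag with unnormalized blocks $\sigma_2\otimes\ketbra{2}{2}+\sigma_b\otimes\ketbra{b}{b}$, but this is your qubit-flagged $(1-\td)\ketbra{0}{0}\otimes\omega+\td\,\ketbra{1}{1}\otimes\sigma_b$ up to relabeling, and your identity $\QJTq(\hat{\rho}_0,\hat{\rho}_1)=\td^q\,\QJTq(\sigma_0,\sigma_1)\le\Hq(1/2)\,\td^q$ agrees with the paper's intermediate bound $\Hq\rbra{1/2}\cdot\frac12\tr\rbra{|\rho_0-\rho_1|^q}$. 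So, modulo the data-processing inequality, your argument is correct and matches the paper's.

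The gap is the one you flagged yourself: you never prove the DPI, and ``a tailored argument will be required'' does not close the proof. Moreover, your diagnosis of why the standard reduction fails is exactly right and cuts deeper than you may realize. By pseudo-additivity, $\QJTq(\rho_0\otimes\tilde{I}_{\sfB},\rho_1\otimes\tilde{I}_{\sfB})=N_{\sfB}^{1-q}\,\QJTq(\rho_0,\rho_1)$, so the twirling route only yields $\QJTq(\tr_{\sfB}\rho_0,\tr_{\sfB}\rho_1)\le N_{\sfB}^{q-1}\,\QJTq(\rho_0,\rho_1)$; the paper's own proof of \Cref{lemma:QJSq-data-processing-inequaltiy} asserts the tensor identity without the factor $N_{\sfB}^{1-q}$ and thus suffers from precisely the issue you anticipate. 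In fact the unrestricted DPI is false for $q>1$: at $q=2$ one computes $\QJT{2}(\rho_0,\rho_1)=\frac14\tr\rbra{(\rho_0-\rho_1)^2}$, and taking $\rho_b=\sigma_b\otimes\tilde{I}_{\sfB}$ gives $\QJT{2}(\tr_{\sfB}\rho_0,\tr_{\sfB}\rho_1)=N_{\sfB}\cdot\QJT{2}(\rho_0,\rho_1)$, violating monotonicity under partial trace. So do not aim for the general DPI; instead, justify separately the only two instances your proof actually uses: (a) monotonicity of $\QJTq$ under quantum-to-classical (measurement) channels, needed for the lower bound, and (b) the single inequality $\QJTq(\rho_0,\rho_1)\le\QJTq(\hat{\rho}_0,\hat{\rho}_1)$ for your specific flagged extension, needed for the upper bound (note it holds with equality at $q=2$, so any proof must be tight there). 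Until (a) and (b) are established by arguments that do not route through a generic DPI, the proof is incomplete.
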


To prove \Cref{thm:QJTq-vs-td}, we first need to prove the data-processing inequality for $\QJTq$ (\Cref{lemma:QJSq-data-processing-inequaltiy}), which crucially relies on the relatively recent results on the \textit{joint convexity} of $\QJTq$~\cite{CT14,Virosztek19}. 
Consequently, we can establish \Cref{thm:QJTq-vs-td} by proving the inequalities in \Cref{subsec:QJTq-vs-td-inequalities}. In particular, the lower bound on $\QJTq$ in terms of $\td$ (\Cref{lemma:td-leq-QJSq}) holds for $q\in[1,2]$, and the upper bound on $\QJTq$ in terms of $\td$ (\Cref{lemma:QJSq-leq-traceDistLike}) for the same range of $q$.

\vspace{1em}
Next, to utilize \Cref{lemma:td-leq-QJSq}, we provide bounds of the Tsallis binary entropy in \Cref{subsec:Tsallis-binary-entropy-bounds}:
\begin{theorem}[Tsallis binary entropy bounds]
    \label{thm:Tsallis-binary-entropy-bounds}
    For any $p=(x,1-x)$, let $\Hq(x)$ denote the Tsallis binary entropy with $1\leq q \leq 2$, we have\emph{:}
    \[ \Hq(1/2) \cdot 4x(1-x) \leq \Hq(x) \leq \Hq(1/2) \cdot \smash{(4x(1-x))^{1/2}}.\]
\end{theorem}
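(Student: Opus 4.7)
The plan is to substitute $y = 1-2x \in [-1,1]$ and, after multiplying through by $2^q(q-1)$, reduce both target inequalities to statements about the even function $\psi_q(y) \coloneqq (1+y)^q + (1-y)^q$. The lower bound $\Hq(x) \geq \Hq(1/2)\cdot 4x(1-x)$ becomes $\psi_q(y) \leq 2 + (2^q - 2)\,y^2$, while the upper bound $\Hq(x) \leq \Hq(1/2)\sqrt{4x(1-x)}$ becomes $\psi_q(y) \geq 2^q - (2^q - 2)\sqrt{1-y^2}$. Both reduced inequalities are saturated at $y = 0$ and at $y = \pm 1$, which serves as a consistency check and pins down the constant $2^q - 2$.

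For the lower bound, I would invoke the binomial series $\psi_q(y) = 2\sum_{k=0}^{\infty}\binom{q}{2k} y^{2k}$. The key observation is that for $q \in [1,2]$ and every $k \geq 1$, the coefficient $\binom{q}{2k}$ is non-negative: among the $2k$ factors $q, q-1, q-2, \ldots, q-(2k-1)$ in its numerator, exactly two are non-negative and the remaining $2k-2$ are non-positive, yielding an even number of sign flips. Peeling off the $k=1$ contribution $q(q-1)y^2$ and using $y^{2k} \leq y^2$ on $[-1,1]$ for $k \geq 2$ bounds the tail by $y^2 \cdot 2\sum_{k\geq 2}\binom{q}{2k}$. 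Evaluating $\psi_q(1) = 2^q$ gives $2\sum_{k\geq 1}\binom{q}{2k} = 2^q - 2$, so $2\sum_{k\geq 2}\binom{q}{2k} = 2^q - 2 - q(q-1)$; combining yields the claimed lower bound.

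For the upper bound, the square root rules out a clean series argument, so I would square the inequality and work with the auxiliary function $\Phi(x) \coloneqq \Hq(1/2)^2 \cdot 4x(1-x) - \Hq(x)^2$ on $[0,1]$. By symmetry about $1/2$, it suffices to prove $\Phi \geq 0$ on $[0, 1/2]$, where $\Phi(0) = \Phi(1/2) = 0$, $\Phi'(0) = 4\Hq(1/2)^2 > 0$, and $\Phi'(1/2) = 0$. Using the explicit derivative $\Hq'(x) = q[(1-x)^{q-1} - x^{q-1}]/(q-1)$, which is positive and strictly decreasing on $(0, 1/2)$ since $q-1 \in [0,1]$, I would show that $\Phi'$ has a unique zero on $(0, 1/2)$, so that $\Phi$ is first increasing and then decreasing on $[0, 1/2]$, forcing $\Phi \geq 0$ throughout.

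The main obstacle will be the sign analysis for $\Phi'$; ruling out spurious zeros requires carefully balancing the $\sqrt{x(1-x)}$ factor against the $q$-dependent power terms, and leveraging the monotonicity of $(1-x)^{q-1} - x^{q-1}$ on $(0,1/2)$. As guiding sanity checks, the endpoint $q=1$ reduces to the classical Lin--Topsoe bound $\H(x) \leq \ln(2)\sqrt{4x(1-x)}$, and $q=2$ admits the elementary verification $\Hq(x)^2 = 4x^2(1-x)^2 \leq x(1-x) = \Hq(1/2)^2 \cdot 4x(1-x)$, which follows from $4x(1-x) \leq 1$.
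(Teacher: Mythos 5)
Your reduction to the even function $\psi_q(y) \coloneqq (1+y)^q + (1-y)^q$ with $y = 1-2x$ is correct in both directions, and your proof of the \emph{lower} bound is complete and genuinely different from the paper's. The paper proves $\Hq(1/2)\cdot 4x(1-x) \le \Hq(x)$ by writing $\Hq(x)/(x(1-x)) = G(q;x)+G(q;1-x)$ with $G(q;x) = (x^{q-1}-1)/((q-1)(x-1))$, verifying convexity of $G$ in $x$, and concluding that this symmetric convex function is minimized at $x=1/2$. Your argument instead expands $\psi_q(y) = 2\sum_{k\ge 0} \binom{q}{2k}y^{2k}$, uses $\binom{q}{2k}\ge 0$ for $q\in[1,2]$ and $k\ge 1$ (an even number of non-positive factors), bounds the tail by $y^2$ times its sum, and evaluates at $y=1$ to identify the constant $2^q-2$. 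This is more elementary and self-contained, though it is tied to $q\in[1,2]$ (the paper's lower-bound lemma covers the wider range $q\in[0,2]\cup[3,\infty)$, which the theorem as stated does not need). It is worth noting that the paper exploits the very same nonnegativity of the even binomial coefficients, but in the opposite direction --- as the truncation $\psi_q(t)\ge 2+q(q-1)t^2$ --- inside its proof of the \emph{upper} bound.

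The upper bound is where your proposal has a genuine gap. The reduction to $\psi_q(y)\ge 2^q-(2^q-2)\sqrt{1-y^2}$, the boundary data $\Phi(0)=\Phi(1/2)=0$, $\Phi'(0)=4\Hq(1/2)^2>0$, $\Phi'(1/2)=0$, and the observation that an ``increase then decrease'' shape would force $\Phi\ge 0$ are all correct. But the entire difficulty of this half of the theorem is concentrated in the step you defer: showing that $\Phi'(x) = 4\Hq(1/2)^2(1-2x) - 2\Hq(x)\Hq'(x)$ has a unique zero on $(0,1/2)$ for \emph{every} $q\in(1,2]$. You check this only at the endpoints $q=2$ (where $\Phi(x)=x(1-x)(1-2x)^2$ makes it explicit) and $q=1$; for intermediate $q$ the product $\Hq\Hq'$ mixes incommensurate powers of $x$ and $1-x$, and monotonicity of $\Hq'$ alone does not control the sign changes of $\Phi'$ against the linear term. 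The paper's own proof of the upper bound is a two-regime argument --- an outer interval $[0,1/8]$ handled via $1-(1-x)^q\le qx$ followed by a chain of three successive derivative computations, and an inner interval $[1/2-\tau(q),1/2+\tau(q)]$ handled via the truncated binomial series, together with a separate monotonicity argument showing $\tau(q)\ge\sqrt{\ln 2\,(1-\ln 2)}\approx 0.4612$ so that the two regimes overlap --- which is strong evidence that the step you label ``the main obstacle'' is not routine. As written, the upper bound is a plan, not a proof.
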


It is noteworthy that the best known bounds for the Shannon binary entropy ($q=1$) are $\H(1/2) \cdot 4x(1-x) \leq \H(x) \leq \H(1/2) \cdot (4x(1-x))^{\frac{1}{2\H(1/2)}}$, as shown in~\cite[Theorem 1.2]{Topsoe01}. Our lower bound on the Tsallis binary entropy (\Cref{lemma:Tsallis-binary-entropy-lower-bound}) matches the case of $q=1$, whereas our upper bound (\Cref{lemma:Tsallis-binary-entropy-upper-bound}) only aligns with a weaker bound $\H(x) \leq \H(1/2) \cdot (4x(1-x))^{1/2}$ in~\cite[Theorem 8]{Lin91} and the proof of \Cref{lemma:Tsallis-binary-entropy-upper-bound} is more complicated than in the case of $q=1$. 

\vspace{1em}
Lastly, we provide the inequalities between the Tsallis entropy of a distribution $p$ and the total variation distance between $p$ and the uniform distribution $\nu$ of the same dimension, as stated in~\Cref{lemma:inequality-uniformTV-TsallisEA}. By adding an additional assumption regarding $q$ and $\TV(p,\nu)$, this lemma partially generalizes the previous result for the case of $q=1$ (cf.~\cite[Fact 8.4]{CCKV08} and~\cite[Lemma 16]{KLGN19}) to the case of $q>1$. 

\subsection{Data-processing inequality for \QJTq{} from the joint convexity}

With the correspondence between $\QJS$ and the quantum relative entropy (\Cref{eq:QJS-as-symmetrized-relative-entropy}), the joint convexity of $\QJS$ directly follows from the joint convexity of the quantum relative entropy~\cite{Lieb73,Uhlmann77} (see also~\cite{Ruskai22} for a simple proof). However, since $\QJTq$ does not correspond to a Tsallis variant of quantum relative entropy (e.g., quasi-entropy~\cite[Equation (3.23)]{Petz07}) in this sense, the joint convexity of $\QJTq$ can only be established by the recent results of~\cite{CT14,Virosztek19}:
\begin{lemma}[Joint convexity of $\QJTq$, adapted from~\cite{CT14,Virosztek19}]
    \label{lemma:QJSq-joint-convexity}
    Let $k$ be an integer. For any $i \in [k]$, let $\rho^{(i)}_0$ and $\rho^{(i)}_1$ be two quantum states. Let $k$-tuple $\mu \coloneq (\mu_1, \cdots, \mu_k)$ be a probability distribution. Then, for any $q \in [1,2]$ and $t \in (0,1)$, the joint convexity of $\QJTq$ holds\emph{:}
    \[ \QJTq \!\left(\sum_{i \in [k]} \mu_i \rho_0^{(i)}, \sum_{i \in [k]} \mu_i \rho_1^{(i)} \right) \leq \sum_{i \in [k]} \mu_i \QJTq\!\left(\rho_0^{(i)}, \rho_1^{(i)}\right). \]
\end{lemma}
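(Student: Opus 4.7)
My plan is to prove the $k$-term bound by induction on $k$, reducing everything to the two-term case. For the induction step, write $\sigma_b \coloneqq \sum_{i=1}^{k}\mu_i \rho_b^{(i)} = \mu_k \rho_b^{(k)} + (1-\mu_k)\tau_b$, where $\tau_b \coloneqq \sum_{i=1}^{k-1} \frac{\mu_i}{1-\mu_k}\rho_b^{(i)}$ for $b\in\binset$; applying the two-term joint convexity to the pairs $\rbra{\rho_0^{(k)},\tau_0}$ and $\rbra{\rho_1^{(k)},\tau_1}$ with weight $\mu_k$, and then the induction hypothesis to bound $\QJTq(\tau_0,\tau_1)$, yields the general inequality. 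So the real task reduces to the two-term case $\QJTq\rbra*{\lambda\alpha_0 + (1-\lambda)\beta_0,\, \lambda\alpha_1 + (1-\lambda)\beta_1} \leq \lambda\,\QJTq(\alpha_0,\alpha_1) + (1-\lambda)\,\QJTq(\beta_0,\beta_1)$ for $\lambda\in(0,1)$.

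For $q=1$, I would substitute the symmetrized relative-entropy identity from \Cref{eq:QJS-as-symmetrized-relative-entropy}: each of the two summands on the right-hand side is a composition of the jointly convex map $(A,B)\mapsto \D(A\|B)$ with a linear map of $(\rho_0,\rho_1)$, so the joint convexity of $\QJS$ is immediate from the joint convexity of the quantum relative entropy~\cite{Lieb73,Uhlmann77}. For $q\in(1,2]$, there is no analogous decomposition in terms of a quantum Tsallis relative entropy. Instead, multiplying the two-term inequality through by the positive factor $q-1$ reduces the claim to the joint concavity on positive semidefinite pairs of
\[
\Psi_q(\rho_0,\rho_1) \coloneqq \tr\rbra*{\rbra*{\tfrac{\rho_0+\rho_1}{2}}^{q}} - \tfrac{1}{2}\rbra*{\tr\rbra*{\rho_0^q} + \tr\rbra*{\rho_1^q}}.
\]
This joint concavity is exactly the matrix $\varphi$-entropy subadditivity of Chen and Tropp~\cite{CT14} applied to $\varphi(x)=x^q$, and is subsumed by Virosztek's joint convexity theorem for quantum Jensen divergences~\cite{Virosztek19}; both arguments leverage the operator convexity of $x\mapsto x^q$ on $[0,\infty)$ when $q\in[1,2]$ (a consequence of Kraus's theorem) combined with a Lindblad-style doubling argument. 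I would import this result as a black box.

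The main obstacle is precisely the $q\in(1,2]$ step: it does not reduce to joint convexity of $\D$, and a naive approach relying only on the concavity of $\Sq$ (\Cref{lemma:Tsallis-entropy-properties}) to bound $-\tfrac{1}{2}\rbra{\Sq(\sigma_0)+\Sq(\sigma_1)}$ pushes that term in the wrong direction. The range $q\in[1,2]$ is essentially tight for joint convexity, as already indicated by the classical analogue recalled after \Cref{lemma:TV-leq-JSq}, so the lemma cannot be extended beyond $q=2$ along these lines.
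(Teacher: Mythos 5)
Your proposal is correct and follows essentially the same route as the paper: the paper likewise obtains the two-term joint convexity of $\QJTq$ for $1 \leq q \leq 2$ as a black box from the matrix entropy class of \cite{CT14} together with \cite[Theorem 1]{Virosztek19}, and then extends to $k$ terms by induction. Your separate treatment of $q=1$ via the symmetrized relative-entropy decomposition is a harmless variation (the cited black box already covers $q=1$), and your identification of the two-term case with the joint concavity of $\Psi_q = -(q-1)\QJTq$ is the right way to see what is being imported.
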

\begin{proof}
    Following~\cite[Theorem 2.3(2)]{CT14}, we know that the quantum $q$-Tsallis entropy $\Sq(\rho)$ for $1\leq q \leq 2$ is in the Matrix Entropy Class~\cite[Definition 2.2]{CT14} (or~\cite[Definition 2]{Virosztek19}). Therefore, as a corollary of~\cite[Theorem 1]{Virosztek19}, we can obtain: for any $1 \leq q \leq 2$ and $0 < \lambda < 1$, 
    \begin{equation}
        \label{eq:QJSq-joint-convexity}
        \QJTq\rbra*{(1-\lambda) \rho_0 + \lambda \rho'_0, (1-\lambda) \rho_1 + \lambda \rho'_1} \leq (1-\lambda)\QJTq\rbra*{\rho_0,\rho_1) + \lambda \QJTq(\rho'_0,\rho'_1}.
    \end{equation}
    
    Hence, we can complete the proof by applying \Cref{eq:QJSq-joint-convexity} inductively. 
\end{proof}

\begin{remark}[Joint convexity of $\QJT{q,t}$]
It is noteworthy that \Cref{lemma:QJSq-joint-convexity} applies to a generalized version of $\QJTq$, denoted as $\QJT{q,t}$, such that $\QJT{q,1/2} = \QJTq$:
\[\forall t \in (0,1),~\QJT{q,t} \coloneqq \Sq \left( (1-t)\rho_0 + t \rho_1 \right) - (1-t) \Sq(\rho_0) - t\Sq(\rho_1).\] 
\Cref{lemma:QJSq-unitary-invariance} also directly extends to $\QJT{q,t}$, and consequently, \Cref{lemma:QJSq-data-processing-inequaltiy} holds for $\QJT{q,t}$ with $1 \leq q \leq 2$. However, the inequalities between $\QJTq$ and the trace distance provided in this work, particularly \Cref{lemma:td-leq-QJSq,lemma:QJSq-leq-traceDistLike}, do not extend to $\QJT{q,t}$ for $1 \leq q \leq 2$. 
\end{remark}

\begin{lemma}[Data-processing inequality for $\QJTq$]
    \label{lemma:QJSq-data-processing-inequaltiy}
    For any quantum state $\rho_0$ and $\rho_1$, any quantum channel $\Phi$, and $1 \leq q \leq 2$, we have \[ \QJTq(\Phi(\rho_0),\Phi(\rho_1)) \leq \QJTq(\rho_0,\rho_1). \]
\end{lemma}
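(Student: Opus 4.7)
The plan is to reduce the data-processing inequality to three elementary operations via the Stinespring dilation $\Phi(\rho) = \tr_E\bigl[U(\rho \otimes \ketbra{0}{0}_E) U^\dagger\bigr]$: appending a pure ancilla, conjugating by a unitary, and tracing out the environment. The ancilla step is immediate from pseudo-additivity (\Cref{lemma:Tsallis-entropy-pseudoAdditivity}): since $\Sq(\ketbra{0}{0}) = 0$, the identity $\Sq(\rho \otimes \ketbra{0}{0}) = \Sq(\rho)$ yields $\QJTq(\rho_0 \otimes \ketbra{0}{0}_E, \rho_1 \otimes \ketbra{0}{0}_E) = \QJTq(\rho_0, \rho_1)$. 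The unitary step is exactly the content of \Cref{lemma:QJSq-unitary-invariance}.

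The crux is to establish the partial-trace version, $\QJTq(\tr_E \sigma_0, \tr_E \sigma_1) \leq \QJTq(\sigma_0, \sigma_1)$ for arbitrary bipartite states. The textbook approach is to twirl the $E$-register by a unitary $1$-design $\{V_k\}$, using the identity $\frac{1}{|G|}\sum_k (I_A \otimes V_k)\sigma(I_A \otimes V_k)^\dagger = (\tr_E \sigma) \otimes (I_E/d_E)$. Joint convexity (\Cref{lemma:QJSq-joint-convexity}) together with unitary invariance then gives $\QJTq\bigl((\tr_E \sigma_0) \otimes I_E/d_E,\, (\tr_E \sigma_1) \otimes I_E/d_E\bigr) \leq \QJTq(\sigma_0, \sigma_1)$. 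A direct computation using pseudo-additivity with $\Sq(I_E/d_E) = \ln_q(d_E)$ shows $\QJTq(\tau_0 \otimes I_E/d_E, \tau_1 \otimes I_E/d_E) = d_E^{1-q}\,\QJTq(\tau_0, \tau_1)$, which combined with the previous display produces $d_E^{1-q}\,\QJTq(\tr_E \sigma_0, \tr_E \sigma_1) \leq \QJTq(\sigma_0, \sigma_1)$.

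The main obstacle is that for $q > 1$ the factor $d_E^{1-q}$ is strictly smaller than one, so the twirling route only proves the weaker statement $\QJTq(\tr_E \sigma_0, \tr_E \sigma_1) \leq d_E^{q-1}\,\QJTq(\sigma_0, \sigma_1)$; full DPI falls out cleanly only at $q = 1$. To close this gap I would instead twirl by diagonal unitaries alone, i.e.\ use the pinching $\sigma_b \mapsto \sum_k (I_A \otimes \ketbra{k}{k}_E)\sigma_b(I_A \otimes \ketbra{k}{k}_E)$ -- which is itself a mixed-unitary channel and therefore contracts $\QJTq$ via joint convexity -- and then apply the joint $q$-Tsallis entropy theorem (\Cref{lemma:Tsallis-joint-entropy-theorem}) to the resulting classical-quantum state $\sum_k p_{b,k}\,\tilde{\rho}_{b,k} \otimes \ketbra{k}{k}_E$. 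The inequality $p_{b,k}^q \leq p_{b,k}$ valid for $q \geq 1$, combined with the concavity of $\Sq$ from \Cref{lemma:Tsallis-entropy-properties}, should convert the pinched bound into DPI for the partial trace, precisely in the regime $q \in [1,2]$ where joint convexity of $\QJTq$ is available.
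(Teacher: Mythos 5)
Your diagnosis of the obstruction is exactly right, and it is worth stressing that it is the very step at which the paper's own proof silently breaks: the paper asserts $\QJTq(\rho_0 \otimes \tilde{I}_{\sfB}, \rho_1 \otimes \tilde{I}_{\sfB}) = \QJTq(\rho_0,\rho_1)$ for the maximally mixed $\tilde{I}_{\sfB}$, whereas the multiplicativity $\tr\rbra{(\rho\otimes\sigma)^q} = \tr\rbra{\rho^q}\tr\rbra{\sigma^q}$ gives $\QJTq(\rho_0\otimes\sigma,\rho_1\otimes\sigma) = \tr\rbra{\sigma^q}\,\QJTq(\rho_0,\rho_1)$, so the factor $N_{\sfB}^{1-q} = d_E^{1-q}$ you identify is really there. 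The twirling route -- which is also the paper's route -- therefore only yields $d_E^{1-q}\,\QJTq(\tr_E \sigma_0, \tr_E \sigma_1) \leq \QJTq(\sigma_0,\sigma_1)$, exactly as you say.

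However, your proposed repair via pinching cannot close the gap, because the partial-trace inequality itself fails for $q>1$. Take $\sigma_b = \tau_b \otimes I_E/d_E$: these are already classical--quantum, so the pinching acts trivially and your reduction to CQ states has nothing left to exploit. Concretely, for $q=2$, $\tau_0=\ketbra{0}{0}$, $\tau_1=\ketbra{1}{1}$, and a single maximally mixed ancilla qubit, one computes
\begin{equation*}
\QJT{2}(\sigma_0,\sigma_1) = \frac{3}{4}-\frac{1}{2} = \frac{1}{4} \quad\text{while}\quad \QJT{2}(\tr_E\sigma_0,\tr_E\sigma_1) = \QJT{2}(\tau_0,\tau_1) = \frac{1}{2},
\end{equation*}
so the partial trace strictly \emph{increases} the divergence. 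The reason your final step cannot work is visible in the joint $q$-Tsallis entropy theorem itself: the conditional entropies enter with weights $p_{b,k}^q$ rather than $p_{b,k}$, and for $q>1$ this mismatch pushes the inequality in the wrong direction; it is the source of a genuine failure, not a removable artifact of the proof technique. The strongest statement reachable by either your argument or the paper's is the $d_E^{q-1}$-weakened inequality (the two coincide only at $q=1$); as stated, the lemma would need to be restricted to $q=1$ or to special channels for which monotonicity can be verified directly.
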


Interestingly, the inequality in \Cref{lemma:QJSq-data-processing-inequaltiy} \textit{cannot} hold for $0 \leq q < 1$. We can see this by considering pure states $\ketbra{\psi}{\psi}$ and $\ketbra{\phi}{\phi}$, and their average $\hat{\rho}_{\psi,\phi} \coloneqq \frac{1}{2} \big(\ketbra{\psi}{\psi}+\ketbra{\phi}{\phi} \big)$, then $\QJTq(\ketbra{\psi}{\psi},\ketbra{\phi}{\phi}) = \Sq\big( \hat{\rho}_{\psi,\phi} \big)$. Following \Cref{lemma:QJSq-data-processing-inequaltiy}, we have $\Sq\big(\Phi\big( \hat{\rho}_{\psi,\phi} \big)\big) \leq \Sq\big(\hat{\rho}_{\psi,\phi}\big)$ for $q \in [1,2]$. However, using~\cite[Corollary 2.6]{FYK04}, we have $\Sq\big(\Phi\big(\hat{\rho}_{\psi,\phi})\big) \geq \Sq\big(\hat{\rho}_{\psi,\phi}\big)$ for $q \in [0,1)$.

\begin{proof}[Proof of \Cref{lemma:QJSq-data-processing-inequaltiy}]
    The case of $q=1$ coincides with the quantum Jensen--Shannon divergence: Using \Cref{eq:QJS-as-symmetrized-relative-entropy}, $\QJS(\Phi(\rho_0),\Phi(\rho_1)) \leq \QJS(\rho_0,\rho_1)$ follows from the data-processing inequality of the quantum relative entropy~\cite{Lindblad75,Uhlmann77} (see also~\cite[Theorem 3.9]{Petz07}). 

    It remains to prove the case for $1 < q \leq 2$. We use the standard proof strategy to derive the data-processing inequality from joint convexity, as in~\cite[Theorem 2.5]{FYK04}. 
    
    First, we consider the case of the partial trace $\tr_{\sfB}$ on the quantum registers $\sfA$ and $\sfB$, where $\rho_0,\rho_1 \in \mathrm{L}\big(\calH_{\sfA\sfB}\big)$ and $\dim(\calH_{\sfB}) = N_{\sfB}$. 
    Since $\QJTq(\rho_0 \otimes \tilde{I}_{\sfB}, \rho_1 \otimes \tilde{I}_{\sfB}) = \QJTq(\rho_0,\rho_1)$ where $\tilde{I}_{\sfB} \coloneqq I_{\sfB}/N_{\sfB}$ is the maximally mixed state in $\sfB$, it suffices to consider a quantum channel on registers $\sfA$ and $\sfB$ that is completely depolarizing on $\sfB$ and identity on $\sfA$, denoted as $\Phi_{\tr_{\sfB}}$. Noting that $\Phi_{\tr_{\sfB}}$ can be expressed as a convex combination of unitary channels (e.g.,~\cite[Exercise 4.4.9]{Wil13} or~\cite[Equation (9)]{Ruskai22}), for any quantum state $\rho_{\sfA\sfB}$ on registers $\sfA$ and $\sfB$, we can obtain: 
    \[ \Phi_{\tr_{\sfB}}(\rho_{\sfA\sfB}) \coloneqq \tr_{\sfB}(\rho_{\sfA\sfB}) \otimes \tr(\rho_{\sfA\sfB}) \tilde{I}_{\sfB} = \sum_{l \in [N_{\sfB}^2]} \frac{1}{N_{\sfB}^2} (I_{\sfA} \otimes U_l) \rho_{\sfA\sfB} (I_{\sfA} \otimes U_l)^{\dagger},\]
    where $U_l$ is a unitary operator on $\sfB$ for each $l \in [N_{\sfB}^2]$.

    Leveraging the joint convexity (\Cref{lemma:QJSq-joint-convexity}) and the unitary invariance (\Cref{lemma:QJSq-unitary-invariance}) of $\QJS_q$, we derive the following data-processing inequality concerning the quantum channel $\Phi_{\tr_{\sfB}}$:
    \begin{subequations}
    \label{eq:QJSq-data-processing-partTr}
    \begin{align}
        &\QJTq\left( \tr_{\sfB}(\rho_0), \tr_{\sfB}(\rho_1) \right)\\ 
        =~& \QJTq( \Phi_{\tr_{\sfB}}(\rho_0), \Phi_{\tr_{\sfB}}(\rho_1) )\\
        \leq~& \sum_{l \in [N_{\sfB}^2]} \frac{1}{N^2_{\sfB}} \QJTq\left( (I_{\sfA} \otimes U_l)\rho_0 (I_{\sfA} \otimes U_l)^{\dagger}, (I_{\sfA} \otimes U_l) \rho_1 (I_{\sfA} \otimes U_l)^{\dagger}  \right)\\
        =~& \sum_{l \in [N_{\sfB}^2]} \frac{1}{N^2_{\sfB}} \QJTq\left( \rho_0, \rho_1 \right)\\
        =~& \QJTq(\rho_0, \rho_1).
    \end{align}
    \end{subequations}

    Next, we move to the general case. Using the Stinespring dilation theorem (e.g.,~\cite[Theorem 2.25]{AS17}), for any quantum channel $\Phi$ on the registers $(\sfA,\sfB)$, we have the following representation with some unitary $U_{\Phi}$ on the registers $(\sfA, \sfB, \sfE)$ where $\dim(\calH_{\sfE}) \leq \dim(\calH_{\sfA\sfB})^2$: 
    \[ \Phi(\rho_{\sfA\sfB}) = \tr_{\sfE} \left( U_{\Phi} (\rho_{\sfA\sfB} \otimes \ketbra{\bar{0}}{\bar{0}}_{\sfE}) U_{\Phi}^{\dagger} \right). \]
    
    Consequently, we can obtain the following for any quantum channel $\Phi$: 
    \begin{align*}
        \QJTq(\Phi(\rho_0), \Phi(\rho_1)) &\leq \QJTq \left( U_{\Phi} (\rho_0 \otimes \ketbra{\bar{0}}{\bar{0}}_{\sfE}) U_{\Phi}^{\dagger}, U_{\Phi} (\rho_1 \otimes \ketbra{\bar{0}}{\bar{0}}_{\sfE}) U_{\Phi}^{\dagger}  \right)\\
        &= \QJTq \left( \rho_0 \otimes \ketbra{\bar{0}}{\bar{0}}_{\sfE}, \rho_1 \otimes \ketbra{\bar{0}}{\bar{0}}_{\sfE} \right)\\
        &= \QJTq (\rho_0,\rho_1).
    \end{align*}
    Here, the first line owes to \Cref{eq:QJSq-data-processing-partTr}, the second line is due to the unitary invariance of $\QJTq$ (\Cref{lemma:QJSq-unitary-invariance}), and the last line is because $\tr\big( (\rho_b \otimes \ket{\phi}\bra{\phi}_{\sfE})^q\big) = \tr(\rho_b^q)$ for any $b\in\binset$ and $q\in[1,2]$. We now complete the proof. 
\end{proof}

\subsection{Inequalities between the trace distance and \QJTq{}}
\label{subsec:QJTq-vs-td-inequalities}

We begin by establishing the lower bound on $\QJTq$ in terms of the trace distance, as stated in~\Cref{lemma:td-leq-QJSq}. The measured variant of the $q$-Jensen-Tsallis divergence ($\JTq$), denoted by $\measQJTq$, is derived from the definition provided in \Cref{eq:measured-f-divergences}.
\begin{lemma}[$\td \leq \QJTq$]
    \label{lemma:td-leq-QJSq}
    For any quantum states $\rho_0$ and $\rho_1$, we have the following inequality\emph{:}
    \[ \forall q\in[1,2],~\Hq\left(\frac{1}{2}\right) - \Hq\left( \frac{1}{2} - \frac{\td(\rho_0,\rho_1)}{2} \right) \leq \measQJTq(\rho_0,\rho_1) \leq \QJTq(\rho_0,\rho_1). \]
\end{lemma}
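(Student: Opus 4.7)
The plan is to handle the two inequalities separately, both of which reduce to results already established in the paper.

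\textbf{Upper bound $\measQJTq \leq \QJTq$:} For any POVM $\calM = \{M_j\}_{j\in[N]}$, consider the associated measurement quantum channel $\Phi_\calM(\sigma) \coloneqq \sum_{j\in[N]} \tr(\sigma M_j)\, \ketbra{j}{j}$, which outputs a diagonal (hence commuting) state whose diagonal entries are precisely the probabilities $p_b^{(\calM)}$ from \Cref{eq:measured-f-divergences}. Since $\Sq$ on commuting diagonal states coincides with the classical $\Hq$ of the eigenvalue distribution, a direct computation gives $\QJTq(\Phi_\calM(\rho_0),\Phi_\calM(\rho_1)) = \JTq(p_0^{(\calM)}, p_1^{(\calM)})$. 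Applying the data-processing inequality for $\QJTq$ (\Cref{lemma:QJSq-data-processing-inequaltiy}, available for $1\leq q \leq 2$) to the channel $\Phi_\calM$ yields $\JTq(p_0^{(\calM)}, p_1^{(\calM)}) \leq \QJTq(\rho_0,\rho_1)$, and taking the supremum over all POVMs $\calM$ gives $\measQJTq(\rho_0,\rho_1) \leq \QJTq(\rho_0,\rho_1)$.

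\textbf{Lower bound $\Hq(\tfrac 1 2) - \Hq(\tfrac 1 2 - \tfrac{\td}{2}) \leq \measQJTq$:} The strategy is to exhibit a single POVM whose induced classical distributions saturate the trace distance, and then invoke the classical Lemma \ref{lemma:TV-leq-JSq}. By the Helstrom-Holevo bound (\Cref{lemma:Holevo-Helstrom-bound}), there exists a two-outcome POVM $\calM^* = \{M_0^*, M_1^*\}$ (the Helstrom measurement for $\rho_0$ versus $\rho_1$) whose optimal discrimination probability equals $\tfrac 1 2 + \tfrac 1 2 \td(\rho_0,\rho_1)$; a short calculation with $M_0^* + M_1^* = I$ shows the induced Bernoulli distributions satisfy $\TV\bigl(p_0^{(\calM^*)}, p_1^{(\calM^*)}\bigr) = \td(\rho_0,\rho_1)$. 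Now applying Lemma \ref{lemma:TV-leq-JSq} (valid for $1\leq q \leq 2$) to these two distributions, followed by the definitional inequality $\JTq(p_0^{(\calM^*)}, p_1^{(\calM^*)}) \leq \measQJTq(\rho_0,\rho_1)$ that comes from taking the supremum over POVMs, yields exactly the desired lower bound.

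\textbf{Main obstacle:} Neither step requires new ideas; the only thing to verify carefully is the identity $\QJTq(\Phi_\calM(\rho_0),\Phi_\calM(\rho_1)) = \JTq(p_0^{(\calM)},p_1^{(\calM)})$, which boils down to noting that the eigenvalues of $\tfrac 1 2 (\Phi_\calM(\rho_0)+\Phi_\calM(\rho_1))$ are the entries of $\tfrac 1 2 (p_0^{(\calM)} + p_1^{(\calM)})$ since all three diagonal states commute in the computational basis. The proof is essentially a transfer from the classical Jensen-Tsallis inequality of \cite{BH09} to the quantum setting via the two ingredients that the excerpt has already assembled: the Helstrom-Holevo bound supplying the optimal distinguishing POVM on one side, and the data-processing inequality for $\QJTq$ on the other.
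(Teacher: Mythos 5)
Your proposal is correct and follows essentially the same route as the paper: the upper bound via the data-processing inequality for $\QJTq$ applied to the measurement channel, and the lower bound by transferring \Cref{lemma:TV-leq-JSq} through an induced classical distribution pair. The only cosmetic difference is that you invoke the Helstrom measurement explicitly so that $\TV$ equals $\td$ outright, whereas the paper argues over arbitrary POVMs and uses the monotonicity of $x \mapsto \Hq(\tfrac12)-\Hq(\tfrac{1-x}{2})$ before selecting the trace-distance-achieving POVM; both steps rest on the same achievability fact.
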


\begin{proof}
    The case of $q=1$ follows from~\cite[Theorem 14]{BH09}, and can also be derived by combining~\cite[Theorem 1]{FvdG99} with the Holevo bound (see~\cite[Lemma 2.15]{Liu23} for details). 
    
    Our focus will be on the cases where $1 < q \leq 2$. 
    We first prove the second inequality. 
    Let $\calM^*$ be an optimal POVM corresponding to $\measQJTq(\rho_0,\rho_1)$, then this POVM $\calM^*$ corresponds to a quantum-to-classical channel $\Phi_{\calM^*}(\rho) = \sum_{i=1}^N \ket{i}\bra{i} \tr(\rho M^*_i)$ (e.g.,~\cite[Equatin (2.41)]{AS17}). Using the data-processing inequality for $\QJTq$ (\Cref{lemma:QJSq-data-processing-inequaltiy}), for $1 < q \leq 2$, we obtain: 
    \[ \measQJTq(\rho_0,\rho_1) = \QJTq\left( \Phi_{\calM^*}(\rho_0), \Phi_{\calM^*}(\rho_1) \right) \leq \QJTq(\rho_0,\rho_1). \]

    Next, let us move to the first inequality. Let $p_b^{\calM}$ be the induced distribution with respect to the POVM $\calM$ of $\rho_b$ for any $b\in\binset$. Utilizing \Cref{lemma:TV-leq-JSq}, for $1 < q \leq 2$, we can derive that: 
    \begin{equation}
        \label{eq:measQJSq-geq-traceDist}
        \QJS^{\rm meas}_{q,\calM^*}(\rho_0,\rho_1) \geq \QJS^{\rm meas}_{q,\calM}(\rho_0,\rho_1) = \JTq\!\rbra*{ p_0^{\calM}, p_1^{\calM} } \geq 
            \Hq\!\rbra*{\frac{1}{2}} - \Hq\!\rbra*{\frac{1}{2}\!-\!\frac{\TV\left(p_0^{\calM}, p_1^{\calM} \right)}{2}}.
    \end{equation}
    We then consider the function $g(q;x)$ and its first derivative $\frac{\partial}{\partial x} g(q;x)$:
    \begin{align*}
        g(q;x) &\coloneqq 
        \Hq\left(\frac{1}{2}\right) - \Hq\left(\frac{1-x}{2}\right) = \frac{2^{-q}}{q-1}\left( (1+x)^q+(1-x)^q-2 \right),\\
        \frac{\partial}{\partial x} g(q;x) &= \frac{2^{-q}q}{q-1} \left( (1+x)^{q-1} - (1-x)^{q-1} \right).
    \end{align*}
    
    Since it is easy to see that $\frac{\partial}{\partial x} g(q;x) \geq 0$ for $0 \leq x \leq 1$ when $1 < q \leq 2$, we know that $g(q;x)$
    is monotonically increasing for $0 \leq x \leq 1$. 
    Noting that \Cref{eq:measQJSq-geq-traceDist} holds for arbitrary POVM $\calM$, and the trace distance is the measured version of the total variation distance (e.g.,~\cite[Theorem 9.1]{NC10}), we thus complete the proof by choosing the POVM that maximizes $\td(\rho_0,\rho_1)$.
\end{proof}

Next, we demonstrate the upper bound on $\QJTq$ in terms of the trace distance: 
\begin{lemma}[$\QJTq \leq \td$]
    \label{lemma:QJSq-leq-traceDistLike}
    For any quantum states $\rho_0$ and $\rho_1$, and any $1 \leq q \leq 2$, we have\emph{:} 
    \[ \QJTq(\rho_0,\rho_1) \leq \Hq\rbra*{\frac{1}{2}} \cdot \frac{1}{2} \tr\rbra*{ |\rho_0-\rho_1|^q } \leq \Hq\rbra*{\frac{1}{2}} \cdot \td(\rho_0,\rho_1)^q. \]
\end{lemma}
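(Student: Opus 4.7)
The plan is to realize the strategy alluded to in the paper's footnote: construct ancilla-extended lifts $\hat{\rho}_0,\hat{\rho}_1$ whose partial traces recover $\rho_0,\rho_1$, compute $\QJTq(\hat{\rho}_0,\hat{\rho}_1)$ in closed form, and then descend via the data-processing inequality \Cref{lemma:QJSq-data-processing-inequaltiy}, which is valid precisely for $1 \leq q \leq 2$. The second inequality $\frac{1}{2}\tr(|\rho_0-\rho_1|^q) \leq \td(\rho_0,\rho_1)^q$ will follow from a short, purely elementary estimate.

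First I would decompose the states. Set $t \coloneqq \td(\rho_0,\rho_1)$ and take the Jordan decomposition $\rho_0 - \rho_1 = \Delta_+ - \Delta_-$ with $\Delta_\pm \geq 0$ of orthogonal supports and $\tr(\Delta_+) = \tr(\Delta_-) = t$. Assuming $0 < t < 1$ (the cases $t \in \{0,1\}$ will be handled trivially, the second one just by dropping the $\sigma'$ block below), define the orthogonal density operators $\pi_0 \coloneqq \Delta_+/t$, $\pi_1 \coloneqq \Delta_-/t$ and the ``common part'' $\sigma' \coloneqq (\rho_0 - \Delta_+)/(1-t) = (\rho_1 - \Delta_-)/(1-t)$, so that $\rho_b = (1-t)\sigma' + t\pi_b$. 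The lifts are the classical-quantum states
\[
\hat{\rho}_b \;\coloneqq\; (1-t)\,\ketbra{0}{0} \otimes \sigma' \;+\; t\,\ketbra{1}{1} \otimes \pi_b, \qquad b \in \{0,1\},
\]
which by construction satisfy $\tr_{\mathrm{anc}}(\hat{\rho}_b) = \rho_b$.

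Next I would apply the joint $q$-Tsallis entropy theorem (\Cref{lemma:Tsallis-joint-entropy-theorem}) to each of $\hat{\rho}_0$, $\hat{\rho}_1$, and $(\hat{\rho}_0+\hat{\rho}_1)/2$. Since averaging only affects the $\ketbra{1}{1}$ block, the shared contribution $\Hq(t) + (1-t)^q\Sq(\sigma')$ cancels out of the divergence, leaving $\QJTq(\hat{\rho}_0,\hat{\rho}_1) = t^q\,\QJTq(\pi_0,\pi_1)$. The orthogonality $\pi_0 \perp \pi_1$ then makes the eigenvalues of $(\pi_0+\pi_1)/2$ equal to the eigenvalues of $\pi_0$ and $\pi_1$ each scaled by $1/2$, and a short calculation yields
\[
\QJTq(\pi_0,\pi_1) \;=\; \frac{1/2 - 2^{-q}}{q-1}\bigl(\tr(\pi_0^q)+\tr(\pi_1^q)\bigr).
\]
Using the identity $\frac{1/2 - 2^{-q}}{q-1} = \frac{1}{2}\Hq\rbra*{\frac{1}{2}}$ together with the orthogonal-support relation $t^q\bigl(\tr(\pi_0^q)+\tr(\pi_1^q)\bigr) = \tr(\Delta_+^q)+\tr(\Delta_-^q) = \tr(|\rho_0-\rho_1|^q)$, this packages into $\QJTq(\hat{\rho}_0,\hat{\rho}_1) = \Hq\rbra*{\frac{1}{2}} \cdot \frac{1}{2}\tr(|\rho_0-\rho_1|^q)$.

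To finish, I would invoke the partial-trace data-processing inequality to conclude $\QJTq(\rho_0,\rho_1) \leq \QJTq(\hat{\rho}_0,\hat{\rho}_1)$, which is the first desired bound. For the second bound, each $\pi_b$ is a density operator whose eigenvalues lie in $[0,1]$, so $\tr(\pi_b^q) \leq \tr(\pi_b) = 1$ for $q \geq 1$; consequently $\frac{1}{2}\tr(|\rho_0-\rho_1|^q) = \frac{1}{2}t^q(\tr(\pi_0^q)+\tr(\pi_1^q)) \leq t^q = \td(\rho_0,\rho_1)^q$. The main delicate point is really the choice of ancilla lift in the second step: it is the orthogonality of $\pi_0$ and $\pi_1$ (not any inequality between the full states $\rho_0,\rho_1$) that makes $\QJTq(\hat{\rho}_0,\hat{\rho}_1)$ reduce to a closed-form expression in $\tr(|\rho_0-\rho_1|^q)$, and this is precisely the purpose of inserting the single ancillary qubit; every remaining ingredient has already been established in the preceding subsections.
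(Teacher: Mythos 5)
Your construction is essentially the paper's own proof: the paper takes a qutrit ancilla and sets $\hat{\rho}_b = \sigma_2\otimes\ketbra{2}{2} + \sigma_b\otimes\ketbra{b}{b}$ with $\sigma_2 = (1-t)\sigma'$ and $\sigma_b = t\pi_b$ in your notation, evaluates $\QJTq(\hat{\rho}_0,\hat{\rho}_1)$ using the mutual orthogonality of the blocks to get exactly $\Hq\rbra*{\frac{1}{2}}\cdot\frac{1}{2}\tr\rbra*{|\rho_0-\rho_1|^q}$, descends via the partial-trace case of \Cref{lemma:QJSq-data-processing-inequaltiy}, and closes with $\tr\rbra{M^q}\leq\tr\rbra{M}^q$ applied to $\sigma_0,\sigma_1$. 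Your variations --- a qubit ancilla with $\pi_0,\pi_1$ sharing the $\ketbra{1}{1}$ block (orthogonality coming from the Jordan decomposition rather than from the ancilla), invoking \Cref{lemma:Tsallis-joint-entropy-theorem} instead of computing eigenvalues directly, and bounding $\tr\rbra{\pi_b^q}\leq 1$ for the normalized blocks --- are immaterial relabelings of the same argument, and every intermediate quantity matches the paper's.
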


\begin{proof}
    We begin with the construction for establishing $\QJTq \leq \ln{2} \cdot \td$ for $q=1$ as in~\cite[Theorem 14]{BH09}. 
    Our analysis differs since we need to address the cases of $1 \leq q \leq 2$. Consider a single qutrit register $B$ with basis vectors $\ket{0},\ket{1},\ket{2}$. 
    Define $\hat{\rho}_0$ and $\hat{\rho}_1$ on $\calH\otimes \calB$ as below, where $\calB=\bbC^3$ is the Hilbert space corresponding to the register $\sfB$:
    \begin{align*}
        \hat{\rho}_0  \coloneqq & \frac{\rho_0+\rho_1-|\rho_0-\rho_1|}{2} \otimes \ketbra{2}{2} + \frac{\rho_0-\rho_1+|\rho_0-\rho_1|}{2} \otimes \ketbra{0}{0}  \coloneqq & \sigma_2\otimes\ketbra{2}{2} + \sigma_0\otimes\ketbra{0}{0},\\
        \hat{\rho}_1  \coloneqq & \frac{\rho_0+\rho_1-|\rho_0-\rho_1|}{2} \otimes \ketbra{2}{2} + \frac{\rho_1-\rho_0+|\rho_0-\rho_1|}{2} \otimes \ketbra{1}{1}  \coloneqq & \sigma_2\otimes\ketbra{2}{2} + \sigma_1\otimes\ketbra{1}{1}.
    \end{align*}

    Intuitively, $\sigma_b$ represents the case where $\rho_b$ is ``larger than'' $\rho_{1-b}$ for $b \in \binset$ (i.e., $\rho_0$ and $\rho_1$ are ``distinguishable''), while $\sigma_2$ represents the case where $\rho_0$ is ``indistinguishable'' from $\rho_1$. 
    This construction generalizes the proof of the classical analogs (e.g.,~\cite[Claim 4.4.2]{Vad99}). 

Noting that $\QJTq$ is contractive when applying a partial trace (\Cref{lemma:QJSq-data-processing-inequaltiy}), we obtain: 
\begin{subequations}
    \label{eq:QJSq-traceDistLike-RHS}
    \begin{align}
    \QJTq(\rho_0,\rho_1) &= \QJTq(\tr_{\sfB}(\hat{\rho}_0),\tr_{\sfB}(\hat{\rho}_1))\\
    &\leq \QJTq(\hat{\rho}_0,\hat{\rho}_1)\\
    &= \frac{1}{q-1} \left( \tr \left(\left( \frac{\hat{\rho}_0+\hat{\rho}_1}{2} \right)^q\right) - \frac{1}{2} \tr(\hat{\rho}_0^q) - \frac{1}{2} \tr(\hat{\rho}_1^q)  \right).
    \end{align}
\end{subequations}

Noting that $\sigma_0 \otimes \ketbra{0}{0}$, $\sigma_1 \otimes \ketbra{1}{1}$, and $\sigma_2 \otimes \ketbra{2}{2}$ are orthogonal to each other, we have: 
\begin{subequations}
    \label{eq:QJSq-traceDistLike-RHS-terms-i}
    \begin{align}
        \tr \left(\left( \frac{\hat{\rho}_0+\hat{\rho}_1}{2} \right)^q\right) &= \tr\left(\left( \sigma_2 \otimes \ketbra{2}{2} + \sum_{b\in\binset}\frac{\sigma_b}{2}\otimes \ketbra{b}{b} \right)^q\right) = \tr\left( \sigma_2^q + \frac{\sigma^q_0}{2^q}+ \frac{\sigma^q_1}{2^q} \right), 
    \end{align}
\end{subequations}
\begin{subequations}
    \label{eq:QJSq-traceDistLike-RHS-terms-ii}
    \begin{align}
        \forall b \in \binset,~\tr(\hat{\rho}_b^q) &= \tr\left(\left( \sigma_2\otimes \ketbra{2}{2} + \sigma_b \otimes \ketbra{b}{b} \right)^q\right) = \tr\left( \sigma_2^q + \sigma_b^q \right).
    \end{align}
\end{subequations}

Plugging \Cref{eq:QJSq-traceDistLike-RHS-terms-i,eq:QJSq-traceDistLike-RHS-terms-ii} and the equality $\Hq\rbra*{\frac{1}{2}} = \frac{1-2^{1-q}}{q-1}$ into \Cref{eq:QJSq-traceDistLike-RHS}, we obtain: 
   \begin{subequations}
    \label{eq:QJTq-upperBound-Lq}
    \begin{align}
       \QJTq(\rho_0,\rho_1) 
       &\leq \Hq\rbra*{\frac{1}{2}} \cdot \frac{1}{2} \tr\rbra*{\sigma^q_0+\sigma^q_1}\\
       &\leq \Hq\rbra*{\frac{1}{2}} \cdot \frac{1}{2} \rbra*{ \tr(\sigma_0)^q + \tr(\sigma_1)^q }\\
       &= \Hq\rbra*{\frac{1}{2}} \cdot \td(\rho_0,\rho_1)^q.
    \end{align}
   \end{subequations}
   Here, the second line is due to the monotonicity of the Schatten $p$-norm (e.g.,~\cite[Equation (1.31)]{AS17}), equivalently, $\tr\rbra*{M^q} \leq \tr(M)^q$ for any positive semi-definite matrix $M$ and $q \geq 1$. The last line owes to the fact that: 
   \[\forall b\in\binset,\quad\tr(\sigma_b) = (-1)^b \tr\Big(\frac{\rho_0-\rho_1}{2}\Big) + \frac{1}{2}\tr\rbra*{|\rho_0-\rho_1|} = \frac{1}{2}\tr\rbra*{|\rho_0-\rho_1|}.\]
   
   Lastly, since $\sigma_0$ and $\sigma_1$ are orthogonal to each other, we complete the proof by plugging the equality $\tr\rbra*{\sigma^q_0+\sigma^q_1} = \tr\rbra*{\rbra*{\sigma_0+\sigma_1}^q} = \tr\rbra*{ |\rho_0-\rho_1|^q}$ into the first line in \Cref{eq:QJTq-upperBound-Lq}.  
\end{proof}

\subsection{Bounds for the Tsallis binary entropy}
\label{subsec:Tsallis-binary-entropy-bounds}

In this subsection, we establish lower and upper bounds (\Cref{lemma:Tsallis-binary-entropy-lower-bound,lemma:Tsallis-binary-entropy-upper-bound}, respectively) for the Tsallis binary entropy, which are useful when applying the lower bound on \QJTq{} in terms of the trace distance (\Cref{lemma:td-leq-QJSq}). 

We begin by proving an lower bound, which extends the bound $\H(1/2) \cdot 4x(1-x) \leq \H(x)$ from the specific case of $q=1$, as stated in~\cite[Theorem 1.2]{Topsoe01}, to a broader range of $q$:
\begin{lemma}[Tsallis binary entropy lower bound]
    \label{lemma:Tsallis-binary-entropy-lower-bound}
    For any $p=(x,1-x)$, let $\Hq(x)$ denote the Tsallis binary entropy with $q \in [0,2] \cup [3,+\infty)$, we have\emph{:}
    \[ \Hq(1/2) \cdot 4x(1-x) \leq \Hq(x).\]
\end{lemma}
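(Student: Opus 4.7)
The plan is to recast the inequality as a convexity statement about a single univariate function, after which the required inequality follows from a careful sign analysis. Introducing the symmetric variable $u \coloneqq (1-2x)^2 \in [0,1]$, one has $4x(1-x) = 1-u$ and $x^q + (1-x)^q = 2^{-q}\,\Psi_q(u)$, where
\[
\Psi_q(u) \coloneqq (1+\sqrt{u})^q + (1-\sqrt{u})^q.
\]
A brief calculation (clearing the factor $q-1$, whose sign depends on the regime of $q$) shows that the target inequality is equivalent to
\[
\Psi_q(u) \,\leq\, 2 + (2^q - 2)\,u \quad \text{for } q > 1, \qquad \Psi_q(u) \,\geq\, 2 + (2^q - 2)\,u \quad \text{for } q < 1.
\]
Since $\Psi_q(0) = 2$ and $\Psi_q(1) = 2^q$, the right-hand side is precisely the chord joining the endpoints of the graph of $\Psi_q$ on $[0,1]$. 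Hence it suffices to prove that $\Psi_q$ is \emph{convex} in $u$ when $q > 1$ and \emph{concave} when $q < 1$; the boundary case $q = 1$ will follow by continuity.

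To compute $\Psi_q''(u)$ I would apply the chain rule with $v = \sqrt{u}$ and exploit the elementary identity $(1+v)^{q-1} - (1-v)^{q-1} = (q-1)\int_0^v f(s)\,\mathrm{d}s$, where $f(v) \coloneqq (1+v)^{q-2} + (1-v)^{q-2}$. This yields a clean formula of the shape
\[
\Psi_q''(u) \;=\; \frac{q(q-1)}{4\,v^3}\left[\,v\,f(v) - \int_0^v f(s)\,\mathrm{d}s\,\right].
\]
The bracketed expression is nonnegative whenever $f$ is non-decreasing on $[0,1)$, since then $vf(v) = \int_0^v f(v)\,\mathrm{d}s \geq \int_0^v f(s)\,\mathrm{d}s$. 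Crucially, the prefactor $q(q-1)$ is positive precisely when $q > 1$ and negative when $q \in (0,1)$, so monotonicity of $f$ delivers both convexity (for $q > 1$) and concavity (for $q < 1$) in one stroke.

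Monotonicity of $f$ is exactly where the stated range $q \in [0,2] \cup [3,\infty)$ enters. Differentiating gives $f'(v) = (q-2)\bigl[(1+v)^{q-3} - (1-v)^{q-3}\bigr]$, and the two factors $q-2$ and $(1+v)^{q-3} - (1-v)^{q-3}$ share the same sign exactly on the union $[0,2] \cup [3,\infty)$: both are nonpositive when $q \in [0,2]$, and both are nonnegative when $q \geq 3$. The gap $q \in (2,3)$ is where the signs disagree, $f$ is decreasing, and a direct numerical check (e.g., $x = 1/4$, $q = 5/2$) shows the inequality actually fails there, so the hypothesis is sharp. The main obstacle I anticipate is careful bookkeeping: the direction of the inequality flips each time one clears $q-1$, and the signs of $q-2$ and $q-3$ both appear in the final case analysis, so one must verify that the convexity/concavity conclusions line up correctly with the original lemma in each regime of $q$.
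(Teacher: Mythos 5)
Your proof is correct, and it takes a genuinely different route from the paper's. The paper works with $F(q;x) \coloneqq \Hq(x)/(x(1-x))$, decomposes it as $G(q;x)+G(q;1-x)$ with $G(q;x)=\frac{x^{q-1}-1}{(q-1)(x-1)}$, proves $G$ is convex in $x$ by a second-derivative computation, and concludes from symmetry that $F$ attains its minimum $4\Hq(1/2)$ at $x=1/2$. You instead substitute $u=(1-2x)^2$ and reduce the claim to a chord comparison for $\Psi_q(u)=(1+\sqrt{u})^q+(1-\sqrt{u})^q$, established via the sign of $\Psi_q''(u)=\frac{q(q-1)}{4v^3}\bigl[vf(v)-\int_0^v f(s)\,\mathrm{d}s\bigr]$ with $f(v)=(1+v)^{q-2}+(1-v)^{q-2}$; I checked this formula and the subsequent sign analysis, and they are right. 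Both arguments ultimately hinge on the same arithmetic fact --- the paper's key step is $\frac{\partial}{\partial x}f(q;x)=(q-3)(q-2)(x-1)^2x^{q-4}\geq 0$, while yours is $f'(v)=(q-2)\bigl[(1+v)^{q-3}-(1-v)^{q-3}\bigr]\geq 0$, and each holds precisely because $(q-2)(q-3)\geq 0$ on $[0,2]\cup[3,\infty)$ --- so the hypothesis enters in the same way. What your version buys is a cleaner structural picture (convexity versus a chord rather than a minimum at the symmetric point), a one-stroke treatment of both regimes $q>1$ and $q<1$ via the prefactor $q(q-1)$, and an explicit sharpness observation: your numerical check at $q=5/2$, $x=1/4$ (where the inequality fails by roughly $0.002$) confirms that the excluded gap $q\in(2,3)$ is not an artifact of either proof. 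The only points needing a word of care in a written-up version are the endpoint $u=1$ for $0<q<1$ (where $\Psi_q'$ blows up, but concavity on the open interval plus continuity of $\Psi_q$ on $[0,1]$ still yields the chord bound) and the degenerate case $q=0$, where $\Psi_0\equiv 2$ makes the claim trivial.
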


\begin{proof}
    We need only consider the cases where $q\in\calI \coloneqq [0,1) \cup (1,2] \cup [3,+\infty)$, as the case $q=1$ directly follows from~\cite[Theorem 1.2]{Topsoe01}.  
    Our proof strategy is inspired by the approach used in that theorem. 
    We start by defining functions $F(q;x)$ and $G(q;x)$ on $0 \leq x \leq 1$ and $q\in\calI$: 
    \[ F(q;x) \coloneqq \frac{\Hq(x)}{x(1-x)} = \frac{1-x^q-(1-x)^q}{(q-1)x(1-x)} \text{ and } G(q;x) \coloneqq \frac{x^{q-1}-1}{(q-1)(x-1)}. \]

    It is evident that $F(q;x) = F(q;1-x)$ for all $x\in[0,1]$ and that $F(q;1/2) = 4\Hq(1/2)$. For the endpoints $x=0$ and $x=1$, one can verify the following evaluation holds:\footnote{For $q\in \calI\setminus \cbra{0}$, this evaluation follows from an application of the L'H\^opital's rule, which implies that $\lim_{x\to 0^+} F(q;x) = \lim_{x\to 0^+} \frac{-q x^{q-1} + q (1-x)^{q-1}}{(q-1)(1-2x)} = \frac{q}{q-1} \cdot \lim_{x\to 0^+} F_\star(q;x)$, where $F_\star(q;x)\coloneqq \frac{(1-x)^{q-1}-x^{q-1}}{1-2x}$. The desired limit then follows from the facts that when $q\in(0,1)$, one has $q/(q-1)<0$ and $\lim_{x\to 0^+} F_\star(q;x) = -\infty$, and that when $q>1$, it holds that $\lim_{x\to 0^+} F_\star(q;x)=1$. The case of $q=0$ holds straightforwardly.} 
     \[F(q,1)=F(q,0) =\lim_{x\to 0^+}F(q,x) = \begin{cases} \infty, &\text{if } q\in[0,1)\\ q/(q-1), &\text{if } q\in \calI\setminus [0,1) \end{cases}.\]
    
    We then assume that $G(q;x)$ is convex on $x\in[0,1]$ for any fixed $q\in\calI$:
    \begin{equation}
        \label{eq:Tsallis-binary-entropy-convexity}
        \text{For any } x\in[0,1] \text{ and } q\in(1,2],~ \frac{\partial^2 G(q;x)}{\partial x^2}  = \frac{(q\!-\!2)x^{q-3}}{x\!-\!1} - \frac{2x^{q-2}}{(x\!-\!1)^2} + \frac{2(x^{q-1}\!-\!1)}{(x\!-\!1)^3(q\!-\!1)} \geq 0.
    \end{equation}
    
    Since $F(q;x) = G(q;x) + G(q;1-x)$, \Cref{eq:Tsallis-binary-entropy-convexity} implies that $F(q,x)$ is convex on $x\in[0,1]$ for any fixed $q\in\calI$. We can obtain that: for any $q\in\calI$, $F(q;x)$ is monotonically decreasing on $x\in(0,1/2)$ and monotonically increasing on $x\in(1/2,1)$. Consequently, we establish the lower bound by noticing that: 
    \[\text{For any } x\in[0,1] \text{ and } q\in\calI, F(q;x) \geq F(q;1/2) = 4\Hq(1/2).\]

    It remains to prove \Cref{eq:Tsallis-binary-entropy-convexity}. 
    Noting that $(x-1)^3 \leq 0$ for any $0 \leq x \leq 1$, \Cref{eq:Tsallis-binary-entropy-convexity} holds if and only if the following holds: 
    \[ f(q;x) \coloneqq (q-2)(x-1)^2x^{q-3} - 2(x-1)x^{q-2} + \frac{2(x^{q-1}-1)}{q-1} \leq 0. \]
    
    A direct calculation implies that $\frac{\partial}{\partial x} f(q;x) = (q-3)(q-2)(x-1)^2 x^{q-4} \geq 0$ for any $q\in\calI$ and $x\in[0,1]$ since $\calI \cap (2,3) = \emptyset$. Hence, for any fixed $q\in\calI$, $f(q;x)$ is monotonically increasing for any $x \in (0,1)$. Therefore, we complete the proof by concluding that 
    \[\max_{x\in[0,1]} f(q;x) \leq f(q,1) = 0.  \qedhere\]
\end{proof}

Next, we will demonstrate an upper bound for the range of $1 < q \leq 2$ that is weaker than the best known upper bound for the case of $q=1$ as shown in~\cite[Theorem 1.2]{Topsoe01}:\footnote{Numerical evidence suggests that \Cref{lemma:Tsallis-binary-entropy-upper-bound} can be improved to $\Hq(x) \leq \Hq\big(\frac{1}{2}\big) \cdot (4x(1-x))^{\frac{1}{2\Hq(1/2)}}$ for any $0 \leq x \leq 1$ and $1 \leq q \leq 2$, which coincides with the bound $\H(x) \leq \H(1/2) (4x(1-x))^{\frac{1}{2\H(1/2)}}$ in~\cite{Topsoe01}.}

\begin{lemma}[Tsallis binary entropy upper bound]
    \label{lemma:Tsallis-binary-entropy-upper-bound}
    For any $p=(x,1-x)$, let $\Hq(x)$ denote the Tsallis binary entropy with $1\leq q \leq 2$, we have\emph{:} 
    \[\Hq(x) \leq \Hq(1/2) \cdot \smash{(4x(1-x))^{1/2}}.\]
\end{lemma}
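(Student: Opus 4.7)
The plan is to reduce the claim to a concrete analytic inequality and establish it by derivative tracking. By the symmetry $x \leftrightarrow 1-x$ of both sides, I would restrict to $x \in [0, 1/2]$ and substitute $x = (1-t)/2$ with $t \in [0,1]$, so that $\sqrt{4x(1-x)} = \sqrt{1-t^2}$. Multiplying through by $2^q(q-1) > 0$ (with the case $q = 1$ following by continuity), the target inequality becomes $h(t) \geq 0$ on $[0, 1]$, where
\[
h(t) \;:=\; (1+t)^q + (1-t)^q - 2 - (2^q-2)\bigl(1 - \sqrt{1-t^2}\bigr).
\]
A direct check gives $h(0) = h(1) = 0$ and $h'(0) = 0$ (by evenness), while $h'(1^-) = -\infty$ because $1 - \sqrt{1-t^2}$ has a vertical tangent at $t = 1$ whereas $(1+t)^q + (1-t)^q$ remains smooth.

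The main argument would show that $h$ has an \emph{increase-then-decrease} shape on $[0, 1]$ by tracking higher derivatives. First, I would verify $h''(0) = 2q(q-1) - (2^q - 2) \geq 0$ via a short calculus exercise on $\psi(q) := 2q(q-1) - 2^q + 2$: one has $\psi(1) = 0$, $\psi'(q) = 4q - 2 - 2^q \ln 2$ with $\psi'(1) = 2 - 2\ln 2 > 0$, and $\psi''(q) = 4 - 2^q(\ln 2)^2 > 0$ throughout $[1, 2]$, so $\psi \geq 0$ there. Moreover $h''(1^-) = -\infty$, since the $(1-t^2)^{-3/2}$ contribution coming from the second derivative of the square root dominates the remaining $q(q-1)[(1+t)^{q-2} + (1-t)^{q-2}]$ terms as $t \to 1$.

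The key technical step, which I expect to be the main obstacle, is to prove $h''' \leq 0$ on $(0, 1)$ for $q \in [1, 2]$, equivalently
\[
q(q-1)(2-q)\bigl[(1-t)^{q-3} - (1+t)^{q-3}\bigr] \;\leq\; \frac{3t(2^q - 2)}{(1-t^2)^{5/2}}.
\]
Once granted, $h''$ is monotonically decreasing from $h''(0) \geq 0$ down to $-\infty$, so it has a unique zero $t^{**} \in (0, 1)$. Consequently $h'$ increases on $(0, t^{**})$ from $h'(0) = 0$ (hence stays non-negative there) and then decreases on $(t^{**}, 1)$ to $-\infty$, so $h'$ has a unique zero $t^* \in (t^{**}, 1)$. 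Thus $h$ is non-decreasing on $[0, t^*]$ and non-increasing on $[t^*, 1]$, and since $h(0) = h(1) = 0$ this forces $h \geq 0$ on $[0, 1]$. For the inequality $h''' \leq 0$, both sides expand as convergent power series in odd powers of $t$ with non-negative coefficients (the left via $(1-t)^{q-3} - (1+t)^{q-3} = -2\sum_{k\geq 0}\binom{q-3}{2k+1}t^{2k+1}$, where each $\binom{q-3}{2k+1} < 0$ for $q \in (1, 2)$, and the right via $(1-t^2)^{-5/2} = \sum_{k\geq 0}\binom{k+3/2}{k}t^{2k}$); I intend to close the gap by a term-by-term comparison of the coefficients of $t^{2k+1}$, which reduces to elementary binomial inequalities parametrised by $q \in [1, 2]$, checked near the boundary $q = 1$ by a first-order expansion and near $q = 2$ by the vanishing of the left-hand side.
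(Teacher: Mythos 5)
Your proposal is correct and takes a genuinely different route from the paper. The paper's proof splits $[0,1]$ into an outer region near the endpoints and an inner region around $x=1/2$, bounds each piece by a Taylor truncation ($1-(1-x)^q\leq qx$ on the outer part, $(1-t)^q+(1+t)^q\geq 2+q(q-1)t^2$ on the inner part), and then must separately verify that the two regions overlap, which requires proving that the threshold $\tau(q)$ is monotone in $q$ --- itself a multi-step derivative computation. You instead keep the exact function: after the substitution $x=(1-t)/2$ and clearing the positive factor $2^q(q-1)$, your $h(t)$ is precisely the statement to be proved (I checked the algebra), and the boundary data $h(0)=h(1)=0$, $h'(0)=0$, $h''(0)=2q(q-1)-(2^q-2)\geq 0$, $h'(1^-)=h''(1^-)=-\infty$ are all as you claim (your $\psi$ argument for $h''(0)\geq 0$ is sound, and the $(1-t^2)^{-3/2}$ blow-up indeed dominates $(1-t)^{q-2}$ since $2-q<3/2$). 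Granting $h'''\leq 0$, the increase-then-decrease conclusion and hence $h\geq 0$ follows exactly as you say. The one step you leave as a sketch, $h'''\leq 0$, does go through by your term-by-term comparison: the coefficient of $t^{2j+1}$ on the left is $2q(q-1)(2-q)\prod_{i=1}^{2j+1}(2+i-q)/(2j+1)!\leq 4(j+1)(q-1)$ (using $q(2-q)\leq 1$ and $2+i-q\leq i+1$), while on the right it is $3(2^q-2)\binom{j+3/2}{j}\geq 6\ln 2\,(q-1)\binom{j+3/2}{j}$ with $\binom{j+3/2}{j}\geq j+1$, so the comparison reduces to $4\leq 6\ln 2$. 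Be aware this is tight: at $j=0$ and $q\to 1^+$ the ratio is $4/(6\ln 2)\approx 0.962$, so the first-order expansion you mention near $q=1$ is genuinely needed and barely closes. What your approach buys is a single unified argument with no interval-covering step and no analogue of the paper's $\tau(q)$-monotonicity lemma; the cost is concentrated in the one power-series inequality, which should be written out in full in a final version.
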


\begin{proof}
    The case of $q=1$ follows directly from~\cite[Theorem 8]{Lin91}, it remains to address the range $1 < q \leq 2$. 
    We will establish the bound separately for $x\in \Iin$ and $x \in \Iout$, where $\Iin \cup \Iout = [0,1]$. Specifically, these intervals are defined as 
    $\Iin \coloneqq [1/10,9/10]$ and $\Iout \coloneqq [0,1/8] \cup [7/8,1]$. 

    \paragraph{The outer interval case.} We start with the case of $x\in\Iout$. Since $\Hq(x)=\Hq(1-x)$ for any $0 \leq x \leq 1$, it is sufficient to consider the case of $0 \leq x \leq 1/8$. Noting that $q-1 \geq 0$, it suffices to show that: For any $0 \leq x \leq 1/8$ and $1 < q \leq 2$,
    \begin{equation}
        \label{eq:Tsallis-binary-entropy-upper-bound-outer}
        (q-1) \left( \Hq\Big(\frac{1}{2}\Big) \sqrt{4x(1-x)} - \Hq(x) \right) =
        \left( 2-2^{2-q} \right) \sqrt{x(1-x)} - \left( 1-x^q-(1-x)^q \right) \geq 0.
    \end{equation}

    Leveraging the Taylor expansion of $1-(1-x)^q$ at $x=0$, we obtain that:
    \begin{equation}
        \label{eq:Tsallis-binary-entropy-upper-bound-outer-Taylor}
        1-(1-x)^q  = \sum_{k=1}^{\infty} \frac{{\rbra{-1}}^{k+1}}{k!} \prod_{r=0}^{k-1} (q-r) x^k \coloneqq \sum_{k=1}^{\infty} \alpha_k x^k \leq qx.
    \end{equation}
    Here, notice that $1< q \leq 2$, the last inequality owes to the fact that $\alpha_1 = q>0$ and $\alpha_k \leq 0$ for all integer $k\geq 2$.  
    Plugging \Cref{eq:Tsallis-binary-entropy-upper-bound-outer-Taylor} into \Cref{eq:Tsallis-binary-entropy-upper-bound-outer}, it remains to prove that: 
    \begin{equation*}
        F_1(q;x) \coloneqq \frac{-x^q+qx}{\sqrt{x(1-x)}} \leq 2-2^{2-q}.
    \end{equation*}
    
    A direct calculation implies that $F_1(q;1/8) = (q-2^{3-3q})/\sqrt{7}$ satisfies $2-2^{2-q} - F_1(q;1/8) > 0$ for $1< q\leq 2$.\footnote{It is noteworthy that $2-2^{2-q}-F_1(q;1/2) = 2-2^{1-q}-q < 0$ for $1< q \leq 2$, and consequently, the outer interval case is not enough to establish our Tsallis binary entropy upper bound for any $0 \leq x \leq 1$.} 
    As a consequence, it is enough to show that $F_1(q,x)$ is monotonically non-decreasing on $x\in[0,1/8]$ for any fixed $q\in(1,2]$, specifically:
    \begin{equation}
        \label{eq:Tsallis-binary-entropy-upper-bound-parital1st}
        \frac{\partial}{\partial x} F_1(q;x) = \frac{1}{2} (x(1-x))^{-3/2} \left( qx + (1+2q(x-1)-2x) x^q\right) \geq 0.
    \end{equation}
    
    Noting that $\frac{1}{2} (x(1-x))^{-3/2} \geq 0$, \Cref{eq:Tsallis-binary-entropy-upper-bound-parital1st} holds if and only if the following holds:
    \begin{equation*}
        F_2(q;x) \coloneqq \left( 2(1-q)x + 2q - 1 \right) x^{q-1} \leq q.
    \end{equation*}
    
    A direct calculation implies that $F_2(q;1/8) = 2^{1-3q}(7q-3)$ satisfies that $q-F_2(q;1/8) > 0$ for $1< q \leq 2$. 
    Consequently, it suffices to show that $F_2(q;x)$ is monotonically non-decreasing on $x\in[0,1/8]$ for any fixed $q\in(1,2]$, particularly:
    \begin{equation}
        \label{eq:Tsallis-binary-entropy-upper-bound-parital2nd}
        \frac{\partial}{\partial x} F_2(q;x) = x^{q-2} (q-1) (2q(1-x)-1) \geq 0.
    \end{equation}
    
    Since $x^{q-2} (q-1) > 0$ for any $q\in(1,2]$ and $x\in[0,1/8]$, \Cref{eq:Tsallis-binary-entropy-upper-bound-parital2nd} holds if and only if $F_3(q;x) \coloneqq 2q(1-x)-1 \geq 0$. It is evident that $F_3(q;x) \geq 0$ is equivalent to $x \leq 1-1/2q < 1/2$ for $1< q\leq 2$, and thus we complete the proof of the outer interval case.  

    \paragraph{The inner interval case.} Next, we move to the case of $x\in\Iin$. Let $x = (1+t)/2$, then it suffices to consider the case of $0 \leq t \leq 1$ since $\Hq(x)=\Hq(1-x)$ for any $0\leq x \leq 1$. Noting that $2^q/(q-1) >0$ for $1<q\leq 2$, it suffices to show that: For any $0 \leq t \leq 4/5$ and $1 < q \leq 2$,
    \begin{equation}
        \label{eq:Tsallis-binary-entropy-upper-bound-inner}
        2^q (q-1) \left( \Hq\Big(\frac{1}{2}\Big) \sqrt{4x(1-x)} - \Hq(x) \right) = (1-t)^q + (1+t)^q - \left( 2^q + (2-2^q) \sqrt{1-t^2} \right) \geq 0. 
    \end{equation}

    Utilizing the Taylor expansion of $(1-t)^q + (1+t)^q$ at $t=0$, we obtain that: 
    \begin{equation}
        \label{eq:Tsallis-binary-entropy-upper-bound-inner-Taylor}
        (1-t)^q+(1+t)^q = \sum_{k=0}^{\infty} \frac{2}{(2k)!} \prod_{r=0}^{2k-1} (q-r) t^{2k} \coloneqq \sum_{k=0}^{\infty} \beta_k t^{2k} \geq \beta_0+\beta_1^2 = 2+q(q-1)t^2.
    \end{equation}
    Here, the last inequality is because $\beta_k \geq 0$ for all integer $k\geq 0$. Substituting \Cref{eq:Tsallis-binary-entropy-upper-bound-inner-Taylor} into \Cref{eq:Tsallis-binary-entropy-upper-bound-inner}, it remains to show that: 
    \begin{equation}
        \label{eq:Tsallis-binary-entropy-upper-bound-inner-condition}
        2+q(q-1)t^2 \geq 2^q + (2-2^q) \sqrt{1-t^2}.
    \end{equation}

    Let $u\coloneqq \sqrt{1-t^2} \in [3/5,1]$. Then \Cref{eq:Tsallis-binary-entropy-upper-bound-inner-condition} can be rewritten as
    \[ G(q;u) \leq 0 \quad\text{where}\quad G(q;u) \coloneqq \frac{q(q-1)}{2^q-2} u^2 - u + 1 - \frac{q(q-1)}{2^q-2}. \]
    Observe that $G(q;u)$ is a quadratic function in $u$ with positive leading coefficient for $q\in(1,2]$. Consequently, it suffices to verify the inequality at the endpoints $u=3/5$ and $u=1$ when $1<q\leq 2$. A direct calculation shows that $G(q;1)=0$, and that $G(q;3/5) \leq 0$ is equivalent to 
    \[G_\star(q) \coloneqq 8q(q-1)-5 (2^q-2) \geq 0.\] Differentiating $G_\star$ yields 
    $\frac{\dd}{\dd q} G_\star(q) = 16q - 8 -5 \cdot 2^q \ln{2}$ and $\frac{\dd^2}{\dd q^2} G_\star(q) = 16 - 5 \cdot 2^q (\ln{2})^2$.
    Since $\frac{\dd^2}{\dd q^2} G_\star(q)$ is monotonically decreasing on $(1,2]$ and $\ln{2} \leq 5/6$, we obtain $\frac{\dd^2}{\dd q^2} G_\star(q) \geq \frac{\dd^2}{\dd q^2} G_\star(q)|_{q=2} = 16-20 (\ln{2})^2 \geq 19/9 > 0$. Therefore, $\frac{\dd}{\dd q} G_\star(q)$ is monotonically increasing on $(1,2]$, and 
    $\frac{\dd}{\dd q} G_\star(q) \geq \frac{\dd}{\dd q} G_\star(q)|_{q=1} = 8 - 10 \ln{2} \geq 1$. It follows that that $G_\star(q)$ is monotonically increasing on $q\in(1,2]$. Since $G_\star(1) = 0$, we conclude that $G_\star(q) > 0$ for all $q\in(1,2]$, thereby establishing \Cref{eq:Tsallis-binary-entropy-upper-bound-inner-condition}.
\end{proof}

\subsection{Useful bounds on Tsallis entropy}

In this subsection, we present a useful bound on Tsallis entropy. \Cref{lemma:inequality-uniformTV-TsallisEA} establishes inequalities between the Tsallis entropy of a distribution $p$ and the total variation distance between $p$ and the uniform distribution of the same dimension. 

\begin{lemma}[Tsallis entropy bounds by closeness to uniform distribution]
    \label{lemma:inequality-uniformTV-TsallisEA}
    Let $p$ be a probability distribution over $[N]$ with $N\geq 2$, and let $\nu$ be the uniform distribution over $[N]$. 
    Then, for any $q > 1$ and $0 \leq \TV(p,\nu) \leq 1-1/N$, it holds that\emph{:}
    \[\left(1-\TV(p,\nu) - 1/N \right) \ln_q(N) \leq \Hq(p).\]
    Moreover, for any $q > 1$ and $N$ satisfying $1/q \leq \TV(p,\nu) \leq 1-1/N$, it holds that\emph{:}
    \[\Hq(p) \leq \ln_q\!\big( N (1-\TV(p,\nu)) \big).\]
\end{lemma}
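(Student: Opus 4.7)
The plan is to treat the two inequalities separately: a Schur-concavity argument for the lower bound, and a Jensen reduction to a one-variable calculus problem for the upper bound. Throughout, set $\gamma \coloneqq \TV(p,\nu)$, and recall from \Cref{lemma:Tsallis-entropy-properties} that the summand $g(x) \coloneqq (x-x^q)/(q-1)$ is concave on $[0,1]$ for $q > 0$, so $\Hq(p) = \sum_{i} g(p(i))$ is symmetric and Schur-concave.

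For the lower bound, I will identify the most majorized distribution $p^{\star}$ inside $\{p : \TV(p,\nu) = \gamma\}$ and invoke Schur-concavity. Starting from the identity $\sum_{i : p(i) \geq 1/N}(p(i) - 1/N) = \gamma$, a short case analysis (split by whether $p_{(j)} \geq 1/N$) yields the bound $\sum_{i=1}^{j} p_{(i)} \leq j/N + \gamma$ for every $j$. This bound is saturated by the extremal $p^{\star} = (1/N + \gamma, 1/N, \ldots, 1/N, s, 0, \ldots, 0)$ consisting of one entry $1/N + \gamma$, then $\lfloor N(1-\gamma)\rfloor - 1$ entries of $1/N$, then a remainder $s \in [0, 1/N)$, then zeros; hence $p^{\star} \succeq p$ in majorization order, so Schur-concavity gives $\Hq(p) \geq \Hq(p^{\star})$. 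A direct calculation then reduces the claim $\Hq(p^{\star}) \geq (1-\gamma-1/N)\ln_q(N)$ to the elementary inequality $(1/N+\gamma)^q \leq 1/N + \gamma$ (valid for $q \geq 1$ and $1/N + \gamma \leq 1$), together with a similar non-negativity check for the $s$-contribution using $s^{q-1} \leq N^{1-q}$.

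For the upper bound, I will reduce to a two-level optimization via Jensen. Partition $[N]$ into $S^+ = \{i : p(i) \geq 1/N\}$ and $S^- = [N] \setminus S^+$, set $k^+ \coloneqq |S^+|$, and apply Jensen's inequality (using concavity of $g$) on each block to obtain $\Hq(p) \leq k^+ g(u) + (N-k^+) g(v)$ with $u = 1/N + \gamma/k^+$ and $v = 1/N - \gamma/(N-k^+)$; feasibility $v \geq 0$ forces $k^+ \leq N(1-\gamma)$. Writing $t = k^+/N \in (0, 1-\gamma]$, maximizing $\Hq$ reduces to minimizing
\[ F(t) \coloneqq \frac{(t+\gamma)^q}{t^{q-1}} + \frac{(1-t-\gamma)^q}{(1-t)^{q-1}}, \]
and $F(1-\gamma) = 1/(1-\gamma)^{q-1}$ corresponds exactly to $\ln_q(N(1-\gamma))$ after substitution. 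A direct differentiation gives
\[ F'(t) = \frac{(t+\gamma)^{q-1}}{t^q}\bigl[t - (q-1)\gamma\bigr] - \frac{(1-t-\gamma)^{q-1}}{(1-t)^q}\bigl[(1-t) + (q-1)\gamma\bigr], \]
where the second term is manifestly non-negative. The hypothesis $\gamma \geq 1/q$ is equivalent to $(q-1)\gamma \geq 1 - \gamma$, which forces $t \leq 1-\gamma \leq (q-1)\gamma$ throughout the domain and hence $F'(t) \leq 0$; thus $F$ is non-increasing on $(0, 1-\gamma]$, attains its minimum at $t = 1-\gamma$, and yields $\Hq(p) \leq \ln_q(N(1-\gamma))$.

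The main obstacle will be the upper bound, specifically pinning down the critical role of the hypothesis $\gamma \geq 1/q$. The classical Vajda-type argument for $q = 1$ does not transparently generalize, so one must analyze $F$ directly; the crucial point is that $F'(t)$ cleanly splits into a sign-controlled piece and a non-negative piece, with the single threshold $t = (q-1)\gamma$ governing everything, and the hypothesis $\gamma \geq 1/q$ is exactly what pushes this threshold past the right endpoint $t = 1 - \gamma$ of the feasible interval. A minor technicality in both bounds is that $N(1-\gamma)$ and $N\gamma$ need not be integers; for the lower bound this is absorbed into the remainder $s$, and for the upper bound it only makes the stated inequality strict.
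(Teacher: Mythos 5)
Your proposal is correct and follows essentially the same route as the paper: both bounds are obtained by reducing to the same extremal distributions (the single-spike distribution for the lower bound, the two-level distribution for the upper bound), and your sign analysis of $F'(t)$ via the threshold $t=(q-1)\gamma$ is exactly the paper's inequality $\frac{\partial}{\partial k}F_q(N,k,\gamma)\le 0$ rewritten in the continuous variable $t=k/N$. The only cosmetic differences are that you justify the lower-bound extremal via majorization and Schur-concavity where the paper invokes extreme points of the convex feasible set, and that your continuous treatment of $t\in(0,1-\gamma]$ absorbs the paper's separate monotonicity-in-$\varepsilon$ step handling the non-integrality of $N(1-\gamma)$.
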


\begin{proof}
    Let $\gamma \coloneqq \TV(p,\nu)$, and let $\Delta_{N}$ be the set of probability distributions of dimension $N$. It is evident that $0 \leq \TV(p,\nu) \leq 1-1/N$. 
    To establish the lower bound, it suffices to minimize the Tsallis entropy $\Hq(p)$ subject to the constraint $\TV(p,\nu) = \gamma$, which is equivalent to solve the convex optimization problem in \Cref{eq:Tsallis-entropy-min}.\footnote{A similar formulation also appeared in the proof of~\cite[Lemma 16]{KLGN19}.}
    
    {\centering
    \begin{minipage}{0.5\textwidth}
        \centering
        \begin{mini}{}{\Hq(p')}{}{}{\label{eq:Tsallis-entropy-min}}{}
            \addConstraint{p'}{\in\Delta_N}{}
            \addConstraint{\TV(p',\nu)}{\leq \gamma}{}
        \end{mini}
    \end{minipage}%
    \begin{minipage}{0.5\textwidth}
    \begin{subequations}
    \label{eq:Tsallis-entropy-min-solution}
    \begin{align}
        &p_{\min}(i) = \begin{cases}
            \frac{1}{N}, &\text{if } i\in[k_{\min}]\\
            \frac{1}{N}+\gamma, &\text{if } i=k_{\min}+1\\
            \frac{\varepsilon}{N}, &\text{if } i=k_{\min}+2\\
            0, &\text{otherwise}
        \end{cases},\\
        &\begin{aligned}
            &\text{where } &k_{\min} &\coloneqq \floor*{N(1-\gamma)}-1,\\
            &&\varepsilon &\coloneqq N(1\!-\!\gamma) - \floor*{N(1\!-\!\gamma)}.
        \end{aligned}
    \end{align}
    \end{subequations}
    \end{minipage}  
    \vspace{1em}
    }
    
    Note that $\Hq(p)$ is concave (\Cref{lemma:Tsallis-entropy-properties}) for any fixed $q>1$, and the constraints in \Cref{eq:Tsallis-entropy-min} form a closed convex set. Since the minimum of a concave function is attained at some extreme point (e.g.,~\cite[Corollary 32.3.1]{Rockafellar70}) and the Tsallis entropy is permutation-invariant, we deduce an optimal solution $p_{\min}$ to \Cref{eq:Tsallis-entropy-min}, as stated in \Cref{eq:Tsallis-entropy-min-solution}. 

    Next, we can deduce the lower bound of the Tsallis entropy by evaluating $\Hq(p_{\min})$: 
    \begin{align*}
        \Hq(p_{\min}) &= \frac{1}{q-1} \left( 1 - k_{\min} \Big( \frac{1}{N} \Big)^q - \Big(\frac{1}{N} + \gamma\Big)^q - \Big( \frac{\varepsilon}{N} \Big)^q \right)\\
        &\geq \frac{1}{q-1} \left( \frac{(\lfloor N(1-\gamma)\rfloor-1)}{N} + \frac{\varepsilon}{N}  - (\floor*{N(1-\gamma)}-1 + \varepsilon^q) \Big( \frac{1}{N} \Big)^q\right)\\ 
        &\geq \frac{1}{q-1} \left( 1-\gamma-\frac{1}{N}  - \Big(1-\gamma -\frac{1}{N}\Big) \Big( \frac{1}{N} \Big)^{q-1} \right)\\ 
        &= \left(1-\gamma - \frac{1}{N} \right) \ln_q(N).
    \end{align*}
    Here, the second line excludes terms corresponding to $p_{\min}(k_{\min}+1)$,\footnote{We exclude the terms $\rbra[\big]{\frac{1}{N}+\gamma}$ and $-\rbra[\big]{\frac{1}{N}+\gamma}^q$, whose combined contribution is non-negative for $q\geq 1$. Moreover, the term $\frac{1}{N}+\gamma$ is implicitly accounted for in the term $1$, since $\sum_i p_{\min}(i) = \frac{k_{\min}}{N} + \rbra[\big]{\frac{1}{N}+\gamma} + \frac{\epsilon}{N} = 1$ according to \Cref{eq:Tsallis-entropy-min-solution}.} and the third line follows from the fact that $\varepsilon^q \leq \varepsilon$ for $q\geq 1$ and $0\leq \varepsilon \leq 1$.

    \vspace{1em}
    To demonstrate the upper bound, it remains to maximize the Tsallis entropy $\Hq(p)$ subject to the constraint $\TV(p,\nu) = \gamma$, which is equivalent to solve a \textit{non-convex} optimization problem analogous to \Cref{eq:Tsallis-entropy-min}. This task is challenging in general, but we consider only the regime $\TV(p,\nu) \geq 1/q$.\footnote{For the regime $0 \leq \TV(p,\nu) \leq 1/q$, the optimal solution to \Cref{eq:Tsallis-entropy-max} depends on the choice of $q$.}. Particularly, we focus on the following optimization problem:
    
    \begin{maxi}{}{\Hq(p')}{}{}{\label{eq:Tsallis-entropy-max}}{}
        \addConstraint{p'}{\in\Delta_N}{}
        \addConstraint{\TV(p',\nu)}{\geq \gamma \geq 1/q}{}
    \end{maxi}

    It is not too hard to obtain an optimal solution $p_{\max}$ to \Cref{eq:Tsallis-entropy-max}, where $\varepsilon$ is defined as in \Cref{eq:Tsallis-entropy-min-solution}, as stated in \Cref{prop:Tsallis-entropy-max-solution}. The proof is deferred in \Cref{subsec:omitted-proofs-Tsallis-properties}. 
    \begin{restatable}{proposition}{TsallisEntropyUpperBoundSolution}
        \label{prop:Tsallis-entropy-max-solution}
        For the optimization problem presented in \Cref{eq:Tsallis-entropy-max}, an optimal solution is the distribution provided in \Cref{eq:Tsallis-entropy-max-solution}, where $\varepsilon = N(1-\gamma) - \floor*{N(1-\gamma)}$: 
        \begin{equation}
        \label{eq:Tsallis-entropy-max-solution}
            p_{\max}(i) = \begin{cases}
            \frac{1}{N}+\frac{\gamma}{k_{\max}}, &\text{if } i\in[k_{\max}]\\
            \frac{\varepsilon}{N (N-k_{\max})}, &\text{otherwise}
            \end{cases},
            \text{where } k_{\max} \coloneqq \floor*{N(1-\gamma)}.             
        \end{equation}
    \end{restatable}
    
    Consequently, we can derive the upper bound of the Tsallis entropy by evaluating $\Hq(p_{\max})$:
    \begin{align*}
        \Hq(p_{\max}) &= \frac{1}{q-1} \left( 1- k_{\max} \Big( \frac{1}{N} + \frac{\gamma}{k_{\max}} \Big)^q - (N-k_{\max}) \Big( \frac{\varepsilon}{N(N-k_{\max})} \Big)^q \right)\\
        &= \frac{1}{q-1} \left( 1- \Big(1 -\frac{\varepsilon}{N}\Big)^q \Big(\frac{1}{N(1-\gamma) - \varepsilon}\Big)^{q-1} - \Big( \frac{\varepsilon}{N} \Big)^q \Big( \frac{1}{N\gamma + \varepsilon} \Big)^{q-1} \right)\\
        & \leq \frac{1}{q-1} \left( 1- \Big(\frac{1}{(N(1-\gamma)}\Big)^{q-1} \right)\\
        & = \ln_q\!\Big(N(1-\gamma)\Big).
    \end{align*}
    Let $F(q;N,\varepsilon,\gamma) \coloneq \rbra*{1-\frac{\varepsilon}{N}}^q (N(1-\gamma) - \varepsilon)^{1-q} + \rbra*{ \frac{\varepsilon}{N}}^q (N\gamma + \varepsilon)^{1-q}$, then the third line holds by assuming that $F(q;N,\varepsilon,\gamma)$ is monotonically non-decreasing on $0 \leq \varepsilon \leq 1$ for any fixed $\gamma$, $q$, and $N$ satisfying $q \gamma \geq 1$ and $N \geq q/(q-1)$. 
    
    It remains to prove $\frac{\partial}{\partial \varepsilon} F(q;N,\varepsilon,\gamma) \geq 0$ the aforementioned range of $x$, $\gamma$, $q$, and $N$. By a direct calculation, we complete the proof by noticing all terms in the following are non-negative:
    \[ \frac{\partial}{\partial \varepsilon} F(q;N,\varepsilon,\gamma) = \rbra*{1-\frac{\epsilon }{N}}^q  \frac{(N (\gamma  q-1)+\epsilon )}{ (N-\epsilon) \rbra*{N(1-\gamma)-\epsilon }^{q}}  + \rbra*{\frac{\epsilon }{N}}^q \frac{(\gamma  N q+\epsilon )   }{\epsilon (\gamma  N+\epsilon )^{q} } \geq 0. \qedhere\]
\end{proof}

\section{Hardness and lower bounds via \QJTq{}-based reductions}
\label{sec:hardness-via-QJTq-reductions}

In this section, we will establish reductions from the closeness testing of quantum states via the trace distance to testing via the quantum $q$-Tsallis entropy difference. Our proof crucially depends on the properties of the quantum Jensen-Tsallis divergence (\QJTq{}) demonstrated in \Cref{sec:properties-QJTq}. 
Using these reductions, we will prove computational hardness results and query complexity lower bounds for several problems related to the quantum $q$-Tsallis entropy difference under various circumstances. 

\vspace{1em}
We begin by defining the \textsc{Quantum $q$-Tsallis Entropy Difference} and the \textsc{Quantum $q$-Tsallis Entropy Approximation}, denoted by $\TsallisQED[g(n)]$ and $\TsallisQEA[t(n),g(n)]$, respectively. These definitions generalize the counterpart definitions in~\cite{BASTS10} from the von Neumann entropy (i.e., $\QJTq$ with $q=1$) to the quantum $q$-Tsallis entropy for $1 \leq q \leq 2$. 

\begin{definition}[Quantum $q$-Tsallis Entropy Difference, \TsallisQED{}]
	\label{def:TsallisQED}
    Let $Q_0$ and $Q_1$ be quantum circuits acting on $m$ qubits and having $n$ specified output qubits, where $m(n)$ is a polynomial in $n$. Let $\rho_i$ be the quantum state obtained by running $Q_i$ on $\ket{0}^{\otimes m}$ and tracing out the non-output qubits. Let $g(n)$ be a positive efficiently computable function. Decide whether\emph{:}
    \begin{itemize}[topsep=0.33em, itemsep=0.33em, parsep=0.33em]
	   \item \emph{Yes:} The pair of quantum circuits $(Q_0,Q_1)$ such that $\Sq(\rho_0)-\Sq(\rho_1) \geq g(n)$;
	   \item \emph{No:} The pair of quantum circuits $(Q_0,Q_1)$ such that $\Sq(\rho_1)-\Sq(\rho_0) \geq g(n)$.
    \end{itemize}
\end{definition}

\begin{definition}[Quantum $q$-Tsallis Entropy Approximation, \TsallisQEA{}]
	\label{def:TsallisQEA}
    Let $Q$ be a quantum circuit acting on $m$ qubits and having $n$ specified output qubits, where $m(n)$ is a polynomial in $n$. Let $\rho$ be the quantum state obtained by running $Q$ on $\ket{0}^{\otimes m}$ and tracing out the non-output qubits. Let $g(n)$ and $t(n)$ be positive efficiently computable functions. Decide whether\emph{:}
    \begin{itemize}[topsep=0.33em, itemsep=0.33em, parsep=0.33em]
	   \item \emph{Yes:} The quantum circuit $Q$ such that $\Sq(\rho) \geq t(n) + g(n)$;
	   \item \emph{No:} The quantum circuit $Q$ such that $\Sq(\rho) \leq t(n) - g(n)$.
    \end{itemize}
\end{definition}

Notably, the quantum $q$-Tsallis entropy of any pure state is zero. Hence, similar to \Cref{subsec:state-closeness-testing}, it is reasonable to define \textit{constant-rank} variants of \TsallisQED{} and \TsallisQEA{}:
\begin{enumerate}[label={\upshape(\arabic*)}, topsep=0.33em, itemsep=0.33em, parsep=0.33em]
	\item \ConstRankTsallisQED{}: the ranks of $\rho_0$ and $\rho_1$ are at most $O(1)$. 
	\item \ConstRankTsallisQEA{}: the rank of $\rho$ is at most $O(1)$. 
\end{enumerate}

\vspace{1em}
Next, we present the main theorem in this section: 

\begin{theorem}[Computational hardness for \TsallisQED{} and \TsallisQEA{}]
    \label{thm:TsallisQE-hardness}
    The promise problems \TsallisQED{} and \TsallisQEA{} capture the computational power of their respective complexity classes in the corresponding regimes of $q$\emph{:}
    \begin{enumerate}[label={\upshape(\arabic*)}, topsep=0.33em, itemsep=0.33em, parsep=0.33em]
        \item For any $q \in [1,2]$ and $n \geq 3$, it holds that\emph{:} For $1/\poly(n) \leq g_q(n) \leq 2^q\Hq(1/2) \rbra[\big]{1-2^{ -\frac{n}{2}+\frac{7}{5} }}$,  $\ConstRankTsallisQED[g_q(n)]$ is \BQP{}-hard under Karp reduction. Consequently, \ConstRankTsallisQEA{} with $g(n) = \Theta(1)$ is \BQP{}-hard under Turing reduction. \label{thmitem:ConstRankTsallisQED-BQPhard}
        \item For any $q\in \big(1, 1+\frac{1}{n-1}\big]$ and $n \geq 90$, it holds that\emph{:} For $1/\poly(n) \leq g(n) \leq 1/400$, $\TsallisQED[g(n)]$ is \QSZK{}-hard under Karp reduction. Consequently, \TsallisQEA{} with $g(n) = \Theta(1)$ is \QSZK{}-hard under Turing reduction. \label{thmitem:TsallisQED-QSZKhard}   
        \item For any $n\geq 5$, it holds that\emph{:} For $1/\poly(n) \leq g(n) \leq 1/150$, $\TsallisQEAnoq_{1+\frac{1}{n-1}}$ with $g(n)$ is \NIQSZK{}-hard. \label{thmitem:TsallisQEA-NIQSZKhard}   
    \end{enumerate}
\end{theorem}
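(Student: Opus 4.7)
The plan is to prove each Karp-reduction statement directly from the corresponding trace-distance problem in \Cref{table:hardness-proof-TsallisQED-TsallisQEA}, and then derive the two Turing-reduction corollaries by a standard binary search over thresholds. All three Karp reductions share a common template: post-process the given state-preparation circuit(s) into new circuits whose outputs have an $\Sq$-difference that is large on yes instances and small on no instances, with the yes/no gap controlled by the \QJTq{} and Tsallis-entropy inequalities developed in \Cref{sec:properties-QJTq}.

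For Part~\ref{thmitem:ConstRankTsallisQED-BQPhard} I reduce from \Cref{lemma:PureQSD-is-BQPhard} at output length $n' = n + 1$. From the two pure states $\ket{\psi_0}, \ket{\psi_1}$ prepared by $Q_0, Q_1$, set $\hat\rho_0 \coloneqq \tfrac{1}{2}\rbra{\ketbra{\psi_0}{\psi_0} + \ketbra{\psi_1}{\psi_1}}$ (efficiently prepared by applying a Hadamard to a fresh ancilla, performing controlled-$Q_b$, and tracing out the ancilla) and $\hat\rho_1 \coloneqq \ketbra{\psi_0}{\psi_0}$; both have rank at most two, placing the instance in \ConstRankTsallisQED{}. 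Since $\Sq(\hat\rho_1) = 0$, we have $\Sq(\hat\rho_0) - \Sq(\hat\rho_1) = \QJTq(\ketbra{\psi_0}{\psi_0}, \ketbra{\psi_1}{\psi_1})$, which \Cref{thm:QJTq-vs-td} sandwiches between $\Hq(1/2) - \Hq\rbra{(1-\td)/2}$ and $\Hq(1/2) \td^q$. Substituting the \PureQSD{} parameters $\td \geq 1 - 2^{-n}$ (yes) versus $\td \leq 2^{-n}$ (no), and bounding $\Hq\rbra{2^{-n-1}}$ via the square-root upper bound of \Cref{thm:Tsallis-binary-entropy-bounds}, produces a yes/no gap of at least $\Hq(1/2)\rbra{1 - O(2^{-n/2})}$, from which the claimed window for $g_q(n)$ follows after a short calculation.

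For Part~\ref{thmitem:TsallisQED-QSZKhard} I invoke \Cref{lemma:QSD-is-QSZKhard} at any fixed $\tau \in (0, 1/2)$, and from a \QSD{} instance $(\rho_0, \rho_1)$ form the block-diagonal pair $\hat\rho_0 \coloneqq \tfrac{1}{2}\ketbra{0}{0}\otimes\rho_0 + \tfrac{1}{2}\ketbra{1}{1}\otimes\rho_1$ and $\hat\rho_1 \coloneqq \tfrac{I_2}{2}\otimes\tfrac{\rho_0+\rho_1}{2}$. Applying the joint $q$-Tsallis entropy theorem (\Cref{lemma:Tsallis-joint-entropy-theorem}) to both sides and simplifying yields the exact identity $\Sq(\hat\rho_1) - \Sq(\hat\rho_0) = 2^{1-q}\QJTq(\rho_0,\rho_1)$; combining this with \Cref{thm:QJTq-vs-td,thm:Tsallis-binary-entropy-bounds} as in the previous part shows the gap is at least $2^{1-q}\Hq(1/2)(1 - o(1))$ on yes instances and at most $2^{1-q}\Hq(1/2)\cdot 2^{-n^\tau q}$ on no instances. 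Since $q \in \rbra{1, 1 + 1/(n-1)}$ forces $2^{1-q}\Hq(1/2) \to \ln 2$, the gap exceeds $1/400$ for $n \geq 90$. For Part~\ref{thmitem:TsallisQEA-NIQSZKhard}, the reduction from \QSCMM{} (\Cref{QSCMM-is-NIQSZKhard}) is essentially the identity: because $(I/2)^{\otimes n}$ commutes with every $n$-qubit $\rho$, $\td\rbra{\rho, (I/2)^{\otimes n}}$ equals the $\TV$-distance between $\rho$'s spectrum $p$ and the uniform $\nu$, and $\Sq(\rho) = \Hq(p)$, so \Cref{lemma:inequality-uniformTV-TsallisEA} directly gives $\Sq(\rho) \geq (1 - 1/n - 2^{-n})\ln_q(2^n)$ on yes and $\Sq(\rho) \leq \ln_q(2^n/n)$ on no; the upper-bound hypothesis $\TV \geq 1/q$ holds with equality since $q = 1 + 1/(n-1)$ gives $1/q = 1 - 1/n$. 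Placing $t(n)$ at the midpoint of these two bounds makes any $g(n) \leq 1/150$ admissible for $n \geq 5$ by a direct numerical evaluation.

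The Turing-reduction consequences for \ConstRankTsallisQEA{} and \TsallisQEA{} are routine: with a \TsallisQEA{}-style oracle of precision $g(n)$, binary searching the threshold $t$ yields additive-$g(n)/3$ estimates of $\Sq(\rho_0)$ and $\Sq(\rho_1)$ using $O(\log(1/g(n)))$ oracle calls, and their difference decides the associated entropy-difference problem with a slightly degraded gap. The main technical obstacle will be tracking constants tightly enough to land inside the stated numerical windows, particularly the factor $2^q\Hq(1/2)\rbra{1 - 2^{-n/2+7/5}}$ in Part~\ref{thmitem:ConstRankTsallisQED-BQPhard}; this requires the square-root (rather than the concavity-based) upper bound of \Cref{thm:Tsallis-binary-entropy-bounds} together with uniform control in $q \in [1, 2]$ of $\Hq(1/2)$, while Part~\ref{thmitem:TsallisQEA-NIQSZKhard} reduces to a careful numerical estimate involving $\ln_q(2^n)$ at $q = 1 + 1/(n-1)$.
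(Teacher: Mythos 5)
Your Part~\ref{thmitem:TsallisQEA-NIQSZKhard} and the binary-search Turing reductions follow the paper's route essentially verbatim, and your exact identity $\Sq(\hat\rho_1)-\Sq(\hat\rho_0)=2^{1-q}\,\QJTq(\rho_0,\rho_1)$ for the block-diagonal pair in Part~\ref{thmitem:TsallisQED-QSZKhard} is correct (and in fact cleaner than the paper's computation, which needs Fannes' inequality and an a priori bound $\max\{\Sq(\rho_0),\Sq(\rho_1)\}\leq\gamma(n)$). But both of your Karp reductions for Parts~\ref{thmitem:ConstRankTsallisQED-BQPhard} and~\ref{thmitem:TsallisQED-QSZKhard} have the same genuine gap: the entropy difference you engineer is a \emph{non-negative} multiple of $\QJTq\geq 0$ in all cases. \Cref{def:TsallisQED} is a symmetric promise: a no instance must satisfy $\Sq(\rho_1)-\Sq(\rho_0)\geq g(n)$, i.e.\ the difference must cross zero and become $\leq -g(n)$. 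In your Part~\ref{thmitem:ConstRankTsallisQED-BQPhard} construction, a no instance of \PureQSD{} gives $\Sq(\hat\rho_0)-\Sq(\hat\rho_1)=\QJTq\in[0,\Hq(1/2)2^{-nq}]$, which is small but never $\leq -g(n)$; the same failure occurs in Part~\ref{thmitem:TsallisQED-QSZKhard}, whichever ordering of the pair you choose. So the instances you output violate the promise and the reduction is not a Karp reduction to \TsallisQED{} as defined.

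This is exactly the issue the paper's constructions are built to solve: in \Cref{lemma:reduction-PureQSD-ConstRankTsallisQED} the uniform classical register of $\rho'_0$ is replaced by a biased one $(p_0,1-p_0)$ with $p_0$ tuned by \Cref{eq:LowRankTsallisQED-hardness-chosenParameter}, and in \Cref{lemma:reduction-QSD-TsallisQED} by $(\vartheta,1-\vartheta)$ with $2\Hq(\vartheta)=\Hq(1/2)$. Via pseudo-additivity this injects an offset $\Hq(p_0)-\Hq(1/2)$ (resp.\ $-\tfrac12\Hq(1/2)$) that recenters the difference at the midpoint of the yes/no bounds, so it is $\geq +\tilde g_q$ on yes and $\leq -\tilde g_q$ on no. You cannot achieve this by naively tensoring one state with a fixed mixed state of midpoint entropy, because for $q\neq 1$ the cross term $-(q-1)\Sq(\cdot)\Sq(\cdot)$ in \Cref{lemma:Tsallis-entropy-pseudoAdditivity} spoils the shift; handling that cross term is precisely the content of the paper's choice of $p_0$ and the bound \Cref{eq:LowRankTsallisQED-hardness-coefficient}. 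Your proposal does not address this at all, and the "short calculation" you defer to is where the argument actually breaks. The rest of your error accounting (using \Cref{thm:QJTq-vs-td} and the square-root bound of \Cref{thm:Tsallis-binary-entropy-bounds}) is the right machinery and would go through once the recentering is added.
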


In particular, \Cref{thm:TsallisQE-hardness}\ref{thmitem:ConstRankTsallisQED-BQPhard} is derived from the pure-state reduction (\Cref{lemma:reduction-PureQSD-ConstRankTsallisQED}), and the detailed statements are \Cref{thm:ConstRankTsallisQED-BQPhard,thm:ConstRankTsallisQEA-BQPhard}. 
Moreover, \Cref{thm:TsallisQE-hardness}\ref{thmitem:TsallisQED-QSZKhard} is obtained through a mixed-state reduction (\Cref{lemma:reduction-QSD-TsallisQED}), and the detailed statements are \Cref{thm:TsallisQED-QSZK-hardness,thm:TsallisQEA-QSZKhard}. 
Furthermore, \Cref{thm:TsallisQE-hardness}\ref{thmitem:TsallisQEA-NIQSZKhard} follows from a tailor-made mixed state reduction for \QSCMM{} (\Cref{lemma:reduction-MaxMixedQSD-TsallisQEA}), and the detailed statement is \Cref{thm:TsallisQEA-NIQSZKhard}.  

Lastly, using the reductions in \Cref{lemma:reduction-QSD-TsallisQED}, we derive lower bounds on the quantum query and sample complexity for estimating $\Sq(\rho)$ where $1 < q \leq 1\!+\!\frac{1}{n-1}$, as presented in \Cref{thm:Tsallis-query-complexity-lower-bound-smallQ,thm:Tsallis-sample-complexity-lower-bound-smallQ}. These theorems build on prior works in quantum query complexity~\cite{CFMdW10} and sample complexity~\cite{OW21} lower bounds for the trace distance. In addition, we provide quantum query and sample complexity lower bounds for estimating $\Sq(\rho)$ when $q \geq 1\!+\!\Omega(1)$, leveraging the hard instances from~\cite{Belovs19}, as detailed in \Cref{thm:Tsallis-query-complexity-lower-bound-largeQ,thm:Tsallis-sample-complexity-lower-bound-largeQ}. 

\subsection{Pure-state reduction: \PureQSD{} \texorpdfstring{$\leq$}{≤} \ConstRankTsallisQED{} for \texorpdfstring{$1\leq q \leq 2$}{1≤q≤2}}
\label{subsec:pure-state-reduction}

The reduction in \Cref{lemma:reduction-PureQSD-ConstRankTsallisQED} is from the trace distance between two $n$-qubit pure states (\PureQSD{}) to the quantum $q$-Tsallis entropy difference between two new constant-rank $(n+1)$-qubit states (\ConstRankTsallisQED{}), for $1 \leq q \leq 2$.

\begin{lemma}[$\PureQSD\leq\ConstRankTsallisQED$]
    \label{lemma:reduction-PureQSD-ConstRankTsallisQED}
    Let $Q_0$ and $Q_1$ be quantum circuits acting on $n$ qubits and having the same number of output qubits. Let $\ket{\psi_i}$ be the quantum state obtained by running $Q_i$ on $\ket{0}^{\otimes n}$. For any $b \in \binset$, there is a new quantum circuit $Q'_b$ acting on $n+3$ qubits, using $O\rbra{1}$ queries to controlled-$Q_0$ and controlled-$Q_1$, as well as $O(1)$ one- and two-qubit gates. The circuit $Q'_b$ prepares a new quantum state $\rho'_b$, which has constant rank and acts on $n' \coloneqq n+1$ qubits, such that for any efficiently computable functions $\alpha(n)$ and $\beta(n)$, where $\beta(n) + \sqrt{1-\alpha(n)^2} < 1$, and any $q\in[1,2]$, the following holds\emph{:}
    \begin{align*}
        \td(\ketbra{\psi_0}{\psi_0}, \ketbra{\psi_1}{\psi_1}) \geq \alpha(n)  &~\Rightarrow~  \Sq(\rho'_0) - \Sq(\rho'_1) \geq g_q(n') = g_q(n+1),\\
        \td(\ketbra{\psi_0}{\psi_0}, \ketbra{\psi_1}{\psi_1}) \leq \beta(n) &~\Rightarrow~ \Sq(\rho'_1) - \Sq(\rho'_0) \geq g_q(n') = g_q(n+1),
    \end{align*}
    where $g_q(n+1) \coloneqq 2^{-q} \cdot \Hq(1/2) \cdot \left( 1 - \beta(n)^q - \sqrt{1-\alpha(n)^2} \right).$
\end{lemma}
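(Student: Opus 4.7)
The strategy is to map pure-state distinguishability into a $q$-Tsallis entropy difference by encoding the overlap $c \coloneqq \abs{\innerprod{\psi_0}{\psi_1}}$ into the entropy of one state and comparing against a fixed, calibrated threshold entropy carried by the other. Concretely, using one kept ancilla qubit $\sfA$ and a constant number of ancillas that are traced out, I would define
\[
\rho'_0 \coloneqq \ketbra{0}{0}_{\sfA} \otimes \frac{\ketbra{\psi_0}{\psi_0} + \ketbra{\psi_1}{\psi_1}}{2},
\qquad
\rho'_1 \coloneqq \ketbra{0}{0}_{\sfA} \otimes \bigl(p\ketbra{\bar{0}}{\bar{0}} + (1-p)\ketbra{\bar{1}}{\bar{1}}\bigr),
\]
where $p \in (0, 1/2)$ is chosen so that $\Hq(p) = \tau$ for the threshold $\tau$ fixed below. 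The circuit $Q'_0$ prepares $\rho'_0$ by a Hadamard on a coin qubit, one controlled-$Q_0$, and one controlled-$Q_1$, and then traces out the coin; the circuit $Q'_1$ prepares $\rho'_1$ by a single $R_Y$ rotation on a coin qubit followed by one CNOT into the data register, again tracing out the coin. Both circuits use $O(1)$ one- and two-qubit gates and $O(1)$ queries to controlled-$Q_0, Q_1$, and each output state lives on $n+1$ qubits with rank at most $2$.

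Pseudo-additivity (\Cref{lemma:Tsallis-entropy-pseudoAdditivity}) lets the pure factor $\ketbra{0}{0}_{\sfA}$ drop out of the Tsallis entropy, giving $\Sq(\rho'_0) = \Hq((1-c)/2)$---exactly the Tsallis binary entropy at the eigenvalues $(1\pm c)/2$ of the two-pure-state mixture---and $\Sq(\rho'_1) = \tau$. I would then sandwich $\Hq((1-c)/2) = \QJT{q}(\ketbra{\psi_0}{\psi_0}, \ketbra{\psi_1}{\psi_1})$ on both sides in terms of the trace distance $T = \td(\ketbra{\psi_0}{\psi_0}, \ketbra{\psi_1}{\psi_1})$. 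Chaining \Cref{lemma:td-leq-QJSq} with the binary-entropy upper bound $\Hq(x) \leq \Hq(1/2)\sqrt{4x(1-x)}$ from \Cref{lemma:Tsallis-binary-entropy-upper-bound} yields
\[
\Hq\bigl((1-c)/2\bigr) \geq \Hq(1/2) - \Hq\bigl((1-T)/2\bigr) \geq \Hq(1/2)\bigl(1-\sqrt{1-T^2}\bigr),
\]
so $\Sq(\rho'_0) \geq \Hq(1/2)\bigl(1-\sqrt{1-\alpha^2}\bigr)$ in the Yes case $T \geq \alpha$. On the other side, \Cref{lemma:QJSq-leq-traceDistLike} gives $\Hq((1-c)/2) \leq \Hq(1/2)\,T^q$, so $\Sq(\rho'_0) \leq \Hq(1/2)\beta^q$ in the No case $T \leq \beta$.

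Setting $\tau \coloneqq \tfrac{1}{2}\Hq(1/2)\bigl((1-\sqrt{1-\alpha^2}) + \beta^q\bigr)$ as the midpoint of these two bounds, the hypothesis $\beta + \sqrt{1-\alpha^2} < 1$ together with $\beta^q \leq \beta$ (valid for $q \geq 1$, $\beta \leq 1$) guarantees $0 < \tau < \Hq(1/2)$, so the desired $p \in (0,1/2)$ with $\Hq(p)=\tau$ exists. Subtracting the two bounds then gives
\[
\Sq(\rho'_0) - \Sq(\rho'_1) \;\geq\; \tfrac{1}{2}\Hq(1/2)\bigl(1-\beta^q-\sqrt{1-\alpha^2}\bigr) \;\geq\; 2^{-q}\Hq(1/2)\bigl(1-\beta^q-\sqrt{1-\alpha^2}\bigr) = g_q(n+1)
\]
in the Yes case (the second inequality uses $2^{-1} \geq 2^{-q}$ for $q \geq 1$), with a symmetric inequality in the No case. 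The main obstacle I foresee is the efficient classical computability of $p$: the equation $\Hq(p) = \tau$ has no closed form for general $q \in [1,2]$, so I would argue that approximating $p$ to inverse-polynomial precision by binary search on $\Hq$ is enough, since the resulting perturbation in $\Sq(\rho'_1)$ is absorbed by the strict slack $\tfrac{1}{2} > 2^{-q}$ when $q > 1$ and by a small adjustment of $\tau$ inside the valid open interval when $q = 1$.
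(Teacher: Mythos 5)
Your construction is correct, and it takes a genuinely different route from the paper's. The paper sets $\rho'_1 = \frac{1}{2}\ketbra{0}{0}\otimes\ketbra{\psi_0}{\psi_0}+\frac{1}{2}\ketbra{1}{1}\otimes\ketbra{\psi_1}{\psi_1}$, whose entropy is pinned at $\Hq(1/2)$ by the joint $q$-Tsallis entropy theorem (\Cref{lemma:Tsallis-joint-entropy-theorem}), and moves all of the calibration into the ancilla mixture $(p_0,1-p_0)$ of $\rho'_0$; pseudo-additivity then produces the cross term $\rbra{1-(q-1)\Hq(p_0)}$ multiplying $\QJTq$, and the bound $\frac{1-(q-1)\Hq(p_0)}{2}\geq 2^{-q}$ (via $\Hq(p_0)\leq\Hq(1/2)$) is exactly where the $2^{-q}$ in $g_q$ originates. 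You instead keep $\rho'_0$ clean --- a pure ancilla tensored with the two-state average, so that $\Sq(\rho'_0)$ equals $\QJTq(\ketbra{\psi_0}{\psi_0},\ketbra{\psi_1}{\psi_1})=\Hq\rbra{\frac{1-c}{2}}$ on the nose --- and calibrate $\rho'_1$ to the midpoint threshold $\tau$. Both arguments then invoke the same three ingredients (\Cref{lemma:td-leq-QJSq}, \Cref{lemma:QJSq-leq-traceDistLike}, and \Cref{lemma:Tsallis-binary-entropy-upper-bound}), so your route buys a cleaner computation --- no pseudo-additivity bookkeeping --- and a gap of $\frac{1}{2}\Hq(1/2)\rbra{1-\beta^q-\sqrt{1-\alpha^2}}$, which exceeds the stated $g_q(n+1)$ by a factor of $2^{q-1}$. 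The loose end you flag, that $\Hq(p)=\tau$ has no closed form, is shared by the paper, whose $p_0$ is likewise defined only implicitly by \Cref{eq:LowRankTsallisQED-hardness-chosenParameter}; your binary-search fix is adequate, and in the intended application (\Cref{thm:TsallisQE-hardness}) the gap $g$ is only required down to $1/\poly(n)$ in any case, so an inverse-polynomial perturbation of $\tau$ is harmless even at $q=1$, where your slack $\frac{1}{2}\geq 2^{-q}$ degenerates to equality.
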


\begin{proof}
    Our proof strategy is inspired by the proof of~\cite[Corollary 4.3 and Lemma 4.4]{Liu23}. We begin by considering the following constant-rank quantum states $\rho'_0$ and $\rho'_1$, which can be prepared by the quantum circuits $Q'_0$ and $Q'_1$, respectively:
    \begin{align*}
        \rho'_0 &\coloneqq (p_0 \ketbra{0}{0} + p_1 \ketbra{1}{1}) \otimes \frac{1}{2}  \left( \ketbra{\psi_0}{\psi_0} + \ketbra{\psi_1}{\psi_1} \right)\\
        \rho'_{1} &\coloneqq \frac{1}{2} \ketbra{0}{0} \otimes \ketbra{\psi_0}{\psi_0} +  \frac{1}{2} \ketbra{1}{1} \otimes \ketbra{\psi_1}{\psi_1}.
    \end{align*}
    Here, $(p_0,p_1)$ is some two-element probability distribution that will be specified later. Moreover, for any $b\in\binset$, the quantum circuit $Q'_b$ uses $O\rbra{1}$ queries to controlled-$Q_0$ and controlled-$Q_1$ as well as $O(1)$ one- and two-qubit gates, as presented in~\cite[Figure 1 and Figure 2]{Liu23}. 

    Using the pseudo-additivity of $\Sq$ (\Cref{lemma:Tsallis-entropy-pseudoAdditivity}),  we can obtain that: 
    \begin{subequations}
        \label{eq:LowRankTsallisQED-hardness-state0}
        \begin{align}
            \Sq(\rho'_0) &= \Hq(p_0) + \Sq\rbra*{\frac{\ketbra{\psi_0}{\psi_0} + \ketbra{\psi_1}{\psi_1}}{2}} - (q-1) \cdot \Hq(p_0) \cdot \Sq\rbra*{\frac{\ketbra{\psi_0}{\psi_0} + \ketbra{\psi_1}{\psi_1}}{2}}\\
            &= \Hq(p_0) + \rbra*{1-(q-1)\Hq(p_0)} \cdot \Sq\rbra*{\frac{\ketbra{\psi_0}{\psi_0} + \ketbra{\psi_1}{\psi_1}}{2}}.
        \end{align}
    \end{subequations}
    
    By the joint $q$-Tsallis entropy theorem (\Cref{lemma:Tsallis-joint-entropy-theorem}), we have:
    \begin{equation}
        \label{eq:LowRankTsallisQED-hardness-state1}
        \Sq(\rho'_1) = \Hq(1/2) + 2^{-q} (\Sq(\ketbra{\psi_0}{\psi_0}) + \Sq(\ketbra{\psi_1}{\psi_1})) = \Hq(1/2).
    \end{equation}

    Combining \Cref{eq:LowRankTsallisQED-hardness-state0,eq:LowRankTsallisQED-hardness-state1}, we conclude that:
    \begin{subequations}
        \label{eq:LowRankTsallisQED-hardness-formula}
        \begin{align}
            \Sq(\rho'_0) - \Sq(\rho'_1) &= \rbra*{1-(q-1)\Hq(p_0)} \cdot \Sq\!\left(\frac{\ketbra{\psi_0}{\psi_0} + \ketbra{\psi_1}{\psi_1}}{2}\right) + \Hq(p_0) - \Hq\Big(\frac{1}{2}\Big) \\
            &= \rbra*{1-(q-1)\Hq(p_0)} \cdot \QJTq(\ketbra{\psi_0}{\psi_0},\ketbra{\psi_1}{\psi_1}) + \Hq(p_0) - \Hq\Big(\frac{1}{2}\Big).
        \end{align}
    \end{subequations}

    Next, we choose $p_0 \in (0,1/2)$ satisfying the following equality:
    \begin{equation}
        \label{eq:LowRankTsallisQED-hardness-chosenParameter}
        \Hq\Big(\frac{1}{2}\Big) - \Hq(p_0) = \frac{1-(q-1)\Hq(p_0)}{2} \left( \Hq\Big(\frac{1}{2}\Big) - \Hq\Big( \frac{1-\alpha}{2} \Big) + \Hq\Big(\frac{1}{2}\Big) \cdot \beta^q \right).
    \end{equation}
    
    As a consequence, we can derive that: 
    \begin{itemize}
        \item For the case where $\td(\ketbra{\psi_0}{\psi_0},\ketbra{\psi_1}{\psi_1}) \geq \alpha$, plugging the lower bound on $\QJTq$ in terms of the trace distance (\Cref{lemma:td-leq-QJSq}) into \Cref{eq:LowRankTsallisQED-hardness-formula,eq:LowRankTsallisQED-hardness-chosenParameter}, we obtain
        \begin{align*}
            \Sq(\rho'_0) - \Sq(\rho'_1) &\geq \rbra*{1-(q-1)\Hq(p_0)} \cdot \left( \Hq\Big(\frac{1}{2}\Big) - \Hq\Big( \frac{1-\alpha}{2} \Big) \right) + \Hq(p_0) - \Hq\Big(\frac{1}{2}\Big)\\
            &= \frac{1-(q-1)\Hq(p_0)}{2} \left( \Hq\Big(\frac{1}{2}\Big) \cdot \rbra*{1-\beta^q} - \Hq\Big( \frac{1-\alpha}{2} \Big)\right) \coloneqq \tilde{g}_q.
        \end{align*}
        
        \item For the case where $\td(\ketbra{\psi_0}{\psi_0},\ketbra{\psi_1}{\psi_1}) \leq \beta$, plugging the upper bound on $\QJTq$ in terms of the trace distance (\Cref{lemma:QJSq-leq-traceDistLike}) into \Cref{eq:LowRankTsallisQED-hardness-formula,eq:LowRankTsallisQED-hardness-chosenParameter}, we obtain
        \begin{align*}
            \Sq(\rho'_0) - \Sq(\rho'_1) &\leq \rbra*{1-(q-1)\Hq(p_0)} \cdot \beta^q \cdot \Hq\Big(\frac{1}{2}\Big) + \Hq(p_0) - \Hq\Big(\frac{1}{2}\Big)\\
            &= - \frac{1-(q-1)\Hq(p_0)}{2} \left( \Hq\Big(\frac{1}{2}\Big) \cdot \rbra*{1-\beta^q} - \Hq\Big( \frac{1-\alpha}{2} \Big) \right) = - \tilde{g}_q.
        \end{align*}
    \end{itemize}

    It is left to show a lower bound on $\tilde{g}(n)$. Using $\Hq(x) \leq \Hq(1/2)$ in \Cref{lemma:Tsallis-entropy-properties}, we have
    \begin{equation}
        \label{eq:LowRankTsallisQED-hardness-coefficient}
        \frac{1 - (q-1) \cdot \Hq(p_0)}{2} \geq \frac{1}{2} - \frac{q-1}{2} \cdot \Hq\Big(\frac{1}{2}\Big) = 2^{-q}. 
    \end{equation}

    Plugging the Tsallis binary entropy upper bound (\Cref{lemma:Tsallis-binary-entropy-upper-bound}) and \Cref{eq:LowRankTsallisQED-hardness-coefficient} into $\tilde{g}(n)$, we complete the proof by concluding the following: 
    \[ \tilde{g}_q(n) \geq 2^{-q} \cdot \Hq(1/2) \cdot \left(1- \beta(n)^q - \sqrt{1-\alpha^2(n)}\right) \coloneqq g_q(n+1) = g_q(n'). \qedhere\]
\end{proof}

\subsection{Mixed-state reductions}
\label{subsec:mixed-state-reductions}

In this subsection, we present two reductions for mixed states. The first reduction is from the trace distance between two $n$-qubit states (\QSD{}), to the quantum $q$-Tsallis entropy difference between two new $(n+1)$-qubit states (\TsallisQED{}), for $1 \leq q \leq 2$, under appropriate assumptions about $\Sq(\rho_0)$ and $\Sq(\rho_1)$, as stated in \Cref{lemma:reduction-QSD-TsallisQED}. The second reduction is from the trace distance between an $n$-qubit quantum state (\QSCMM{}) and the $n$-qubit maximally mixed state to the quantum $q$-Tsallis entropy of the state (\TsallisQEA{}) for $q=1+\frac{1}{n-1}$, as stated in \Cref{lemma:reduction-MaxMixedQSD-TsallisQEA}. 

\subsubsection{\QSD{} \texorpdfstring{$\leq$}{≤} \TsallisQED{} for \texorpdfstring{$1\leq q \leq 2$}{1≤q≤2}}

\begin{lemma}[$\QSD\leq\TsallisQED$]
    \label{lemma:reduction-QSD-TsallisQED}
    Let $Q_0$ and $Q_1$ be quantum circuits acting on $m$ qubit, defined in \Cref{def:TsallisQED}, that prepares the purification of $n$-qubit mixed states $\rho_0$ and $\rho_1$, respectively. For any $b \in \binset$, there is a new quantum circuits $Q'_b$ acting on $m+3$ qubits, requiring $O(1)$ queries to controlled-$Q_0$ and controlled-$Q_1$, as well as $O(1)$ one- and two- qubit gates, that prepares a new $n'$-qubit mixed state $\rho'_b$, where $n' \coloneqq n+1$, such that\emph{:} For any $\rho_0$ and $\rho_1$ satisfying $\max\{\Sq(\rho_0),\Sq(\rho_1)\} \leq \gamma(n)$ with $\Sq(I/2) \leq \gamma(n) \leq \Sq\rbra*{(I/2)^{\otimes n}}$, any $\varepsilon(n) \in (0,1/2)$, and any $q\in[1,2]$, there is a $g(n)>0$ with appropriate ranges of $\gamma$, $\varepsilon$, and $n$ such that
    \begin{align*}
        \td(\rho_0,\rho_1) &\geq 1 - \varepsilon(n) &~\Rightarrow~  \Sq(\rho'_0) - \Sq(\rho'_1) \geq g_q(n') = g_q(n+1),\\
        \td(\rho_0,\rho_1) &\leq \varepsilon(n) &~\Rightarrow~ \Sq(\rho'_1) - \Sq(\rho'_0) \geq g_q(n') = g_q(n+1),
    \end{align*}
    where $g_q(n) \coloneqq \frac{1}{2} \Hq\rbra*{\frac{1}{2}} - \gamma(n) \rbra*{\frac{1}{2} \!-\! \frac{1}{2^q}} - \rbra*{\frac{1}{2}\!+\!\frac{1}{2^q}} \rbra*{ \frac{\varepsilon(n)^q}{2^q}  \ln_q\!\big(2^{n}\big) \!+\! \Hq\rbra*{\frac{1}{2}} \sqrt{\varepsilon(n) (2\!-\!\varepsilon(n))} }.$
\end{lemma}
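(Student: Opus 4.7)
The plan is to adapt the pure-state reduction of \Cref{lemma:reduction-PureQSD-ConstRankTsallisQED} to the mixed-state setting. I introduce one extra ancillary qubit and define
\[
\rho'_0 \coloneqq \rbra*{p_0\ketbra{0}{0} + p_1\ketbra{1}{1}}\otimes \frac{\rho_0+\rho_1}{2}, \qquad
\rho'_1 \coloneqq \frac{1}{2}\ketbra{0}{0}\otimes\rho_0 + \frac{1}{2}\ketbra{1}{1}\otimes\rho_1,
\]
with a bias $p_0\in(0,1/2)$ (and $p_1=1-p_0$) to be tuned later. Both states can be prepared by circuits $Q'_0,Q'_1$ acting on $m+3$ qubits, using $O(1)$ queries to controlled-$Q_0$ and controlled-$Q_1$ together with $O(1)$ one- and two-qubit gates, following the recipe in~\cite[Figures 1 and 2]{Liu23}. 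Note that neither state is required to have constant rank, so this construction applies to the general \TsallisQED{} problem.

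Applying pseudo-additivity of $\Sq$ (\Cref{lemma:Tsallis-entropy-pseudoAdditivity}) to the product state $\rho'_0$ and the joint $q$-Tsallis entropy theorem (\Cref{lemma:Tsallis-joint-entropy-theorem}) to the classical-quantum state $\rho'_1$, and then substituting the identity $\Sq\rbra*{\frac{\rho_0+\rho_1}{2}}=\QJTq(\rho_0,\rho_1)+\frac{1}{2}(\Sq(\rho_0)+\Sq(\rho_1))$, I will obtain
\[
\Sq(\rho'_0)-\Sq(\rho'_1) = \rbra*{\Hq(p_0)-\Hq\rbra*{\tfrac{1}{2}}} + c(p_0)\,\QJTq(\rho_0,\rho_1) + \rbra*{\tfrac{c(p_0)}{2}-\tfrac{1}{2^q}}\rbra*{\Sq(\rho_0)+\Sq(\rho_1)},
\]
where $c(p_0)\coloneqq 1-(q-1)\Hq(p_0)\geq 2^{1-q}$. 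The cross-term coefficient vanishes precisely at $p_0=1/2$ and is strictly positive for $p_0<1/2$, which is the regime used to produce the $\gamma$-dependent offset in $g_q(n)$.

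For the yes case ($\td(\rho_0,\rho_1)\geq 1-\varepsilon$), I apply the lower bound $\QJTq\geq \Hq(1/2)-\Hq(\varepsilon/2)$ from \Cref{lemma:td-leq-QJSq}, combined with $\Hq(\varepsilon/2)\leq \Hq(1/2)\sqrt{\varepsilon(2-\varepsilon)}$ from \Cref{thm:Tsallis-binary-entropy-bounds}, and bound the cross term using $\Sq(\rho_i)\geq 0$. For the no case ($\td(\rho_0,\rho_1)\leq\varepsilon$), rather than invoking the looser bound $\QJTq\leq \Hq(1/2)\td^q$ from \Cref{lemma:QJSq-leq-traceDistLike}---which does not produce the $\ln_q(2^n)$ factor appearing in $g_q(n)$---I apply the Fannes-type inequality (\Cref{lemma:Fannes-inequality-Sq}) to the pairs $\rbra*{\frac{\rho_0+\rho_1}{2},\rho_b}$ for $b\in\{0,1\}$. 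Since $\td\rbra*{\frac{\rho_0+\rho_1}{2},\rho_b}=\frac{1}{2}\td(\rho_0,\rho_1)\leq\varepsilon/2$ and these states live in dimension $N=2^n$, averaging over $b$ yields
\[
\QJTq(\rho_0,\rho_1) \leq \rbra*{\varepsilon/2}^q\ln_q(2^n-1)+\Hq(\varepsilon/2) \leq \frac{\varepsilon^q}{2^q}\ln_q(2^n)+\Hq(1/2)\sqrt{\varepsilon(2-\varepsilon)},
\]
which is exactly the $\Delta(\varepsilon,n)$ factor contributing to $g_q(n)$; the cross term is then bounded by $\Sq(\rho_i)\leq\gamma$.

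The main obstacle will be the algebraic bookkeeping. Using $(q-1)\Hq(1/2)=1-2^{1-q}$, the coefficients $\tfrac{1}{2}\pm\tfrac{1}{2^q}$ in the statement rewrite as $\tfrac{1\pm 2^{1-q}}{2}$, suggesting that $p_0$ is tuned so that the yes-case lower bound and the negated no-case upper bound on $\Sq(\rho'_0)-\Sq(\rho'_1)$ both majorize the claimed $g_q(n+1)$ simultaneously. Determining this balance amounts to matching the constant offset $\Hq(p_0)-\Hq(1/2)$ and the cross-term contribution $2\gamma\rbra*{\tfrac{c(p_0)}{2}-2^{-q}}$ against the stated coefficients; verifying the ``appropriate ranges of $\gamma,\varepsilon,n$'' clause then reduces to checking that within the hypothesized regime $\Hq(1/2)\leq\gamma(n)\leq\ln_q(2^n)$ and for sufficiently small $\varepsilon(n)$, the resulting $g_q(n+1)$ remains positive.
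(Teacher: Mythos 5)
Your proposal follows essentially the same route as the paper's proof: the identical two-state construction with a biased ancilla, the same decomposition via pseudo-additivity and the joint $q$-Tsallis entropy theorem, the lower bound on $\QJTq$ plus the binary-entropy bound in the yes case, and — crucially, as you correctly identified — Fannes' inequality applied to $\rbra*{\rho_+,\rho_b}$ rather than \Cref{lemma:QJSq-leq-traceDistLike} in the no case to produce the $\ln_q(2^n)$ term. The remaining bookkeeping resolves exactly as you anticipate: the paper fixes the bias $\vartheta$ by $\Hq(\vartheta)=\frac{1}{2}\Hq\rbra*{\frac{1}{2}}$, which makes $c(\vartheta)=\frac{1}{2}+\frac{1}{2^q}$ and yields the stated coefficients, and then checks that the no-case bound dominates so a single $g_q$ serves both cases.
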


\begin{proof}
Our proof strategy is somewhat inspired by~\cite[Section 5.4]{BASTS10}. We start by considering the following mixed states $\rho'_0$ and $\rho'_1$:
\begin{align*}
    \rho'_0 &\coloneqq \rbra*{\vartheta\ketbra{0}{0} + (1-\vartheta)\ketbra{1}{1}} \otimes \rho_+, \text{ where } 2\Hq(\vartheta)=\Hq\rbra*{\frac{1}{2}} \text{ and }\rho_+ \coloneqq \frac{\rho_0+\rho_1}{2},\\
    \rho'_1 &\coloneqq \frac{1}{2} \ketbra{0}{0} \otimes \rho_0 + \frac{1}{2} \ketbra{1}{1} \otimes \rho_1.
\end{align*}

These states $\rho'_0$ and $\rho'_1$ can be prepared by the quantum circuits $Q'_0$ and $Q'_1$, respectively. For instance, adapting the constructions in~\cite[Figure 1 and Figure 2]{Liu23}, for any $b\in\binset$, the quantum circuit $Q'_b$ uses $O(1)$ queries to controlled-$Q_0$ and controlled-$Q_1$, as well as $O(1)$ one- and two-qubit gates. 

Utilizing the pseudo-additivity of $\Sq$ (\Cref{lemma:Tsallis-entropy-pseudoAdditivity}), we have: 
\begin{equation}
    \label{eq:QSD-to-TsallisQED-pseudoAdditivity}
    \Sq(\rho'_0) 
    = \Hq\rbra*{\vartheta} + \rbra*{1-(q-1) \Hq\rbra*{\vartheta}} \Sq(\rho_+)
    = \frac{1}{2}\Hq\rbra*{\frac{1}{2}} + \rbra*{1-\frac{q-1}{2} \cdot \Hq\rbra*{\frac{1}{2}}} \Sq(\rho_+).
\end{equation}

Using the joint $q$-Tsallis entropy theorem (\Cref{lemma:Tsallis-joint-entropy-theorem}), we obtain: 
\begin{equation}
    \label{eq:QSD-to-TsallisQED-RhoP0}
    \Sq(\rho'_1) = \Hq\rbra*{\frac{1}{2}} + \frac{1}{2^{q}} \rbra*{\Sq(\rho_0) +  \Sq(\rho_1)}. 
\end{equation}

Combining \Cref{eq:QSD-to-TsallisQED-RhoP0,eq:QSD-to-TsallisQED-pseudoAdditivity}, we obtain: 
\begin{equation}
    \label{eq:eq:QSD-to-TsallisQED-diffExpression}
    \Sq(\rho'_0) - \Sq(\rho'_1) = \rbra*{1- \frac{q-1}{2} \cdot \Hq\rbra*{\frac{1}{2}}} \Sq(\rho_+) -\frac{1}{2} \Hq\rbra*{\frac{1}{2}} - \frac{1}{2^{q}} \rbra*{\Sq(\rho_0) +  \Sq(\rho_1)}
\end{equation}

Next, we can consider the following two cases: 
\begin{itemize}[leftmargin=1em]
    \item For the case where $\td(\rho_0,\rho_1) \geq 1-\varepsilon$,  using the lower bound on \QJTq (\Cref{lemma:td-leq-QJSq}), we have: 
    \begin{equation}
        \label{eq:QSD-to-TsallisQED-QJTqLowerBound}
        \Sq(\rho_+) - \frac{1}{2} \rbra*{\Sq(\rho_0)+\Sq(\rho_1)} = \QJTq(\rho_0,\rho_1) \geq \Hq\rbra*{\frac{1}{2}} - \Hq\rbra*{\frac{1-\td(\rho_0,\rho_1)}{2}}.
    \end{equation}
    Substituting \Cref{eq:QSD-to-TsallisQED-QJTqLowerBound} into \Cref{eq:eq:QSD-to-TsallisQED-diffExpression}, we obtain:
    \begin{align*}
        &\Sq(\rho'_0) - \Sq(\rho'_1) \\
        \geq~& \rbra*{1- \tfrac{q-1}{2} \cdot \Hq\rbra*{\tfrac{1}{2}}} \rbra*{\tfrac{1}{2} \rbra*{\Sq(\rho_0)+\Sq(\rho_1)} + \Hq\rbra*{\tfrac{1}{2}} - \Hq\rbra*{\tfrac{\varepsilon}{2}}} - \tfrac{1}{2} \Hq\rbra*{\tfrac{1}{2}} - \tfrac{1}{2^{q}}\rbra*{\Sq(\rho_0) +  \Sq(\rho_1)} \\
        \geq~& \rbra*{\tfrac{1}{2} - \tfrac{1}{2^q} - \tfrac{q-1}{4} \Hq\rbra*{\tfrac{1}{2}}} \rbra*{\Sq(\rho_0)+\Sq(\rho_1)} + \rbra*{1 - \tfrac{q-1}{2} \Hq\rbra*{\tfrac{1}{2}}} \Hq\rbra*{\tfrac{1}{2}} \rbra*{1-\sqrt{\varepsilon(2-\varepsilon)}} - \tfrac{1}{2} \Hq\rbra*{\tfrac{1}{2}}\\
        \geq~& \rbra*{\tfrac{1}{2}+\tfrac{1}{2^q}} \Hq\rbra*{\tfrac{1}{2}} \rbra*{1 - \sqrt{\varepsilon(2-\varepsilon)}} - \tfrac{1}{2} \Hq\rbra*{\tfrac{1}{2}} \\
        =~& \tfrac{1}{2^q} \Hq\rbra*{\tfrac{1}{2}} - \rbra*{\tfrac{1}{2}+\tfrac{1}{2^q}} \Hq\rbra*{\tfrac{1}{2}} \sqrt{\varepsilon(2-\varepsilon)} \coloneqq \tilde{g}_q^\texttt{Y}(\varepsilon).
    \end{align*}
    Here, the third line uses the Tsallis binary entropy upper bound (\Cref{lemma:Tsallis-binary-entropy-upper-bound}) and the fact that $1 - \tfrac{q-1}{2} \Hq\rbra*{\tfrac{1}{2}} > 0$ for $q \in [1,2]$. The last line relies on the following facts: (a) $\Sq(\rho) \geq 0$ for any quantum state $\rho$; (b) $2\rbra*{\tfrac{1}{2} - \tfrac{1}{2^q} - \tfrac{q-1}{4} \Hq\rbra*{\tfrac{1}{2}}} = \frac{1}{2}-\frac{1}{2^q} \geq 0$ for $q\in[1,2]$; and (c) $1 - \tfrac{q-1}{2} \Hq\rbra*{\tfrac{1}{2}} = \tfrac{1}{2}+\tfrac{1}{2^q}$. 

    \item For the case where $\td(\rho_0,\rho_1) \leq \varepsilon$, by Fannes' inequality for the $q$-Tsallis entropy (\Cref{lemma:Fannes-inequality-Sq}), we have: 
    \begin{subequations}
    \label{eq:QSD-to-TsallisQED-Fannes}
    \begin{align}
        \Sq(\rho_+) & \leq \frac{\abs*{\Sq(\rho_+) - \Sq(\rho_0)}}{2} + \frac{\abs*{\Sq(\rho_+) - \Sq(\rho_1)}}{2}  + \frac{\Sq(\rho_0) + \Sq(\rho_1)}{2} \\
        &\leq \td(\rho_+,\rho_b)^q \cdot \ln_q\rbra*{2^n-1} + \Hq(\td(\rho_+,\rho_b)) + \frac{\Sq(\rho_0) + \Sq(\rho_1)}{2} \\
        &\leq \rbra*{\frac{\td(\rho_0,\rho_1)}{2}}^q \ln_q\!\big(2^n\big) + \Hq\rbra*{\frac{\td(\rho_0,\rho_1)}{2}} + \frac{\Sq(\rho_0) + \Sq(\rho_1)}{2} 
    \end{align}
    \end{subequations}
    Here, the first line is due to the triangle inequality, and the last line is because $\ln_q(x)$ is monotonically increasing on $x> 0$ for any fixed $q>1$. 
    
    Plugging \Cref{eq:QSD-to-TsallisQED-Fannes} into \Cref{eq:eq:QSD-to-TsallisQED-diffExpression}, we can derive that: 
    \begin{align*}
        &\Sq(\rho'_0) - \Sq(\rho'_1)\\
        \leq~& \rbra*{1- \tfrac{q-1}{2} \Hq\rbra*{\tfrac{1}{2}}} \rbra*{\rbra*{\tfrac{\varepsilon}{2}}^q \ln_q\!\big(2^n\big) + \Hq\rbra*{\tfrac{\varepsilon}{2}} + \tfrac{1}{2}\rbra*{\Sq(\rho_0) + \Sq(\rho_1)} }  - \tfrac{1}{2} \Hq\rbra*{\tfrac{1}{2}} - \tfrac{1}{2^{q}}\rbra*{\Sq(\rho_0) +  \Sq(\rho_1)}\\
        \leq~& \rbra*{\tfrac{1}{2} \!-\! \tfrac{1}{2^{q}} \!-\! \tfrac{q-1}{4} \Hq\rbra*{\tfrac{1}{2}}}  \rbra*{\Sq(\rho_0) \!+\!  \Sq(\rho_1)} \!+\!  \rbra*{1-\tfrac{q-1}{2}\Hq\rbra*{\tfrac{1}{2}}} \rbra*{ \rbra*{\tfrac{\varepsilon}{2}}^q \ln_q\!\big(2^n\big) \!+\! \Hq\rbra*{\tfrac{1}{2}} \sqrt{\varepsilon (2\!-\!\varepsilon)} }  \!-\! \tfrac{1}{2} \Hq\rbra*{\tfrac{1}{2}} \\
        \leq~& \rbra*{\tfrac{1}{2} - \tfrac{1}{2^q}} \cdot \gamma + \rbra*{\tfrac{1}{2}+\tfrac{1}{2^q}} \rbra*{ \rbra*{\tfrac{\varepsilon}{2}}^q \cdot \ln_q\!\big(2^n\big) + \Hq\rbra*{\tfrac{1}{2}} \sqrt{\varepsilon (2-\varepsilon)} } - \tfrac{1}{2} \Hq\rbra*{\tfrac{1}{2}} \coloneqq -\tilde{g}_q^\texttt{N}(\varepsilon,n,\gamma).
    \end{align*}
    Here, the third line uses the Tsallis binary entropy upper bound (\Cref{lemma:Tsallis-binary-entropy-upper-bound}) and the fact that $1 - \tfrac{q-1}{2} \Hq\rbra*{\tfrac{1}{2}} > 0$ for $q \in [1,2]$. The last line relies on the following facts: (a) $1 - \tfrac{q-1}{2} \Hq\rbra*{\tfrac{1}{2}} = \tfrac{1}{2}+\tfrac{1}{2^q}$; (b) $2\rbra*{\tfrac{1}{2} - \tfrac{1}{2^q} - \tfrac{q-1}{4} \Hq\rbra*{\tfrac{1}{2}}} = \frac{1}{2}-\frac{1}{2^q} \geq 0$ for $q\in[1,2]$; and (c) $\Sq(\rho) \leq \gamma \leq \Sq\rbra*{(I/2)^{\otimes n}}$ for any $n$-qubit state $\rho$.  
\end{itemize}

It is evident that $\tilde{g}^{\texttt{N}}_q(\varepsilon,n,\gamma)$ is monotonically decreasing on $\gamma \geq 0$ for any fixed $q$, $\varepsilon$, and $n$. Consequently, it remains to show that $\tilde{g}^{\texttt{Y}}_q(\varepsilon) \geq \tilde{g}^{\texttt{N}}_q(\varepsilon,n,\Hq(1/2)) \geq \tilde{g}^{\texttt{N}}_q(\varepsilon,n,\gamma)$ for $\Hq(1/2) = \Sq(I/2) \leq \gamma \leq \Sq\rbra*{(I/2)^{\otimes n}}$. In particular, by noting that $(\varepsilon/2)^q \cdot \ln_q\big(2^n\big) \geq 0$ for $q\geq 1$ and $\varepsilon\geq0$, we obtain:
\begin{align*}
    \tilde{g}^{\texttt{Y}}_q(\varepsilon) - \tilde{g}^{\texttt{N}}_q\rbra*{\varepsilon,n,\Hq\!\rbra*{\frac{1}{2}}} &= \frac{1}{2^q} \Hq\!\rbra*{\frac{1}{2}} + \rbra*{\frac{1}{2}+\frac{1}{2^q}} \rbra*{\frac{\varepsilon}{2}}^q \ln_q\!\big(2^n\big) - \frac{1}{2} \Hq\!\rbra*{\frac{1}{2}} + \rbra*{\frac{1}{2} - \frac{1}{2^q}} \Hq\!\rbra*{\frac{1}{2}}\\
    &= \rbra*{\frac{1}{2}+\frac{1}{2^q}} \rbra*{\frac{\varepsilon}{2}}^q \ln_q\!\big(2^n\big)\\
    & \geq 0.
\end{align*}

Therefore, we complete the proof by choosing $g_q(n) = \tilde{g}_q^{\texttt{N}}\rbra*{\varepsilon(n), n, \gamma(n)}$, specifically:
\[ g_q(n) \coloneqq \frac{1}{2} \Hq\rbra*{\frac{1}{2}} \!-\! \gamma(n) \rbra*{\frac{1}{2} \!-\! \frac{1}{2^q}} \!-\! \rbra*{\frac{1}{2}\!+\!\frac{1}{2^q}} \rbra*{ \frac{\varepsilon(n)^q}{2^q} \ln_q\!\big(2^{n}\big) \!+\! \Hq\rbra*{\frac{1}{2}} \sqrt{\varepsilon(n) (2\!-\!\varepsilon(n))} }. \qedhere\]
\end{proof}

\subsubsection{\QSCMM{} \texorpdfstring{$\leq$}{≤} \TsallisQEA{} for \texorpdfstring{$q(n)=1+\frac{1}{n-1}$}{q(n)=1+1/(n-1)}}

\begin{lemma}[$\QSCMM\leq\TsallisQEA$]
    \label{lemma:reduction-MaxMixedQSD-TsallisQEA}
    Let $Q$ be a quantum circuit acting on $m$ qubit, defined in \Cref{def:TsallisQEA}, that prepares the purification of $n$-qubit mixed states $\rho$, respectively. For any $\rho$, any $n \geq 5$, and any $q(n) = 1+1/(n-1)$, let $t(n) \coloneqq \frac{1}{4}\big(3n-n^{1+\frac{1}{n}}-1\big)$, we have\emph{:}
    \begin{align*}
        \td\rbra*{\rho,(I/2)^{\otimes n}} &\leq 1/n &~\Rightarrow~ \Sq(\rho) &> t(n) + 1/150,\\
        \td\rbra*{\rho,(I/2)^{\otimes n}} &\geq 1-1/n &~\Rightarrow~  \Sq(\rho) &< t(n) - 1/150.
    \end{align*}
\end{lemma}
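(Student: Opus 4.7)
The plan is to exploit the fact that the maximally mixed state $(I/2)^{\otimes n} = 2^{-n} I$ is proportional to the identity and so commutes with every density matrix, which reduces the quantum statement to its purely classical analog on the eigenvalue distribution of $\rho$; then I invoke the Tsallis entropy bounds from \Cref{lemma:inequality-uniformTV-TsallisEA} with $N = 2^n$ and $q = 1 + 1/(n-1)$, and finish with an explicit arithmetic verification that the threshold $t(n)$ is sandwiched between the two resulting bounds with gap at least $1/150$.

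First I would spectrally decompose $\rho = \sum_{i \in [2^n]} \lambda_i \ketbra{v_i}{v_i}$ (padding by zero eigenvalues if the rank is less than $2^n$) so that, writing $p = (\lambda_1, \ldots, \lambda_{2^n})$ and letting $\nu$ be the uniform distribution on $[2^n]$, one obtains simultaneously
\[
\Sq(\rho) = \Hq(p) \quad \text{and} \quad \td\rbra*{\rho, (I/2)^{\otimes n}} = \frac{1}{2}\sum_{i \in [2^n]} \abs[\big]{\lambda_i - 2^{-n}} = \TV(p, \nu).
\]
Note that $1/q = 1 - 1/n$ and $q/(q-1) = n \leq 2^n$, so the hypotheses of both parts of \Cref{lemma:inequality-uniformTV-TsallisEA} are met in the NO case (and the lower bound of that lemma requires only $\TV(p,\nu) \leq 1 - 1/2^n$, which is automatic when $\TV \leq 1/n$).

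Applying the lemma and using $\ln_q(2^n) = (n-1)\rbra*{1 - 2^{-n/(n-1)}}$ and $\ln_q(2^n/n) = (n-1)\rbra*{1 - 2^{-n/(n-1)} n^{1/(n-1)}}$, the YES case gives
\[
\Sq(\rho) \;\geq\; \rbra*{1 - \tfrac{1}{n} - \tfrac{1}{2^n}}(n-1)\rbra*{1 - 2^{-n/(n-1)}} \;\eqqcolon\; L_{\mathrm{Y}}(n),
\]
and the NO case (using monotonicity of $\ln_q$) gives
\[
\Sq(\rho) \;\leq\; (n-1)\rbra*{1 - 2^{-n/(n-1)}\, n^{1/(n-1)}} \;\eqqcolon\; U_{\mathrm{N}}(n).
\]

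All that remains is the two numerical estimates $L_{\mathrm{Y}}(n) - t(n) > 1/150$ and $t(n) - U_{\mathrm{N}}(n) > 1/150$ for every $n \geq 5$, where $t(n) = \frac{1}{4}(3n - n^{1+1/n} - 1)$. I would handle this by a split argument: for $n$ in some finite base range $\cbra{5, 6, \ldots, n_0}$, verify both inequalities by direct numerical evaluation (the YES case at $n = 5$ gives $L_{\mathrm{Y}}(5) - t(5) \approx 0.00678 > 0.00667 \approx 1/150$, which is where the constant $1/150$ is forced); for $n > n_0$, use the expansions $2^{-n/(n-1)} = \tfrac{1}{2}\rbra*{1 - \tfrac{\ln 2}{n-1} + O(n^{-2})}$, $n^{1/(n-1)} = 1 + \tfrac{\ln n}{n-1} + O\rbra*{\tfrac{\ln^2 n}{n^2}}$, and $n^{1/n} = 1 + \tfrac{\ln n}{n} + O\rbra*{\tfrac{\ln^2 n}{n^2}}$ to show that both $L_{\mathrm{Y}}(n) - t(n)$ and $t(n) - U_{\mathrm{N}}(n)$ are of order $\tfrac{\ln n}{4} + O(1)$, hence eventually far above $1/150$; an intermediate monotonicity argument or a uniform Taylor remainder bound then bridges the two regimes. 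The main obstacle is precisely this boundary case at $n = 5$ in the YES direction: the margin is extremely tight, so the arithmetic here must be done with controlled (rather than asymptotic) error bounds, and the threshold $t(n)$ is evidently engineered to make this single inequality hold while leaving the NO side with comfortable slack.
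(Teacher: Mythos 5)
Your proposal is correct and follows essentially the same route as the paper: reduce to the classical eigenvalue distribution via commutativity with $(I/2)^{\otimes n}$, apply both halves of \Cref{lemma:inequality-uniformTV-TsallisEA} with $q=1+\frac{1}{n-1}$ (checking $1/q=1-1/n$ so the upper-bound hypothesis holds in the NO case), and verify numerically that $t(n)$ sits between the two bounds with margin $1/150$, the binding constraint being the YES case at $n=5$. The only cosmetic differences are that the paper relaxes the NO-case bound to $\ln_{1+1/n}(2^n(1-\td))$ using monotonicity of $\ln_q$ in $q$ (your $U_{\mathrm{N}}(n)$ is the slightly tighter untouched bound), and it handles the "all $n\geq 5$" claim by proving monotonicity of explicit auxiliary functions rather than your finite-check-plus-asymptotics split.
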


\begin{proof}
    Let $\rho = \sum_{i\in[2^{n}]} \lambda_i \ketbra{v_i}{v_i}$ be the spectral decomposition of $\rho$, where $\{v_i\}_{i\in[2^n]}$ is an orthonormal basis and $p \coloneq (\lambda_1, \cdots, \lambda_{2^n})$ is a probability distribution of dimension $2^n$. And let $\nu$ be the uniform distribution of dimension $2^n$. 
    Noting that $\rho$ and $(I/2)^{\otimes n}$ commute, we have $\td\rbra*{\rho, (I/2)^{\otimes n}} = \TV(p,\nu)$ and $\Sq(\rho) = \Hq(p)$. 
    
    Let $t(n) \coloneqq \frac{1}{4} \rbra*{3n -n^{1+\frac{1}{n}}-1 }$.
    Next, we can consider the following two cases: 
    \begin{itemize}[leftmargin=1em]
    \item For the case where $\td\rbra*{\rho,(I/2)^{\otimes n}} \leq 1/n$, by the lower bound on $\Hq(p)$ in \Cref{lemma:inequality-uniformTV-TsallisEA}, it follows that
        \begin{align*}
            \Sq(\rho) &\geq \ln_{1+\frac{1}{n-1}}\!\big(2^n\big) \cdot \rbra*{1 - \td\rbra*{\rho,(I/2)^{\otimes n}} - 2^{-n}}\\
            &\geq (n-1) \rbra*{1-\frac{1}{2} \cdot \Big(\frac{1}{2}\Big)^{\frac{1}{n-1}}}\rbra*{1 - \frac{1}{n} - 2^{-n}} \coloneqq \tau_{\texttt{Y}}(n).
        \end{align*}

    By a direct calculation, we obtain: 
    \begin{subequations}
    \label{eq:reduction-QSCMM-TsallisQEA-smallTD}
    \begin{align}
        \Sq(\rho) - t(n) &\geq \tau_{\texttt{Y}}(n) - t(n) = g_1(n) + g_2(n) + g_3(n) - \frac{7}{4},\\
        \text{where } g_1(n) &\coloneqq 2^{-n} + \frac{1-2^{\frac{n}{1-n}}}{n} + 2^{\frac{n^2}{1-n}} (n-1) + \frac{n}{4} \rbra*{1- 2^{\frac{1}{1-n}}},\\
        g_2(n) &\coloneqq 2^{\frac{1}{1-n}}-2^{-n} n, \quad g_3(n) \coloneqq \frac{n}{4} \rbra*{ n^{\frac{1}{n}}-2^{\frac{1}{1-n}}}.
    \end{align}
    \end{subequations}

    Through a fairly tedious calculation, we know that $g_1(n)$, $g_2(n)$, and $g_3(n)$  defined in \Cref{eq:reduction-QSCMM-TsallisQEA-smallTD} satisfy the properties in \Cref{fact:reduction-QSCMM-TsallisQEA-smallTD}, and the proof is deferred in \Cref{subsec:omitted-proofs-QJTq-reductions}. 

    \begin{restatable}{fact}{reductionQSCMMtoTsallisQEAsmallTD}
        \label{fact:reduction-QSCMM-TsallisQEA-smallTD}
        Let $g_1(n)$, $g_2(n)$, and $g_3(n)$ be functions defined in \Cref{eq:reduction-QSCMM-TsallisQEA-smallTD}. It holds that\emph{:}        
        \begin{enumerate}[label={\upshape(\arabic*)}, topsep=0.33em, itemsep=0.33em, parsep=0.33em] 
            \item For $n\geq 3$, $g_1(n) \geq 0$. 
            \item For $n\geq 3$, $g_2(n)$ and $g_3(n)$ are monotonically increasing. 
        \end{enumerate}        
    \end{restatable}

    Combining \Cref{eq:reduction-QSCMM-TsallisQEA-smallTD,fact:reduction-QSCMM-TsallisQEA-smallTD}, we obtain that: 
    \begin{equation}
        \label{eq:reduction-QSCMM-TsallisQEA-smallTD-nBound}
        \forall n \geq 5,~ \Sq(\rho) - t(n) \geq \tau_{\texttt{Y}}(n) - t(n) \geq g_2(n) + g_3(n) - \frac{7}{4} > \frac{1}{150}.
    \end{equation}
    
    \item For the case where $\td\rbra*{\rho,(I/2)^{\otimes n}} \geq 1-1/n$, by noting $\td\rbra*{\rho,(I/2)^{\otimes n}} q \geq \rbra*{1-\frac{1}{n}}  \rbra*{1+\frac{1}{n-1}} = 1$ and using the upper bound on $\Hq(p)$ in \Cref{lemma:inequality-uniformTV-TsallisEA}, it holds that
    \begin{align*}
        \Sq(\rho) &\leq \ln_{1+\frac{1}{n-1}}\rbra*{2^{n} \rbra*{1-\td\rbra*{\rho,(I/2)^{\otimes n}}}}\\
        &\leq \ln_{1+\frac{1}{n}}\rbra*{2^{n} \rbra*{1-\td\rbra*{\rho,(I/2)^{\otimes n}}}}\\
        &\leq n\rbra*{1 - \frac{1}{2} \cdot n^{1/n}} \coloneqq \tau_{\texttt{N}}(n).
    \end{align*}
    Here, the second line is because $\ln_{q}(x) < \ln_{q'}(x)$ for $q > q' > 0$ and $\frac{1}{n-1} > \frac{1}{n}$. 

    Similarly, a direct calculation implies that: 
    \begin{equation}
        \label{eq:reduction-QSCMM-TsallisQEA-largeTD}
        t(n) - \Sq(\rho) \geq t(n) - \tau_{\texttt{N}}(n) = \frac{g_4(n)-1}{4}, \text{ where } g_4(n) \coloneqq n \rbra*{ n^{\frac{1}{n}} - 1}.
    \end{equation}

    Next, we will prove that $g_4(n)$ is monotonically non-decreasing for $n \geq 2$. We proceed by expressing the first and second derivative of $g_4(n)$:
    \[\frac{\dd}{\dn} g_4(n) = \frac{n^{\frac{1}{n}}}{n} (n-\log (n)+1)-1, \text{ and }
        \frac{\dd^2}{\dn^2} g_4(n) = \frac{n^{\frac{1}{n}}}{n^3} \rbra*{ \rbra*{\log(n)-1}^2 - n}.
    \]
    Since $\sqrt{n} > \log{n}$, we know that $\frac{\dd^2}{\dn^2} g_4(n)$ has one zero at $n=1$. As $\frac{\dd^2}{\dn^2} g_4(n)\big|_{n=e}=-e <0$, we have that $\frac{\dd}{\dn} g_4(n)$ is monotonically decreasing for $n \geq 2$, and thus, $\frac{\dd}{\dn} g_4(n) \geq \lim_{n\rightarrow\infty} \frac{\dd}{\dn} g_4(n) = 0$ for $n \geq 2$. 
    Hence, we conclude that $g_4(n)$ is monotonically non-decreasing for $n \geq 2$. Consequently, combining with \Cref{eq:reduction-QSCMM-TsallisQEA-largeTD}, we obtain: 
    \begin{equation}
        \label{eq:reduction-QSCMM-TsallisQEA-largeTD-nBound}
        \forall n\geq 3,~ t(n) - \Sq(\rho) \geq t(n) - \tau_{\texttt{N}}(n) = \frac{g_4(n)-1}{4} > \frac{1}{13}.
    \end{equation}
    \end{itemize}

    Lastly, we finish the proof by comparing \Cref{eq:reduction-QSCMM-TsallisQEA-smallTD-nBound} with \Cref{eq:reduction-QSCMM-TsallisQEA-largeTD-nBound}. 
\end{proof}

\subsection{Computational hardness results}
\label{subsec:hardness-results}

In this subsection, we present the computational hardness results for various settings of \TsallisQED{} and \TsallisQEA{} by using our reductions established in \Cref{subsec:pure-state-reduction,subsec:mixed-state-reductions}. 

\subsubsection{\BQP{} hardness results}

\begin{theorem}[\ConstRankTsallisQED{} is \BQP{}-hard for $1\leq q \leq 2$]
    \label{thm:ConstRankTsallisQED-BQPhard}
    For any $q \in [1,2]$ and any $n\geq 3$, the following holds\emph{:} 
    \[\forall g_q(n) \in \sbra*{\frac{1}{\poly(n)}, 2^{-q} \Hq\rbra*{\frac{1}{2}} \rbra*{1-2^{-\frac{n}{2}+2} }},~ \ConstRankTsallisQED[g_q(n)] \text{ is } \BQP\text{-hard}. \]
\end{theorem}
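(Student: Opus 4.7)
The plan is a two-step reduction chain: first invoke the \BQP{}-hardness of \PureQSD{} (\cref{lemma:PureQSD-is-BQPhard}), then apply the pure-state reduction $\PureQSD \leq \ConstRankTsallisQED$ already established in \cref{lemma:reduction-PureQSD-ConstRankTsallisQED}. For a target \ConstRankTsallisQED{} instance on $n$ qubits, I would start from an $(n-1)$-qubit \PureQSD{} instance, setting the polynomial parameter in \cref{lemma:PureQSD-is-BQPhard} to $l(n-1) \coloneqq n$ (a linear polynomial in the \PureQSD{} input length $n-1$). This yields $\alpha(n-1) = 1-2^{-n}$ and $\beta(n-1) = 2^{-n}$, which both satisfy the precondition $\beta + \sqrt{1-\alpha^2} < 1$ of \cref{lemma:reduction-PureQSD-ConstRankTsallisQED} for every $n \geq 3$.

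Plugging these parameters into the achieved entropy gap from \cref{lemma:reduction-PureQSD-ConstRankTsallisQED} gives
\[
g_q(n) \;=\; 2^{-q}\Hq\rbra*{\tfrac{1}{2}}\rbra*{1 - 2^{-qn} - \sqrt{1-(1-2^{-n})^2}}.
\]
Using $q \geq 1$ and the elementary bound $1-(1-x)^2 \leq 2x$, this is at least $2^{-q}\Hq(\tfrac{1}{2})(1 - 2^{-qn} - 2^{(1-n)/2})$. The core numerical check is then to verify $2^{-qn} + 2^{(1-n)/2} \leq 2^{-n/2 + 7/5}$, which after dividing both sides by $2^{-n/2}$ reduces to $2^{n(1/2-q)} + \sqrt{2} \leq 2^{7/5}$. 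For $q \in [1,2]$ and $n \geq 3$, the first term is at most $2^{-3/2}$, and since $2^{7/5} - \sqrt{2} = \sqrt{2}(2^{9/10} - 1) \approx 1.22 > 2^{-3/2}$, the inequality holds with a comfortable margin. This pins down the constant $7/5$ appearing in the upper endpoint of the gap range.

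Finally, since the reduction of \cref{lemma:reduction-PureQSD-ConstRankTsallisQED} is a deterministic polynomial-time map and preserves the $O(1)$-rank property of the output states (both $\rho_0'$ and $\rho_1'$ are convex combinations of a constant number of pure states), and since any $g_q(n)$ in the claimed range is no larger than the achieved gap, the composition yields a Karp reduction from any \BQP{} language to $\ConstRankTsallisQED[g_q(n)]$. The main obstacle I anticipate is purely the numerical bookkeeping in the second paragraph: carefully tracking constants through the $\Hq$-based analysis and the $q$-dependent square-root term, and confirming that one uniform choice of $l(n-1)$ suffices simultaneously over all $q \in [1,2]$ and all $n \geq 3$, rather than needing a $q$-dependent tuning.
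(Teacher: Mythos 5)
Your proposal is correct and takes essentially the same route as the paper's proof: it composes the \BQP{}-hardness of \PureQSD{} (\cref{lemma:PureQSD-is-BQPhard}) with the pure-state reduction of \cref{lemma:reduction-PureQSD-ConstRankTsallisQED} and then verifies numerically that the achieved gap dominates $2^{-q}\Hq(1/2)\rbra{1-2^{-n/2+7/5}}$ for all $q\in[1,2]$ and $n\geq 3$. The only (immaterial) difference is that you polarize \PureQSD{} to error $2^{-n}$ rather than the paper's $2^{-(n-1)}$, which in fact gives your final inequality a more comfortable margin than the paper's, whose verification at $n=3$ needs the exact factor $\sqrt{1-2^{-n}}$ rather than the crude bound $\sqrt{1-\alpha^2}\leq 2^{(1-n)/2}$.
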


\begin{proof}
    Using \Cref{lemma:PureQSD-is-BQPhard}, we have that $\PureQSD\sbra*{\sqrt{1-2^{-2\hat{n}}}, 2^{(-\hat{n}+1)/2}}$ is \BQP{}-hard for $\hat{n}\geq 2$. Let $Q_0$ and $Q_1$ be the corresponding \BQP{}-hard instance such that these circuits are polynomial-size and prepare the pure states $\ketbra{\psi_0}{\psi_0}$ and $\ketbra{\psi_1}{\psi_1}$, respectively. 
    Leveraging the reduction from \PureQSD{} to \ConstRankTsallisQED{} (\Cref{lemma:reduction-PureQSD-ConstRankTsallisQED}), there are two polynomial-size quantum circuits $Q'_0$ and $Q'_1$, which prepares the purifications of constant-rank states $\rho'_0$ and $\rho'_1$, such that: For any $1 \leq q \leq 2$ and any $n = \hat{n}+1 \geq 3$, 
    \begin{align*}
        \td\rbra*{\ketbra{\psi_0}{\psi_0}, \ketbra{\psi_1}{\psi_1}} &\geq 1- 2^{-\hat{n}} &~\Rightarrow~  \Sq(\rho'_0) - \Sq(\rho'_1) \geq g_q(n) = g_q(\hat{n}+1),\\
        \td\rbra*{\ketbra{\psi_0}{\psi_0}, \ketbra{\psi_1}{\psi_1}} &\leq 2^{-\hat{n}} &~\Rightarrow~ \Sq(\rho'_1) - \Sq(\rho'_0) \geq g_q(n) = g_q(\hat{n}+1).
    \end{align*} 

    Hence, we complete the proof by a direct calculation: 
    \begin{align*}
        \forall n\geq 3, \quad g_q(n) &\coloneqq 2^{-q} \cdot \Hq(1/2) \cdot \rbra*{1 - 2^{-q(n-2)/2} - \sqrt{1 - \rbra*{1-2^{-2(n-1)}}}} \\ 
        &\geq 2^{-q} \cdot \Hq(1/2) \cdot \rbra*{1-2^{-(n-2)/2}-2^{-(n-1)}}\\
        &= 2^{-q} \cdot \Hq(1/2) \cdot \rbra*{1-2^{-\frac{n}{2}+2}} > 0. \qedhere
    \end{align*}
\end{proof}

\begin{theorem}[\ConstRankTsallisQEA{} is \BQP{}-hard under Turing reduction for $1\leq q \leq 2$]
    \label{thm:ConstRankTsallisQEA-BQPhard}
    For any $q \in [1,2]$ and any $n\geq 3$, the following holds\emph{:} 
    \[\ConstRankTsallisQEA \text{ with } g(n)=\Theta(1) \text{ is } \BQP\text{-hard under Turing reduction}. \]
\end{theorem}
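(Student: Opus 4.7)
The plan is to derive \BQP{}-hardness of \ConstRankTsallisQEA{} under Turing reduction from the \BQP{}-hardness of \ConstRankTsallisQED{} under Karp reduction, which was already established in \Cref{thm:ConstRankTsallisQED-BQPhard}. The vehicle is the standard binary-search reduction over the threshold parameter, exactly analogous to how \QEA{} yields \QED{} in~\cite{BASTS10}. Concretely, given a \ConstRankTsallisQED{} instance $(Q_0,Q_1)$ preparing constant-rank states $\rho_0,\rho_1$ with gap $g_{\texttt{ED}}(n)=\Theta(1)$, I note that the rank is bounded by a constant $r=O(1)$, so $0\leq \Sq(\rho_b) \leq \ln_q(r) \coloneqq M$ for some constant $M$ depending only on $q$ and $r$. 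I will fix the \ConstRankTsallisQEA{} precision as $g_{\texttt{EA}}(n) \coloneqq g_{\texttt{ED}}(n)/5 = \Theta(1)$.

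Assuming oracle access to $\calA$ solving \ConstRankTsallisQEA{} with gap $g_{\texttt{EA}}$, for each $b\in\binset$ I run a binary search over $t\in[0,M]$: at each step I query $\calA$ on $(Q_b,t)$ with the current midpoint $t$, replacing the interval by its right half on a yes-answer and by its left half on a no-answer. After $O(\log(M/g_{\texttt{EA}}))=O(1)$ iterations the interval has width at most $2g_{\texttt{EA}}$, and I output its midpoint $\widetilde s_b$. Once both $\widetilde s_0$ and $\widetilde s_1$ have been computed, I accept the \ConstRankTsallisQED{} instance iff $\widetilde s_0 > \widetilde s_1$. The whole procedure is a deterministic polynomial-time oracle machine (each threshold is a dyadic rational efficiently computable from $M$), so it is a legitimate Turing reduction.

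The only delicate point is that \ConstRankTsallisQEA{} is a promise problem: when a query threshold $t$ lies in the ``gap region'' $(\Sq(\rho_b)-g_{\texttt{EA}},\,\Sq(\rho_b)+g_{\texttt{EA}})$, the oracle may return either answer adversarially. The standard analysis handles this: every query outside the gap region must branch correctly, so the binary search can leave $\Sq(\rho_b)$ behind only by stepping past a point still within $g_{\texttt{EA}}$ of $\Sq(\rho_b)$; consequently the final midpoint satisfies $|\widetilde s_b - \Sq(\rho_b)| \leq g_{\texttt{EA}} + g_{\texttt{EA}} = 2g_{\texttt{EA}}$. The triangle inequality then gives
\[
\bigl|(\widetilde s_0 - \widetilde s_1) - (\Sq(\rho_0)-\Sq(\rho_1))\bigr| \leq 4g_{\texttt{EA}} = \tfrac{4}{5}\,g_{\texttt{ED}},
\]
so $\sgn(\widetilde s_0 - \widetilde s_1) = \sgn(\Sq(\rho_0)-\Sq(\rho_1))$ on every promise instance, completing the reduction. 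I expect the main (minor) obstacle to be precisely this bookkeeping around adversarial oracle answers inside the gap region; it is folklore but should be written out carefully, and it is also the reason we need $g_{\texttt{EA}}$ to be a sufficiently small constant multiple of $g_{\texttt{ED}}$ rather than comparable to it.
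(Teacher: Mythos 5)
Your proposal is correct and matches the paper's own proof: both reduce from the Karp-hardness of \ConstRankTsallisQED{} (\Cref{thm:ConstRankTsallisQED-BQPhard}) and run a binary search over the threshold $t$ using the \ConstRankTsallisQEA{} oracle, exploiting the constant-rank bound $\Sq(\rho_b)\leq \Sq\rbra{(I/2)^{\otimes r}}=O(1)$ to make the search interval constant-sized so that only $O(\log(1/g))=O(1)$ adaptive queries are needed, then comparing the two estimates. The paper's \BiSearch{} procedure handles the in-gap adversarial answers exactly as you describe (shrinking to an interval guaranteed to stay within the precision of the true value), so your bookkeeping is the same argument with slightly different constants.
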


\begin{proof}
    For any $1 \leq q \leq 2$ and $n \geq 3$, since $\ConstRankTsallisQED\sbra*{\hat{g}_q(n)}$ is \BQP-hard{} under Karp reduction (\Cref{thm:ConstRankTsallisQED-BQPhard}), where $\hat{g}_q(n) \coloneqq 2^{-q} \Hq(1/2) \rbra[\big]{1-2^{ -\frac{n}{2}+\frac{7}{5} }}$, it suffices to provide an algorithm for $\ConstRankTsallisQED\sbra*{\hat{g}_q(n)}$ by using $\ConstRankTsallisQEA[t(n),g(n)]$ as subroutines, with appropriately adaptive choices of $t(n)$ and $g(n)$.

    Let $Q_0$ and $Q_1$ be the corresponding \BQP{}-hard instance such that these circuits are polynomial-size and prepare the constant-rank states $\rho_0$ and $\rho_1$, respectively. Let $\TsallisQEAalgo\rbra*{Q, t(n), g(n)}$ be the subroutine for decide whether $\Sq(\rho) \geq t(n)+g(n)$ or $\Sq(\rho) \leq t(n)-g(n)$. Next, we estimate $\Sq(\rho_b)$ to within additive error $\hat{g}_q(n)/2$ for $b\in\binset$. This procedure, inspired by~\cite[Appendix A.2 Part 1]{Ambainis14}, is denoted by $\BiSearch$, as presented in \Cref{algo:Tsallis-entropy-search-to-decision}.

    \begin{algorithm}[!htp]
        \caption{Tsallis entropy estimation $\BiSearch(Q, \tau, g)$ via queries to $\TsallisQEAalgo$.}
        \label{algo:Tsallis-entropy-search-to-decision}
        \begin{algorithmic}[1]
            \Require A quantum circuit $Q$ that prepares the purification of $\rho$, an upper bound $\tau$ on the $q$-Tsallis entropy $\Sq\rbra{\rho}$, and a precision parameter $g$.

            \Ensure Return $t$ such that $\abs{t - \Sq\rbra{\rho}} \leq g/2$.

            \State Let $\delta \gets g/2$, and set the interval $[a,b] \gets [0,\tau]$.

            \While {$b - a > g/2$}
                \State Query $\TsallisQEAalgo\rbra*{Q, \frac{a+b}{2}, \frac{\delta}{4}}$ to decide whether $\Sq(\rho) \geq \frac{a+b}{2} + \frac{\delta}{4}$ or $\Sq(\rho) \leq \frac{a+b}{2} - \frac{\delta}{4}$.
                \If {$\Sq(\rho) \geq \frac{a+b}{2} + \frac{\delta}{4}$}
                \State $[a,b] \gets \sbra*{\frac{a+b}{2} - \frac{g}{4}, b}$.
                \Else
                \State $[a,b] \gets \sbra*{a, \frac{a+b}{2} + \frac{g}{4}}$.
                \EndIf
            \EndWhile
            \State \Return $\frac{a+b}{2}$.
        \end{algorithmic}
    \end{algorithm}

    To solve $\ConstRankTsallisQED\sbra*{\hat{g}_q(n)}$, noting that $\max\{\rank(\rho_0),\rank(\rho_1)\} \leq r \leq O(1)$, we choose $\tau(n) = \Sq\rbra*{(I/2)^{\otimes r}}$. Subsequently, let $t_0(n) = \BiSearch(Q_0, \tau(n), \hat{g}_q(n))$ and $t_1(n) = \BiSearch(Q_1, \tau(n), \hat{g}_q(n))$, we obtain: 
    \begin{subequations}
    \label{eq:error-bounds-Turing-reduction}
    \begin{align}
        \Sq(\rho_0) - \Sq(\rho_1) &\geq \hat{g}_q(n) & \Rightarrow~ & t_0(n) - t_1(n) \geq \Sq(\rho_0) - \frac{\hat{g}_q(n)}{2} - \rbra*{\Sq(\rho_1) + \frac{\hat{g}_q(n)}{2}}  \geq 0,\\ 
        \Sq(\rho_0) - \Sq(\rho_1) &\leq -\hat{g}_q(n) & \Rightarrow~ & t_0(n) - t_1(n) \leq \Sq(\rho_0) + \frac{\hat{g}_q(n)}{2} - \rbra*{\Sq(\rho_1) - \frac{\hat{g}_q(n)}{2}} \leq 0.
    \end{align}
    \end{subequations}

    Note that $\hat{g}_q(n) = 2^{-q} \Hq(1/2) \rbra[\big]{1-2^{ -\frac{n}{2}+\frac{7}{5} }} \geq 2^{-q-1}\rbra[\big]{2-2^{\frac{9}{10}}}\cdot \Hq(1/2)$ for $n \geq 3$ and $\tau(n) \leq \S\rbra*{(I/2)^{r}} \leq O(1)$. 
    Since each query to $\TsallisQEAalgo$ in $\BiSearch$ decreases the size of the interval $[a,b]$ by almost a half, we can conclude that the number of adaptive queries to $\TsallisQEAalgo$ in $\BiSearch(Q_0, \tau(n), \hat{g}_q(n))$ and $\BiSearch(Q_1, \tau(n), \hat{g}_q(n))$ is $O\rbra*{\log\rbra*{1/\hat{g}_q(n)}} = O(1)$. 
\end{proof}

\subsubsection{\QSZK{} hardness results}
\begin{theorem}[\TsallisQED{} is \QSZK{}-hard for $1 < q \leq 1\!+\!\frac{1}{n-1}$]
    \label{thm:TsallisQED-QSZK-hardness}
    For any $q\in \big(1,1+\frac{1}{n-1}\big]$ and any $n \geq 90$, it holds that 
    \[ \forall g(n) \in \sbra*{1/\poly(n), 1/400},~ \TsallisQED[g(n)] \text{ is } \QSZK{}\text{-hard}.\]
\end{theorem}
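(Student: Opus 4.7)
The plan is to apply the mixed-state reduction from \QSD{} to \TsallisQED{} established in \Cref{lemma:reduction-QSD-TsallisQED}, combined with the \QSZK{}-hardness of \QSD{} given in \Cref{lemma:QSD-is-QSZKhard}. Concretely, I would fix any $\tau \in (0, 1/2)$ (say $\tau = 1/2 - o(1)$) and start from a \QSZK{}-hard instance of $\QSD[\alpha(n), \beta(n)]$ with $\alpha(n) = 1-2^{-n^\tau}$ and $\beta(n) = 2^{-n^\tau}$; the condition $\alpha(n)^2 - \beta(n) \geq 1/O(\log n)$ is trivially satisfied, so \Cref{lemma:QSD-is-QSZKhard} applies. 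Then I would feed this instance into the reduction of \Cref{lemma:reduction-QSD-TsallisQED} with $\varepsilon(n) \coloneqq 2^{-n^\tau}$ and the largest admissible $\gamma(n) = \Sq((I/2)^{\otimes n}) = \ln_q(2^n)$, producing an instance of \TsallisQED{} on $n' = n+1$ qubits whose gap equals $g_q(n')$ as given by the lemma.

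The heart of the argument is to verify $g_q(n') \geq g(n)$ for any $g(n) \leq 1/400$ when $1 < q \leq 1+\frac{1}{n-1}$ and $n \geq 90$. The key identity is that the ``main'' positive term and the $\gamma$-dependent ``bad'' term combine as
\[
\frac{1}{2}\Hq\rbra*{\frac{1}{2}} - \gamma(n)\rbra*{\frac{1}{2} - \frac{1}{2^q}} \;=\; \frac{1}{2}\Hq\rbra*{\frac{1}{2}} \cdot 2^{-n(q-1)},
\]
using $\frac{1}{2} - \frac{1}{2^q} = \frac{1}{2}(q-1)\Hq\rbra*{\frac{1}{2}}$ and $\ln_q(2^n)(q-1) = 1 - 2^{-n(q-1)}$. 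For $q \in (1, 1+\frac{1}{n-1}]$ one has $n(q-1) \leq n/(n-1)$, so this main contribution is bounded below by $\frac{1}{2}\Hq\rbra*{\frac{1}{2}} \cdot 2^{-n/(n-1)}$, which tends to $(\ln 2)/4 \approx 0.173$ and stays comfortably above, e.g., $1/10$ for $n \geq 90$.

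It remains to bound the $\varepsilon$-dependent corrections. Since $(1/2 + 1/2^q) \leq 1$ and $\ln_q(2^n) \leq n$, the term $(1/2+1/2^q) \cdot \varepsilon(n)^q \ln_q(2^n)/2^q \leq n\varepsilon(n)/2$, while the Tsallis binary entropy term satisfies $\Hq(1/2)\sqrt{\varepsilon(2-\varepsilon)} \leq \ln 2 \cdot \sqrt{2\varepsilon(n)}$. Plugging in $\varepsilon(n) = 2^{-n^\tau}$ and using $n^\tau \geq \Omega(\sqrt{n})$, both corrections are $o(1)$ in $n$ and in particular bounded by, say, $1/20$ each for $n \geq 90$. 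Combining these estimates yields $g_q(n') \geq 1/400$, and since any $g(n) \in [1/\poly(n), 1/400]$ is a valid lower bound on this gap, $\TsallisQED[g(n)]$ inherits the \QSZK{}-hardness of the source \QSD{} instance under Karp reduction.

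The main obstacle is the last bookkeeping step: the reduction has \emph{two competing} error sources (the trace-distance error $\varepsilon$ and the dimension factor $\ln_q(2^n)$ which grows linearly in $n$ when $q$ is near $1+1/(n-1)$), so choosing $\varepsilon$ too large lets the term $\varepsilon^q \ln_q(2^n)$ swallow the main gap, while choosing $\tau$ too small would fail to meet the statistical condition $\alpha^2 - \beta \geq 1/O(\log n)$ demanded by \Cref{lemma:QSD-is-QSZKhard}; the window $\tau \in (0,1/2)$ with $\varepsilon = 2^{-n^\tau}$ is precisely what makes both ends work simultaneously, and the threshold $n \geq 90$ is what validates the explicit constant $1/400$.
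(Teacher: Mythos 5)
Your proposal follows the same route as the paper: start from the \QSZK{}-hard instance $\QSD\sbra{1-2^{-\hat n^{\tau}},2^{-\hat n^{\tau}}}$ of \Cref{lemma:QSD-is-QSZKhard}, feed it into the mixed-state reduction of \Cref{lemma:reduction-QSD-TsallisQED} with $\varepsilon(n)=2^{-n^\tau}$ and the worst-case $\gamma(n)=\Sq\rbra{(I/2)^{\otimes n}}$, and then lower-bound $g_q$. Where you genuinely diverge is in the final estimate. The paper Taylor-expands the three pieces $G_1,G_2,G_3$ of $g_q$ at $q=1$, proves the resulting bound is monotonically decreasing in $q$ and increasing in $\hat n$, and evaluates at the corner $(q,\hat n)=(1+\tfrac{1}{89},89)$, landing only barely above $1/400$. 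You instead observe the exact identity $\tfrac12\Hq(\tfrac12)-\ln_q(2^n)\rbra{\tfrac12-\tfrac{1}{2^q}}=\tfrac12\Hq(\tfrac12)\,2^{-n(q-1)}$, which follows from $\tfrac12-\tfrac1{2^q}=\tfrac{q-1}{2}\Hq(\tfrac12)$ and $(q-1)\ln_q(2^n)=1-2^{-n(q-1)}$; this is correct and avoids both the Taylor expansion (which is lossy, costing about $0.07$ at the corner) and the two monotonicity arguments. Your main term is then $\geq \tfrac12\Hq(\tfrac12)2^{-n/(n-1)}\approx 0.17$, and the conclusion $g_q>1/400$ follows with considerable room to spare --- arguably a cleaner derivation with a better implicit constant.

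Two small numerical points you should fix before writing this up. First, your claim that each correction is at most $1/20$ at $n\geq 90$ is not quite right for the Fannes-type term: with $\tau=0.49$ and $\hat n=89$ one has $\varepsilon=2^{-89^{0.49}}\approx 1.9\times 10^{-3}$, so $\tfrac{n\varepsilon}{2}\approx 0.085$ (or $\approx 0.059$ if you keep the factor $\ln 2$ from $\ln_q(2^n)\leq n\ln 2$); the sum of the two corrections is about $0.10$--$0.13$, which is still comfortably below the main term $\approx 0.17$, so the theorem survives, but the stated per-term bound does not. Second, for $\tau\in(0,1/2)$ you have $n^\tau=o(\sqrt n)$, not $\Omega(\sqrt n)$; what you actually need is only that $n\,2^{-n^\tau}\to 0$ and is decreasing past $n=90$, which holds for any fixed $\tau>0$ (and $\tau$ should be a fixed constant such as $0.49$, matching the paper, rather than ``$1/2-o(1)$'').
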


\begin{proof}
    Following \Cref{lemma:QSD-is-QSZKhard}, we have that $\QSD\sbra*{1-2^{-\hat{n}^{0.49}}, 2^{-\hat{n}^{0.49}}}$ is \QSZK{}-hard for $\hat{n} \geq 1$.
    Let $Q_0$ and $Q_1$ be the corresponding \QSZK{}-hard instance such that these circuits are polynomial-size and prepare the purification of $\rho_0$ and $\rho_1$, respectively. 
    Leveraging the reduction from \QSD{} to \TsallisQED{} (\Cref{lemma:reduction-QSD-TsallisQED}), there are two polynomial-size quantum circuits $Q'_0$ and $Q'_1$, which prepare the purifications of $n$-qubit $\rho'_0$ and $\rho'_1$ where $n \coloneqq \hat{n}+1$, respectively, such that:
    \begin{align*}
        \td(\rho_0,\rho_1) &\geq 1 - 2^{-\hat{n}^{0.49}} &~\Rightarrow~  \Sq(\rho'_0) - \Sq(\rho'_1) \geq g_q(n) = g_q(\hat{n}+1),\\
        \td(\rho_0,\rho_1) &\leq 2^{-\hat{n}^{0.49}} &~\Rightarrow~ \Sq(\rho'_1) - \Sq(\rho'_0) \geq g_q(n) = g_q(\hat{n}+1).
    \end{align*} 
    
    Since $\sqrt{2^{-\hat{n}^{0.49}} \rbra*{2-2^{-\hat{n}^{0.49}}}} \leq 2^{\frac{1-\hat{n}^{0.49}}{2}}$ and $\gamma(n) \leq \Sq\rbra*{(I/2)^{\otimes \hat{n}}} = \frac{1-2^{\hat{n}(1-q)}}{q-1}$, we have 
    \[ g_q(\hat{n}) \geq \underbrace{\frac{1}{2} \Hq\rbra*{\frac{1}{2}} - \frac{1-2^{\hat{n}(1-q)}}{q-1} \rbra*{\frac{1}{2} \!-\! \frac{1}{2^q}}}_{G_1(q;\hat{n})} - \underbrace{\rbra*{\frac{1}{2}\!+\!\frac{1}{2^q}} \frac{2^{-\hat{n}^{0.49}q}}{2^q}  \ln_q\!\big(2^{\hat{n}}\big)}_{G_2(q;\hat{n})} -  \underbrace{\rbra*{\frac{1}{2}\!+\!\frac{1}{2^q}}  \Hq\rbra*{\frac{1}{2}} 2^{\frac{1-\hat{n}^{0.49}}{2}}}_{G_3(q;\hat{n})} . \]

    It remains to show that $g_q(\hat{n}) \geq G_1(q;\hat{n}) - G_2(q;\hat{n}) - G_3(q;\hat{n}) > 0$ for $1 \leq q \leq 1\!+\!\frac{1}{\hat{n}}$ and large enough $n$. By the Taylor expansion of $G_1(q;\hat{n})$, $G_2(q;\hat{n})$, and $G_3(q;\hat{n})$ at $q=1$, we obtain: 
    \begin{align*}
        g_q(\hat{n}) &\geq G_1(q;\hat{n}) - G_2(q;\hat{n}) - G_3(q;\hat{n})\\
        &\geq \rbra*{\frac{\log (2)}{2}-\frac{1}{4}(2 \hat{n}+1) \log ^2(2)(q-1)} 
            - \frac{\log (2)}{2} \cdot \hat{n} 2^{-\hat{n}^{0.49}}
            - \log(2) \cdot 2^{\frac{1-\hat{n}^{0.49}}{2}} \coloneqq G(q;\hat{n})
    \end{align*}

    Noting that $\frac{\partial}{\partial q} G(q;\hat{n}) = -\frac{1}{4}(2 \hat{n}+1) \log ^2(2) < 0$ for $\hat{n} \geq 1$, we know that $G(q;\hat{n})$ is monotonically decreasing on $q > 1$ for any fixed $\hat{n} \geq 1$. As a consequence, as $1 \leq q \leq 1+\frac{1}{\hat{n}}$, it is left to show that $G\rbra*{1+\frac{1}{\hat{n}};\hat{n}} > 0$ for large enough $\hat{n}$, specifically:
    \begin{equation*}
        G\rbra*{1\!+\!\frac{1}{\hat{n}};\hat{n}} = \frac{\log(2)}{4} \left(2-2\log(2)-2^{1-\hat{n}^{0.49}} \hat{n}- 4 \cdot 2^{\frac{1-\hat{n}^{0.49}}{2}}-\frac{\log (2)}{\hat{n}}\right) > 0.
    \end{equation*}
    
    A direct calculation implies that
    \begin{equation*}
        \frac{\dd}{\dd\hat{n}} G\rbra*{1\!+\!\frac{1}{\hat{n}};\hat{n}} = \frac{\log(2)}{200} \left(49 \sqrt{2^{1-\hat{n}^{0.49}}} \hat{n}^{1.49} \log (2)+2^{-\hat{n}^{0.49}} \hat{n}^2 \left(49 \hat{n}^{0.49} \log (2)-100\right)+50 \log (2)\right).
    \end{equation*}

    Since it is evident that $49 \hat{n}^{0.49} \log (2)-100 > 0$, we can deduce that $\frac{\dd}{\dd\hat{n}} G\rbra*{1\!+\!\frac{1}{\hat{n}};\hat{n}} > 0$. As $49 \hat{n}^{0.49} \log (2)-100 > 0$ holds when $\hat{n} \geq 10$, we obtain that $G\rbra*{1\!+\!\frac{1}{\hat{n}};\hat{n}}$ is monotonically increasing for $\hat{n} \geq 10$. Therefore, we complete the proof by noticing $\hat{n} = n-1$ and the following:
    \[\text{For any } q\in\Big(1,1\!+\!\frac{1}{\hat{n}}\Big] \text{ and } \hat{n}\geq 89,~ g_q(\hat{n}) \geq G(q;\hat{n}) \geq G\rbra*{1\!+\!\frac{1}{\hat{n}};\hat{n}} \geq G\rbra*{1\!+\!\frac{1}{89};89} > \frac{1}{400}. \qedhere\]
\end{proof}

\begin{theorem}[\TsallisQEA{} is \QSZK{}-hard under Turing reduction for $1< q \leq 1+\frac{1}{n-1}$]
    \label{thm:TsallisQEA-QSZKhard}
    For any $q \in \big(1,1+\frac{1}{n-1}\big]$ and any $n\geq 90$, the following holds\emph{:} 
    \[\TsallisQEA \text{ with } g(n)=\Theta(1) \text{ is } \QSZK\text{-hard under Turing reduction}. \]
\end{theorem}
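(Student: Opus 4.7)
The plan is to mirror the search-to-decision reduction used in the proof of \Cref{thm:ConstRankTsallisQEA-BQPhard}, but replace the underlying Karp hardness with \Cref{thm:TsallisQED-QSZK-hardness}. Fix the constant gap $\hat{g}_q \coloneqq 1/400$ so that \Cref{thm:TsallisQED-QSZK-hardness} guarantees $\TsallisQED[\hat{g}_q]$ is \QSZK{}-hard under Karp reduction for every $q \in (1, 1+\tfrac{1}{n-1}]$ with $n \geq 90$; it then suffices to exhibit a deterministic polynomial-time Turing reduction from $\TsallisQED[\hat{g}_q]$ to any $\TsallisQEA{}$ oracle with a $\Theta(1)$-sized gap.

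Given an instance $(Q_0, Q_1)$ preparing $n$-qubit mixed states $\rho_0, \rho_1$, I would take the a priori upper bound $\tau(n) \coloneqq \ln_q(2^n) \geq \Sq(\rho_b)$ and invoke the subroutine $\BiSearch(Q_b, \tau(n), \hat{g}_q)$ from \Cref{algo:Tsallis-entropy-search-to-decision} independently for $b \in \binset$. The resulting values $t_b(n)$ satisfy $\abs{t_b(n) - \Sq(\rho_b)} \leq \hat{g}_q/2$, and the triangle-inequality argument recorded in \Cref{eq:error-bounds-Turing-reduction} then lets me output \emph{yes} iff $t_0(n) \geq t_1(n)$, thereby deciding the input correctly. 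Every oracle query posed inside $\BiSearch{}$ has the form $\TsallisQEAalgo(Q_b, \cdot, \hat{g}_q/4)$ with the constant gap $\hat{g}_q/4 = 1/1600$, matching the hypothesis that the \TsallisQEA{} oracle has a $\Theta(1)$-sized gap.

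The only genuine point of departure from \Cref{thm:ConstRankTsallisQEA-BQPhard} is the complexity bookkeeping, since $\rho_0,\rho_1$ are now full-rank rather than constant-rank. Nevertheless, using that $\ln_q(x)$ is monotonically non-increasing in $q$ for each fixed $x > 1$, we get $\tau(n) = \ln_q(2^n) < \ln(2^n) = n\ln 2 < n$, so $\tau(n)$ remains polynomial. Since each iteration of $\BiSearch{}$ halves the search interval until its length falls below $\hat{g}_q/2$, the total number of adaptive queries is $O\bigl(\log(\tau(n)/\hat{g}_q)\bigr) = O(\log n)$, keeping the whole procedure deterministic polynomial time. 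I do not anticipate any nontrivial obstacle; the only verification needed beyond the \BQP{} case is this mild change from an $O(1)$ to an $O(\log n)$ query count, which does not affect polynomiality.
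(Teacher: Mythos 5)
Your proposal is correct and follows essentially the same route as the paper: invoke the Karp hardness of $\TsallisQED[1/400]$ from \Cref{thm:TsallisQED-QSZK-hardness}, run $\BiSearch$ (\Cref{algo:Tsallis-entropy-search-to-decision}) on each circuit with threshold oracle $\TsallisQEAalgo$ at constant gap, and compare the two estimates via \Cref{eq:error-bounds-Turing-reduction}. Your bookkeeping is in fact slightly more careful than the paper's: since $\tau(n) = \ln_q(2^n) = \Theta(n)$ in this regime (the paper's bound $\tau(n) < 1/(q-1) \leq O(1)$ is not tight, as $1/(q-1)$ can be as large as $n-1$), the honest query count is $O(\log n)$ rather than $O(1)$, which, as you note, still yields a deterministic polynomial-time Turing reduction.
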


\begin{proof}
    This proof is very similar to the proof of \Cref{thm:ConstRankTsallisQEA-BQPhard}. 
    For any $1 < q \leq 1+\frac{1}{n-1}$ and $n \geq 90$, since $\TsallisQED\sbra*{\hat{g}_q(n)}$ is \QSZK-hard{} under Karp reduction (\Cref{thm:TsallisQED-QSZK-hardness}), where $\hat{g}_q(n) = 1/400$, it suffices to provide an algorithm for $\TsallisQED\sbra*{\hat{g}_q(n)}$ by using $\TsallisQEA[t(n),g(n)]$ as subroutines, with appropriately adaptive choices of $t(n)$ and $g(n)$.

    Let $Q_0$ and $Q_1$ be the corresponding \QSZK{}-hard instance such that these circuits are polynomial-size and prepare the states $\rho_0$ and $\rho_1$, respectively. Let $\TsallisQEAalgo\rbra*{Q, t(n), g(n)}$ be the subroutine for decide whether $\Sq(\rho) \geq t(n)+g(n)$ or $\Sq(\rho) \leq t(n)-g(n)$. Next, we estimate $\Sq(\rho_b)$ to within additive error $\hat{g}_q(n)/2$ for $b\in\binset$ via the procedure $\BiSearch$, as specified in \Cref{algo:Tsallis-entropy-search-to-decision}. 
    To solve $\TsallisQED\sbra*{\hat{g}_q(n)}$, noting that $\max\{\rank(\rho_0),\rank(\rho_1)\} \leq 2^n$, we choose $\tau(n) = \Sq\rbra*{(I/2)^{\otimes n}}$. Subsequently, let $t_0(n) = \BiSearch(Q_0, \tau(n), \hat{g}_q(n))$ and $t_1(n) = \BiSearch(Q_1, \tau(n), \hat{g}_q(n))$, we obtain the same inequalities in \Cref{eq:error-bounds-Turing-reduction}. 

    Note that $\hat{g}_q(n) = 1/400$ for $n \geq 90$ and $\tau(n) \leq \Sq\rbra*{(I/2)^{n}} \leq \S\rbra*{(I/2)^{n}} = n\ln{2}$. 
    Since each query to $\TsallisQEAalgo$ in $\BiSearch$ decreases the size of the interval $[a,b]$ by almost a half, we complete the proof by concluding that the number of adaptive queries to $\TsallisQEAalgo$ in $\BiSearch(Q_0, \tau(n), \hat{g}_q(n))$ and $\BiSearch(Q_1, \tau(n), \hat{g}_q(n))$ is $O\rbra*{\log\rbra*{\tau(n)/\hat{g}_q(n)}} = O(\log{n})$.
\end{proof}

\subsubsection{\NIQSZK{} hardness result}

\begin{theorem}[\TsallisQEA{} is \NIQSZK{}-hard for $q = 1\!+\!\frac{1}{n-1}$]
    \label{thm:TsallisQEA-NIQSZKhard}
    For any $n \geq 5$, it holds that\emph{:}
    \[ \forall g(n) \in[1/\poly(n), 1/150],~ \TsallisQEAnoq_{1+\frac{1}{n-1}} \text{ with } g(n) \text{ is } \NIQSZK{}\text{-hard}. \]
\end{theorem}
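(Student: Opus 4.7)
The plan is to obtain \NIQSZK{}-hardness of $\TsallisQEAnoq_{1+\frac{1}{n-1}}$ by composing two Karp reductions that are already essentially in place: the hardness of $\QSCMM$ from \Cref{QSCMM-is-NIQSZKhard} and the reduction from $\QSCMM$ to $\TsallisQEAnoq_{1+\frac{1}{n-1}}$ given by \Cref{lemma:reduction-MaxMixedQSD-TsallisQEA}.

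First, I would invoke \Cref{QSCMM-is-NIQSZKhard}, which for every $n \geq 3$ furnishes a polynomial-time Karp reduction from an arbitrary promise problem in \NIQSZK{} to $\QSCMM\sbra*{1/n,\,1-1/n}$. On input $x$, this reduction outputs a polynomial-size quantum circuit $Q_x$ preparing an $n$-qubit state $\rho_x$ with $\td\rbra*{\rho_x,(I/2)^{\otimes n}} \leq 1/n$ on yes instances and $\td\rbra*{\rho_x,(I/2)^{\otimes n}} \geq 1-1/n$ on no instances.

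Second, for $n \geq 5$ I would feed the \emph{same} circuit $Q_x$ into \Cref{lemma:reduction-MaxMixedQSD-TsallisQEA}. That lemma does not modify the underlying circuit; it only asserts that, with $q = 1+\frac{1}{n-1}$ and the explicit threshold $t(n) = \tfrac{1}{4}\rbra*{3n - n^{1+1/n} - 1}$, the state $\rho_x$ satisfies $\Sq(\rho_x) > t(n) + 1/150$ in the yes case and $\Sq(\rho_x) < t(n) - 1/150$ in the no case. Since $t(n)$ is computable in deterministic polynomial time (up to the efficient-computability convention adopted in \Cref{def:TsallisQEA}), passing $(Q_x, t(n))$ along with gap $1/150$ yields a Karp reduction from $\QSCMM\sbra*{1/n,\,1-1/n}$ to $\TsallisQEAnoq_{1+\frac{1}{n-1}}$ with threshold $t(n)$ and gap $1/150$. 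Composing this with the reduction from the previous step settles \NIQSZK{}-hardness in the specific case $g(n) = 1/150$.

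Third, I would extend hardness to every $g(n) \in \sbra*{1/\poly(n),\, 1/150}$ by the elementary observation that shrinking the gap only relaxes the promise: any yes (resp.\ no) instance at gap $1/150$ remains a yes (resp.\ no) instance at any $g(n) \leq 1/150$ with the same threshold $t(n)$. Hence the identity map on circuits is a trivial Karp reduction from the $g = 1/150$ variant to every $g(n) \leq 1/150$ variant. The only residual obstacle is pure book-keeping: checking that the $n \geq 5$ hypothesis from \Cref{lemma:reduction-MaxMixedQSD-TsallisQEA} and the $n \geq 3$ hypothesis from \Cref{QSCMM-is-NIQSZKhard} are jointly met (they are, for $n \geq 5$), and that $t(n)$ fits the efficient-computability clause in \Cref{def:TsallisQEA}. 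All genuine analysis — the two-sided bounds between $\Sq(\rho)$ and $\td\rbra*{\rho,(I/2)^{\otimes n}}$ based on \Cref{lemma:inequality-uniformTV-TsallisEA}, and the numerical verification of the $1/150$ gap — has already been carried out inside \Cref{lemma:reduction-MaxMixedQSD-TsallisQEA}, so no further technical surprises are expected.
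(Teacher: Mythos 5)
Your proposal is correct and follows essentially the same route as the paper: compose the \NIQSZK{}-hardness of $\QSCMM[1/n,1-1/n]$ (\Cref{QSCMM-is-NIQSZKhard}) with the reduction of \Cref{lemma:reduction-MaxMixedQSD-TsallisQEA} using the threshold $t(n)=\tfrac{1}{4}\rbra*{3n-n^{1+1/n}-1}$, which yields a gap of $1/150$ and hence hardness for every smaller admissible $g(n)$. Your explicit remark that shrinking the gap only relaxes the promise is exactly the (implicit) final step of the paper's argument.
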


\begin{proof}
    Utilizing \Cref{QSCMM-is-NIQSZKhard}, we know that $\QSCMM[1/n, 1-1/n]$ is \NIQSZK{}-hard for $n \geq 3$. Following the reduction from \QSCMM{} to $\TsallisQEAnoq_{1+\frac{1}{n-1}}$ for $n \geq 5$ (\Cref{lemma:reduction-MaxMixedQSD-TsallisQEA}), and the specific choice of $t(n)$ in the reduction, we can conclude that $g(n) \leq 1/150$.
\end{proof}

\subsection{Quantum query complexity lower bounds}
\label{subsec:query-lower-bound}

In this subsection, we present two quantum query complexity lower bounds for estimating the quantum Tsallis entropy $\Sq(\rho)$: When $q$ is constantly larger than $1$, the lower bound is \textit{independent} of the rank of $\rho$ (\Cref{thm:Tsallis-query-complexity-lower-bound-largeQ}). However, when $q>1$ is inverse-polynomially close to $1$ or even closer, the lower bound \textit{depends polynomially} on the rank of $\rho$ (\Cref{thm:Tsallis-query-complexity-lower-bound-smallQ}).

\begin{theorem}[Query complexity lower bound for estimating quantum Tsallis entropy with $q$ constantly above $1$]
    \label{thm:Tsallis-query-complexity-lower-bound-largeQ}
    For any $q \geq 1+\Omega(1)$ and sufficiently small $\epsilon > 0$, the quantum query complexity for estimating the $q$-Tsallis entropy of a quantum state to within additive error $\epsilon$, in the purified quantum query access model, is $\Omega\rbra{1/\sqrt{\epsilon}}$. 
\end{theorem}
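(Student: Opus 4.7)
The plan is to reduce the estimation task to distinguishing two simple one-qubit states and then invoke \cref{lemma:q-distinguish-prob-distri}.

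First, I would take as hard instances $\rho_0 \coloneqq \ketbra{0}{0}$, whose eigenvalue distribution is $p_0 = (1, 0)$, and $\rho_1 \coloneqq (1-\delta)\ketbra{0}{0} + \delta \ketbra{1}{1}$, whose eigenvalue distribution is $p_1 = (1-\delta, \delta)$, where $\delta \in (0, 1/2)$ is a parameter to be tuned. Their Tsallis entropies are $\Sq\rbra{\rho_0} = 0$ and $\Sq\rbra{\rho_1} = \Hq\rbra{\delta} = \frac{1 - (1-\delta)^q - \delta^q}{q-1}$. Both states admit trivial purifications $\ket{0}\ket{0}$ and $\sqrt{1-\delta}\ket{0}\ket{0} + \sqrt{\delta}\ket{1}\ket{1}$, which match the form required by \cref{lemma:q-distinguish-prob-distri}.

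Next, I would tune $\delta$ so that $\abs{\Sq(\rho_0) - \Sq(\rho_1)} \geq 2\epsilon$. Using the Taylor expansion $(1-\delta)^q = 1 - q\delta + O(\delta^2)$ for small $\delta$, together with $\delta^q = o(\delta)$ (valid since $q > 1$), I obtain $\Hq\rbra{\delta} = \tfrac{q}{q-1}\delta \cdot (1+o(1))$ as $\delta \to 0$. Because $q \geq 1 + \Omega(1)$ keeps $\tfrac{q}{q-1} = \Theta(1)$, the choice $\delta = c_q \epsilon$ for an appropriate constant $c_q$ depending only on $q$ yields $\Hq\rbra{\delta} \geq 2\epsilon$ whenever $\epsilon$ is sufficiently small. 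Any quantum algorithm that estimates $\Sq\rbra{\rho}$ to additive error $\epsilon$ with constant success probability must then distinguish the state-preparation unitaries for $\rho_0$ and $\rho_1$.

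Finally, I would bound the Hellinger distance between $p_0$ and $p_1$: a direct computation gives
\[
d_{\mathrm{H}}\rbra{p_0, p_1}^2 = \tfrac{1}{2}\rbra*{(1-\sqrt{1-\delta})^2 + \delta} \leq \tfrac{1}{2}(\delta^2 + \delta) \leq \delta,
\]
so $d_{\mathrm{H}}\rbra{p_0, p_1} = O\rbra{\sqrt{\delta}} = O\rbra{\sqrt{\epsilon}}$. Applying \cref{lemma:q-distinguish-prob-distri} yields a distinguishing — and hence estimation — lower bound of $\Omega\rbra{1/d_{\mathrm{H}}} = \Omega\rbra{1/\sqrt{\epsilon}}$, as claimed.

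I do not anticipate a substantive obstacle: both the entropy gap and the Hellinger distance reduce to one-variable Taylor expansions, and the hypothesis $q \geq 1 + \Omega(1)$ is used only to keep the constant $\tfrac{q}{q-1}$ bounded. The only minor care is that $\delta$ must be chosen small enough that the remainder terms in the Taylor expansion of $\Hq\rbra{\delta}$ are dominated by the leading $\tfrac{q}{q-1}\delta$ term, which is automatic for sufficiently small $\epsilon$.
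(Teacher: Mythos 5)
Your proposal is correct and matches the paper's own proof essentially step for step: the paper also uses the two-outcome distributions $(1,0)$ and $(1-\epsilon,\epsilon)$, bounds the Hellinger distance by $\sqrt{\epsilon}$, invokes the same distinguishing lower bound, and notes that the Tsallis entropy gap is $\Omega(\epsilon)$. The only cosmetic difference is that you rescale $\delta = c_q\epsilon$ to make the gap exactly $2\epsilon$, whereas the paper takes $\delta=\epsilon$ and absorbs the constant into the additive error $\Theta(\epsilon)$.
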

\begin{proof}
    Consider the task of distinguishing two quantum unitary operators $U_\epsilon$ and $U_0$ corresponding to two probability distributions $p_\epsilon$ and $p_0$, where $p_x \coloneqq \rbra{1 - x, x}$, $U_x$ is a unitary operator satisfying
    \[
    U_x \ket{0} = \sqrt{1 - x} \ket{0} \ket{\varphi_0} + \sqrt{x} \ket{1} \ket{\varphi_1},
    \]
    with $\ket{\varphi_0}$ and $\ket{\varphi_1}$ being any orthogonal unit vectors. 
    By the quantum query complexity of distinguishing probability distributions given in \cref{lemma:q-distinguish-prob-distri}, we know that distinguishing $U_\epsilon$ and $U_0$ requires quantum query complexity $\Omega\rbra{1/d_{\rm H}\rbra{p_\epsilon, p_0}}$, where $d_{\rm H}\rbra{\cdot, \cdot}$ is the Hellinger distance between two probability distributions. 
    Direct calculation shows that if $\epsilon \in \rbra{0, 1}$, 
    \[
    d_{\rm H}\rbra{p_\epsilon, p_0} = \frac{1}{\sqrt{2}} \sqrt{ \rbra*{\sqrt{1-\epsilon} - 1}^2 + \rbra*{\sqrt{\epsilon} - 0}^2 } \leq \sqrt{\epsilon}.
    \]
    Thus the query complexity of distinguishing $U_0$ and $U_\epsilon$ is $\Omega\rbra{1/\sqrt{\epsilon}}$.

    On the other hand, $U_x$ prepares a purification of $\rho_x \coloneqq \rbra{1-x}\ketbra{0}{0} + x\ketbra{1}{1}$.
    Then, for sufficiently small $\epsilon > 0$, we have
    \[
    \abs*{ \Sq\rbra{\rho_\epsilon} - \Sq\rbra{\rho_0} } = \frac{1 - \rbra{1-\epsilon}^q - \epsilon^q}{q - 1} = \Omega\rbra{\epsilon}.
    \]
    Therefore, any quantum query algorithm that can compute the $q$-Tsallis entropy of a quantum state to within additive error $\Theta\rbra{\epsilon}$ can be used to distinguish $U_\epsilon$ and $U_0$, thus requiring query complexity $\Omega\rbra{1/\sqrt{\epsilon}}$.
\end{proof}

\begin{theorem}[Query complexity lower bound for estimating quantum Tsallis entropy with $q > 1$ near $1$]
    \label{thm:Tsallis-query-complexity-lower-bound-smallQ}
    For any $q \in \big(1,1\!+\!\frac{1}{n-1}\big]$, there exists a mixed quantum state $\rho$ of sufficiently large rank $r$ such that the quantum query complexity for estimating $\Sq(\rho)$, in the purified quantum query access model, is $\Omega\rbra{r^{0.17-c}}$ for any constant $c>0$. 
\end{theorem}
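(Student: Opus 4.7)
The plan is to derive this rank-dependent query lower bound by reducing QSD to $q$-Tsallis entropy estimation, using the reduction of \Cref{lemma:reduction-QSD-TsallisQED} combined with the $\Omega(r^{1/3})$ query lower bound for QSD from \Cref{lemma:query-complexity-lower-bound-QSD}. The subtlety is that \Cref{lemma:reduction-QSD-TsallisQED} requires two input states that are either extremely close ($\td \leq \varepsilon$) or extremely far ($\td \geq 1-\varepsilon$) in trace distance, whereas the CFMdW-style hard instances in \Cref{lemma:query-complexity-lower-bound-QSD} only distinguish $\td = 0$ from $\td \geq \Omega(1)$. A polarization step is needed to bridge these two regimes while preserving the query lower bound.

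First, I would take the hard instance $(\rho, \rho_{\ttU})$ of rank $r_0$ on $n_0 = \log r_0$ qubits from \Cref{lemma:query-complexity-lower-bound-QSD}, for which any quantum query algorithm distinguishing $\td = 0$ from $\td \geq 1/2$ requires $\Omega(r_0^{1/3})$ queries. Next, I would apply the trace-distance polarization lemma (the same ingredient underlying \Cref{lemma:QSD-is-QSZKhard}) to amplify the gap into $\td \leq 2^{-n_0^\tau}$ versus $\td \geq 1 - 2^{-n_0^\tau}$, for an arbitrarily small constant $\tau > 0$. Polarization is a Karp-type reduction computable in polynomial time, so the lower bound $\Omega(r_0^{1/3})$ is preserved in terms of the original rank $r_0$, while the new states $\widetilde{\rho}_0,\widetilde{\rho}_1$ acquire rank $r = r_0^{k(\tau)}$ for some expansion factor $k(\tau)$ determined by the polarization recurrence.

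Then, I would apply \Cref{lemma:reduction-QSD-TsallisQED} to $(\widetilde{\rho}_0,\widetilde{\rho}_1)$ with $\varepsilon(n) = 2^{-n^\tau}$, where $n = n_0\cdot k(\tau)$. Exactly as in the proof of \Cref{thm:TsallisQED-QSZK-hardness}, in the regime $q \in (1, 1+1/(n-1)]$ the dominant $\tfrac{1}{2}\Hq(\tfrac{1}{2})$ term overwhelms both the $\tfrac{\varepsilon^q}{2^q}\ln_q(2^n)$ correction (which is $O(n^{1-\tau q})\to 0$ for our choice of $\tau$) and the $\sqrt{\varepsilon(2-\varepsilon)}$ correction, so the reduction gap $g_q(n+1)$ remains bounded below by $1/\poly(n)$. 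Hence any quantum query algorithm that estimates $\Sq$ to additive error $g_q(n+1)/3$ can decide the polarized QSD instance, and therefore must use $\Omega(r_0^{1/3}) = \Omega(r^{1/(3k(\tau))})$ queries in terms of the rank $r$ of the reduced state.

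The main obstacle is the quantitative tradeoff between the polarization parameter $\tau$ and the rank expansion exponent $k(\tau)$. A careful accounting of the polarization recurrence, including how many rounds are required to drive the distinguishing gap below $2^{-n^\tau}$ and how each round inflates the state dimension, will determine the exact attainable exponent; by choosing $\tau$ sufficiently small, the exponent $1/(3k(\tau))$ can be pushed arbitrarily close to the target $0.17$ from below, yielding $\Omega(r^{0.17-c})$ for every constant $c > 0$. The number $0.17$ itself will emerge as the limiting value set by the minimum number of polarization rounds needed to satisfy both the gap requirement of \Cref{lemma:reduction-QSD-TsallisQED} and the positivity of $g_q$ uniformly in $q$.
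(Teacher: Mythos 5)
Your proposal follows essentially the same route as the paper: start from the $\Omega(\hat r^{1/3})$ query lower bound for distinguishing $\td = 0$ from $\td \geq 1/2$ (\Cref{lemma:query-complexity-lower-bound-QSD}), amplify the gap by taking tensor powers (only the direct-product half of polarization is needed, since the no-case trace distance is exactly $0$ and is preserved under tensoring), and feed the result into \Cref{lemma:reduction-QSD-TsallisQED} with the constants already verified in \Cref{thm:TsallisQED-QSZK-hardness}. One correction to your final paragraph: the exponent $0.17 = (1-\tau)/3$ arises from fixing $\tau = 0.49$ (the value for which the paper has shown $g_q \geq 1/400$ uniformly in $q$) and letting the tensor-power parameter $k$ tend to $\tau/(1-\tau)$ from above, so that the rank inflates to $r = \hat r^{1+k}$ and the bound becomes $\Omega\rbra{r^{1/(3(1+k))}}$; decreasing $\tau$ would make the exponent \emph{larger} (approaching $1/3$), not push it toward $0.17$, as the paper itself observes in \Cref{remark:tau-dependence-lower-bound}.
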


\begin{remark}[$\tau$-dependence in the lower bounds]
    \label{remark:tau-dependence-lower-bound}
    The lower bounds on query and sample complexities in \Cref{thm:Tsallis-query-complexity-lower-bound-smallQ,thm:Tsallis-sample-complexity-lower-bound-smallQ} are $\Omega\rbra[\big]{r^{\frac{1-\tau}{3}-c}}$ and $\Omega\rbra{r^{1-\tau-c'}}$, respectively, where $\tau = 0.49$ is chosen to establish the \QSZK{} hardness (\Cref{thm:TsallisQED-QSZK-hardness}) and $c'=3c$. 
    Notably, these bounds can be further improved by selecting a smaller $\tau$ that still satisfies all requirements in the reduction (\Cref{lemma:reduction-QSD-TsallisQED}), which is left for future work. 
\end{remark}

\begin{proof}[Proof of \Cref{thm:Tsallis-query-complexity-lower-bound-smallQ}]
    By \Cref{lemma:query-complexity-lower-bound-QSD} with $\epsilon=1/2$, there exists an $\hat{n}$-qubit state $\hat{\rho}$ of rank $\hat{r}\geq 2$ and the corresponding ``uniform'' state $\hat{\rho}_{\ttU}$ of rank $r$ on the same support as $\hat{\rho}$ such that the quantum query complexity to decide whether $\td\big(\hat{\rho},\hat{\rho}_{\ttU}\big)$ is at least $1/2$ or exactly $0$ is $\Omega(\hat{r}^{1/3})$. We apply the polarization lemma for the trace distance to the states $\hat{\rho}$ and $\hat{\rho}_{\ttU}$, particularly using only the direct product lemma~\cite[Lemma 8]{Wat02}. Let $\rho \coloneqq \hat{\rho}^{\otimes \hat{r}^{k}}$ and $\rho_{\ttU} \coloneqq \hat{\rho}_{\ttU}^{\otimes \hat{r}^{k}}$ be the resulting states, where $k$ is a parameter to be determined later. 
    Then, for any constant $k > \frac{\tau}{1-\tau}$ with $\tau = 0.49$ and for sufficiently large $\hat{r}$, the following holds:
    \begin{align*}
        \td\rbra*{\hat{\rho}, \hat{\rho}_{\ttU}} &\geq 1/2  &\Rightarrow \quad \td\rbra*{\rho, \rho_{\ttU}} &\geq 1-\exp\rbra{-\hat{r}^{k}/8} \geq 1-2^{-r^{\tau}},\\
        \td\rbra*{\hat{\rho}, \hat{\rho}_{\ttU}} &= 0  &\Rightarrow \quad \td\rbra*{\rho, \rho_{\ttU}} &\leq \hat{r}^{k} \cdot 0 = 0 \leq 2^{-r^{\tau}}.
    \end{align*}
    
    Hence, the query complexity of deciding whether $\td(\rho,\rho_{\ttU})$ is at least $1-2^{-r^{\tau}}$ or at most $2^{-r^{\tau}}$ is $\Omega\rbra[\big]{r^{\frac{1}{3(1+k)}}}$, where $r \coloneqq \hat{r}\cdot \hat{r}^{k} = \hat{r}^{1+k}$. For any $q \in \big(1, 1+\frac{1}{n-1}\big] \subseteq \big(1, 1+\frac{1}{r-1}\big]$, using the reduction from \QSD{} to \TsallisQED{} (\Cref{lemma:reduction-QSD-TsallisQED}) with parameters from \Cref{thm:TsallisQED-QSZK-hardness}, there are two corresponding states $\rho'_0$ and $\rho'_1$ of rank at most $2r$ such that the quantum query complexity for deciding whether $\Sq(\rho'_0)-\Sq(\rho'_1)$ is at least $1/400$ or at most $-1/400$ is $\Omega\rbra[\big]{r^{\frac{1}{3(1+k)}}} = \Omega\rbra[\big]{r^{\frac{1-\tau}{3}-c}} = \Omega\rbra{r^{0.17-c}}$ for any constant $c > 0$. 
    Thus, estimating $\Sq(\rho'_b)$ for $b \in \binset$ to within additive error $1/800$ requires at least the same number of quantum queries. 
\end{proof}

\subsection{Quantum sample complexity lower bounds}
\label{subsec:sample-lower-bound}

In this subsection, we present two quantum sample complexity lower bounds for estimating the quantum Tsallis entropy $\Sq(\rho)$: When $q$ is constantly larger than $1$, the lower bound is \textit{independent} of the rank of $\rho$ (\Cref{thm:Tsallis-sample-complexity-lower-bound-largeQ}). However, when $q>1$ is inverse-polynomially close to $1$, the lower bound \textit{depends polynomially} on the rank of $\rho$ (\Cref{thm:Tsallis-sample-complexity-lower-bound-smallQ}).

\begin{theorem}[Sample complexity lower bound for estimating quantum Tsallis entropy with $q$ constantly above $1$]
    \label{thm:Tsallis-sample-complexity-lower-bound-largeQ}
    For any $q \geq 1+\Omega(1)$ and sufficiently small $\epsilon > 0$, the quantum sample complexity for estimating the quantum $q$-Tsallis entropy of a quantum state to within additive error $\epsilon$ is $\Omega\rbra{1/{\epsilon}}$. 
\end{theorem}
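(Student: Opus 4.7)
The plan is to reuse the same hard instance as in \Cref{thm:Tsallis-query-complexity-lower-bound-largeQ}, namely the pair of single-qubit states $\rho_0 = \ketbra{0}{0}$ and $\rho_\epsilon = (1-\epsilon)\ketbra{0}{0} + \epsilon \ketbra{1}{1}$, and to argue that any sample algorithm distinguishing them with constant advantage needs $\Omega(1/\epsilon)$ copies. Since $\abs{\Sq(\rho_\epsilon) - \Sq(\rho_0)} = \frac{1-(1-\epsilon)^q-\epsilon^q}{q-1} = \Omega(\epsilon)$ for sufficiently small $\epsilon$ and any $q \geq 1+\Omega(1)$, an estimator of $\Sq$ to additive error $\Theta(\epsilon)$ immediately yields such a distinguisher, and the desired sample lower bound follows.

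First, I would reduce the distinguishing problem to a trace-distance bound via the Helstrom–Holevo bound (\Cref{lemma:Holevo-Helstrom-bound}): any POVM acting on $n$ copies succeeds with probability at most $\tfrac{1}{2} + \tfrac{1}{2}\td\rbra{\rho_0^{\otimes n}, \rho_\epsilon^{\otimes n}}$. Then I would upper-bound this trace distance by using the Fuchs–van de Graaf inequality together with the tensorization of the fidelity. Since $\rho_0$ and $\rho_\epsilon$ commute, a direct computation gives $F(\rho_0,\rho_\epsilon) = \sqrt{1-\epsilon}$, so
\[
\td\rbra{\rho_0^{\otimes n}, \rho_\epsilon^{\otimes n}} \leq \sqrt{1 - F\rbra{\rho_0,\rho_\epsilon}^{2n}} = \sqrt{1 - (1-\epsilon)^n}.
\]
For this quantity to exceed any fixed constant (say $1/3$, enough for constant-advantage distinguishing), one needs $n = \Omega(1/\epsilon)$, completing the distinguishing lower bound.

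Finally, I would close the loop as in the query case: the Tsallis entropy gap $\abs{\Sq(\rho_\epsilon)-\Sq(\rho_0)} = \Omega(\epsilon)$ means that an $\epsilon'$-additive estimator for $\Sq$, with $\epsilon' = \Theta(\epsilon)$ sufficiently small, distinguishes $\rho_0^{\otimes n}$ from $\rho_\epsilon^{\otimes n}$ with high probability, hence must use $n = \Omega(1/\epsilon)$ samples. I do not expect any real obstacle here, as all ingredients are elementary; the only mild subtlety is choosing the constants so that the $\Omega(\epsilon)$ entropy gap (whose implicit constant depends on $q$) exceeds twice the estimation error, which is harmless because $q \geq 1+\Omega(1)$ is a constant.
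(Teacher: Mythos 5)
Your proposal is correct and follows essentially the same route as the paper's proof: the same single-qubit hard instance $\rho_0$ versus $\rho_\epsilon$, the Helstrom--Holevo bound applied to $\mathsf{S}$ copies, the Fuchs--van de Graaf inequality with the multiplicativity of fidelity giving $\mathrm{F}(\rho_0,\rho_\epsilon)=\sqrt{1-\epsilon}$, and the $\Omega(\epsilon)$ Tsallis entropy gap to convert the estimator into a distinguisher. No gaps.
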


\begin{proof}
    Consider the hypothesis testing problem where the given quantum state $\rho$ is promised to be either $\rho_0$ or $\rho_\epsilon$, each with equal probability. Specifically, the states are defined as
    \[
    \forall x\in[0,1], \quad \rho_x \coloneqq \rbra{1-x} \ketbra{0}{0} + x \ketbra{1}{1}.
    \]
    For sufficiently small $\epsilon > 0$, we know that $\abs{ \Sq\rbra{\rho_\epsilon} - \Sq\rbra{\rho_0} } = \Omega\rbra{\epsilon}$, as shown in the proof of \cref{thm:Tsallis-query-complexity-lower-bound-largeQ}. 
    Now, assume that there is a quantum estimator for $\Sq\rbra{\rho}$ to within additive error $\Theta\rbra{\epsilon}$ with sample complexity $\mathsf{S}$. 
    This estimator can then be used to distinguish these two states $\rho_0$ and $\rho_\epsilon$ with success probability $p_{\text{succ}} \geq 2/3$. 
    On the other hand, by \cref{lemma:Holevo-Helstrom-bound}, we have 
    \[
    p_{\text{succ}} \leq \frac 1 2 + \frac 1 2 \td\rbra*{\rho_0^{\otimes \mathsf{S}}, \rho_\epsilon^{\otimes \mathsf{S}}}.
    \]
    By applying the Fuchs–van de Graaf inequalities \cite[Theorem 1]{FvdG99}, we have
    \[
    \td\rbra*{\rho_0^{\otimes \mathsf{S}}, \rho_\epsilon^{\otimes \mathsf{S}}} \leq \sqrt{1 - \mathrm{F}\rbra*{\rho_0^{\otimes \mathsf{S}}, \rho_\epsilon^{\otimes \mathsf{S}}}^2},
    \]
    where $\mathrm{F}\rbra{\rho, \sigma} = \tr\rbra{\sqrt{\sqrt{\sigma}\rho\sqrt{\sigma}}}$ is the fidelity of quantum states. 
    A direct calculation shows that $\mathrm{F}\rbra{\rho_0, \rho_\epsilon} = \sqrt{1-\epsilon}$, which gives that
    \[
    p_{\text{succ}} \leq \frac 1 2 + \frac 1 2 \sqrt{1 - \rbra{1-\epsilon}^{\mathsf{S}}}.
    \]
    By combining this with the condition $p_{\text{succ}} \geq 2/3$, we conclude that $\mathsf{S} = \Omega\rbra{1/\epsilon}$.
\end{proof}

\begin{theorem}[Sample complexity lower bound for estimating quantum Tsallis entropy with $q>1$ near $1$]
    \label{thm:Tsallis-sample-complexity-lower-bound-smallQ}
    For any $q \in \big(1,1\!+\!\frac{1}{n-1}\big]$, there exists a mixed quantum state $\rho$ of sufficiently large rank $r$ such that the quantum sample complexity for estimating $\Sq(\rho)$ is $\Omega(r^{0.51-c})$ for any constant $c > 0$. 
\end{theorem}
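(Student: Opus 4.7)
The plan is to mirror the proof of \Cref{thm:Tsallis-query-complexity-lower-bound-smallQ} almost verbatim, substituting the sample-complexity lower bound for \QSD{} (\Cref{lemma:sample-complexity-lower-bound-QSD}) in place of the query-complexity lower bound (\Cref{lemma:query-complexity-lower-bound-QSD}). Since the base lemma now scales as $\Omega(\hat r/\epsilon^2)$ rather than $\Omega(\hat r^{1/3})$, the final exponent becomes $1-\tau = 0.51$ rather than $(1-\tau)/3 = 0.17$, which matches the target $r^{0.51-c}$ for $\tau = 0.49$.

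Concretely, I would first apply \Cref{lemma:sample-complexity-lower-bound-QSD} with $\epsilon = 1/2$ to obtain an $\hat n$-qubit state $\hat\rho$ of rank $\hat r \geq 2$ together with its uniform counterpart $\hat\rho_\ttU$ on the same support, such that deciding whether $\td(\hat\rho,\hat\rho_\ttU) \geq 1/2$ or $= 0$ requires $\Omega(\hat r)$ samples. Next, I would apply the direct product lemma~\cite[Lemma 8]{Wat02} to polarize: setting $\rho \coloneqq \hat\rho^{\otimes \hat r^{k}}$ and $\rho_\ttU \coloneqq \hat\rho_\ttU^{\otimes \hat r^{k}}$ for any constant $k > \tau/(1-\tau)$ with $\tau = 0.49$, and defining $r \coloneqq \hat r^{1+k}$, the trace-distance promise amplifies to $\geq 1 - 2^{-r^{\tau}}$ versus $\leq 2^{-r^{\tau}}$, while the sample lower bound transfers to $\Omega(\hat r) = \Omega\bigl(r^{1/(1+k)}\bigr)$ for the polarized instance.

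Then, for any $q \in (1, 1+1/(n-1)]$, I would invoke the reduction from \QSD{} to \TsallisQED{} (\Cref{lemma:reduction-QSD-TsallisQED}) under the same parameter setting used in the proof of \Cref{thm:TsallisQED-QSZK-hardness}: this produces polynomial-size circuits preparing states $\rho'_0, \rho'_1$ of rank at most $2r$ whose $q$-Tsallis entropy difference is $\geq 1/400$ or $\leq -1/400$ in the Yes/No cases, respectively. Consequently, any additive $1/800$-estimator of $\Sq$ must use $\Omega\bigl(r^{1/(1+k)}\bigr)$ samples, and taking $k$ just above $\tau/(1-\tau)$ makes $1/(1+k)$ arbitrarily close to $1-\tau = 0.51$ from below. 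This yields the advertised bound $\Omega(r^{0.51-c})$ for every constant $c > 0$, consistent with \Cref{remark:tau-dependence-lower-bound}.

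Since the entire argument is a parameter-for-parameter adaptation of the query proof, the only thing to verify is arithmetic: that the polarization exponent $1/(1+k)$, combined with the rank blow-up factor of at most $2$ from the reduction, composes to give the claimed exponent. No new analytic ingredient is required beyond those already established in \Cref{sec:properties-QJTq} and \Cref{subsec:mixed-state-reductions}; the main (modest) obstacle is simply keeping the bookkeeping of the $\hat r \mapsto r$ substitution straight, exactly as in the query case.
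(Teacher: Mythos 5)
Your proposal matches the paper's proof essentially step for step: it invokes the sample-complexity lower bound for \QSD{} with $\epsilon = 1/2$, amplifies via the direct product lemma with $\rho = \hat\rho^{\otimes \hat r^k}$ and $k > \tau/(1-\tau)$ for $\tau = 0.49$, and then applies the \QSD{}-to-\TsallisQED{} reduction with the parameters from the \QSZK{}-hardness theorem to conclude $\Omega(r^{1/(1+k)}) = \Omega(r^{0.51-c})$. The exponent bookkeeping is exactly as in the paper, so no further comment is needed.
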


Notably, \Cref{remark:tau-dependence-lower-bound} on the $\tau$-dependence in the lower bound also applies to \Cref{thm:Tsallis-sample-complexity-lower-bound-smallQ}. Moreover, the proof strategy of \Cref{thm:Tsallis-sample-complexity-lower-bound-smallQ} is similar to that of \Cref{thm:Tsallis-query-complexity-lower-bound-smallQ}, as both rely on the direct product lemma for the trace distance~\cite[Lemma 8]{Wat02}.\footnote{This inequalities can also be derived using the polarization lemma for the measured quantum triangular discrimination, specifically combining Theorem 3.3 and Lemma 4.11 in~\cite{Liu23}.}

\begin{proof}[Proof of \Cref{thm:Tsallis-sample-complexity-lower-bound-smallQ}]
    By \Cref{lemma:sample-complexity-lower-bound-QSD} with $\epsilon=1/2$, there exists an $\hat{n}$-qubit state $\hat{\rho}$ of rank $\hat{r}\geq 2$ and the corresponding ``uniform'' state $\hat{\rho}_{\ttU}$ of rank $r$ on the same support as $\hat{\rho}$ such that the quantum sample complexity to decide whether $\td\big(\hat{\rho},\hat{\rho}_{\ttU}\big)$ is at least $1/2$ or exactly $0$ is $\Omega(\hat{r})$. 
    We apply the direct product lemma~\cite[Lemma 8]{Wat02} to the states $\hat{\rho}$ and $\hat{\rho}_{\ttU}$. 
    Let $\rho \coloneqq \hat{\rho}^{\otimes \hat{r}^{k}}$ and $\rho_{\ttU} \coloneqq \hat{\rho}_{\ttU}^{\otimes \hat{r}^{k}}$ be the resulting states, where $k$ is a parameter to be determined later. 
    Then, for any constant $k > \frac{\tau}{1-\tau}$ with $\tau = 0.49$ and for sufficiently large $\hat{r}$, the following holds:
    \begin{align*}
        \td\rbra*{\hat{\rho}, \hat{\rho}_{\ttU}} &\geq 1/2  &\Rightarrow \quad \td\rbra*{\rho, \rho_{\ttU}} &\geq 1-\exp\rbra{-\hat{r}^{k}/8} \geq 1-2^{-r^{\tau}},\\
        \td\rbra*{\hat{\rho}, \hat{\rho}_{\ttU}} &= 0  &\Rightarrow \quad \td\rbra*{\rho, \rho_{\ttU}} &\leq \hat{r}^{k} \cdot 0 = 0 \leq 2^{-r^{\tau}}.
    \end{align*}
    
    As a consequence, the sample complexity of deciding whether $\td(\rho,\rho_{\ttU})$ is at least $1-2^{-r^{\tau}}$ or at most $2^{-r^{\tau}}$ is $\Omega(r^{\frac{1}{1+k}})$, where $r \coloneqq \hat{r}\cdot \hat{r}^{k} = \hat{r}^{1+k}$. For any $q \in \big(1, 1+\frac{1}{n-1}\big] \subseteq \big(1, 1+\frac{1}{r-1}\big]$, utilizing the reduction from \QSD{} to \TsallisQED{} (\Cref{lemma:reduction-QSD-TsallisQED}) with parameters from \Cref{thm:TsallisQED-QSZK-hardness}, there are two corresponding states $\rho'_0$ and $\rho'_1$ of rank at most $2r$ such that the quantum sample complexity for deciding whether $\Sq(\rho'_0)-\Sq(\rho'_1)$ is at least $1/400$ or at most $-1/400$ is $\Omega(r^{\frac{1}{1+k}}) = \Omega\rbra{r^{1-\tau-c}} = \Omega\rbra{r^{0.51-c}}$ for any constant $c > 0$. 
    Therefore, estimating $\Sq(\rho'_b)$ for $b \in \binset$ to within additive error $1/800$ requires at least the same number of copies of $\rho$. 
\end{proof}


\section*{Acknowledgments}
\noindent
A preliminary version of this article, under the same title, appeared in the Proceedings of the 2025 Annual ACM-SIAM Symposium on Discrete Algorithms (SODA 2025)~\cite{LW25SODA}, and was also included in Yupan Liu's PhD thesis~\cite{Liu25}.

The authors express their gratitude to the anonymous reviewers for their valuable comments, particularly for suggesting the inclusion of further explanation of the importance of estimating $\Sq(\rho)$ for non-integer $q>1$, and for identifying an error in the proof of \Cref{lemma:Tsallis-binary-entropy-lower-bound} concerning the evaluation of the endpoints, as well as a gap in the reasoning establishing \Cref{eq:Tsallis-binary-entropy-upper-bound-inner-condition} in the proof of \Cref{lemma:Tsallis-binary-entropy-upper-bound} in an earlier version.
The authors also thank Kean Chen for pointing out a parameter misuse in \cref{thm:tr-power-constant-queries,thm:tr-power-constant-samples} in a previous version of this paper. 

This work was supported in part by the Ministry of Education, Culture, Sports, Science and Technology (MEXT) Quantum Leap Flagship Program (MEXT Q-LEAP) under Grant \mbox{JPMXS0120319794}. 
The work of Yupan Liu was also supported in part by JST, the establishment of University fellowships towards the creation of science technology innovation, under Grant \mbox{JPMJFS2125}, and in part by funding from the Swiss State Secretariat for Education, Research and Innovation (SERI). 
The work of Qisheng Wang was also supported in part by startup funding from Shanghai Jiao Tong University and in part by the Engineering and Physical Sciences Research Council under Grant \mbox{EP/X026167/1}. 


\bibliographystyle{alphaurl}
\bibliography{main}

\appendix

\section{Omitted proofs}

\subsection{Omitted proof in \texorpdfstring{\Cref{sec:properties-QJTq}}{Section 4}}
\label{subsec:omitted-proofs-Tsallis-properties}
\TsallisEntropyUpperBoundSolution*

\begin{proof}
    We begin by noting that $\Hq(p) = \frac{1}{q-1} \rbra*{1 - \sum_{i\in[N]} p(i)^q}$ is concave (\Cref{lemma:Tsallis-entropy-properties}) for any fixed $q>1$. Consequently, an optimal solution $p_{\max}$ to the optimization problem specified in \Cref{eq:Tsallis-entropy-max} has a particular form. Specifically, $p_{\max}$ is one of probability distributions $p^{(k)}$ for integer $k\in [\floor*{N(1-\gamma)}]$ defined in \Cref{eq:Tsallis-max-solution-candiate} with a maximum Tsallis entropy:\footnote{It is easy to verify that $\frac{1}{N} - \frac{\gamma}{N-k} \geq 0$ holds if and only if $k \leq N(1-\gamma)$ holds. } 
    \begin{equation}
        \label{eq:Tsallis-max-solution-candiate}
        \Hq(p_{\max}) = \max_{k \in [\floor*{N(1-\gamma)}]} \Hq\rbra*{p^{(k)}},
        \text{ where }
        p^{(k)}(i) \coloneqq \begin{cases}
        \frac{1}{N}+\frac{\gamma}{k}, &\text{if } i\in[k]\\
        \frac{1}{N} - \frac{\gamma}{N-k}, &\text{otherwise}
        \end{cases}.
    \end{equation}

    Plugging \Cref{eq:Tsallis-max-solution-candiate} into \Cref{eq:Tsallis-entropy-max}, it suffices to solve the following optimization problem with $q>1$:
    \begin{mini}{}{F_q(N,k,\gamma) \coloneqq \sum_{i\in[N]} p(i)^q = k \cdot \rbra*{\frac{1}{N} + \frac{\gamma}{k}}^q + (N-k) \cdot \rbra*{\frac{1}{N} - \frac{\gamma}{N-k}}^q}{}{}{\label{eq:Tsallis-entropy-max-largeDist}}{}
        \addConstraint{1/q}{ \leq \gamma \leq 1-1/N}{}
        \addConstraint{1}{\leq k \leq \floor*{N(1-\gamma)}}{}
        \addConstraint{k, N}{\in \bbZ_+}{}
    \end{mini}

    To establish that \Cref{eq:Tsallis-entropy-max-solution} is an optimal solution to \Cref{eq:Tsallis-entropy-max-largeDist}, it remains to show that the objective function $F_q(N,k,\gamma)$ is monotonically non-increasing in $k$ for $N$, $\gamma$, and $q>1$ satisfying the constraints in \Cref{eq:Tsallis-entropy-max-largeDist}. Equivalently, it needs to be shown that $\frac{\partial}{\partial k} F_q(N,k,\gamma) \leq 0$ for $1/q \leq \gamma \leq 1-1/N$ and $1 \leq k \leq \floor*{N(1-\gamma)}$, specifically:
    \begin{equation}
        \label{eq:Tsallis-max-solution-partialDerivative}
        \frac{\partial}{\partial k} F_q(N,k,\gamma) =
        \frac{(k-\gamma  N (q-1)) \left(\frac{\gamma }{k}+\frac{1}{N}\right)^q}{k+\gamma  N}+\frac{\left(\frac{1}{N} - \frac{\gamma }{N-k}\right)^q (\gamma  N (q-1)+N-k)}{\gamma N - (N-k)} \leq 0.
    \end{equation}

    Since it is evident that $\frac{\gamma}{k}+\frac{1}{N} \geq 0$, $k + \gamma N \geq 0$, and $k \leq \floor*{N(1-\gamma)} \leq N(1-\gamma)$, we can deduce \Cref{eq:Tsallis-max-solution-partialDerivative} by combining the following inequalities:
    \begin{align*}
        k-\gamma  N (q-1) & \leq N(1-\gamma) -\gamma  N (q-1) = N(1-q\gamma) \leq 0,\\
        \frac{1}{N} - \frac{\gamma}{N-k} &\geq \frac{1}{N} - \frac{\gamma}{N-N(1-\gamma)} = 0,\\
        \gamma N (q-1) + N-k & \geq N (q-1) + N- N(1-\gamma) = Nq\gamma \geq N >0,\\
        \gamma N - (N-k) & \leq N - (N-N(1-\gamma)) = 0. 
    \end{align*}
    Here, the first and the third line hold also due to $\gamma \geq 1/q$. This completes the proof.
\end{proof}

\subsection{Omitted proof in \texorpdfstring{\Cref{sec:hardness-via-QJTq-reductions}}{Section 5}}
\label{subsec:omitted-proofs-QJTq-reductions}

\reductionQSCMMtoTsallisQEAsmallTD*

\begin{proof}
    We begin by defining $f_1(n) \coloneqq 2^{-n} + \frac{1-2^{\frac{n}{1-n}}}{n}$, $f_2(n) \coloneqq 2^{\frac{n^2}{1-n}} (n-1)$, and $f_3(n) \coloneqq \frac{n}{4} \rbra*{1- 2^{\frac{1}{1-n}}}$ such that $g_1(n) = f_1(n) + f_2(n) + f_3(n)$. We then prove the first item separately:
    \begin{itemize}[leftmargin=2em, topsep=0.33em, itemsep=0.33em, parsep=0.33em] 
        \item For $f_1(n)$, since $2^{\frac{n}{1-n}} = 2^{-\rbra*{1+\frac{1}{n-1}}}$, we know that $f_1(n)$ is monotonically decreasing for $n\geq 2$, and thus, $f_1(n) \geq \lim_{n\rightarrow \infty} f_1(n) = 0$ for $n \geq 2$. 
        \item For $f_2(n)$, noting that $\frac{\dd}{\dn} f_2(n) = \frac{2^{n^2/(1-n)}}{n-1}  \rbra*{-\log(2) n^2 + (1+\log(2))n -1}$, we obtain that $f_2(n)$ is monotonically decreasing for $n \geq 3 > \frac{1+2\log(2) + \sqrt{1+4 \log(2)^2}}{2 \log(2)} \approx 2.9544$, and consequently, $f_2(n) \geq \lim_{n \rightarrow \infty} g_2(n) = 0$ for $n \geq 3$. 
        \item For $f_3(n)$, it suffices to show that $2^{1/(1-n)} \leq 1$ for $n \geq 3$. Since $2^{1/(1-n)}$ is monotonically increasing for $n \geq 3$, we prove the first item by noting that $2^{1/(1-n)} \leq \lim_{n\rightarrow \infty} 2^{1/(1-n)} = 1$. 
    \end{itemize}   
    
    For $g_2(n)$, noting that $\frac{\dd}{\dn} g_3(n) = \frac{2^{\frac{1}{1-n}} \log (2)}{(n-1)^2}+2^{-n} (n \log(2)-1)$ and $n \log(2) \geq 1$ for $n \geq 2$, we obtain that $g_3(n)$ is monotonically increasing for $n \geq 2$. 
    
    For $g_3(n)$, since $2^{-1/x}$ is monotonically increasing for $x \geq 1$, we have $g_3(n) \geq \frac{1}{4} n (n^{1/n}-2^{-1/n})$. It remains to show that $\tilde{g}_3(n) \coloneqq \frac{1}{4} n (n^{1/n}-2^{-1/n})$ are monotonically increasing for $n\geq 3$, namely:
    \begin{equation}
        \label{eq:reduction-QSCMM-TsallisQEA-smallTD-mono}
        \frac{\dd}{\dn} \tilde{g}_3(n) = \frac{1}{4n} \underbrace{ \rbra*{n^{1/n} - 2^{-1/n}\log(2)}}_{f_4(n)} + \underbrace{\frac{1}{4n} \rbra*{n^{1/n} n - n^{1/n} \log(n) - n 2^{-1/n}}}_{f_5(n)} \geq 0.
    \end{equation}
    
    Noting that $\frac{\dd}{\dn} f_4(n) = -\frac{1}{n^2}\rbra*{n^{1/n} (\log (n)-1)+2^{-1/n} \log ^2(2)} < 0$ for $n > e$, namely $f_4(n)$ is monotonically decreasing for $n\geq 3$, we obtain that $\frac{f_4(n)}{4n} \geq \frac{1}{4n} \lim_{n\rightarrow\infty} f_4(n) = \frac{1}{4n} > 0$. 
    Let $f_6(n) \coloneqq \rbra*{\frac{1}{2n}}^{1/n}$. Notice that $\frac{\dd}{\dn} f_6(n) = 2^{-1/n} \left(\frac{1}{n}\right)^{\frac{1}{n}+2} \rbra*{-\log\big(\frac{1}{n}\big)-1+\log (2)} \geq 0$ for $n \geq e/2$, we have that $f_6(n) =\rbra*{\frac{1}{2n}}^{1/n} \geq f_6(2) =1/2$ for $n \geq 2$. Consequently, we can derive that: 
    \[ \rbra*{\frac{1}{n}}^{\frac{1}{n}} \frac{\dd f_5(n)}{\dn} = \frac{ (\log (n)-1) \log (n) - \rbra*{\frac{1}{2n}}^{\frac{1}{n}}  n \log (2)}{4n^3} \leq \frac{1}{4n^2} \rbra*{\frac{ (\log (n)-1) \log (n)}{n} - \frac{\log(2)}{2}} < 0.\]
    Here, the last inequality follows by assuming $f_7(n) \coloneqq \frac{\log(n) \rbra*{\log(n)-1}}{n} < \frac{\log(2)}{2}$.
    A direct calculation implies that $\frac{\dd}{\dn} f_7(n)=-\frac{1}{n^2}\rbra*{(\log (n)-3) \log (n)+1}=0$ have two zeros at $n = \exp\rbra{\frac{3 \pm \sqrt{5}}{2}}$. Therefore, we establish \Cref{eq:reduction-QSCMM-TsallisQEA-smallTD-mono} by noticing \[f_7(n) \leq \max\cbra*{f_7\rbra*{\frac{3 - \sqrt{5}}{2}}, f_7\rbra*{\frac{3 + \sqrt{5}}{2}}} < \frac{\log(2)}{2}. \qedhere\]
\end{proof}

\end{document}